\mathchardef\mhyphen="2D
\newcommand{\concat}{\mathop{\circ}}
\newcommand{\bigconcat}{\mathop{\bigcirc}}
\newcommand{\Next}{\textup{Next}}
\newcommand{\OVH}{\textup{OVH}\xspace}
\newcommand{\UOVH}{\textup{UOVH}\xspace}
\newcommand{\x}{{\textsc{x}}}
\newcommand{\y}{{\textsc{y}}}
\newcommand{\DTW}{\textup{DTW}}
\newcommand{\eps}{\ensuremath{\varepsilon}}
\newcommand{\rev}{\mathrm{rev}}
\newcommand{\occ}{\#}
\newcommand{\params}{{\mathcal{P}}}
\newcommand{\Valpha}{{\boldsymbol{\alpha}}}
\newcommand{\sA}{{\cal A}}
\newcommand{\sB}{{\cal B}}
\newcommand{\lift}{\mathop{\uparrow}}
\newcommand{\cross}{\mathrm{cr}}
\newcommand{\R}{{\mathbb{R}}}
\newcommand{\figref}[1]{Figure~\ref{fig:#1}}
\newcommand{\tabref}[1]{Table~\ref{tab:#1}}
\newcommand{\tabvref}[1]{Table~\vref{tab:#1}}
\newcommand{\defref}[1]{Definition~\ref{def:#1}}
\newcommand{\thmref}[1]{Theorem~\ref{thm:#1}}
\newcommand{\thmrefs}[2]{Theorems~\ref{thm:#1} and~\ref{thm:#2}}
\newcommand{\lemref}[1]{Lemma~\ref{lem:#1}}
\newcommand{\lemrefs}[2]{Lemmas~\ref{lem:#1} and~\ref{lem:#2}}
\newcommand{\lemrefsss}[4]{Lemmas~\ref{lem:#1}, \ref{lem:#2}, \ref{lem:#3}, and~\ref{lem:#4}}
\newcommand{\obsref}[1]{Observation~\ref{obs:#1}}
\newcommand{\secref}[1]{Section~\ref{sec:#1}}
\newcommand{\assumpref}[1]{Assumption~\ref{assump:#1}}
\newcommand{\itemref}[1]{\ref{itm:#1}}
\newcommand{\itemrefs}[2]{\ref{itm:#1} and~\ref{itm:#2}}
\newtheorem{thm}{Theorem}[section]
\newtheorem{lem}[thm]{Lemma}
\newtheorem{obs}[thm]{Observation}
\newtheorem{defn}[thm]{Definition}
\newtheorem{assump}[thm]{Assumption}
\newtheorem{hypo}[thm]{Hypothesis}
\begin{document}
\global\long\def\SC{\mathbf{\Pi}}

\global\long\def\CG{\mathrm{CG}}

\global\long\def\VG{\mathrm{VG}}
\global\long\def\NVG{\mathrm{NVG}}

\global\long\def\CandA{\mathrm{GA}}

\global\long\def\zeroes{\mathrm{zeroes}}

\global\long\def\guard{\mathrm{G}}

\global\long\def\dist{\mathrm{dist}}

\global\long\def\Vinner{V_{\mathrm{in}}}

\global\long\def\Vouter{V_{\mathrm{out}}}

\global\long\def\Xz{X^{\mathbf{0}}}

\global\long\def\Xo{X^{\mathbf{1}}}

\global\long\def\Yz{Y^{\mathbf{0}}}

\global\long\def\Yo{Y^{\mathbf{1}}}

\global\long\def\Cz{C^{\mathbf{0}}}

\global\long\def\Co{C^{\mathbf{1}}}

\global\long\def\Gz{G^{\mathbf{0}}}

\global\long\def\Go{G^{\mathbf{1}}}

\global\long\def\inputs{{\cal I}}

\global\long\def\algn{{\mathbf{\Lambda}}}
\global\long\def\algnMult{{\algn}^{\mathrm{multi}}}
\global\long\def\algnTwo{{\algn}^{\mathrm{1,2}}}

\global\long\def\type{\mathrm{type}}

\global\long\def\poly{\mathrm{poly}}

\global\long\def\bigOh{{\cal O}}

\global\long\def\zleft{\mathbf{0}_{\x}}

\global\long\def\zright{\mathbf{0}_{\y}}

\global\long\def\oleft{\mathbf{1}_{\x}}

\global\long\def\oright{\mathbf{1}_{\y}}

\global\long\def\dtw{\DTW}
\global\long\def\cost{d}

\newcommand{\px}{\tilde{x}}
\newcommand{\py}{\tilde{y}}
\newcommand{\pw}{\tilde{w}}
\newcommand{\pz}{\tilde{z}}

\newcommand{\barx}{\bar{x}}
\newcommand{\bary}{\bar{y}}

\global\long\def\LCS{\textup{LCS}\xspace}
\global\long\def\OV{\textsc{OV}\xspace}
\global\long\def\UOV{\textup{UOV}\xspace}
\global\long\def\SETH{\textup{SETH}\xspace}
\global\long\def\kSAT{\textup{$k$-SAT}}

\global\long\def\Oh{\bigOh}
\global\long\def\tOh{\tilde{\bigOh}}

\newcommand{\sugg}[2]{{\color{green} #1} {\color{red}\tiny  #2}}

\newcommand{\dopara}[1]{}

\newcommand{\Size}{0.42cm}
\newcommand{\tSize}{0.40cm}
\tikzset{xSquare/.style={
    inner sep=0pt,
    text width=\tSize, 
    minimum size=\Size,
    }
}
\tikzset{NormSquare/.style={
    inner sep=0pt,
    text width=\tSize, 
    minimum size=\Size,
    align=center}
}
\tikzset{MatchSquare/.style={
    inner sep=0pt,
    text width=\tSize, 
    minimum size=\Size,
    align=center,
    fill=blue!50,
    draw=black,
    }
}
\tikzset{DomSquare/.style={
    inner sep=0pt,
    text width=\tSize, 
    minimum size=\Size,
    align=center,
    fill=orange, semithick,
    draw=black}
}

\newcommand{\LV}[1]{#1} \newcommand{\SV}[1]{}

\newcommand{\hideimages}[1]{#1}

\newcommand{\hideproofs}[1]{#1}

\date{}

\title{Multivariate Fine-Grained Complexity \\ of Longest Common Subsequence\thanks{Part of this work was done while the authors were visiting the Simons Institute for the Theory of Computing at the University of California, Berkeley.}
}
\author{Karl Bringmann\footnote{Max Planck Institute for Informatics, Saarland Informatics Campus, Saarbr\"ucken, \texttt{kbringma@mpi-inf.mpg.de} 
} \and Marvin K\"unnemann\thanks{Max Planck Institute for Informatics, Saarland Informatics Campus, Saarbr\"ucken, \texttt{marvin@mpi-inf.mpg.de}}}
\maketitle

\medskip

\begin{abstract}
We revisit the classic combinatorial pattern matching problem of finding a longest common subsequence (LCS). For strings $x$ and~$y$ of length $n$, a textbook algorithm solves LCS in time $O(n^2)$, but although much effort has been spent, no $\Oh(n^{2-\varepsilon})$-time algorithm is known. Recent work indeed shows that such an algorithm would refute the Strong Exponential Time Hypothesis (SETH) [Abboud, Backurs, Vassilevska Williams FOCS'15; Bringmann, K\"unnemann FOCS'15].

Despite the quadratic-time barrier, for over 40 years an enduring scientific interest continued to produce fast algorithms for LCS and its variations. Particular attention was put into identifying and exploiting input parameters that yield strongly subquadratic time algorithms for special cases of interest, e.g., differential file comparison. This line of research was successfully pursued until 1990, at which time significant improvements came to a halt.  
In this paper, using the lens of fine-grained complexity, our goal is to (1) justify the lack of further improvements and (2) determine whether some special cases of LCS admit faster algorithms than currently known.

To this end, we provide a systematic study of the \emph{multivariate complexity} of LCS, taking into account all parameters previously discussed in the literature: the input size $n:=\max\{|x|,|y|\}$, the length of the shorter string $m:=\min\{|x|,|y|\}$, the length $L$ of an LCS of $x$ and $y$, the numbers of deletions $\delta := m-L$ and $\Delta := n-L$, the alphabet size, as well as the numbers of matching pairs $M$ and dominant pairs $d$.
For any class of instances defined by fixing each parameter individually to a polynomial in terms of the input size, we prove a SETH-based lower bound matching one of three known algorithms (up to lower order factors of the form~$n^{o(1)}$). Specifically, we determine the optimal running time for LCS under SETH as $(n+\min\{d, \delta \Delta, \delta m\})^{1\pm o(1)}$. Polynomial improvements over this running time must necessarily refute SETH or exploit novel input parameters. We establish the same lower bound for any constant alphabet of size at least~3.
For binary alphabet, we show a SETH-based lower bound of $(n+\min\{d, \delta \Delta, \delta M/n\})^{1-o(1)}$ and, motivated by difficulties to improve this lower bound, we design an $\Oh(n + \delta M/n)$-time algorithm, yielding again a matching bound.

We feel that our systematic approach yields a comprehensive perspective on the well-studied multivariate complexity of LCS, and we hope to inspire similar studies of multivariate complexity landscapes for further polynomial-time problems.
\end{abstract}

\thispagestyle{empty}
\clearpage
\setcounter{page}{1}

\begin{footnotesize}

\tableofcontents
\end{footnotesize}
\thispagestyle{empty}
\clearpage
\setcounter{page}{1}

\section{Introduction}

String comparison is one of the central tasks in combinatorial pattern matching, with various applications such as spelling correction~\cite{Morgan70,WagnerF74}, DNA sequence comparison~\cite{AltschulGMML90}, and differential file comparison~\cite{HuntMcI75,MillerM85}. Perhaps the best-known measure of string similarity is the length of the longest common subsequence (LCS). A textbook dynamic programming algorithm computes the LCS of given strings $x,y$ of length $n$ in time $\Oh(n^2)$, and in the worst case only an improvement by logarithmic factors is known~\cite{MasekP80}. In fact, recent results show that improvements by polynomial factors would refute the Strong Exponential Time Hypothesis (SETH)~\cite{AbboudBVW15,BringmannK15} (see \secref{hypotheses} for a definition). 

Despite the quadratic-time barrier, the literature on LCS has been steadily growing, with a changing focus on different aspects of the problem over time (see \secref{relatedWork} for an overview). Spurred by an interest in practical applications, a particular focus has been the design of LCS algorithms for strings that exhibit certain structural properties. This is most prominently witnessed by the UNIX \texttt{diff} utility, which quickly compares large, similar files by solving an underlying LCS problem. 
A practically satisfying solution to this special case was enabled by theoretical advances exploiting the fact that in such instances the LCS differs from the input strings at only few positions~(e.g., \cite{Myers86, MillerM85}). In fact, since Wagner and Fischer introduced the LCS problem in 1974~\cite{WagnerF74}, identifying and exploiting structural parameters to obtain faster algorithms has been a decades-long effort~\cite{Apostolico86,ApostolicoG87,EppsteinGGI92,Hirschberg77,HuntS77,IliopoulosR09,Myers86,NakatsuKY82,WuMMM90}. 

Parameters that are studied in the literature are, besides the input size $n:=\max\{|x|,|y|\}$, the length $m:=\min\{|x|,|y|\}$ of the shorter string, the size of the alphabet~$\Sigma$ that $x$ and $y$ are defined on, the length $L$ of a longest common subsequence of $x$ and $y$, the number $\Delta = n - L$ of deleted symbols in the longer string, the number $\delta = m - L$ of deleted symbols in the shorter string, the number of \emph{matching} pairs $M$, and the number of \emph{dominant} pairs $d$ (see \secref{defparams} for definitions). Among the fastest currently known algorithms are an $\tOh(n+\delta L)$-algorithm due to Hirschberg~\cite{Hirschberg77}, an $\tOh(n+\delta \Delta)$-algorithm due to Wu, Manbers, Myers, and Miller~\cite{WuMMM90}, and an $\tOh(n+d)$-algorithm due to Apostolico~\cite{Apostolico86} (with log-factor improvements by Eppstein, Galil, Giancarlo, and Italiano~\cite{EppsteinGGI92}). In the remainder, we refer to such algorithms, whose running time is stated in more parameters than just the problem size $n$, as \emph{multivariate algorithms}. See \tabvref{survey} for a non-exhaustive survey containing the asymptotically fastest multivariate LCS algorithms.

The main question we aim to answer in this work is: \emph{Are there significantly faster multivariate LCS algorithms than currently known?} E.g., can ideas underlying the fastest known algorithms be combined to design an algorithm that is much faster than all of them?

\subsection{Our Approach and Informal Results}

We systematically study special cases of LCS that arise from polynomial restrictions of any of the previously studied input parameters. Informally, we define a \emph{parameter setting} (or polynomial restriction of the parameters) as the subset of all LCS instances where each input parameter is individually bound to a polynomial relation with the input size $n$, i.e., for each parameter $p$ we fix a constant $\alpha_p$ and restrict the instances such that $p$ attains a value $\Theta(n^{\alpha_p})$. 
An algorithm for a specific parameter setting of LCS receives as input two strings $x,y$ guaranteed to satisfy the parameter setting and outputs (the length of) an LCS of $x$ and $y$.
We call a parameter setting trivial if it is satisfied by at most a finite number of instances; this happens if the restrictions on different parameters are contradictory. For each non-trivial parameter setting, we construct a family of hard instances via a reduction from satisfiability, thus obtaining a conditional lower bound. This greatly extends the construction of hard instances for the $n^{2-o(1)}$ lower bound~\cite{AbboudBVW15,BringmannK15}.

\paragraph{Results for large alphabets.} 
Since we only consider exact algorithms, any algorithm for LCS takes time $\Omega(n)$. Beyond this trivial bound, for any non-trivial parameter setting we obtain a SETH-based lower bound of
$$ \min\big\{ d, \, \delta \Delta, \, \delta m\big\}^{1-o(1)}. $$
Note that this bound is matched by the known algorithms with running times $\tOh(n + d)$, $\tOh(n + \delta L)$\footnote{Note that $L \le m$. At first sight it might seem as if the $\tOh(n + \delta L)$ algorithm could be faster than our lower bound, however, for $L \ge m/2$ we have $\delta L = \Theta(\delta m)$, which appears in our lower bound, and for $L \le m/2$ we have $\delta = m-L = \Theta(m)$ and thus $\delta L = \Theta(Lm)$ which is $\Omega(d)$ by \LV{\lemref{dUBs}}\SV{Table~\ref{tab:paramRelations}}, and $d$ appears in our lower bound.}, and $\tOh(n + \delta \Delta)$. Thus, our lower bound very well explains the lack of progress since the discovery of these three algorithms (apart from lower-order factors).

\paragraph{Results for constant alphabet size.}
For the alphabet size $|\Sigma|$, we do not only consider the case of a polynomial relation with $n$, but also the important special cases of $|\Sigma|$ being any fixed constant. 
We show that our conditional lower bound for polynomial alphabet size also holds for any constant $|\Sigma| \ge 3$. 
For $|\Sigma| = 2$, we instead obtain a SETH-based lower bound of 
$$ \min\big\{ d, \, \delta \Delta, \, \delta M / n\big\}^{1-o(1)}. $$
This lower bound is weaker than the lower bound for $|\Sigma|\ge 3$ (as the term $\delta M/n$ is at most $\delta m$ by the trivial bound $M\le mn$; see \secref{defparams} for the definition of $M$). Surprisingly, a stronger lower bound is impossible (assuming SETH): Motivated by the difficulties to obtain the same lower bound as for $|\Sigma|\ge 3$, we discovered an algorithm with running time $\Oh(n + \delta M /n)$ for $|\Sigma|=2$, thus matching our conditional lower bound. To the best of our knowledge, this algorithm provides the first polynomial improvement for a special case of LCS since 1990, so while its practical relevance is unclear, we succeeded in uncovering a tractable special case. Interestingly, our algorithm and lower bounds show that the multivariate fine-grained complexity of LCS differs polynomially between $|\Sigma| = 2$ and $|\Sigma| \ge 3$. So far, the running time of the fastest known algorithms for varying alphabet size differed at most by a logarithmic factor in $|\Sigma|$.

We find it surprising that the hardness assumption SETH is not only sufficient to prove a worst-case quadratic lower bound for LCS, but extends to the \emph{complete spectrum} of multivariate algorithms using the previously used 7 parameters, thus proving an optimal running time bound which was implicitly discovered by the computer science community within the first 25 years of research on LCS (except for the case of $\Sigma = \{0,1\}$, for which we provide a missing algorithm).

\subsection{Related Work on LCS}
\label{sec:relatedWork}

\hideimages{
\begin{savenotes}
\begin{table}
\begin{tabular}{ll}
\textbf{Reference} & \textbf{Running Time}\\
\hline
Wagner and Fischer~\cite{WagnerF74} & $\Oh(mn)$ \\
Hunt and Szymanski~\cite{HuntS77} & $\Oh((n+M)\log n)$ \\
Hirschberg~\cite{Hirschberg77} & $\Oh(n\log n + Ln)$ \\
Hirschberg~\cite{Hirschberg77} & $\Oh(n\log n + L\delta \log n)$ \\
Masek and Paterson~\cite{MasekP80} & $\Oh(n + nm/\log^2 n)$ \, assuming $|\Sigma|= \Oh(1)$  \\
& $\Oh\big(n + nm \cdot \big(\frac{\log \log n}{\log n}\big)^2\big)$ \footnote{See~\cite{BilleFC08} for how to extend the Masek-Paterson algorithm to non-constant alphabets.} \\
Nakatsu, Kambayashi and Yajima~\cite{NakatsuKY82} & $\Oh(n\delta)$
 \\ 
Apostolico~\cite{Apostolico86} & $\Oh(n\log n + d\log(mn/d))$ \\
Myers~\cite{Myers86} & $\Oh(n\log n+\Delta^2)$ \\
Apostolico and Guerra~\cite{ApostolicoG87} & $\Oh(n\log n + Lm\min\{\log m, \log(n/m)\}) $ \\
Wu, Manbers, Myers and Miller~\cite{WuMMM90} &  $\Oh(n\log n+\delta \Delta)$ \footnote{Wu et al.\ state their running time as $\Oh(n\delta)$ in the worst case and $\Oh(n + \delta \Delta)$ in expectation for random strings. However, Myers worst-case variation trick~\cite[Section 4c]{Myers86} applies and yields the claimed time bound $\Oh(n \log n + \delta \Delta)$. The additional $\Oh(n \log n)$ comes from building a suffix tree.} \\
Eppstein, Galil, Giancarlo and Italiano~\cite{EppsteinGGI92} &  $\Oh(n\log n+d\log \log \min\{d, nm/d\} )$ \\
Iliopoulos and Rahman~\cite{IliopoulosR09} & $\Oh(n + M \log \log n)$ \\
\end{tabular}
\caption{Short survey of LCS algorithms. See \secref{defparams} for definitions of the parameters. When stating the running times, every factor possibly attaining non-positive values (such as $\delta,\log(n/m)$, etc.) is to be read as $\max\{\cdot,1\}$. For simplicity, $\log(\Sigma)$-factors have been bounded from above by $\log n$ (see \cite{PatersonD94} for details on the case of constant alphabet size). 
}
\label{tab:survey}
\end{table}
\end{savenotes}
}

\tabvref{survey} gives a non-comprehensive overview of progress on multivariate LCS, including the asymptotically fastest known algorithms. Note that the most recent polynomial factor improvement for multivariate LCS was found in 1990~\cite{WuMMM90}. Further progress on multivariate LCS was confined to log-factor improvements (e.g.,~\cite{EppsteinGGI92,IliopoulosR09}). Therefore, the majority of later works on LCS focused on transferring the early successes and techniques to more complicated problems, such as 
longest common increasing subsequence
\cite{moosa2013computing,kutz2011faster,yang2005fast,chan2007efficient},
tree LCS
\cite{mozes2009fast},
and many more generalizations and variants of LCS, see, e.g.,
\cite{kuboi2017,castelli2017,tiskin2006longest,jiang2004longest,
alber2002towards,landau2003sparse,keller2009longest,gotthilf2010restricted,
gotthilf2007approximating,pevzner1992matrix,iliopoulos2007algorithms,
iliopoulos2008algorithms,alam2012substring,farhana2012doubly,lemstrom2004bit,
wang1993set,landau2007two,amir2010weighted,amir2008generalized,tiskin2008semi,
blin2012parameterized,deorowicz2012quadratic,becerra2016multiobjective,tiskin2010fast,
benson2016lcsk,jacobson1992heaviest}.
One branch of generalizations considered the LCS of more than two strings
(e.g.,~\cite{blin2012hardness,AbboudBVW15}), with variations such as string consensus
(e.g.,~\cite{amir2013hardness,amir2016configurations})
and more
(e.g.,~\cite{cygan2016polynomial,landau2003sparse,bansal2010longest,gotthilf2010restricted,
gotthilf2008constrained,amir2010weighted}). 
Since natural language texts are well compressible,  
researchers also considered solving LCS directly on compressed strings, using either 
run-length encoding (e.g.,~\cite{kuboi2017,bunke1995improved,apostolico1997matching,crochemore2003subquadratic})
or straight-line programs and other Lempel-Ziv-like compression schemes
(e.g.,~\cite{tiskin2009faster,HLLW13,gawrychowski2012faster,lifshits2007processing}).
Further research directions include approximation algorithms for LCS and its variants (e.g.,~\cite{landau2009lcs,gotthilf2007approximating,gotthilf2008constrained}),
as well as the LCS length of random strings 
\cite{baeza1999bounding,lueker2009improved}.
For brevity, here we ignore the equally vast literature on the closely related edit distance. 
Furthermore, we solely regard the time complexity of computing the length of an LCS and hence omit all results concerning space usage or finding an LCS. See, e.g.,~\cite{PatersonD94,BergrothHR00} for these and other aspects of LCS (including empirical evaluations).

\subsection{(Multivariate) Hardness in \textsf{P}}

After the early success of 3SUM-hardness in computational geometry~\cite{AnkaOvermars}, recent years have brought a wealth of novel conditional lower bounds for polynomial time problems, see, e.g., \cite{AbboudVW14,RodittyVW13,Williams05,VassilevskaWW10,BackursI15,abboud2016approximation, KopelowitzPP16,AbboudHVWW16,
Bringmann14,BringmannM15,AbboudBVW15,BringmannK15,bringmann2016dichotomy,backurs2016regular} and the recent survey~\cite{VassilevskaWilliams15}. In particular, our work extends the recent successful line of research proving SETH-based lower bounds for a number problems with efficient dynamic programming solutions such as Fr\'echet distance~\cite{Bringmann14,BringmannM15}, edit distance~\cite{BackursI15,BringmannK15}, LCS and dynamic time warping~\cite{AbboudBVW15,BringmannK15}. 
Beyond worst-case conditional lower bounds of the form $n^{c-o(1)}$, recently also more detailed lower bounds targeting additional input restrictions have gained interest. Such results come in different flavors, as follows.

\emph{Input parameters, polynomial dependence.}
Consider one or more parameters in addition to the input size $n$, where the optimal time complexity of the studied problem depends polynomially on $n$ and the parameters. This is the situation in this paper as well as several previous studies, e.g.,~\cite{abboud2016approximation, KopelowitzPP16, Bringmann14}. To the best of our knowledge, our work is the first of this kind to study combinations of more than two parameters that adhere to a complex set of parameter relations -- for previous results, typically the set of non-trivial parameter settings was obvious and simultaneously controlling all parameters was less complex.

\emph{Input parameters, superpolynomial dependence.} 
Related to the above setting, parameters have been studied where the time complexity depends polynomially on $n$ and \emph{superpolynomially on the parameters}. If the studied problem is \textsf{NP}-hard then this is known as fixed-parameter tractability (\textsf{FPT}). However, here we focus on problems in \textsf{P}, in which case this situation is known as ``\textsf{FPT} in \textsf{P}''. Hardness results in this area were initiated by Abboud, Vassilevska Williams, and Wang~\cite{abboud2016approximation}.

\emph{A finite/discrete number of special cases.} Some input restrictions yield a discrete or even finite set of special cases. For example, Backurs and Indyk~\cite{backurs2016regular} and later Bringmann et al.~\cite{bringmann2016dichotomy} studied special cases of regular expression pattern matching by restricting the input to certain ``types'' of regular expressions. The set of types is discrete and infinite, however, there are only finitely many tractable types, and finitely many minimal hardness results. Their approach is similarly systematic to ours, as they classify the complexity of pattern matching for any type of regular expressions. The major difference is that our parameters are ``continuous'', specifically our parameter exponents~$\alpha_p$ are continuous, and thus our algorithms and lower bounds trace a continuous tradeoff.

While in all of the above settings the design of fast multivariate algorithms is well established, 
tools for proving matching conditional lower bounds have been developed only recently. 
In particular, the systematic approach to multivariate lower bounds pursued in this paper provides an effective complement to multivariate algorithmic studies in \textsf{P}, 
since it establishes (near-)optimality and may uncover tractable special cases for which improved algorithms can be found.%

\paragraph{Beyond SETH.}
Motivated in part to find barriers even for polylogarithmic improvements on LCS, a surprising result of Abboud et al.~\cite{AbboudHVWW16} strengthens the conditional quadratic-time hardness of LCS substantially. More precisely, they show that a strongly subquadratic-time algorithm for LCS would even refute a natural, weaker variant of SETH on \emph{branching programs}. In \secref{bpseth}, we survey their result and show that the conditional lower bounds we derive in this paper also hold under this weaker assumption.

\section{Preliminaries}

We write $[n]:= \{1,\dots,n\}$. For a string $x$, we denote its length by $|x|$, the symbol at its $i$-th position by $x[i]$, and the substring from position $i$ to position $j$ by $x[i..j]$.
If string $x$ is defined over alphabet~$\Sigma$, we denote the number of occurrences of symbol $\sigma\in \Sigma$ in $x$ by $\occ_\sigma(x)$. In running time bounds we write $\Sigma$ instead of $|\Sigma|$ for readability. For two strings $x,y$, we denote their concatenation by $x \concat y = xy$ and define, for any $\ell \ge 0$, the $\ell$-fold repetition $x^\ell := \bigconcat_{i=1}^\ell x$.
For any strings $x,y$ we let $\LCS(x,y)$ be any longest common subsequence of $x$ and $y$, i.e., a string $z = z[1..L]$ of maximum length $L$ such that there are $i_1 < \ldots < i_L$ with $x[i_k] = z[k]$ for all $1 \le k \le L$ and there are $j_1 < \ldots < j_L$ with $y[j_k] = z[k]$ for all $1 \le k \le L$. 
For a string $x$ of length $n$, let $\rev(x) := x[n] \, x[n-1] \dots x[1]$ denote its reverse.

\subsection{Parameter Definitions}
\label{sec:defparams}

We survey parameters that have been used in the analysis of the LCS problem (see also~\cite{PatersonD94,BergrothHR00}). Let $x,y$ be any strings. By possibly swapping $x$ and $y$, we can assume that $x$ is the longer of the two strings, so that $n = n(x,y) := |x|$ is  the input size (up to a factor of two). Then $m = m(x,y) := |y|$ is the length of the shorter of the two strings.
Another natural parameter is the solution size, i.e., the length of any LCS, $L = L(x,y):=|\LCS(x,y)|$. 

Since any symbol not contained in $x$ or in $y$ cannot be contained in a LCS, we can ensure the following using a (near-)linear-time preprocessing. 

\begin{assump}\label{assump:symbols}
Every symbol $\sigma \in \Sigma$ occurs at least once in $x$ and in $y$, i.e., $\occ_\sigma(x),\occ_\sigma(y) \ge 1$. 
\end{assump}

Consider the alphabet induced by $x$ and $y$ after ensuring \assumpref{symbols}, namely $\Sigma = \{x[i] \mid 1 \le i \le |x|\} \cap \{y[j] \mid 1 \le j \le |y|\}$. Its size $\Sigma(x,y) := |\Sigma|$ is a natural parameter. 

Beyond these standard parameters $n,m,L,|\Sigma|$ (applicable for any optimization problem on strings), popular structural parameters measure the \emph{similarity} and \emph{sparsity} of the strings. These notions are more specific to LCS and are especially relevant in practical applications such as, e.g., the \texttt{diff} file comparison utility, where symbols in $x$ and $y$ correspond to lines in the input files. 

\hideimages{

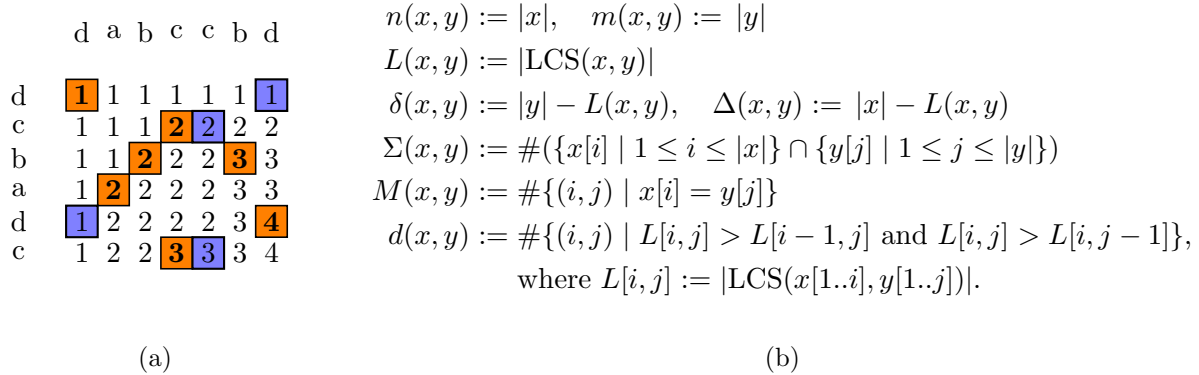
\begin{figure}
\begin{subfigure}[b]{0.25\textwidth}
  \centering
  \begin{tikzpicture}[draw=black, thick,x=\Size,y=\Size]
\node [NormSquare] at (2.5,-0.5) {d};
\node [NormSquare] at (3.5,-0.5) {a};
\node [NormSquare] at (4.5,-0.5) {b};
\node [NormSquare] at (5.5,-0.5) {c};
\node [NormSquare] at (6.5,-0.5) {c};
\node [NormSquare] at (7.5,-0.5) {b};
\node [NormSquare] at (8.5,-0.5) {d};
\node [NormSquare] at (0.5,-2.5) {d};
\node [NormSquare] at (0.5,-3.5) {c};
\node [NormSquare] at (0.5,-4.5) {b};
\node [NormSquare] at (0.5,-5.5) {a};
\node [NormSquare] at (0.5,-6.5) {d};
\node [NormSquare] at (0.5,-7.5) {c};
\node [DomSquare] at (2.5,-2.5) {\textbf{1}};
\node [NormSquare] at (3.5,-2.5) {1};
\node [NormSquare] at (4.5,-2.5) {1};
\node [NormSquare] at (5.5,-2.5) {1};
\node [NormSquare] at (6.5,-2.5) {1};
\node [NormSquare] at (7.5,-2.5) {1};
\node [MatchSquare] at (8.5,-2.5) {1};
\node [NormSquare] at (2.5,-3.5) {1};
\node [NormSquare] at (3.5,-3.5) {1};
\node [NormSquare] at (4.5,-3.5) {1};
\node [DomSquare] at (5.5,-3.5) {\textbf{2}};
\node [MatchSquare] at (6.5,-3.5) {2};
\node [NormSquare] at (7.5,-3.5) {2};
\node [NormSquare] at (8.5,-3.5) {2};
\node [NormSquare] at (2.5,-4.5) {1};
\node [NormSquare] at (3.5,-4.5) {1};
\node [DomSquare] at (4.5,-4.5) {\textbf{2}};
\node [NormSquare] at (5.5,-4.5) {2};
\node [NormSquare] at (6.5,-4.5) {2};
\node [DomSquare] at (7.5,-4.5) {\textbf{3}};
\node [NormSquare] at (8.5,-4.5) {3};
\node [NormSquare] at (2.5,-5.5) {1};
\node [DomSquare] at (3.5,-5.5) {\textbf{2}};
\node [NormSquare] at (4.5,-5.5) {2};
\node [NormSquare] at (5.5,-5.5) {2};
\node [NormSquare] at (6.5,-5.5) {2};
\node [NormSquare] at (7.5,-5.5) {3};
\node [NormSquare] at (8.5,-5.5) {3};
\node [MatchSquare] at (2.5,-6.5) {1};
\node [NormSquare] at (3.5,-6.5) {2};
\node [NormSquare] at (4.5,-6.5) {2};
\node [NormSquare] at (5.5,-6.5) {2};
\node [NormSquare] at (6.5,-6.5) {2};
\node [NormSquare] at (7.5,-6.5) {3};
\node [DomSquare] at (8.5,-6.5) {\textbf{4}};
\node [NormSquare] at (2.5,-7.5) {1};
\node [NormSquare] at (3.5,-7.5) {2};
\node [NormSquare] at (4.5,-7.5) {2};
\node [DomSquare] at (5.5,-7.5) {\textbf{3}};
\node [MatchSquare] at (6.5,-7.5) {3};
\node [NormSquare] at (7.5,-7.5) {3};
\node [NormSquare] at (8.5,-7.5) {4};
\end{tikzpicture}
\vspace{2em}
  \caption{}
  \label{fig:classicLCSpic}
  \end{subfigure}
  \hfill
  \begin{subfigure}[b]{0.74\textwidth}
    \centering
\begin{align*}
n(x,y) :=\; & |x|, \quad m(x,y) :=\, |y|\\
L(x,y) :=\; &|\LCS(x,y)| \\
\delta(x,y) :=\; &|y|-L(x,y), \quad \Delta(x,y):=\, |x|-L(x,y)\\
\Sigma(x,y) :=\; & \#(\{x[i] \mid 1 \le i \le |x|\} \cap \{y[j] \mid 1 \le j \le |y|\}) \\
M(x,y) :=\; &\#\{ (i,j)  \mid x[i]  = y[j] \}\\
d(x,y) :=\; &\#\{ (i,j) \mid L[i,j] > L[i-1,j] \text{ and } L[i,j] > L[i,j-1]\},\\
 & \text{where } L[i,j] := |\LCS(x[1..i],y[1..j])|.
\end{align*}
\caption{}
\label{fig:paramSummary}
    \end{subfigure}
\caption{(a) Illustration of the $L$-table, matching pairs and dominant pairs. Entries marked in orange color and bold letters correspond to dominant pairs (which by definition are also matching pairs), while entries marked in blue are matching pairs only. (b) Summary of all input parameters.}
\end{figure}
}

\paragraph{Notions of similarity.}
To obtain an LCS, we have to delete $\Delta = \Delta(x,y) := n-L$ symbols from $x$ or $\delta = \delta(x,y) := m - L$ symbols from $y$. Hence for very similar strings, which is the typical kind of input for file comparisons, we expect $\delta$ and $\Delta$ to be small. This is exploited by algorithms running in time, e.g., $\tOh(n + \delta \Delta)$~\cite{WuMMM90} or $\tOh(n + \delta L)$~\cite{Hirschberg77}.

\paragraph{Notions of sparsity.}
Based on the observation that the dynamic programming table typically stores a large amount of redundant information (suggested, e.g., by the fact that an LCS itself can be reconstructed examining only $O(n)$ entries), algorithms have been studied that consider only the most relevant entries in the table. The simplest measure of such entries is the number of  \emph{matching pairs} $M = M(x,y) := \#\{ (i,j) \mid x[i] = y[j] \}$. Especially for inputs with a large alphabet, this parameter potentially significantly restricts the number of candidate pairs considered by LCS algorithms, e.g., for files where almost all lines occur only once. Moreover, in the special case where $x$ and $y$ are permutations of $\Sigma$ we have $M=n=m$, and thus algorithms in time $\tOh(n+M)$~\cite{HuntMcI75,HuntS77,IliopoulosR09} recover the near-linear time solution for LCS of permutations~\cite{szymanski1975special}.

One can refine this notion to obtain the \emph{dominant pairs}. A pair $(i,j)$ \emph{dominates} a pair $(i',j')$ if we have $i \le i'$ and $j \le j'$.
A \emph{$k$-dominant pair} is a pair $(i,j)$ such that $L(x[1..i],y[1..j]) = k$ and no other pair $(i',j')$ with $L(x[1..i'],y[1..j'])=k$ dominates $(i,j)$. By defining $L[i,j] := L(x[1..i],y[1..j])$ and using the well known recursion $L[i,j] = \max\{L[i-1,j],L[i,j-1],L[i-1,j-1] + \mathbf{1}_{x[i] = y[j]}\}$, we observe that $(i,j)$ is a $k$-dominant pair if and only if $L[i,j] = k$ and $L[i-1,j]=L[i,j-1]=k-1$. Denoting the set of all $k$-dominant pairs by $D_k$, the set of \emph{dominant pairs} of $x,y$ is $\bigcup_{k\ge 1} D_k$, and we let $d = d(x,y)$ denote the number of dominant pairs. Algorithms running in time $\tOh(n + d)$ exploit a small number of dominant pairs~\cite{Apostolico86,EppsteinGGI92}. \figref{classicLCSpic} illustrates matching and dominant pairs. 

While at first sight the definition of dominant pairs might not seem like the most natural parameter, it plays an important role in analyzing LCS: First, from the set of dominant pairs alone one can reconstruct the $L$-table that underlies the basic dynamic programming algorithm. Second, the parameter $d$ precisely describes the complexity of one of the fastest known (multivariate) algorithms for LCS. Finally, LCS with parameter $d$ is one of the first instances of the paradigm of \emph{sparse dynamic programming} (see, e.g., \cite{EppsteinGGI92}).

\medskip
On practical instances, exploiting similarity notions seems to typically outperform algorithms based on sparsity measures (see~\cite{MillerM85} for a classical comparison to an algorithm based on the number of matching pairs $M$~\cite{HuntMcI75,HuntS77}). 
To the best of our knowledge, \figref{paramSummary} summarizes all parameters which have been exploited to obtain multivariate algorithms for LCS.

\subsection{Hardness Hypotheses}
\label{sec:hypotheses}

\noindent
\textbf{%
Strong Exponential Time Hypothesis (\SETH):} \emph{
For any $\varepsilon > 0$ there is a $k \ge 3$ such that $k$-SAT on $n$ variables cannot be solved in time $\Oh((2-\varepsilon)^n)$.
}

\medskip
\SETH was introduced by Impagliazzo, Paturi, and Zane~\cite{ImpagliazzoPZ01} and essentially asserts that satisfiability has no algorithms that are much faster than exhaustive search. It forms the basis of many conditional lower bounds for NP-hard as well as polynomial-time problems.

Effectively all known SETH-based lower bounds for polynomial-time problems use reductions via the \emph{Orthogonal Vectors problem} (\OV):
Given sets $\sA$, $\sB\subseteq \{0,1\}^D$ of size $|\sA| = n$, $|\sB| = m$, determine whether there exist $a \in \sA$, $b \in \sB$ with $\sum_{i=1}^D a[i] \cdot b[i] = 0$ (which we denote by $\langle a, b \rangle =0$).
Simple algorithms solve \OV in time $\Oh(2^D (n+m))$ and $\Oh(nm D)$. The fastest known algorithm for $D=c(n)\log n$ runs in time $n^{2-1/\Oh(\log c(n))}$ (when $n=m$) \cite{AbboudWY15}, which is only slightly subquadratic for $D \gg \log n$. This has led to the following reasonable hypothesis.

\medskip
\noindent
\textbf{%
Orthogonal Vectors Hypothesis (\OVH):} \emph{
\OV restricted to $n=|\sA|=|\sB|$ and $D=n^{o(1)}$ requires time $n^{2-o(1)}$.
} 

\medskip
A well-known reduction by Williams~\cite{Williams05} shows that \SETH implies \OVH. Thus, \OVH is the weaker assumption and any \OVH-based lower bound also implies a \SETH-based lower bound. The results in this paper do not only hold assuming \SETH, but even assuming the weaker \OVH.

\section{Formal Statement of Results}
\label{sec:formal}

Recall that $n$ is the input size and $\params:= \{m, L, \delta, \Delta, |\Sigma|, M, d\}$ is the set of parameters that were previously studied in the literature. We let $\params^* := \params \cup \{n\}$. A parameter setting fixes a polynomial relation between any parameter and $n$. To formalize this, we call a vector $\Valpha = (\alpha_p)_{p\in \params}$ with $\alpha_p \in \R_{\ge 0}$ a \emph{parameter setting}, and an LCS instance $x,y$ \emph{satisfies} the parameter setting $\Valpha$ if each parameter $p$ attains a value $p(x,y) = \Theta(n^{\alpha_p})$. This yields a subproblem of LCS consisting of all instances that satisfy the parameter setting. 
We sometimes use the notation $\alpha_n = 1$.

For our running time bounds, for each parameter $p \in \params$ except for $|\Sigma|$ we can assume $\alpha_p > 0$, since otherwise one of the known algorithms runs in time $\tOh(n)$ and there is nothing to show. Similarly, for $\alpha_d \le 1$ there is an $\tOh(n)$ algorithm and there is nothing to show.
For $\Sigma$, however, the case $\alpha_\Sigma = 0$, i.e., $|\Sigma| = \Theta(1)$, is an important special case. We study this case more closely by also considering parameter settings that fix $|\Sigma|$ to any specific constant greater than 1. 

\begin{defn}[Parameter Setting]
Fix $\gamma \ge 1$. Let $\Valpha = (\alpha_p)_{p\in \params}$ with $\alpha_p \in \R_{\ge 0}$. We define $\LCS^\gamma(\Valpha)$ as the problem of computing the length of an LCS of two given strings $x,y$ satisfying $n^{\alpha_p}/\gamma \le p(x,y) \le n^{\alpha_p} \cdot \gamma$ for every parameter $p \in \params$, where $n = |x|$, and $|x| \ge |y|$. We call $\Valpha$ and $\LCS^\gamma(\Valpha)$ \emph{parameter settings}. In some statements we simply write $\LCS(\Valpha)$ to abbreviate that there exists a $\gamma \ge 1$ such that the statement holds for $\LCS^\gamma(\Valpha)$. 

For any fixed alphabet $\Sigma$, constant $\gamma \ge 1$, and parameter setting $\Valpha$ with $\alpha_\Sigma = 0$, we also define the problem $\LCS^\gamma(\Valpha,\Sigma)$, where additionally the alphabet of $x,y$ is fixed to be $\Sigma$. We again call $(\Valpha,\Sigma)$ and $\LCS^\gamma(\Valpha,\Sigma)$ parameter settings.

We call a parameter setting $\Valpha$ or $(\Valpha,\Sigma)$ \emph{trivial} if for all $\gamma \ge 1$ the problem $\LCS^\gamma(\Valpha)$ or $\LCS^\gamma(\Valpha,\Sigma)$, respectively, has only finitely many instances. 
\end{defn}

As our goal is to prove hardness for any non-trivial parameter setting, for each parameter setting we either need to construct hard instances or verify that it is trivial. That is, in one way or the other we need a complete \emph{classification} of parameter settings into trivial and non-trivial ones. To this end, we need to understand all interactions among our parameters that hold up to constant factors, which is an interesting question on its own, as it yields insight into the structure of strings from the perspective of the LCS problem.
For our seven parameters, determining all interactions is a complex task. This is one of the major differences to previous multivariate fine-grained complexity results, where the number of parameters was one, or in rare cases two, limiting the interaction among parameters to a simple level.

\begin{thm}[Classification of non-trivial parameter settings] \label{thm:nontrivialclassification}
  A parameter setting $\Valpha$ or $(\Valpha,\Sigma)$ is non-trivial if and only if it satisfies all restrictions in \tabref{paramChoiceRestr}.
\end{thm}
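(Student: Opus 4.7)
The plan is to reduce the classification to two ingredients: (i) a finite list of universal asymptotic inequalities among the parameters in $\params^*$, and (ii) matching parametrized constructions realizing every exponent vector consistent with this list. First I would enumerate the basic parameter relations that must hold on every instance, namely $L \le m \le n$, $L + \delta = m$, $L + \Delta = n$, $|\Sigma| \le m$, $n \le M$ (a consequence of \assumpref{symbols}), $L \le d \le M \le mn$, and $d \le \delta\Delta + L$ (since any $k$-dominant pair $(i,j)$ that does not extend the LCS must witness a mismatch both above and to the left, consuming one deletion on each side). Translated into constraints on the exponents $\alpha_p$, these correspond one-for-one with the conditions in \tabref{paramChoiceRestr}; the correspondence can be verified mechanically.

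For the ``only if'' direction, suppose some restriction in \tabref{paramChoiceRestr} is violated. Each restriction is the exponent-level shadow of one of the inequalities above, so the violation means the prescribed exponent of some parameter is asymptotically incompatible with the prescribed exponents of the others. Since $\alpha_n = 1$ pins $n$ to a window $[n_0/\gamma, n_0\gamma]$ and every $p(x,y)$ is likewise pinned to $[n_0^{\alpha_p}/\gamma, n_0^{\alpha_p}\gamma]$, for all sufficiently large $n_0$ the demanded combination becomes infeasible, leaving only finitely many instances in $\LCS^\gamma(\Valpha)$ (or $\LCS^\gamma(\Valpha,\Sigma)$). Hence the parameter setting is trivial, as required.

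For the ``if'' direction, I would construct, for every $\Valpha$ satisfying all restrictions, an infinite family of instances by a modular composition of gadgets: (a)~aligned runs $\sigma^a$ versus $\sigma^b$ to inflate $M$ and $L$ without creating new dominant pairs; (b)~a permutation block over a fresh sub-alphabet (or its binary/ternary substitute when $|\Sigma|$ is fixed) providing a tunable number of matching and dominant pairs; (c)~one-sided padding with fresh symbols to raise $\Delta$ (and $\delta$) while leaving $L$ and $M$ untouched; and (d)~a block of pairwise distinct fresh symbols to increase $|\Sigma|$ without introducing matches. Each target exponent is a linear function of the gadget sizes, so setting the sizes to suitable powers of a scale parameter $N$ reduces the realization task to a linear feasibility system whose polytope is precisely the one carved out by \tabref{paramChoiceRestr}.

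The main obstacle lies in this last step: each gadget influences several parameters simultaneously, so a careless combination overshoots one of $\delta\Delta$, $M$, or $d$. The most delicate constraint to saturate is $d = \Theta(\delta\Delta + L)$, since matching $d$ tightly for similar strings requires a diagonal-staircase of matching pairs rather than an unstructured union of blocks. For constant $|\Sigma|$, replacing the permutation gadget by a short-alphabet encoding demands a careful counting argument to ensure that $M$ and $d$ do not inadvertently exceed the exponents permitted by the table, and that the alphabet restriction $|\Sigma| \ge 2$ (or $\ge 3$) is compatible with the remaining exponents. Once this saturation is verified for every non-trivial exponent vector, the full classification follows.
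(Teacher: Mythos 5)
Your plan has the right two-part shape — prove a complete list of universal inequalities to get the ``only if'' direction, then realize every compatible exponent vector by modular constructions for the ``if'' direction — but the list of inequalities you propose is both incomplete and contains a false relation, and this is precisely where the theorem's real difficulty sits.

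The relation you invoke, $d \le \delta\Delta + L$, is false. Take $x = (01)^{R+S}$, $y = 0^R(01)^S$: here $y$ is a subsequence of $x$, so $\delta = 0$ and $L = R + 2S$, yet $d = \Theta(RS)$ (this is precisely \lemref{genDomPairs}\itemref{genDomPairsd} in the paper). For $R = S$ this violates $d \le \delta\Delta + L = \Theta(R)$ by a factor $\Theta(R)$. The heuristic you give (``any dominant pair that does not extend the LCS consumes one deletion on each side'') misreads the definition: a $k$-dominant pair $(i,j)$ has $L[i,j] > L[i-1,j]$ and $L[i,j] > L[i,j-1]$, so it \emph{does} strictly extend the table value in both directions; it is not tied to deletions. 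The correct bound in terms of $\Delta$ is $d \le 2L(\Delta+1)$, which is what the paper proves and which is tight up to constants on the example above.

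Beyond that one wrong entry, your list omits several relations that are needed to carve out \tabref{paramChoiceRestr} exactly: $d \le Lm$, $d \le L^2|\Sigma|$, $|\Sigma| \le d$, $M \ge L^2/|\Sigma|$, the sharper $M \le 2Ln$ (your $M \le mn$ is strictly weaker and does not yield $\alpha_M \le \alpha_L + 1$), and — crucially — the \emph{alphabet-specific} inequalities $M \ge Lm/4$ and $M \ge nd/(5L)$ for $|\Sigma| = 2$, and $M \ge md/(80L)$ for $|\Sigma| = 3$. Without these your ``only if'' argument declares many genuinely trivial settings to be non-trivial (e.g.\ any $\alpha_d > 2\alpha_L + \alpha_\Sigma$), and your ``if'' argument would then be doomed from the start, trying to realize parameter settings for which no strings exist. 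Saying the correspondence ``can be verified mechanically'' begs the question: the paper's own discussion stresses that discovering a \emph{complete} list required many rounds of alternating between candidate inequalities and candidate constructions, and the eight $d$-involving relations were the bottleneck. Finally, the ``if''-direction sketch leans on fresh-symbol padding (disjoint sub-alphabets) even in the constant-$|\Sigma|$ case, but there the Disjoint Alphabets Lemma is unavailable by fiat; the paper has to pad all parameters simultaneously in one integrated binary (resp.\ ternary) construction, which is a substantially different argument from the modular composition you describe.
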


\hideimages{

\begin{table}
\centering
\begin{tabular}{ll}
\textbf{Parameter} & \textbf{Restriction}\\
\hline
$m$ & $0 \le \alpha_m \le 1$ \\
\hline
$L$ & $0 \le \alpha_L \le \alpha_m$  \\
\hline
 $\delta$ & $\begin{cases} 0 \le \alpha_{\delta} \le \alpha_m & \text{if } \alpha_L = \alpha_m \\ \alpha_{\delta} = \alpha_m & \text{otherwise} \end{cases}$ \\
\hline
 $\Delta$ & $\begin{cases} \alpha_{\delta} \le \alpha_{\Delta} \le 1 & \text{if } \alpha_L = \alpha_m = 1 \\ \alpha_{\Delta}= 1 & \text{otherwise} \end{cases}$  \\
\hline
$|\Sigma|$ & $0 \le \alpha_{\Sigma} \le \alpha_m$  \\
\hline
 $d$ & $\max\{\alpha_L, \alpha_\Sigma\} \le \alpha_d \le \min\{2\alpha_L+\alpha_{\Sigma},\alpha_L + \alpha_m, \alpha_L + \alpha_\Delta\}$  \\
\hline
 $M$ & $\max\{1,\alpha_d,2\alpha_L-\alpha_\Sigma\} \le \alpha_M \le \alpha_L + 1$ \\
 & if $|\Sigma| = 2$: $\alpha_M \ge \max\{\alpha_L+\alpha_m, 1+\alpha_d - \alpha_L\}$ \\
 & if $|\Sigma| = 3$: $\alpha_M \ge \alpha_m+\alpha_d - \alpha_L$ \\
\hline
\end{tabular}
\caption{Complete set of restrictions for non-trivial parameter settings.}
\label{tab:paramChoiceRestr}
\end{table}

}

Note that the restrictions in \tabref{paramChoiceRestr} consist mostly of linear inequalities, and that for small alphabet sizes $|\Sigma| \in \{2,3\}$ additional parameter relations hold.
The proof of this and the following results will be outlined in \secref{roughproof}.
We are now ready to state our main lower bound.

\begin{thm}[Hardness for Large Alphabet]\label{thm:main1}
For any non-trivial parameter setting $\Valpha$, there is a constant $\gamma \ge 1$ such that $\LCS^\gamma(\Valpha)$ requires time
$\min\big\{ d, \, \delta \Delta, \, \delta m \big\}^{1-o(1)}$,
unless \OVH fails.
\end{thm}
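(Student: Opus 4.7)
The plan is to establish \thmref{main1} via three separate OV-based hardness constructions, one matching each of the three target quantities $d$, $\delta\Delta$, and $\delta m$, then combine them through a case analysis on the parameter setting. Fix a non-trivial $\Valpha$ and let $\tau := \min\{\alpha_d,\; \alpha_\delta + \alpha_\Delta,\; \alpha_\delta + \alpha_m\}$, so that $\min\{d,\delta\Delta,\delta m\} = n^{\tau \pm o(1)}$ on all instances satisfying $\Valpha$. The goal is then to reduce, in time $\widetilde{O}(N^2)$, any OV instance of size $N := \lfloor n^{\tau/2} \rfloor$ with dimension $D = n^{o(1)}$ into an LCS instance $(x,y) \in \LCS^\gamma(\Valpha)$. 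A hypothetical $T^{1-\eps}$ algorithm for $\LCS^\gamma(\Valpha)$ (where $T = \min\{d,\delta\Delta,\delta m\}$) would then solve OV in time $N^{2(1-\eps)+o(1)}$, contradicting \OVH.

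First, I would build a core OV-to-LCS gadget in the style of~\cite{AbboudBVW15,BringmannK15}: given an OV instance of size $N$ with dimension $D$, this gadget produces strings $X_0, Y_0$ of length $\Theta(ND)$ over a constant alphabet, such that $L(X_0,Y_0)$ takes one of two consecutive values $\{a,a-1\}$ according to the YES/NO answer of OV, for a known threshold $a$. I would then design three variants of this core, each tuned so that one of the critical parameters is of order $N^2$: one making the dominant-pair count $d(X_0,Y_0) = \Theta(N^2)$ directly; one forcing $\delta \approx m$ so that $\delta m = \Theta(N^2)$ (essentially comparing a long $x$ to a short $y$); and one balancing $\delta, \Delta$ so that $\delta\Delta = \Theta(N^2)$. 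The appropriate variant is chosen based on which of $\alpha_d$, $\alpha_\delta+\alpha_m$, $\alpha_\delta+\alpha_\Delta$ realizes $\tau$.

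After building the core, the remaining task is to scale each of the seven parameters to its target value $p(x,y) = \Theta(n^{\alpha_p})$ via padding. I would develop a toolkit of operations: fresh-symbol padding appended only to $x$ inflates $n$ and $\Delta$; padding only to $y$ inflates $m$ and $\delta$; matched blocks appended to both strings inflate $L$, $M$, and $d$ in lockstep; structured matched regions such as permutation patterns or anti-chains adjust $M$ and $d$ independently; and introducing fresh alphabet symbols controls $|\Sigma|$. The restrictions of \tabref{paramChoiceRestr}, which per \thmref{nontrivialclassification} precisely characterize non-triviality, guarantee that some combination of these paddings realizes $\Valpha$ within constant factors. Throughout, one must verify that the OV gap in $L$ is preserved after padding, which holds by choosing paddings that change $L$ only by a deterministic, known amount.

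The main obstacle lies in the padding step: with seven parameters tied by the intricate relations of \tabref{paramChoiceRestr}, hitting every target up to constant factors while preserving the OV gap in $L(x,y)$ requires delicate accounting. The dominant-pair count $d$ is the trickiest parameter because it depends on the global $L$-table rather than on local matches, so even mild changes in the padding can shift $d$ substantially. Boundary cases of \tabref{paramChoiceRestr}---for instance, $\alpha_L = \alpha_m$, $\alpha_\delta$ at either extreme, or the additional alphabet-specific relations arising when $|\Sigma|\in\{2,3\}$---will likely need special handling to ensure that the targeted combination of paddings is both feasible and tight; in particular, \thmref{main1} allows $|\Sigma|$ to be polynomial or constant $\ge 3$, and one must take care that enough alphabet symbols are available to realize the desired number of matches without inflating $d$ beyond $n^{\alpha_d}$.
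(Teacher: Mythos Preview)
Your high-level plan---reduce from OV, then pad to hit every parameter---is the right picture, but it diverges from the paper's proof in two important ways, and one of them hides a real gap.

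\textbf{Structural difference.} The paper does \emph{not} try to hit all seven parameters directly. Instead it factors the proof into (i) a \emph{Monotonicity Lemma} (\lemref{reductiontoclosure}) showing that $\LCS^\gamma(\Valpha)$ is hard iff the downward closure $\LCS_\le^\gamma(\Valpha)$ is hard, and (ii) a hardness proof for $\LCS_\le^\gamma(\Valpha)$ (\thmref{hardness}), where only \emph{upper bounds} on the parameters have to be met. Step~(i) is handled once and for all by the paddings of \lemref{paddingstotal} combined via the Disjoint Alphabets Lemma. This decoupling is precisely what lets the paper avoid the ``delicate accounting'' you anticipate: the hard core need not land exactly on $\Valpha$, only below it.

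\textbf{Case split.} The paper does not split on which of $\alpha_d,\alpha_\delta+\alpha_\Delta,\alpha_\delta+\alpha_m$ attains the minimum. It splits on whether $\alpha_\delta=\alpha_m$ or $\alpha_L=\alpha_m$ (one always holds by \tabref{paramChoiceRestr}). In the first case the bound collapses to $d^{1-o(1)}$ and a mild adaptation of~\cite{BringmannK15} suffices. In the second case all three terms matter and a new gadget is needed.

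\textbf{Genuine gap.} Your plan uses balanced OV with $N=\lfloor n^{\tau/2}\rfloor$ and a ``variant balancing $\delta,\Delta$''. This fails when $\alpha_L=\alpha_m$ and $\alpha_\delta$ is small: take $\alpha_\delta=0.1$, $\alpha_\Delta=0.9$, so $\tau=1$ and $N\approx n^{1/2}$. Any construction in the style of~\cite{BringmannK15} on $N$ vectors per side produces strings of length $\Theta(N)$ with $\delta$ of the same order, i.e., $\delta\approx n^{1/2}\gg n^{0.1}$. Padding can only \emph{increase} $\delta$, so you cannot reach the target. The paper resolves this with the \emph{1vs1/2vs1 gadget} (\lemrefs{2vs1base}{2vs1gadget}): it encodes an \emph{unbalanced} OV instance with $A\ll B$ so that only $A$ of the $\Theta(B)$ blocks of $y$ are ``uniquely aligned'' (contributing to $\delta$), while the rest are fully matched. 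This yields $\delta=\Oh(A\cdot D)$ independently of $B$, which is exactly what is needed. Your proposal does not supply any mechanism for making $\delta$ asymptotically smaller than the number of vectors encoded, and the existing literature construction cannot do it; this is singled out in the paper as the main new technical idea of the reduction.
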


In the case of constant alphabet size, the (conditional) complexity differs between $|\Sigma|=2$ and $|\Sigma|\ge 3$. Note that $|\Sigma|=1$ makes LCS trivial.
\begin{thm}[Hardness for Small Alphabet]\label{thm:main2}
For any non-trivial parameter setting $(\Valpha,\Sigma)$, there is a constant $\gamma \ge 1$ such that, unless \OVH fails, $\LCS^\gamma(\Valpha,\Sigma)$ requires time
\begin{itemize}
  \item $\min\big\{ d, \, \delta \Delta, \, \delta m \big\}^{1-o(1)}$ if $|\Sigma| \ge 3$,
  \item $\min\big\{ d, \, \delta \Delta, \, \delta M / n \big\}^{1-o(1)}$ if $|\Sigma| = 2$.
\end{itemize} 
\end{thm}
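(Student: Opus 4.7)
The plan is to derive Theorem~\ref{thm:main2} from the large-alphabet hardness of \thmref{main1} by alphabet reduction, using two separate encodings tailored to $|\Sigma| \ge 3$ and $|\Sigma|=2$. For each non-trivial target setting $(\Valpha,\Sigma)$ from \tabref{paramChoiceRestr}, I would exhibit a source setting $\Valpha'$ together with a uniform block encoding $\mathrm{enc}:\Sigma_{\mathrm{old}}\to\Sigma^*$ and a padding scheme, so that computing the LCS of the encoded strings still solves an instance that is hard under \OVH, and so that the transported bound matches $\min\{d,\delta\Delta,\delta m\}^{1-o(1)}$ (ternary) or $\min\{d,\delta\Delta,\delta M/n\}^{1-o(1)}$ (binary) up to $n^{o(1)}$ factors.

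\textbf{Ternary case ($|\Sigma| \ge 3$).} I would designate the third symbol as a separator and encode $\sigma_i\in\Sigma_{\mathrm{old}}=\{\sigma_1,\dots,\sigma_K\}$ as $\mathrm{enc}(\sigma_i) = 2\,0^{a_i}\,1\,0^{K-a_i}\,2$ with distinct $a_i$. The surrounding $2$'s serve as rigid anchors forcing any LCS of $\mathrm{enc}(x),\mathrm{enc}(y)$ to match separators one-to-one, so the LCS length, $\delta$, $\Delta$, $m$, and $n$ all scale by $\Theta(K)$, while the inner pattern $0^{a_i}10^{K-a_i}$ guarantees that a pair of blocks contributes maximally only when the encoded symbols agree. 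A block-by-block accounting bounds the contribution to $M$ and $d$ inside each aligned or misaligned block-pair by $\Oh(K^2)$, giving predictable scalings that match the original bound of \thmref{main1}.

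\textbf{Binary case ($|\Sigma|=2$).} Without a separator, I would use $\mathrm{enc}(\sigma_i) = 0^{a_i}\,1\,0^{K-a_i}\,1$: the two $1$'s per block still act as anchors, but the $0$'s unavoidably match $0$'s across unaligned block-pairs, forcing $M = \Theta(K\cdot|x|\cdot|y|)$ on the encoded instance. This is precisely why the $\delta m$ term in the bound has to be replaced by $\delta M/n$: the encoding inflates $M/n$ by exactly the factor $\Theta(K)$ by which $m$ was scaled, so the lower bound that transfers naturally is $\delta M/n$, not $\delta m$. The additional constraints $\alpha_M \ge \alpha_L+\alpha_m$ and $\alpha_M \ge 1+\alpha_d-\alpha_L$ in \tabref{paramChoiceRestr} encode exactly this rigidity, and the matching $\Oh(n+\delta M/n)$-algorithm announced in the introduction certifies that the construction cannot overshoot.

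\textbf{Main obstacle.} The chief difficulty is to cover every non-trivial setting: the polytope in \tabref{paramChoiceRestr} has several faces on which the active minimum switches between $d$, $\delta\Delta$, and $\delta m$ (resp.\ $\delta M/n$), and the choice of block length $K$, source setting $\Valpha'$, and padding must be tuned for each case. Appending many copies of a neutral block to the encoded strings can independently adjust $\alpha_n$, $\alpha_\Delta$, and $\alpha_M$ without changing $L$ or $\delta$, which covers most of the polytope, but the boundary cases of the $|\Sigma|=2$ region require particular care because of the extra lower bounds on $\alpha_M$. I expect the most technical step to be the boundary case analysis, essentially a careful inequality check that the parameters of the encoded-and-padded instance satisfy every line of \tabref{paramChoiceRestr} while the transported \OVH-based lower bound remains tight.
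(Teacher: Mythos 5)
Your proposal takes a genuinely different route from the paper, and it contains a fatal gap.

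\textbf{Different route.} The paper does \emph{not} reduce from the large-alphabet hardness of \thmref{main1} by re-encoding a large alphabet into blocks. Its hard-core reductions (\lemrefs{smallLCShardnessConstantAlphabet}{largeLCShardnessConstantAlphabet}) already produce \emph{binary} strings, so alphabet reduction is never the issue. The work in \secref{hardnessSmallSigma} is entirely about \emph{parameter padding within a fixed small alphabet}: given hard binary instances from a downward-closed setting $\LCS_\le(\Valpha',\{0,1\})$, the paper splices in the building blocks $(01)^{R+S}$, $0^R(01)^S$, the $1^\nu$/$0^\mu$ delta-padding of \lemref{deltapadding}, and the dominant-pair reduction of \lemrefs{dreduction}{advancedDeltapadding}, with a choice of $\Valpha'$ tuned to circumvent the failure of monotonicity for $|\Sigma|=2$ (\obsref{MnConstructionalphaprime} and \obsref{deltaSmallerMn-alphaprime}).

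\textbf{The gap.} The central claim of your plan --- that the separators in $\mathrm{enc}(\sigma_i)=2\,0^{a_i}\,1\,0^{K-a_i}\,2$ (or the $1$'s in the binary variant) ``serve as rigid anchors forcing any LCS of $\mathrm{enc}(x),\mathrm{enc}(y)$ to match separators one-to-one'' --- is false for LCS. Unlike edit distance, LCS can delete anything for free, and in particular a short block of $\mathrm{enc}(y)$ can spread across two or more blocks of $\mathrm{enc}(x)$. Concretely, take $x=\sigma_1\sigma_2$ and $y=\sigma_3$ with $\sigma_3\notin\{\sigma_1,\sigma_2\}$, so $L(x,y)=0$. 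Then $\mathrm{enc}(y)=2\,0^{a_3}\,1\,0^{K-a_3}\,2$ is a full subsequence of $\mathrm{enc}(x)$ (match the leading $2$ to the first $2$ of block~1, the zeros and the $1$ across blocks $1$ and $2$, and the trailing $2$ to the last $2$ of block~2), so $L(\mathrm{enc}(x),\mathrm{enc}(y)) = K+3$, i.e.\ the encoded LCS is \emph{maximal}. The encoded strings carry no information about whether the original symbols matched. Even for aligned block pairs, the LCS of $2\,0^{a}\,1\,0^{K-a}\,2$ and $2\,0^{b}\,1\,0^{K-b}\,2$ is $K+3-|a-b|$, not $\Oh(1)$, so the per-block gap between match and non-match is as small as a single character; after summing over blocks this is swamped by the unbounded cross-block leakage above. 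The paper spends \lemrefs{2vs1base}{2vs1gadget} and \lemref{2vs1Largebase} precisely to control this 2-vs-1 spreading behavior, and it does so with guard gadgets ($0^{\gamma_1}1^{\gamma_2}(01)^{\gamma_3}\cdots$) whose blocks are several orders longer than the payload, not with constant-overhead separators.

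Two secondary issues: your matching-pair accounting is off (zeros across \emph{all} $|x|\cdot|y|$ block pairs contribute $\Theta(K^2\,|x|\,|y|)$, not $\Theta(K\,|x|\,|y|)$), and the claim that ``appending neutral blocks'' independently tunes $\alpha_n,\alpha_\Delta,\alpha_M$ while preserving $L$ and $\delta$ is exactly where the paper's non-monotonicity obstacle for $|\Sigma|=2$ bites: you cannot in general pad $\LCS_\le(\Valpha,\{0,1\})$ up to $\LCS(\Valpha,\{0,1\})$ (the tight bound $(n+\min\{d,\delta\Delta,\delta M/n\})^{1\pm o(1)}$ is not monotone), which is why the paper must carefully design a \emph{different} source setting $\Valpha'$ in \obsref{MnConstructionalphaprime} and \obsref{deltaSmallerMn-alphaprime} before padding. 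Your proposal names a source setting $\Valpha'$ but never explains how to pick it so that padding both reaches $\Valpha$ and preserves the OVH bound, and this is in fact the hard part.
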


Finally, we prove the following algorithmic result, handling binary alphabets faster if $M$ and $\delta$  are sufficiently small. This yields matching upper and lower bounds also for $|\Sigma|=2$.

\begin{thm}[\secref{algo}]\label{thm:algo}
For $|\Sigma| = 2$, LCS can be solved in time $\Oh(n + \delta M / n)$.
\end{thm}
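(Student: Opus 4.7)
The plan is to design a binary-specific variant of the Nakatsu--Kambayashi--Yajima (NKY) algorithm~\cite{NakatsuKY82}. Like NKY, we iteratively compute $\delta+1$ ``contours'' of the LCS dynamic programming table; NKY spends $\Oh(n)$ per contour for a total of $\Oh(n\delta)$. Our aim is to reduce the per-contour cost to $\Oh(M/n)$ amortized by exploiting the structural regularity of binary matches, yielding the desired $\Oh(n + \delta M/n)$ bound.

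After an $\Oh(n)$-time preprocessing that enforces Assumption~\ref{assump:symbols} and precomputes, for each $c \in \{0,1\}$ and each position, the next-occurrence functions $\Next_c(\cdot)$ into $x$ and $y$, we represent each contour implicitly by its \emph{breakpoints}---the positions at which the contour value strictly changes. Updating from one contour to the next is performed run by run over $x$: for each maximal run of character $c$ in $x$, a single $\Next_c$ query identifies the ``first'' relevant match, and all subsequent work within the run amounts to $\Oh(1)$ time per newly introduced breakpoint. Each iteration thereby runs in $\Oh(B_k)$ time, where $B_k$ denotes the number of new breakpoints introduced in iteration~$k$.

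The heart of the analysis is to show that $\sum_k B_k = \Oh(\delta M/n)$. Each breakpoint of a contour is naturally associated with a matching pair $(i,j)$ satisfying $x[i]=y[j]$. We argue that, as $k$ sweeps through the $\delta$ nontrivial contours, each matching pair $(i,j)$ is associated with a \emph{new} breakpoint in only $\Oh(\delta/n)$ iterations on average. Summed over the $M$ matching pairs, this yields the desired bound. The argument uses the monotone shift of the contour as $k$ increases, together with the binary-specific inequality $M \ge Lm$ from Table~\ref{tab:paramChoiceRestr}, which ensures that matches are sufficiently densely distributed for the averaging to go through.

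The main obstacle is to formalize this amortization rigorously. Two subtleties must be handled: first, a single long run of $x$ may create several breakpoints in one iteration, and we must argue that these charge to \emph{distinct} matching pairs; second, the ``lifetime'' of a matching pair---as a witness for contour breakpoints---across iterations must be tightly bounded in the worst case, not merely on average in a loose sense. We envision a potential-function argument whose potential tracks the sum of breakpoint positions along the contours, and we expect the analysis to hinge on the interplay between the run-structure of $x$, the run-structure of $y$, and the monotone evolution of the contours as $k$ grows.
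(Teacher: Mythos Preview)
Your proposal has a genuine gap: the amortization $\sum_k B_k = \Oh(\delta M/n)$ is asserted but not established, and the sketch you give does not point to a workable argument. Saying that each matching pair is a breakpoint witness ``in only $\Oh(\delta/n)$ iterations on average'' is not a bound one can verify---$\delta/n$ may be much smaller than $1$, and nothing in the contour dynamics singles out a subset of $\Theta(\delta M/n)$ matching pairs as the only ones ever charged. The NKY contours are staircases whose total breakpoint count is $\Theta(\delta m)$ in general, and the binary relation $M \ge Lm/4$ (not $M \ge Lm$) only tells you $\delta M/n = \Omega(\delta Lm/n)$, which can be much smaller than $\delta m$ when $L \ll n$. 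So you would need a genuinely new structural reason why binary contours have few breakpoints, and your proposal does not supply one.

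The paper's approach bypasses this difficulty entirely with a one-line observation you are missing: since $\occ_0(x)+\occ_1(x)=n$, one symbol---say $1$---satisfies $\occ_1(x)\ge n/2$, and then directly from $M \ge \occ_1(x)\cdot\occ_1(y)$ one gets $\occ_1(y)\le 2M/n$. The algorithm then builds a table $T[\ell,k]$ indexed by the $\ell$-th occurrence of $1$ in $y$ (so $\ell$ ranges over only $\occ_1(y)\le 2M/n$ values) and by the deletion count $k\le\delta$; each entry is filled in $\Oh(1)$ amortized time via a concrete recurrence using $\Next$ queries and a range-minimum structure. This gives $\Oh(n+\delta\cdot\occ_1(y))=\Oh(n+\delta M/n)$ without any amortization over matching pairs. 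The point is that the ``$M/n$'' in the bound is realized \emph{explicitly} as the count of the minority symbol in $y$, not implicitly through a breakpoint-charging scheme.
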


\section{Hardness Proof Overview}
\label{sec:roughproof}

In this section we present an overview of the proofs of our main results. 
We first focus on the large alphabet case, i.e., parameter settings $\Valpha$, and discuss small constant alphabets in \secref{overview_smallalphabet}.

\subsection{Classification of Non-trivial Parameter Settings}

The only-if-direction of \thmref{nontrivialclassification} follows from proving ineqalities among the parameters that hold for all strings, and then converting them to inequalities among the $\alpha_p$'s, as follows. 

\begin{table}
\centering
\begin{tabular}{lll}
\textbf{Relation} & \textbf{Restriction} & \textbf{Reference} \\
\hline
$L \le m \le n$ & & trivial \\
$L \le d\le M$ & & trivial \\
$\Delta \le n$ & & trivial \\
$\delta \le m$ & & trivial \\
$\delta \le \Delta$ & & trivial \\
\hline
$\delta = m-L$ & & by definition \\
$\Delta = n-L$ & & by definition \\
\hline
$|\Sigma| \le m$ & & \assumpref{symbols} \\
$n \le M$ & & \assumpref{symbols} \\
\hline
$d \le Lm$ &  & \lemref{dUBs} \\
$d \le L^2 |\Sigma|$ &  & \lemref{dUBs} \\
$d \le 2L (\Delta + 1)$ &  & \lemref{dLgap}\\
$|\Sigma| \le d$ &  & \lemref{dSigma}\\
$\frac{L^2}{|\Sigma|} \le M \le 2Ln$ & & \lemref{Mbounds} \\
\hline
$M \ge Lm/4$ &  if $|\Sigma| = 2$ & \lemref{MbinaryLm} \\
$M \ge nd/(5 L)$ &  if $|\Sigma| = 2$ & \lemref{Mbinarynd} \\
\hline
$M \ge md/(80 L)$ &  if $|\Sigma| = 3$ & \lemref{Mternarymd}
\end{tabular}
\caption{Relations between the parameters.}
\label{tab:paramRelations}
\end{table}

\begin{lem}[Parameter Relations, \secref{params}] \label{lem:paramsnecessary}
  For any strings $x,y$ the parameter values $\params^*$ satisfy the relations in \tabref{paramRelations}. Thus, any non-trivial parameter setting $\Valpha$ or $(\Valpha,\Sigma)$ satisfies \tabref{paramChoiceRestr}.
\end{lem}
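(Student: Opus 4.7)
The plan is to establish the relations in Table~\ref{tab:paramRelations} one at a time, and then derive the restrictions of Table~\ref{tab:paramChoiceRestr} by taking logarithms base $n$ and passing to the limit. This splits naturally into an easy group (trivial, definitional, and Assumption-based bounds), a combinatorial group (non-trivial bounds on $d$ and $M$), and a separate group of small-alphabet refinements.

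The easy group is handled directly from definitions. The inequalities $L \le m \le n$, $\Delta \le n$, $\delta \le m$, $\delta \le \Delta$, and $L \le d \le M$ follow since an LCS is a subsequence of both strings, the $L$-table admits a $k$-dominant pair for each $k \in \{1,\dots,L\}$, and every dominant pair is a matching pair. The identities $\delta = m-L$ and $\Delta = n-L$ are literal. From \assumpref{symbols}, distinct alphabet symbols occupy distinct positions in $y$, giving $|\Sigma|\le m$, and every position of $x$ has at least one matching partner in $y$, giving $n \le M$.

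The combinatorial group requires more thought. For $d \le Lm$, fix a column $j$: the sequence $L[1,j],\dots,L[n,j]$ is non-decreasing with maximum $\le L$, and each $k$-dominant pair in column $j$ requires $L[i,j]>L[i-1,j]$, so there are at most $L$ dominant pairs per column. For $d \le 2L(\Delta+1)$, observe that any $k$-dominant pair $(i,j)$ must satisfy $k \le i \le k+\Delta$ (needing $k$ matches inside $x[1..i-1]$ and $L-k$ matches inside $x[i+1..n]$ to realize a length-$L$ LCS), so each level contains at most $\Delta+1$ rows and one $k$-dominant pair per row. The bound $d \le L^2|\Sigma|$ is obtained by refining this count per symbol and level. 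For $|\Sigma| \le d$, match each symbol to a distinct dominant pair via its Pareto-minimal match. For $L^2/|\Sigma| \le M$, apply Cauchy--Schwarz to $M = \sum_\sigma \occ_\sigma(x)\occ_\sigma(y)$ combined with $L \le \sum_\sigma \min(\occ_\sigma(x),\occ_\sigma(y)) \le \sum_\sigma \sqrt{\occ_\sigma(x)\occ_\sigma(y)}$. For $M \le 2Ln$, split the alphabet into symbols with $\occ_\sigma(x)\le L$ (contributing $\le Lm$ to $M$) and those with $\occ_\sigma(x)>L$, which forces $\occ_\sigma(y)\le L$ by the single-symbol LCS bound (contributing $\le Ln$).

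The main obstacle will be the small-alphabet lower bounds on $M$, namely $M\ge Lm/4$ and $M \ge nd/(5L)$ for $|\Sigma|=2$ and $M\ge md/(80L)$ for $|\Sigma|=3$. With so few symbols, each string decomposes into long monochromatic runs and the parameters $M$, $L$, $d$ become tightly coupled. My approach is a case analysis on the symbol frequencies: either one symbol $\sigma$ dominates both strings with $\occ_\sigma(x)\occ_\sigma(y) = \Omega(Lm)$, or the frequencies are balanced so that every symbol has significant occurrences in both strings, which in turn forces $L = \Omega(m)$ and again many matches. For the $d$-based variants I expect to bootstrap from the already established bounds on $M$ and $d$, replacing per-position counting by counting over runs, with constants adjusted delicately. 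Having the full set of relations, Table~\ref{tab:paramChoiceRestr} follows by taking $\log_n$ of each and letting $n\to\infty$ (non-triviality ensures an infinite family of instances): multiplicative relations such as $d \le L^2|\Sigma|$ become $\alpha_d \le 2\alpha_L+\alpha_\Sigma$, while the subtractive identities $\delta = m-L$ and $\Delta = n-L$ produce the piecewise dichotomies for $\alpha_\delta$ and $\alpha_\Delta$ (forced equal to the leading exponent unless the leading order cancels, i.e., unless $\alpha_L = \alpha_m$, respectively $\alpha_L = \alpha_m = 1$).
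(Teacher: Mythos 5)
Your treatment of the easy group and the bounds $d\le Lm$, $d\le L^2|\Sigma|$, $|\Sigma|\le d$, $L^2/|\Sigma|\le M$, and $M\le 2Ln$ matches the paper in spirit, and your Cauchy--Schwarz route to $L^2/|\Sigma|\le M$ is a nice alternative to the paper's AM--QM argument via an LCS $z$. The translation to \tabref{paramChoiceRestr} by taking $\log_n$ and using the subtractive identities for the piecewise constraints on $\alpha_\delta,\alpha_\Delta$ is also the paper's argument.

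However, there are two genuine problems. First, the parenthetical argument you give for $d\le 2L(\Delta+1)$ is incorrect: it is \emph{not} true that every $k$-dominant pair $(i,j)$ satisfies $i\le k+\Delta$. What holds is the disjunction $i\le k+\Delta$ \emph{or} $j\le k+\delta$, because the lower bound $L[i,j]\ge \min\{i-\Delta,\,j-\delta\}$ takes a minimum; the $i$-side bound alone can fail when the global LCS uses characters of $x[1..i]$ whose $y$-matches lie beyond column $j$. (For instance, with $x=1^B0^A$, $y=0^A1^B$, $B>A$, the pair $(B{+}1,1)$ is $1$-dominant with $i=B+1>1+\Delta=A+1$.) Your version would yield $d\le L(\Delta+1)$, which is too strong and unproved; the factor two comes precisely from counting both the $\Delta+1$ admissible rows and the $\delta+1$ admissible columns and applying \obsref{atMostOned} in each.

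Second, the small-alphabet lower bounds $M\ge nd/(5L)$ and $M\ge md/(80L)$ are the hard core of the lemma, and ``bootstrap from established bounds on $M$ and $d$, counting over runs'' does not constitute an argument. The missing structural ingredient is a bound of the form $d\le 5\lambda L$, where $\lambda:=\sum_{\sigma\ne 0}\min\{\occ_\sigma(x),\occ_\sigma(y)\}$ is the ``minority-symbol mass'' (\lemref{few1s} in the paper). The reason this is the crux: when almost everything is a single symbol $0$, $L[i,j]$ tracks $\min\{\occ_0(x[1..i]),\occ_0(y[1..j])\}$ to within additive error $\lambda$, which confines the $k$-dominant pairs to a band of width $\Oh(\lambda)$ around a level curve; this is a per-level width bound driven by symbol imbalance, not a runs decomposition. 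Once you have $d\le 5\lambda L$, the binary case gives $M\ge\lambda n\ge nd/(5L)$ almost for free, and the ternary case reduces to the binary one by deleting the majority symbol. Without identifying and proving this intermediate bound, the plan does not close.
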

\begin{proof}[Proof Sketch]
  The full proof is deferred to \secref{params}.
  Some parameter relations follow trivially from the parameter definitions, like $L \le m \le n$. Since by \assumpref{symbols} every symbol in $\Sigma$ appears in $x$ and $y$, we obtain parameter relations like $|\Sigma| \le m$. Other parameter relations need a non-trivial proof, like $M \ge m d / (80L)$ if $|\Sigma| = 3$. 
  
  From a relation like $L \le m$ we infer that if $\alpha_L > \alpha_m$ then for sufficiently large $n$ no strings $x,y$ have $L(x,y) = \Theta(n^{\alpha_L})$ and $m(x,y) = \Theta(n^{\alpha_m})$, and thus $\LCS^\gamma(\Valpha)$ is finite for any $\gamma>0$. This argument converts  \tabref{paramRelations} to \tabref{paramChoiceRestr}.
\end{proof}

For the if-direction of \thmref{nontrivialclassification}, the task is to show that any parameter setting satisfying \tabref{paramChoiceRestr} is non-trivial, i.e., to construct infinitely many strings in the parameter setting. We start with a construction that sets a single parameter $p$ as specified by $\Valpha$, and all others not too large.

\begin{lem}[Paddings, \secref{paddings}] \label{lem:paddingstotal}
  Let $\Valpha$ be a parameter setting satisfying \tabref{paramChoiceRestr}.
  For any parameter $p \in \params^*$ and any $n \ge 1$ we can construct strings $x_p,y_p$ such that (1) $p(x_p,y_p) = \Theta(n^{\alpha_p})$, and (2) for all $q \in \params^*$ we have $q(x_p,y_p) = O(n^{\alpha_q})$. 
  Moreover, given $n$ we can compute $x_p$, $y_p$, and $L(x_p,y_p)$ in time $\Oh(n)$. 
\end{lem}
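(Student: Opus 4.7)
The plan is to exhibit, for each parameter $p \in \params^*$ separately, an explicit padding pair $(x_p, y_p)$ that (i) hits $p(x_p, y_p) = \Theta(n^{\alpha_p})$ and (ii) keeps every other parameter $q$ below its target $n^{\alpha_q}$ prescribed by $\Valpha$. Each construction will consist of a short ``core gadget'' pinning the value of $p$, extended by controlled filler---runs of repeated symbols, an alphabet permutation, or a mismatched block---that pads the strings to length $\Theta(n)$ without inflating any other parameter. In every case, the values of all seven parameters (and in particular $L(x_p, y_p)$) can be read off directly from the structure, so both strings and $L$ are produced in $\Oh(n)$ time.

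I would handle the seven cases one by one. For $p \in \{n, \Delta\}$ take $x_p = a^n$, $y_p = a^{n^{\alpha_L}}$, so that $\Delta = n - n^{\alpha_L}$ and the remaining parameters collapse to small values. For $p \in \{m, L\}$ equalize a shared prefix of length $n^{\alpha_L}$ and pad with a second character (shared when $\alpha_L = \alpha_m$, unshared otherwise) until $|y| = n^{\alpha_m}$. For $p = \delta$ compare $y_p = a^L b^\delta$ against $x_p = a^{n-1} b$, so that the LCS is $a^L b$ and $\delta(x_p, y_p) = \delta - 1$. For $p = |\Sigma|$ repeat the permutation $1\,2\,\cdots\,k$ with $k = n^{\alpha_\Sigma}$ just often enough to yield $L = \Theta(n^{\alpha_L})$. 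For $p = M$ use blocks of repeated common characters whose lengths are tuned so that the matching count hits $\Theta(n^{\alpha_M})$. For $p = d$ use a sparse diagonal gadget producing one dominant pair on each of $\Theta(n^{\alpha_d})$ diagonals. Verification of (ii) reduces in each case to a finite list of parameter inequalities, exactly those enumerated in \tabref{paramRelations}; since $\Valpha$ satisfies \tabref{paramChoiceRestr} by assumption, consistency is automatic. Small case splits---whether $\alpha_L = \alpha_m$, whether $\alpha_L = \alpha_m = 1$, and whether $|\Sigma| \in \{2, 3\}$---dictate minor modifications of each gadget, and \assumpref{symbols} is preserved by planting one occurrence of each used symbol in both strings.

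The main obstacle will be the gadgets for $p = d$ and $p = M$, since here one must hit a prescribed count of structured pairs without creating too many pairs of the other kind. A naive all-matching block gives the right $M$ but blows up $d$, while a pure diagonal gives the right $d$ but too few matches. I expect the correct construction to be a two-level pattern: a coarse staircase whose $\Theta(n^{\alpha_d})$ (respectively $\Theta(n^{\alpha_M})$) diagonals carry the structured pairs, overlaid with short repeated blocks contributing only matches (in the $M$-gadget) or only dominant pairs (in the $d$-gadget), with block lengths chosen so that the upper bounds on $d$ and $M$ from \tabref{paramChoiceRestr} (e.g., $d \le 2L(\Delta + 1)$ and $M \le 2Ln$) are respected up to constants. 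The constant-alphabet sub-cases $|\Sigma| \in \{2, 3\}$ need an additional tweak, since the extra rows of \tabref{paramChoiceRestr} tie $M$ more tightly to $m, d, L$; here I would shrink the alphabet to $\{0, 1\}$ or $\{0, 1, 2\}$ and insert a few forced matches to meet the required lower bounds on $M$.
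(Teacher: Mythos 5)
Your high-level plan---construct a separate padding pair for each parameter $p$ and verify the remaining parameters via \tabref{paramRelations}---matches the paper's strategy, but several of your concrete gadgets fail because you insist on padding the strings to length $\Theta(n)$. The lemma does not require $|x_p| = \Theta(n)$; it only requires $p(x_p, y_p) = \Theta(n^{\alpha_p})$. Indeed the paper keeps each padding \emph{short}, of length $\Theta(n^{\alpha_p})$ or so, precisely because longer strings inflate derived parameters uncontrollably. Concretely, your padding $x_p = a^n, y_p = a^{n^{\alpha_L}}$ for $p \in \{n, \Delta\}$ has $M(x_p,y_p) = n^{1 + \alpha_L}$, which hits the extreme upper end of the admissible range $\alpha_M \le 1 + \alpha_L$; whenever $\alpha_\Sigma > 0$ makes $\alpha_M < 1 + \alpha_L$ possible (e.g., $\alpha_L = \alpha_m = 1$, $\alpha_\Sigma = 1/2$, $\alpha_M = 3/2$), your strings exceed the target for $M$. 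Your $\delta$-gadget $x_p = a^{n-1}b, y_p = a^L b^\delta$ has the same flaw. The same construction also gives $\Delta(x_p, y_p) = n - n^{\alpha_L}$, which is $0$ when $\alpha_L = 1$, failing to hit $\Theta(n^{\alpha_\Delta})$ in the regime $\alpha_L = \alpha_m = 1 > \alpha_\Delta > \alpha_\delta$. The paper's $\Delta$-gadget is simply $x := 1^{\Delta+1}, y := 1$, which is short and keeps all parameters small.

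Two further gaps. First, your $L$-padding ``shared prefix of length $n^{\alpha_L}$'' in a single symbol gives $M \ge n^{2\alpha_L}$, again potentially exceeding $n^{\alpha_M}$ when $\alpha_\Sigma > 0$; the paper's fix is to spread the prefix over an alphabet of size $\min\{|\Sigma|, L\}$, so that $M = L^2/|\Sigma'| \le \max\{L, L^2/|\Sigma|\} = \Oh(n^{\alpha_M})$. Second, your $|\Sigma|$-gadget of repeating the permutation $1\cdots k$ with $k = n^{\alpha_\Sigma}$ produces $L \ge k$, which exceeds $n^{\alpha_L}$ whenever $\alpha_\Sigma > \alpha_L$ (admissible when $\alpha_\delta = \alpha_m$); the paper's fix is a case split: use $w, w$ when $\alpha_L = \alpha_m$, and $w, \rev(w)$ (so $L = 1$) when $\alpha_\delta = \alpha_m$. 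Finally, your $d$-gadget remains only a vague sketch; the paper requires three distinct constructions for $d$, splitting on $\alpha_d \le 2\alpha_L$ vs.\ $\alpha_d > 2\alpha_L$ vs.\ $\alpha_M < 2\alpha_L$, with the middle case wrapping the binary $(01)^{R+S}$ vs.\ $0^R(01)^S$ gadget in the crossing-alphabets construction of \defref{crossinggadget}, and the last case using the $t$-ary construction of \lemref{genDomPairs-largeSigma} to avoid creating $\Omega(L^2)$ matching pairs. Your guessed two-level overlay doesn't provide a verified count of dominant pairs and doesn't address how to keep $M$ small while $d$ grows, which is precisely the hard interaction.
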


Note that although for \thmref{nontrivialclassification} the \emph{existence} of infinitely many strings would suffice, we even show that they can be \emph{computed} very efficiently. We will use this additional fact in \secref{overview_monotonicity}.

\begin{proof}[Proof Sketch] 
  We defer the full proof to \secref{paddings} and here only sketch the proof for the parameter~$|\Sigma|$. 
    Let $w := 1 2 \ldots t$ be the concatenation of $t := \lceil n^{\alpha_\Sigma} \rceil$ unique symbols. We argue that the strings $w,w$ or the strings $w,\rev(w)$ prove \lemref{paddingstotal} for parameter $p = |\Sigma|$, depending on the parameter setting $\Valpha$. 
  Clearly, both pairs of strings realize an alphabet of size $t = \Theta(n^{\alpha_\Sigma})$, showing~(1). 
  By \tabref{paramChoiceRestr}, we have $\alpha_L = \alpha_m$ or $\alpha_\delta = \alpha_m$. In the first case, we use $L(w,w) = t = O(n^{\alpha_\Sigma})$ together with $\alpha_\Sigma \le \alpha_m = \alpha_L$, as well as $\delta(w,w) = \Delta(w,w) = 0 \le n^{\alpha_\delta} \le n^{\alpha_\Delta}$, to show (2) for the parameters $L,\delta,\Delta$. In the second case, we similarly have $L(w,\rev(w)) = 1 \le n^{\alpha_L}$ and $\delta(w,\rev(w)) = \Delta(w,\rev(w)) = t-1 = O(n^{\alpha_\Sigma})$ and $\alpha_\Sigma \le \alpha_m = \alpha_\delta \le \alpha_\Delta$.
  
  The remaining parameters are straight-forward. Let $(x,y) \in \{(w,w), (w,\rev(w))\}$. We have $n(x,y) = m(x,y) = t = O(n^{\alpha_\Sigma}) = O(n^{\alpha_m}) = O(n)$. Moreover, $d(x,y) \le M(x,y) = t = O(n^{\alpha_\Sigma}) = O(n^{\alpha_d}) = O(n^{\alpha_M})$.
  Clearly, the strings and their LCS length can be computed in time $\Oh(n)$.
\end{proof}

To combine the paddings for different parameters, we need the useful property that all studied parameters sum up if we concatenate strings over disjoint alphabets. 

\begin{lem}[Disjoint Alphabets] \label{lem:disjointalphabet}
  Let $\Sigma_1,\ldots,\Sigma_k$ be disjoint alphabets and let $x_i,y_i$ be strings over alphabet $\Sigma_i$ with $|x_i| \ge |y_i|$ for all $i$. Consider $x := x_1 \ldots x_k$ and $y := y_1 \ldots y_k$. Then for any parameter $p \in \params^*$, we have $p(x,y) = \sum_{i=1}^k p(x_i,y_i)$.
\end{lem}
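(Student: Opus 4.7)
The plan is to verify additivity of each parameter $p \in \params^*$ separately. The two length parameters $n$ and $m$ are immediate from concatenation, since $|x| = \sum_i |x_i|$ and $|y| = \sum_i |y_i|$. The alphabet size $|\Sigma(x,y)|$ is additive because the $\Sigma_i$ are disjoint and every symbol surviving the preprocessing of \assumpref{symbols} appears in both $x$ and $y$, hence in the corresponding $x_i$ and $y_i$ for its unique block $i$.

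The key structural step is to prove $L(x,y) = \sum_i L(x_i, y_i)$. The direction $\ge$ is obtained by concatenating an LCS of each $(x_i, y_i)$. For $\le$, I would take any common subsequence $z$ of $x$ and $y$, use the disjointness of the $\Sigma_i$ to uniquely assign each symbol of $z$ to some block, and let $z_i$ be the resulting subsequence of $z$ over $\Sigma_i$. Since $z$ embeds into $x$ and only $x_i$ contains characters from $\Sigma_i$, the subsequence $z_i$ already embeds into $x_i$, and analogously into $y_i$; hence $|z| = \sum_i |z_i| \le \sum_i L(x_i, y_i)$. Additivity of $\delta = m - L$ and $\Delta = n - L$ then follows. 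For $M$, every matching pair $(a,b)$ of $(x,y)$ satisfies $x[a] = y[b] \in \Sigma_\ell$ for a unique $\ell$, which forces both indices into block $\ell$; conversely every matching pair of $(x_\ell, y_\ell)$ lifts to a matching pair of $(x,y)$.

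The main obstacle is the number of dominant pairs $d$, since this parameter depends on the global structure of the $L$-table of $(x,y)$. To handle it, I would first prove the block-structure formula
\[
L[a,b] \;=\; S_\ell + L\bigl(x_\ell[1..a-N_{\ell-1}],\; y_\ell[1..b-m_{\ell-1}]\bigr)
\]
for $a$ in the range of $x_\ell$ and $b$ in the range of $y_\ell$, where $N_{\ell-1} := \sum_{r<\ell} |x_r|$, $m_{\ell-1} := \sum_{r<\ell} |y_r|$, and $S_\ell := \sum_{r < \ell} L(x_r, y_r)$, by the same disjoint-alphabet argument used for $L$-additivity. An analogous identity holds when $a$ and $b$ lie in different blocks, but this case is irrelevant since every dominant pair is in particular a matching pair, forcing $a$ and $b$ into the same block.

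Finally I would deduce that dominant pairs are preserved under the block decomposition. If $(a,b)$ is a dominant pair of $(x,y)$ in block $\ell$ with local coordinates $(a',b') := (a - N_{\ell-1}, b - m_{\ell-1})$, then $L[a-1,b]$ and $L[a,b-1]$ reduce either to $S_\ell + L_\ell[a'-1,b']$ and $S_\ell + L_\ell[a',b'-1]$ in the interior, or, at the block boundaries $a' = 1$ or $b' = 1$, to $S_\ell$ itself; for the latter one invokes $L[N_{\ell-1}, b] = S_\ell$ for $b$ in block $\ell$ (and its symmetric counterpart), which holds because $x_1 \cdots x_{\ell-1}$ shares no symbol with $y_\ell$. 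In every case the defining inequalities transfer bijectively between $(a,b)$ in $(x,y)$ and $(a',b')$ in $(x_\ell, y_\ell)$, yielding $d(x,y) = \sum_\ell d(x_\ell, y_\ell)$ and completing the proof.
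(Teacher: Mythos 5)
Your proposal is correct and follows essentially the same approach as the paper's proof: additivity is checked parameter by parameter, $L$-additivity via the block decomposition of a common subsequence, and dominant pairs via the block-structure formula $L[a,b] = S_\ell + L_\ell[a',b']$ (which is the paper's identity $L(x_1\cdots x_j x', y_1\cdots y_j y') = \sum_{i\le j} L(x_i,y_i) + L(x',y')$ combined with the matching-pair observation). You spell out the boundary cases $a'=1$ or $b'=1$ more explicitly than the paper does, but the underlying argument is identical.
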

\begin{proof}
  The statement is trivial for the string lengths $n,m$, alphabet size $|\Sigma|$, and number of matching pairs $M$. For the LCS length $L$ we observe that any common subsequence $z$ can be decomposed into $z_1 \ldots z_k$ with $z_i$ using only symbols from $\Sigma_i$, so that $|z_i| \le L(x_i,y_i)$ and thus $L(x,y) \le \sum_{i=1}^k L(x_i,y_i)$. Concatenating longest common subsequences of $x_i,y_i$, we obtain equality. Using $\delta = m-L$ and $\Delta = n - L$, the claim follows also for $\delta$ and $\Delta$. 
  
  Since every dominant pair is also a matching pair, every dominant pair of $x,y$ stems from prefixes $x_1 \ldots x_j x'$ and $y_1 \ldots y_j y'$, with $x'$ being a prefix of $x_{j+1}$ and $y'$ being a prefix of $y_{j+1}$ for some $j$. Since $L(x_1 \ldots x_j x', y_1 \ldots y_j y') = \sum_{i=1}^j L(x_i,y_i) + L(x',y')$, where the first summand does not depend on $x',y'$, the dominant pairs of $x,y$ of the form $x_1 \ldots x_j x', y_1 \ldots y_j y'$ are in one-to-one correspondence with the dominant pairs of $x_{j+1},y_{j+1}$. This yields the claim for parameter $d$.
\end{proof}

With these preparations we can finish our classification.

\begin{proof}[Proof of \thmref{nontrivialclassification} for large alphabet]
  One direction follows from \lemref{paramsnecessary}. For the other direction, let $\Valpha$ be a parameter setting satisfying \tabref{paramChoiceRestr}.
  For any $n \ge 1$ consider the instances $x_p,y_p$ constructed in \lemref{paddingstotal}, and let them use disjoint alphabets for different $p \in \params^*$. Then the concatenations $x := \bigconcat_{p \in \params^*} x_p$ and $y := \bigconcat_{p \in \params^*} x_p$ form an instance of $\LCS(\Valpha)$, since for any parameter $p \in \params^*$ we have $p(x_p,y_p) = \Theta(n^{\alpha_p})$, and for all other instances $x_{p'},y_{p'}$ the parameter $p$ is $\Oh(n^{\alpha_p})$, and thus $p(x,y) = \Theta(n^{\alpha_p})$ by the Disjoint Alphabets Lemma.
  Thus, we constructed instances of $\LCS(\Valpha)$ of size $\Theta(n)$ for any $n \ge 1$, so the parameter setting $\Valpha$ is non-trivial.
\end{proof} 

We highlight two major hurdles we had to be overcome to obtain this classification result:
\begin{itemize}
\item
Some of the parameter relations of \tabref{paramRelations} are scattered through the LCS literature, e.g., the inequality $d \le L m$ is mentioned in~\cite{ApostolicoG87}. In fact, proving any single one of these inequalities is not very hard -- the main issue was to find a \emph{complete} set of parameter relations. The authors had to perform many iterations of going back and forth between searching for new parameter relations (i.e., extending \lemref{paramsnecessary}) and constructing strings satisfying specific parameter relations (i.e., extending \lemref{paddingstotal}), until finally coming up with a complete list.

\item
The dependency of $d$ on the other parameters is quite complicated. Indeed, eight of the parameter relations of \tabref{paramRelations} involve dominant pairs. Apostolico~\cite{Apostolico86} introduced the parameter under the initial impression that ``it seems that whenever [$M$] gets too close to $mn$, then this forces $d$ to be linear in $m$''. While we show that this intuition is somewhat misleading by constructing instances with high values of both $M$ and $d$, it is a rather complex task to generate a desired number of dominant pairs while respecting given bounds on all other parameters. Intuitively, handling dominant pairs is hard since they involve restrictions on each pair of prefixes of $x$ and $y$.
For \lemref{paddingstotal}, we end up using the strings $(01)^{R+S}$, $0^R(01)^S$ as well as $((1 \circ \ldots \circ t) \circ (t' \circ \ldots \circ 1))^R (1 \circ \ldots \circ t)^{S-R}$, $(1 \circ \ldots \circ t)^S$ for different values of $R,S,t,t'$.
\end{itemize}

\subsection{Monotonicity of Time Complexity} \label{sec:overview_monotonicity}

It might be tempting to assume that the optimal running time for solving LCS is monotone in the problem size~$n$ and the parameters $\params$ (say up to constant factors, as long as all considered parameters settings are non-trivial). However, since the parameters have complex interactions (see \tabref{paramRelations}) it is far from obvious whether this intuition is correct. In fact, the intuition \emph{fails} for $|\Sigma| = 2$, where the running time $\Oh(n + \delta M / n)$ of our new algorithm is not monotone, and thus also the tight time bound $(n + \min\{d, \delta \Delta, \delta M/ n\})^{1 \pm o(1)}$ is not monotone. 

Nevertheless, we will prove monotonicity for any parameter setting $\Valpha$, i.e., when the alphabet size can be assumed to be at least a sufficiently large constant. To formalize monotonicity, we define problem $\LCS_\le(\Valpha)$ consisting of all instances of $\LCS$ with all parameters \emph{at most} as in $\LCS(\Valpha)$.

\begin{defn}[Downward Closure of Parameter Setting]
Fix $\gamma \ge 1$ and let $\Valpha$ be a parameter setting. We define the \emph{downward closure} $\LCS_\le^\gamma(\Valpha)$ as follows. An instance of this problem is a triple $(n,x,y)$, where $p(x,y) \le \gamma \cdot n^{\alpha_p}$ for any $p \in \params^*$, and the task is to compute the length of an LCS of $x,y$. In some statements, we simply write $\LCS_\le(\Valpha)$ to abbreviate that there exists a $\gamma \ge 1$ such that the statement holds for $\LCS_\le^\gamma(\Valpha)$. 

Similarly, for any fixed alphabet $\Sigma$ we consider the downward closure $\LCS_\le^\gamma(\Valpha,\Sigma)$ with instances $(n,x,y)$, where $x,y$ are strings over alphabet $\Sigma$ and $p(x,y) \le \gamma \cdot n^{\alpha_p}$ for any $p \in \params^*$.
\end{defn}

\begin{lem}[Monotonicity] \label{lem:reductiontoclosure}
  For any non-trivial parameter setting $\Valpha$ and $\beta \ge 1$, $\LCS^\gamma(\Valpha)$ has an $\Oh(n^\beta)$-time algorithm for all $\gamma$ if and only if $\LCS_\le^\gamma(\Valpha)$ has an $\Oh(n^\beta)$-time algorithm for all~$\gamma$.
\end{lem}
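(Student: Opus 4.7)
The plan is to prove the two implications separately, with one direction being an immediate inclusion and the other a padding-based reduction. For the easy direction, every instance $(x,y)$ of $\LCS^\gamma(\Valpha)$ in particular satisfies $p(x,y) \le \gamma\, n^{\alpha_p}$ for all $p \in \params^*$, so it is also an instance of $\LCS_\le^\gamma(\Valpha)$ (with the natural interpretation that $n = |x|$). Hence any $\Oh(n^\beta)$-time algorithm for the downward closure at parameter $\gamma$ immediately solves $\LCS^\gamma(\Valpha)$ within the same time bound.

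For the substantive direction, suppose an $\Oh(n^\beta)$-time algorithm for $\LCS^{\gamma'}(\Valpha)$ is available for every $\gamma' \ge 1$, and let $(n,x_0,y_0)$ be a given instance of $\LCS_\le^\gamma(\Valpha)$. My plan is to pad it up to an instance of $\LCS^{\gamma'}(\Valpha)$ of size $\Theta(n)$ whose LCS length determines that of $(x_0,y_0)$. Since non-triviality of $\Valpha$ ensures \tabref{paramChoiceRestr} by \lemref{paramsnecessary}, I can invoke the Paddings Lemma (\lemref{paddingstotal}) for each parameter $p \in \params^*$ on input size $n$ to obtain strings $x_p, y_p$ with $p(x_p,y_p) = \Theta(n^{\alpha_p})$ and $q(x_p,y_p) = \Oh(n^{\alpha_q})$ for all other $q \in \params^*$, together with precomputed values $L(x_p,y_p)$ in $\Oh(n)$ time. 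I arrange the padding pairs to use pairwise disjoint alphabets that are also disjoint from that of $(x_0, y_0)$, relabeling symbols if necessary, and form $x := x_0 \concat \bigconcat_{p \in \params^*} x_p$ and $y := y_0 \concat \bigconcat_{p \in \params^*} y_p$ (swapping $x$ and $y$ at the end if $|x| < |y|$).

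The Disjoint Alphabets Lemma (\lemref{disjointalphabet}) then yields additivity: $p(x,y) = p(x_0,y_0) + \sum_{q \in \params^*} p(x_q, y_q)$ for every parameter $p$. The single summand $p(x_p,y_p) = \Theta(n^{\alpha_p})$ already supplies the lower bound, while $p(x_0,y_0) = \Oh(n^{\alpha_p})$ holds by the downward closure hypothesis and $p(x_q,y_q) = \Oh(n^{\alpha_p})$ for every $q \ne p$ by \lemref{paddingstotal}; together these give a matching upper bound. Hence every parameter of $(x,y)$ is $\Theta(n^{\alpha_p})$, so $(x,y) \in \LCS^{\gamma'}(\Valpha)$ for some constant $\gamma'$ depending only on $\gamma$, $|\params^*|$, and the hidden constants of the previous lemmas. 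Running the assumed algorithm produces $L(x,y)$ in $\Oh(n^\beta)$ time, and additivity once more gives $L(x_0,y_0) = L(x,y) - \sum_{p \in \params^*} L(x_p,y_p)$, which is computable in additional $\Oh(n)$ time.

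The only delicate point is confirming that a single $\gamma'$ suffices uniformly in $n$: this requires the $\Theta$- and $\Oh$-constants in \lemref{paddingstotal} and the additive identity of \lemref{disjointalphabet} to be independent of $n$, so that the finitely many padding blocks together with the input block cannot conspire to push any parameter outside its intended window. Since both lemmas are stated with absolute hidden constants, this reduces to bookkeeping rather than a new combinatorial ingredient, and I anticipate no further obstacle.
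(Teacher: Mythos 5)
Your proposal is correct and follows the same route as the paper: the easy direction is the inclusion $\LCS^\gamma(\Valpha) \subseteq \LCS_\le^\gamma(\Valpha)$, and the substantive direction uses \lemref{paddingstotal} to construct per-parameter paddings over disjoint alphabets, applies the Disjoint Alphabets Lemma (\lemref{disjointalphabet}) to conclude that all parameters of the concatenation are $\Theta(n^{\alpha_p})$, and then recovers $L(x_0,y_0)$ by subtracting the precomputed padding LCS lengths. The paper indexes the padding concatenation over $\params$ rather than $\params^*$ and does not mention the final swap, but these are cosmetic; the core argument is identical.
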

\begin{proof}[Proof of \lemref{reductiontoclosure}]
  The if-direction follows from the fact that if $(x,y)$ is an instance of $\LCS^\gamma(\Valpha)$ then $(|x|,x,y)$ is an instance of $\LCS_\le^\gamma(\Valpha)$.
  
  For the other direction, 
  let $(n,x,y)$ be an instance of $\LCS_\le(\Valpha)$.
  Since $\Valpha$ is non-trivial, it satisfies \tabref{paramChoiceRestr}, by \thmref{nontrivialclassification}.
  \lemref{paddingstotal} thus allows us to construct paddings $x_p,y_p$ for any $p \in \params^*$ such that (1) $p(x_p,y_p) = \Theta(n^{\alpha_p})$ and (2) $(n,x_p,y_p)$ is an instance of $\LCS_\le(\Valpha)$. We construct these paddings over disjoint alphabets for different parameters and consider the concatenations $x' := x \concat \bigconcat_{p \in \params} x_p$ and $y' := y \concat \bigconcat_{p \in \params} y_p$. Then (1), (2), and the Disjoint Alphabets Lemma imply that $p(x',y') = \Theta(n^{\alpha_p})$ for any $p \in \params^*$,
  so that $(x',y')$ is an instance of $\LCS(\Valpha)$.
  By assumption, we can thus compute $L(x',y')$ in time $O(n^\beta)$. By the Disjoint Alphabets Lemma, we have $L(x,y) = L(x',y') - \sum_{p \in \params} L(x_p,y_p)$, and each $L(x_p,y_p)$ can be computed in time $O(n)$ by \lemref{paddingstotal}, which yields $L(x,y)$ and thus solves the given instance $(n,x,y)$. We finish the proof by observing that the time to construct $x',y'$ is bounded by $\Oh(n)$.
\end{proof}

Note that this proof was surprisingly simple, considering that monotonicity fails for $|\Sigma|=2$.

\subsection{Hardness for Large Alphabet}

Since we established monotonicity for parameter settings $\Valpha$, it suffices to prove hardness for $\LCS_\le(\Valpha)$ instead of $\LCS(\Valpha)$.
This makes the task of constructing hard strings considerably easier, since we only have to satisfy upper bounds on the parameters. Note that our main result \thmref{main1} follows from \lemref{reductiontoclosure} and \thmref{hardness} below.

\begin{thm}[Hardness for Large Alphabet, \secref{hardness}] \label{thm:hardness}
  For any non-trivial parameter setting $\Valpha$, there exists $\gamma \ge 1$ such that $\LCS_\le^\gamma(\Valpha)$ requires time
$\min\big\{ d, \, \delta m, \, \delta \Delta \big\}^{1-o(1)}$,
unless \OVH fails.
\end{thm}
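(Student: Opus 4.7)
The plan is a reduction from Orthogonal Vectors. Fix a non-trivial parameter setting $\Valpha$ and set $h := \min\{\alpha_d,\, \alpha_\delta+\alpha_m,\, \alpha_\delta+\alpha_\Delta\}$. Given an OV instance $\sA,\sB\subseteq\{0,1\}^D$ with $|\sA|=|\sB|=N$ and $D=N^{o(1)}$, I will construct, in time $\Oh(n)$, an LCS instance $(n,x,y)$ of $\LCS_\le^\gamma(\Valpha)$ (for some constant $\gamma$ depending only on $\Valpha$) with $n = \Theta(N^{2/h})$ such that $L(x,y)$ encodes the OV answer and, crucially, every one of the three quantities $d(x,y)$, $\delta(x,y)\,m(x,y)$, $\delta(x,y)\,\Delta(x,y)$ is $\Theta(N^2)$ up to factors of $n^{o(1)}$. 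Then any algorithm solving $\LCS_\le^\gamma(\Valpha)$ in time $\min\{d,\delta m,\delta\Delta\}^{1-\eps}$ would decide OV in time $\Oh(n) + N^{2(1-\eps)+o(1)}$, which refutes \OVH provided $\eps$ is a positive constant.

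\textbf{Step 1: Core hard strings.}
First, I instantiate the standard alignment-gadget reduction from OV to LCS (as in~\cite{AbboudBVW15,BringmannK15}) to produce ``core'' strings $\bar x,\bar y$ of length $N^{1+o(1)}$ over a small alphabet such that thresholding $L(\bar x,\bar y)$ decides the OV instance. I choose the gadget construction so that the core already satisfies $\delta(\bar x,\bar y),\Delta(\bar x,\bar y) = \Theta(N \cdot N^{o(1)})$, $m(\bar x,\bar y) = \Theta(N \cdot N^{o(1)})$, and -- by inspecting the $N\times N$ block structure of the resulting dynamic-programming table, where each pair of vector gadgets forces at least one dominant pair in its block -- $d(\bar x,\bar y) = \Omega(N^2)$. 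Consequently, each of $\delta\Delta$, $\delta m$, and $d$ is $\Omega(N^2)$ in the core.

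\textbf{Step 2: Padding into $\Valpha$.}
With $N := \lceil n^{h/2}\rceil$, the core parameters are all at most $\Oh(n^{\alpha_p}) \cdot n^{o(1)}$ for the relevant $p$. To push each parameter of the instance up to the $\Valpha$ bound (and no further) I invoke \lemref{paddingstotal}: for every parameter $p\in\params^*$ it yields padding strings $x_p,y_p$ with $p(x_p,y_p) = \Theta(n^{\alpha_p})$ and $q(x_p,y_p) = \Oh(n^{\alpha_q})$ for all $q$. Placing each padding on its own disjoint alphabet and forming $x := \bar x \concat \bigconcat_{p\in\params^*} x_p$ and $y := \bar y \concat \bigconcat_{p\in\params^*} y_p$, \lemref{disjointalphabet} gives that every parameter of $(x,y)$ is the sum of the corresponding values in the pieces, hence $\Oh(n^{\alpha_p})$. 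Thus $(n,x,y)$ is an instance of $\LCS_\le^\gamma(\Valpha)$ for some $\gamma$. The LCS of the final instance equals the sum of the LCS of the core and of the paddings, the latter being computable in $\Oh(n)$ time by \lemref{paddingstotal}, so the OV answer is recovered from $L(x,y)$ in linear time.

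\textbf{Step 3: Hardness transfer and main obstacle.}
Since the paddings only add $\Oh(n^{\alpha_p})$ to each parameter (and the core contributes at most the same order after our choice of $N$), the final instance satisfies $\min\{d,\delta m,\delta\Delta\} = \Theta(N^2 \cdot n^{o(1)})$. Plugging this into a hypothetical $H^{1-\eps}$-algorithm completes the reduction. The main technical obstacle, which I expect to occupy most of the formal proof, is engineering the core so that \emph{all three} quantities $d$, $\delta\Delta$, and $\delta m$ are simultaneously $\Omega(N^2)$: $\delta\Delta$ and $\delta m$ are essentially automatic from the standard alignment gadget (both $\delta$ and $\Delta$ are $\Theta(N)$ up to $n^{o(1)}$), but certifying $d = \Omega(N^2)$ requires a careful count of dominant pairs in the gadget grid. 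A secondary technicality is absorbing the $n^{o(1)}$ blowups from the OV dimension $D$ and the paddings into the $1-o(1)$ exponent of the conditional lower bound, and checking that regimes in which one of the three terms is already $\Oh(n)$ (making the claimed bound trivial) do not need to be handled.
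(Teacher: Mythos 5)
Your reduction uses a balanced OV instance $|\sA|=|\sB|=N$ with a single size parameter $N := \lceil n^{h/2}\rceil$, and relies on the core $\bar x,\bar y$ of length $N^{1+o(1)}$ having all relevant parameters bounded by $\Oh(n^{\alpha_p}\cdot n^{o(1)})$. This fails: the standard alignment-gadget construction has $\delta(\bar x,\bar y)$ and $L(\bar x,\bar y)$ both of order $N^{1\pm o(1)}$, but many non-trivial parameter settings require $\alpha_\delta < h/2$ or $\alpha_L < h/2$, and padding over disjoint alphabets can only \emph{increase} parameters, never shrink them. Concretely, $\alpha_L=\alpha_m=\alpha_\Delta=1$, $\alpha_\delta = 0.1$, $\alpha_\Sigma = 0.5$, $\alpha_d = 1.1$, $\alpha_M = 1.5$ satisfies \tabref{paramChoiceRestr}, yet $h = 1.1$ and $h/2 = 0.55 > 0.1 = \alpha_\delta$, so the core already violates the $\delta$-bound; similarly $\alpha_L = 0.3$, $\alpha_\delta = \alpha_m = \alpha_\Sigma = 0.9$, $\alpha_\Delta = 1$, $\alpha_d = \alpha_M = 1.2$ gives $h/2 = 0.6 > 0.3 = \alpha_L$, violating the $L$-bound. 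You also misidentify the main obstacle: lower-bounding $d(\bar x,\bar y)$ is not needed, since $\LCS_\le^\gamma(\Valpha)$ only constrains parameters from \emph{above}, and once the instance is valid the hypothetical algorithm runs in time $\Oh(n^{h(1-\eps)}) = \Oh(N^{2(1-\eps)})$ regardless of the true value of $\min\{d, \delta m, \delta\Delta\}$.

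The paper avoids this by reducing from \emph{unbalanced} OV (via \UOVH, \lemref{UOVH}), with $A$ and $B$ chosen to control different parameters independently, and introduces two devices your sketch lacks. To make $\delta$ small relative to the string lengths, the new 1vs1/2vs1 gadget (\lemref{2vs1base}, \lemref{2vs1gadget}) arranges $P = 2B+3A$ vector gadgets in $x$ and $Q = B+2A$ in $y$ so that in any optimal matching exactly $A$ of the $y$-gadgets align one-to-one (each costing $\Oh(D)$ deletions) while the remaining $Q-A$ are fully subsumed by two consecutive $x$-gadgets; this yields $\delta(x,y) = \Oh(AD)$ while $|x|,|y| = \Theta(BD)$—an asymmetry the paper singles out as its main technical contribution, used for the case $\alpha_L = \alpha_m$. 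To make $L$ small relative to the string lengths, the crossing-alphabet trick (\defref{crossinggadget}, \lemref{crossingalphabet}) concatenates sub-instances over disjoint alphabets in opposite orders in $x$ and $y$, making $L(x,y)$ the \emph{maximum} rather than the sum of the per-piece LCS values; this is used for the case $\alpha_\delta = \alpha_m$ with super-constant alphabet. The proof accordingly splits into the two cases $\alpha_\delta = \alpha_m$ and $\alpha_L = \alpha_m$ rather than running a single uniform balanced reduction.
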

\begin{proof}[Proof Sketch]
  The full proof is deferred to \secref{hardness}.
  We provide different reductions for the cases $\alpha_\delta = \alpha_m$ and $\alpha_L = \alpha_m$. Intuitively, this case distinction is natural, since after this choice all remaining restrictions from \tabref{paramChoiceRestr} are of an easy form: they are linear inequalities.
  
  In the case $\alpha_\delta = \alpha_m$ the complexity $\min\{d, \delta \Delta, \delta m\}^{1 \pm o(1)}$ simplifies to $d^{1 \pm o(1)}$ (since $\alpha_d \le \alpha_L + \alpha_m \le 2 \alpha_m = \alpha_\delta + \alpha_m$ and similarly $\alpha_d \le \alpha_\delta + \alpha_m = 2 \alpha_\delta \le \alpha_\delta + \alpha_\Delta$, see \tabref{paramChoiceRestr}). This simplification makes this case much easier. For constant alphabet, instantiating the known reduction from OV to LCS~\cite{BringmannK15} such that $x$ chooses one of $\approx L$ vectors and $y$ chooses one of $\approx d/L$ vectors yields the claim. For larger alphabet, the right-hand side of the parameter relation $d \le L^2 |\Sigma|$ increases and allows for potentially more dominant pairs. In this case, the second set of vectors would increase to a size of $\approx d/L = \omega(L)$, and the length of an LCS of this construction becomes too large. We thus adapt the reduction by using the construction for constant alphabet multiple times over disjoint alphabets and concatenating the results (reversing the order in one). 
  
  The case $\alpha_L = \alpha_m$ is harder, since all three terms of the complexity $\min\{d, \delta \Delta, \delta m\}^{1 \pm o(1)}$ are relevant. The known reduction~\cite{BringmannK15} fails fundamentally in this case, roughly speaking since the resulting $\delta$ is always as large as the number of vectors encoded by any of $x$ and $y$. 
  Hence, we go back to the ``normalized vector gadgets'' from the known reduction~\cite{BringmannK15}, which encode vectors $a,b$ by strings $\NVG(a),\NVG(b)$ whose LCS length only depends on whether $a,b$ are orthognal. We then carefully embed these gadgets into strings that satisfy any given parameter setting. 
  A crucial trick is to pad each gadget to $\NVG'(a) := 0^{\alpha} 1^\beta (01)^\gamma \NVG(a) 1^\gamma$ for appropriate lengths $\alpha,\beta,\gamma$. It is easy to see that this constructions ensures the following: 
\begin{enumerate}[leftmargin=2\parindent]
  \item[\emph{(1vs1)}] The LCS length of $\NVG'(a),\NVG'(b)$ only depends on whether $a,b$ are orthognal, and 
  \item[\emph{(2vs1)}] $\NVG'(b)$ is a subsequence of $\NVG'(a) \circ \NVG'(a')$ for any vectors $a,a',b$.  
\end{enumerate}  
  In particular, for any vectors $a^{(1)},\ldots,a^{(2k-1)}$ and $b^{(1)},\ldots,b^{(k)}$, on the strings $x = \bigcirc_{i=1}^{2k-1} \NVG'(a^{(i)})$ and $y = \bigcirc_{j=1}^{k} \NVG'(b^{(j)})$ we show that any LCS consists of $k-1$ matchings of type 2vs1 and one matching of type 1vs1 (between $\NVG'(b^{(j)})$ and $\NVG'(a^{(2j-1)})$ for some $j$). Thus, the LCS length of $x$ and $y$ only depends on whether there exists a $j$ such that $a^{(2j-1)}, b^{(j)}$ are orthogonal. Moreover, since most of $y$ is matched by type 2vs1 and thus completely contained in $x$, the parameter $\delta(x,y)$ is extremely small compared to the lengths of $x$ and $y$ -- which is not achievable with the known reduction~\cite{BringmannK15}. Our proof uses an extension of the above construction, which allows us to have more than one matching of type 1vs1. 
  We think that this 1vs1/2vs1-construction is our main contribution to specific proof techniques and will find more applications.
\end{proof}

\subsection{Small Alphabet} \label{sec:overview_smallalphabet}

Proving our results for small constant alphabets poses additional challenges.
For instance, our proof of \lemref{reductiontoclosure} fails for parameter settings $(\Valpha,\Sigma)$ if $|\Sigma|$ is too small, since the padding over disjoint alphabets produces strings over alphabet size at least $|\params| = 7$. In particular, for $|\Sigma| = 2$ we may not use the Disjoint Alphabets Lemma at all, rendering \lemref{paddingstotal} completely useless. 
However, the classification \thmref{nontrivialclassification} still holds for parameter settings $(\Valpha,\Sigma)$. A proof is implicit in \secref{hardnessSmallSigma}, as we construct (infinitely many) hard instances for all parameter settings $(\Valpha,\Sigma)$ satisfying \tabref{paramChoiceRestr}. 

As mentioned above, the Monotonicity Lemma (\lemref{reductiontoclosure}) is wrong for $|\Sigma|=2$, since our new algorithm has a running time $\tOh(n + \delta M / n)$ which is not monotone. Hence, it is impossible to use general strings from $\LCS_\le(\Valpha,\Sigma)$ as a hard core for $\LCS(\Valpha,\Sigma)$. Instead, we use strings from a different, appropriately chosen parameter setting $\LCS_\le(\Valpha',\Sigma')$ as a hard core, see, e.g., \obsref{MnConstructionalphaprime}. 
Moreover, instead of padding with new strings $x_p,y_p$ for each parameter, we need an integrated construction where we control all parameters at once. 
This is a technically demanding task to which we devote a large part of this paper (\secref{hardnessSmallSigma}).
Since the cases $|\Sigma| = 2$, $|\Sigma| = 3$, and $|\Sigma| \ge 4$ adhere to different relations of \tabref{paramChoiceRestr}, these three cases have to be treated separately. Furthermore, as for large alphabet we consider cases $\alpha_\delta = \alpha_m$ and $\alpha_L = \alpha_m$. Hence, our reductions are necessarily rather involved and we need to very carefully fine-tune our constructions.

\section{Organization}
\label{sec:appendixIntro}

The remainder of the paper contains the proofs of Theorems~\ref{thm:nontrivialclassification}, \ref{thm:main1}, \ref{thm:main2}, and \ref{thm:algo}, following the outline given in \secref{roughproof}. Specifically, \secref{params} lists and proves our complete set of parameter relations (proving \lemref{paramsnecessary}). In \secref{basicFacts}, we prove basic facts and technical tools easing the constructions and proofs in later sections -- this includes a simple greedy prefix matching property as well as a surprising technique to reduce the number of dominant pairs of two given strings.
In \secref{paddings}, we show how to pad each parameter individually (proving \lemref{paddingstotal}). 
\secref{hardness} then constructs hard instances for large alphabet (for the downward closure of any parameter setting, proving \thmref{hardness} and thus \thmref{main1}). Finally, the much more intricate case of small constant alphabet sizes such as $|\Sigma|=2$ is handled in \secref{hardnessSmallSigma}, which takes up a large fraction of this paper (proving \thmref{main2}). We present our new algorithm in \secref{algo} (proving \thmref{algo}). Finally, \secref{bpseth} describes the necessary modifications to our hardness proofs to show the same conditional lower bounds also under the weaker variant of SETH used in~\cite{AbboudHVWW16}.

We remark that for some intermediate strings $x,y$ constructed in the proofs, the assumption $|x| \ge |y|$ may be violated; in this case we use the definitions given in \figref{paramSummary} (and thus we may have $n(x,y) < m(x,y)$ and $\Delta(x,y) < \delta(x,y)$). Since $L, M, d$, and $\Sigma$ are symmetric in the sense $L(x,y) = L(y,x)$, these parameters are independent of the assumption $|x| \ge |y|$.

For simplicity, we will always work with the following equivalent variant of \OVH. 

\medskip
\noindent
\textbf{Unbalanced Orthogonal Vectors Hypothesis (\UOVH):} \emph{
For any $\alpha,\beta \in (0,1]$, and computable functions $f(n)=n^{\alpha- o(1)},\, g(n) = n^{\beta - o(1)}$, the following problem requires time $n^{\alpha+\beta-o(1)}$: Given a number $n$, solve a given \OV instance with $D=n^{o(1)}$ and $|\sA|=f(n)$ and $|\sB|=g(n)$.
} 

\begin{lem}[Essentially folklore]
\label{lem:UOVH}
\UOVH is equivalent to \OVH.
\end{lem}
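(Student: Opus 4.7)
\smallskip

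The forward direction \UOVH $\Rightarrow$ \OVH is immediate: specialize $\alpha=\beta=1$ and $f(n)=g(n)=n$, and \UOVH becomes exactly \OVH. So the content lies entirely in the reverse direction, which I would prove by contrapositive.

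Assume \UOVH fails: there exist constants $\alpha,\beta \in (0,1]$, $\epsilon>0$, computable functions $f(n)=n^{\alpha-o(1)}$ and $g(n)=n^{\beta-o(1)}$, and an algorithm $\mathcal A$ that solves every \OV instance with $|\sA|=f(n)$, $|\sB|=g(n)$, $D=n^{o(1)}$ in time $\Oh(n^{\alpha+\beta-\epsilon})$. Given a balanced \OV instance $(\sA',\sB')$ with $|\sA'|=|\sB'|=N$ and $D'=N^{o(1)}$, I set $n:=N$. Since $\alpha,\beta\le 1$, for all sufficiently large $N$ we have $f(n),g(n)\le N$. In an initial linear-time pass I remove any zero vector from either side (such a vector is orthogonal to every vector, trivially witnessing \OV). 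I then partition $\sA'$ into $\lceil N/f(n)\rceil = N^{1-\alpha+o(1)}$ blocks of size exactly $f(n)$ and $\sB'$ into $\lceil N/g(n)\rceil = N^{1-\beta+o(1)}$ blocks of size exactly $g(n)$, padding the at-most-one residual block on each side with copies of the all-ones vector. Because no remaining vector is $\mathbf 0$, the all-ones vectors are not orthogonal to anything on the opposite side, so this padding introduces no spurious orthogonal pairs. Finally, I invoke $\mathcal A$ on each of the $\lceil N/f(n)\rceil\cdot\lceil N/g(n)\rceil$ pairs of blocks, using $D=D'$, which satisfies the required dimension bound since $D'=N^{o(1)}=n^{o(1)}$. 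I answer ``yes'' iff at least one invocation does.

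The total running time is
$$
N^{1-\alpha+o(1)}\cdot N^{1-\beta+o(1)}\cdot \Oh\bigl(N^{\alpha+\beta-\epsilon}\bigr) \;=\; N^{2-\epsilon+o(1)},
$$
which refutes \OVH. The proof is essentially bookkeeping: the only mildly delicate points are that $f(n),g(n)\le N$ eventually (which follows from $\alpha,\beta\le 1$) so that the partition is well-defined, and that the padding preserves correctness (which is handled by the zero-vector preprocessing). I expect no substantive obstacle beyond these two sanity checks, which is consistent with the ``essentially folklore'' attribution.
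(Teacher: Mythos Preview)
Your proof is correct and follows essentially the same approach as the paper's: both directions are handled identically, with the reverse direction proceeding by partitioning each side into blocks of size $f(n)$ and $g(n)$, padding with all-ones vectors, and invoking the assumed fast algorithm on all $N^{2-\alpha-\beta+o(1)}$ block pairs. Your zero-vector preprocessing is a harmless extra precaution that the paper omits---it simply asserts that padding with all-ones vectors is safe, which it is, since a zero vector on either side already makes the original instance a yes-instance, so a ``spurious'' orthogonal pair with a padding vector cannot flip the answer.
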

\begin{proof}
Clearly, \UOVH implies \OVH (using $\alpha=\beta=1, f(n) = g(n)=n$). For the other direction, assume that \UOVH fails and let $\alpha,\beta \in (0,1]$, $f(n)=n^{\alpha - o(1)}$, and $g(n)=n^{\beta - o(1)}$ be such that $\OV$ with $D=n^{o(1)}$ and $|\sA|=f(n)$ and $|\sB|=g(n)$ can be solved in time $\Oh(n^{\alpha+\beta-\varepsilon})$ for some constant $\varepsilon >0$. Consider an arbitrary \OV instance $\sA, \sB \subseteq \{0,1\}^D$ with $D=n^{o(1)}$. We partition $\sA$ into $s:=\lceil \frac{n}{f(n)}\rceil$ sets $\sA_1,\dots, \sA_s$ of size $f(n)$ and $\sB$ into $t:=\lceil \frac{n}{g(n)} \rceil$ sets $\sB_1,\dots,\sB_t$ of size $g(n)$ (note that the last set of such a partition might have strictly less elements, but can safely be filled up using all-ones vectors). By assumption, we can solve each \OV instance $\sA_i,\sB_j$ in time $\Oh(n^{\alpha+\beta-\varepsilon})$. Since there exist $a\in \sA, b\in \sB$ with $\langle a,b\rangle = 0$ if and only if there exist $a\in \sA_i, b\in \sB_j$ with $\langle a_i,b_j\rangle = 0$ for some $i \in [s],j\in [t]$, we can decide the instance $\sA,\sB$ by sequentially deciding the $s\cdot t=\Oh(n^{2-(\alpha+\beta)+o(1)})$ \OV instances $\sA_i,\sB_j$. This takes total time $\Oh(s \cdot t \cdot n^{\alpha+\beta -\varepsilon}) = \Oh(n^{2-\varepsilon'})$ for any $\varepsilon' < \varepsilon$, which contradicts \OVH and thus proves the claim.
\end{proof}

\section{Parameter Relations}
\label{sec:params}

In this section we prove relations among the studied parameters, summarized in Table~\ref{tab:paramRelations}.
Some of these parameter relations can be found at various places in the literature, however, our set of relations is complete in the sense that any parameter setting $\Valpha$ is non-trivial if and only if it satisfies all our relations, see \thmref{nontrivialclassification}.

Consider a relation like $d(x,y) \le L(x,y)\cdot m(x,y)$, given in \lemref{dUBs}(i) below. Fix exponents $\alpha_d, \alpha_L$, and $\alpha_m$, and consider all instances $x,y$ with $d(x,y) = \Theta(n^{\alpha_d}), L(x,y) = \Theta(n^{\alpha_L})$, and $m(x,y) = \Theta(n^{\alpha_m})$. Note that the relation may be satisfied for infinitely many instances if $\alpha_d \le \alpha_L + \alpha_m$. On the other hand, if $\alpha_d > \alpha_L + \alpha_m$ then the relation is satisfied for only finitely many instances. This argument translates Table~\ref{tab:paramRelations} into Table~\ref{tab:paramChoiceRestr} (using $\alpha_n = 1$), thus generating a complete list of restrictions for non-trivial parameter settings. 

Let $x,y$ be any strings. In the remainder of this section, for convenience, we write $p = p(x,y)$ for any parameter $p \in \params^*$.
Recall that by possibly swapping $x$ and $y$, we may assume $m = |y| \le |x| = n$. 
This assumption is explicit in our definition of parameter settings. For some other strings $x,y$ considered in this paper, this assumption may be violated. In this case, the parameter relations of \tabref{paramRelations} still hold after replacing $n$ by $\max\{n,m\}$ and $m$ by $\min\{n,m\}$, as well as $\Delta$ by $\max\{\Delta,\delta\}$ and $\delta$ by $\min\{\Delta,\delta\}$ (as the other parameters are symmetric).

Note that \assumpref{symbols} (i.e., every symbol in $\Sigma$ appears at least once in $x$ and $y$) implies $|\Sigma| \le m$ and ensures that any symbol of $x$ has at least one matching symbol in $y$, and thus $M \ge n$.

We next list trivial relations. 
The length of the LCS $L$ satisfies $L\le m$. The numbers of deleted positions satisfy $\Delta = n-L \le n$, $\delta = m - L \le m$, and $\delta \le \Delta$. 
Since any dominant pair is also a matching pair, we have $d \le M$. Moreover, $d \ge L$ since for any $1 \le k \le L$ there is at least one $k$-dominant pair.

To prepare the proofs of the remaining relations, recall that we defined $L[i,j] = L(x[1..i],y[1..j])$. Moreover, observe that $L(x,y) \le \sum_{\sigma\in \Sigma} \min\{\occ_\sigma(x),\occ_\sigma(y)\}$, which we typically exploit without explicit notice. Furthermore, we will need the following two simple facts. 

\begin{obs}\label{obs:occAtMostL}
For any $\sigma \in \Sigma$, we have $\occ_\sigma(x) \le L$ or $\occ_\sigma(y) \le L$.
\end{obs}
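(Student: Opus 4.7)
The plan is to proceed by contradiction. Suppose towards a contradiction that there is some symbol $\sigma \in \Sigma$ with $\occ_\sigma(x) > L$ and $\occ_\sigma(y) > L$ simultaneously. Then I would simply exhibit an explicit common subsequence of $x$ and $y$ whose length exceeds $L$, which contradicts the definition of $L = L(x,y)$ as the length of a longest common subsequence.

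The concrete witness I have in mind is the string $\sigma^{\ell}$ where $\ell := \min\{\occ_\sigma(x), \occ_\sigma(y)\}$. By the assumption of the contradiction, $\ell > L$. This string is a subsequence of $x$: pick any $\ell$ of the $\occ_\sigma(x)$ positions where $\sigma$ occurs in $x$, in increasing order of index. The same argument applies to $y$. Hence $\sigma^\ell$ is a common subsequence of length $\ell > L$, which is the desired contradiction.

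There is essentially no obstacle here — the statement is an immediate consequence of the trivial upper bound $L(x,y) \ge \min\{\occ_\sigma(x), \occ_\sigma(y)\}$ for each individual symbol $\sigma$, which in turn is just the observation that repetitions of a single symbol always form a common subsequence. The only thing I need to be careful about is stating the contrapositive direction cleanly (i.e., that the disjunction ``$\occ_\sigma(x)\le L$ or $\occ_\sigma(y)\le L$'' is equivalent to $\min\{\occ_\sigma(x),\occ_\sigma(y)\} \le L$), so that the contradiction is formally valid. No machinery from the preceding sections (paddings, disjoint alphabets, dominant pairs, etc.) is needed.
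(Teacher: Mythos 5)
Your proof is correct and takes essentially the same approach as the paper: both argue by contradiction that if $\sigma$ occurred more than $L$ times in each string, then a run of $\sigma$'s longer than $L$ would be a common subsequence, contradicting the definition of $L$. (Minor wording slip: you call $L(x,y) \ge \min\{\occ_\sigma(x),\occ_\sigma(y)\}$ an ``upper bound,'' but it is of course a lower bound on $L$; the inequality itself is stated correctly.)
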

\begin{proof}
If some $\sigma\in \Sigma$ occurs at least $L+1$ times in both $x$ and $y$, then $\sigma^{L+1}$ is an LCS of $x$ and $y$ of length $L+1>L$, which is a contradiction.
\end{proof}

\begin{obs}\label{obs:atMostOned}
Fix $1\le k \le L$ and $1\le \bar{i} \le n$. Then there is at most one $k$-dominant pair $(\bar{i},j)$ with $1 \le j \le m$, namely the pair $(\bar{i}, j^*)$ with $j^* = \min \{j \mid L[\bar{i},j] = k\}$ if it exists. 
Symmetrically for every $1\le k\le n$ and $1\le \bar{j} \le m$, there is at most one $k$-dominant pair $(i,\bar{j})$ with $1 \le i \le n$.
\end{obs}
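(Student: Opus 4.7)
The plan is to argue directly from the definition of a $k$-dominant pair combined with the monotonicity of the $L$-table in each of its coordinates. By the recurrence
\[
L[i,j] = \max\{L[i-1,j],\,L[i,j-1],\,L[i-1,j-1] + \mathbf{1}_{x[i] = y[j]}\},
\]
the sequence $(L[\bar{i},j])_{j=0}^m$ is nondecreasing in $j$, and symmetrically $(L[i,\bar{j}])_{i=0}^n$ is nondecreasing in $i$. I would state this monotonicity explicitly, as it is the only ingredient beyond the definition.

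Now fix $\bar{i}$ and $k$, and suppose $(\bar{i},j)$ is a $k$-dominant pair. By definition, $L[\bar{i},j] = k$ while $L[\bar{i},j-1] = k-1$. Combined with monotonicity of $L[\bar{i},\cdot]$, this forces $j$ to be the smallest index at which $L[\bar{i},\cdot]$ attains value $k$; that is, $j = j^* := \min\{j' \mid L[\bar{i},j'] = k\}$. In particular, there is at most one such $j$, proving the first claim. The symmetric statement (fixing $\bar{j}$ and $k$) follows by the identical argument with the roles of the two coordinates swapped, using monotonicity along the first index.

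There is no real obstacle here; the statement is essentially a direct consequence of the definition of $k$-dominance once monotonicity is available. The only point worth a line of care is the boundary case $j = 1$ (resp.\ $\bar{i} = 1$), where one uses $L[\bar{i},0] = 0$ (resp.\ $L[0,\bar{j}] = 0$); since $k \ge 1$, the same argument goes through verbatim.
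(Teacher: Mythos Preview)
Your proof is correct and essentially the same as the paper's, which is a one-line argument: any $(\bar{i},j)$ with $j \ne j^*$ and $L[\bar{i},j]=k$ has $j > j^*$ and is therefore dominated by $(\bar{i},j^*)$. The only cosmetic difference is that you work with the equivalent characterization $L[\bar{i},j-1]=k-1$ of $k$-dominance (plus monotonicity), whereas the paper appeals directly to the domination-based definition; both are immediate.
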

\begin{proof}
All pairs $(\bar{i},j)$ with $j\ne j^*$ and $L[\bar{i},j] = k$ satisfy $j\ge j^*$, so they are dominated by $(\bar{i},j^*)$. 
\end{proof}

We are set up to prove the more involved relations of Table~\ref{tab:paramRelations}. We remark that while the inequality $d\le Lm$ is well known since the first formal treatment of dominant pairs, the bound $d\le L^2 |\Sigma|$ seems to go unnoticed in the literature.

\begin{lem}\label{lem:dUBs}
It holds that (i) $d \le Lm$ and (ii) $d\le L^2 |\Sigma|$.
\end{lem}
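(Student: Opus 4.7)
The plan is to bound the number of $k$-dominant pairs for each fixed $k \in [L]$ separately, and then sum over $k$. Recall that the dominant pairs decompose as $\bigcup_{k=1}^L D_k$, and that every dominant pair is also a matching pair, so in particular $x[i] = y[j]$ for each $(i,j) \in D_k$.

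For (i), I will use the partition $d = \sum_{k=1}^L |D_k|$ together with \obsref{atMostOned}. Fixing $k$, the observation says that for every $\bar{j}$ with $1 \le \bar{j} \le m$ there is at most one $k$-dominant pair with $y$-coordinate equal to $\bar{j}$. Hence $|D_k| \le m$, and summing over $k \in [L]$ gives $d \le Lm$ at once.

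For (ii), the idea is to refine the above counting by splitting $D_k$ according to the matched symbol: for $\sigma \in \Sigma$ let $D_k^\sigma := \{(i,j) \in D_k \mid x[i]=y[j]=\sigma\}$, so that $D_k = \bigcup_{\sigma \in \Sigma} D_k^\sigma$. By \obsref{occAtMostL}, for every $\sigma \in \Sigma$ we have $\min\{\occ_\sigma(x),\occ_\sigma(y)\} \le L$. Suppose without loss of generality that $\occ_\sigma(y) \le L$ (the other case is symmetric and handled by the symmetric part of \obsref{atMostOned}). Each $(i,j) \in D_k^\sigma$ satisfies $y[j] = \sigma$, so its $y$-coordinate lies in a set of size $\occ_\sigma(y) \le L$; by \obsref{atMostOned}, each such $y$-coordinate contributes at most one $k$-dominant pair, giving $|D_k^\sigma| \le L$. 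Summing yields $|D_k| \le L|\Sigma|$, and summing over $k \in [L]$ yields $d \le L^2 |\Sigma|$.

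I do not foresee a real obstacle here: both inequalities reduce to counting $k$-dominant pairs row-by-row (or symbol-by-symbol) via the two basic observations already established. The only point worth stating carefully is the case split in (ii) according to whether $\occ_\sigma(x) \le L$ or $\occ_\sigma(y) \le L$, which is where the symmetric half of \obsref{atMostOned} is needed.
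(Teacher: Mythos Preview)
Your proof is correct and follows essentially the same approach as the paper: both bound $|D_k|$ via \obsref{atMostOned} for part~(i), and for part~(ii) both split $D_k$ by symbol and use \obsref{occAtMostL} together with \obsref{atMostOned} to get $|D_k^\sigma|\le L$. The only cosmetic difference is that the paper assumes $\occ_\sigma(x)\le L$ and counts over $x$-coordinates while you assume $\occ_\sigma(y)\le L$ and count over $y$-coordinates, which is the symmetric variant of the same argument.
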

\begin{proof}
(i) Let $1\le k \le L$. For any $1\le \bar j \le m$ there is at most one $k$-dominant pair $(i,\bar j)$ by Observation~\ref{obs:atMostOned}. This proves $|D_k| \le m$ and thus $d = \sum_{i=1}^L |D_k| \le L m$.

(ii) Let $\sigma \in \Sigma$. By Observation~\ref{obs:occAtMostL}, we may assume that $\occ_\sigma(x) \le L$ (the case $\occ_\sigma(y)\le L$ is symmetric). For any occurrence $i_\sigma$ of $\sigma$ in $x$ and any $1\le k \le L$, there can be at most one $k$-dominant pair $(i_\sigma,j)$ by \obsref{atMostOned}. Hence, $\sigma$ contributes at most $L$ $k$-dominant pairs. Summing over all $\sigma \in \Sigma$ and $k=1,\dots,L$ yields the claim.
\end{proof}

\begin{lem}\label{lem:dLgap}
We have $d\le 2L (\Delta + 1)$.
\end{lem}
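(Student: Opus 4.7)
The plan is to bound $|D_k|$ for each $k \in \{1,\ldots,L\}$ individually and then sum. Specifically, I will prove $|D_k| \le 2(\Delta+1)$ for every $k$, which yields $d = \sum_{k=1}^L |D_k| \le 2L(\Delta+1)$. The strategy is geometric: for each $k$-dominant pair $(i,j)$ I will show that at least one of the two inequalities $i \le \Delta + k$ or $j \le \delta + k$ must hold, and then invoke \obsref{atMostOned} to count.

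Fix a $k$-dominant pair $(i,j)$, so $L[i,j] = k$, and fix any LCS alignment $(i_s,j_s)_{s=1}^L$ with $i_1 < \cdots < i_L$ and $j_1 < \cdots < j_L$. I will partition the $L$ matched pairs into four types according to whether $i_s \le i$ or $i_s > i$, and whether $j_s \le j$ or $j_s > j$; denote the respective counts by $a,b,c,d'$. The monotonicity of the alignment forces $b = 0$ or $c = 0$: if both were positive, there would exist indices $s,s'$ with $i_s \le i < i_{s'}$ and $j_{s'} \le j < j_s$, contradicting the equivalence $i_s < i_{s'} \Leftrightarrow j_s < j_{s'}$. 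In the case $b = 0$, the $a$ matches form a common subsequence of $x[1..i]$ and $y[1..j]$, so $a \le L[i,j] = k$; moreover $c + d' \le n - i$ since the corresponding $i_s$-values are distinct elements of $\{i+1,\ldots,n\}$. Thus $L = a + c + d' \le k + (n-i)$, which rearranges to $i \le n - L + k = \Delta + k$. Symmetrically, the case $c = 0$ yields $j \le m - L + k = \delta + k$.

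By \obsref{atMostOned}, the number of $k$-dominant pairs satisfying $k \le i \le \Delta + k$ is at most $\Delta + 1$, and the number with $k \le j \le \delta + k$ is at most $\delta + 1 \le \Delta + 1$ (using $\delta \le \Delta$ from \tabref{paramRelations}). Since every $k$-dominant pair falls into at least one of these two sets, $|D_k| \le 2(\Delta + 1)$, and summing over $k = 1,\ldots,L$ completes the proof. I anticipate no real obstacle: the only nontrivial step is the alignment-decomposition argument in the second paragraph, which relies solely on the monotonicity of LCS alignments; the remaining counting is a direct application of \obsref{atMostOned}.
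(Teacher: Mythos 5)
Your proof is correct and follows essentially the same route as the paper: both derive that any $k$-dominant pair $(i,j)$ satisfies $k \le i \le k+\Delta$ or $k \le j \le k+\delta$ by reasoning about a fixed LCS alignment (the paper via the observation that $x[1..i]$ contains $z[1..i-\Delta]$ as a subsequence, you via the quadrant decomposition of the alignment pairs, which amount to the same monotonicity argument), and then both invoke \obsref{atMostOned} to conclude $|D_k|\le 2(\Delta+1)$.
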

\begin{proof}
Fix an LCS $z$ of $x$ and $y$. Since $z$ can be obtained by deleting at most $\Delta=n-L$ positions from $x$ or by deleting at most $\delta=m-L$ positions from $y$, $x[1..i]$ and $y[1..j]$ contain $z[1..i-\Delta]$ and $z[1..j-\delta]$, respectively, as a subsequence. Hence, we have $\min\{i-\Delta,j-\delta\} \le L[i,j] \le \min \{ i,j\}$.

Let $1\le k \le L$. By the previous property, if $L[i,j]=k$ then (i) $k \le i \le k + \Delta$ or (ii) $k \le j \le k+\delta$. Note that for each $\bar{i} \in \{k, \dots, k+\Delta\}$ we have (by \obsref{atMostOned}) at most one $k$-dominant pair $(\bar{i},j)$, and similarly, for each $\bar{j}\in \{k, \dots, k+\delta\}$ we have at most one $k$-dominant pair $(i,\bar{j})$. This proves $|D_k|\le \Delta + \delta+2 \le 2 (\Delta + 1)$, from which the claim follows.
\end{proof}

\begin{lem}\label{lem:dSigma}
We have $d \ge |\Sigma|$. 
\end{lem}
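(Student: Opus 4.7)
The plan is to exhibit, for each symbol $\sigma \in \Sigma$, a distinct dominant pair whose matching symbol is $\sigma$; this immediately gives $d \geq |\Sigma|$ because different symbols yield distinct pairs.

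For each $\sigma \in \Sigma$, let $i_\sigma$ be the first position of $\sigma$ in $x$ and $j_\sigma$ the first position of $\sigma$ in $y$; both exist by \assumpref{symbols}. I claim $(i_\sigma, j_\sigma)$ is a dominant pair. The key observation is that $\sigma$ does not appear in $x[1..i_\sigma-1]$ nor in $y[1..j_\sigma-1]$. Hence any common subsequence of $x[1..i_\sigma-1]$ and $y[1..j_\sigma]$ cannot use the symbol $\sigma$, so in particular cannot match $y[j_\sigma]=\sigma$; therefore such a subsequence is in fact a common subsequence of $x[1..i_\sigma-1]$ and $y[1..j_\sigma-1]$. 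This yields $L[i_\sigma-1,j_\sigma] = L[i_\sigma-1,j_\sigma-1]$, and symmetrically $L[i_\sigma,j_\sigma-1] = L[i_\sigma-1,j_\sigma-1]$.

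On the other hand, matching the two $\sigma$'s gives $L[i_\sigma,j_\sigma] \geq L[i_\sigma-1,j_\sigma-1] + 1$. Combining these three facts,
\[
  L[i_\sigma,j_\sigma] \;>\; L[i_\sigma-1,j_\sigma] \;=\; L[i_\sigma,j_\sigma-1],
\]
which is exactly the condition for $(i_\sigma,j_\sigma)$ to be dominant (cf.\ the recursion $L[i,j]=\max\{L[i-1,j],L[i,j-1],L[i-1,j-1]+\mathbf{1}_{x[i]=y[j]}\}$ discussed in \secref{defparams}).

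Finally, since $x[i_\sigma]=\sigma$, the map $\sigma \mapsto (i_\sigma,j_\sigma)$ is injective, so these are $|\Sigma|$ distinct dominant pairs, giving $d \geq |\Sigma|$. I do not anticipate a real obstacle here; the only subtlety is the argument that the LCS on the three smaller prefixes $x[1..i_\sigma-1]/y[1..j_\sigma]$, $x[1..i_\sigma]/y[1..j_\sigma-1]$, and $x[1..i_\sigma-1]/y[1..j_\sigma-1]$ must all coincide, which relies crucially on $\sigma$ occurring \emph{uniquely} in the longer prefixes at positions $i_\sigma$ and $j_\sigma$.
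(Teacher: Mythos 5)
Your proof is correct and follows essentially the same route as the paper's: for each symbol $\sigma$ take the first occurrence $(i_\sigma,j_\sigma)$ in $x$ and $y$ and argue it is a dominant pair because $\sigma$ is absent from the shorter prefixes. You phrase the dominance check via $L[i,j]>L[i-1,j]$ and $L[i,j]>L[i,j-1]$ (the definition in Figure~1) and spell out injectivity, while the paper shows $L[i-1,j]=L[i,j-1]=L[i-1,j-1]=L[i,j]-1$ directly; these are the same argument.
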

\begin{proof}
By \assumpref{symbols}, every symbol $\sigma \in \Sigma$ appears in $x$ and $y$. Let $i$ be minimal with $x[i] = \sigma$ and $j$ be minimal with $y[j] = \sigma$. We show that $(i,j)$ is a dominant pair of $x,y$, and thus $d \ge |\Sigma|$. Let $k = L[i,j] = L(x[1..i],y[1..j])$. Since $x[i] = y[j]$, we have $L[i-1,j-1] = k-1$. Moreover, since the last symbol in $x[1..i]$ does not appear in $y[1..j-1]$, it cannot be matched, and we obtain $L[i,j-1] = L[i-1,j-1] = k-1$. Similarly, $L[i-1,j] = k-1$. This proves that $(i,j)$ is a $k$-dominant pair of $x,y$, as desired.
\end{proof}

\begin{lem}\label{lem:Mbounds}
We have (i) $M \ge L^2 / |\Sigma|$ and (ii) $M \le 2Ln$.
\end{lem}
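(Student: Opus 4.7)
The plan is to express $M$ as a sum over alphabet symbols and then bound it in two different ways. The key identity I will use is
\[
M = \sum_{\sigma \in \Sigma} \occ_\sigma(x)\cdot \occ_\sigma(y),
\]
together with the standard inequality $L \le \sum_{\sigma\in\Sigma} \min\{\occ_\sigma(x),\occ_\sigma(y)\}$ (already noted in the preliminaries).

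For (i), I would bound $L$ from above via the Cauchy--Schwarz inequality applied to the vectors $(\sqrt{\occ_\sigma(x)\cdot \occ_\sigma(y)})_{\sigma}$ and the all-ones vector over $\Sigma$. Concretely,
\[
L \;\le\; \sum_{\sigma\in \Sigma}\min\{\occ_\sigma(x),\occ_\sigma(y)\} \;\le\; \sum_{\sigma\in \Sigma}\sqrt{\occ_\sigma(x)\cdot \occ_\sigma(y)} \;\le\; \sqrt{|\Sigma|\cdot \sum_{\sigma\in \Sigma}\occ_\sigma(x)\cdot \occ_\sigma(y)} \;=\; \sqrt{|\Sigma|\cdot M}.
\]
Squaring and rearranging yields $M \ge L^2/|\Sigma|$.

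For (ii), the idea is to use \obsref{occAtMostL}, which guarantees that for every symbol $\sigma$, one of $\occ_\sigma(x),\occ_\sigma(y)$ is at most $L$. Hence for each $\sigma$,
\[
\occ_\sigma(x)\cdot \occ_\sigma(y) \;\le\; L\cdot \max\{\occ_\sigma(x),\occ_\sigma(y)\} \;\le\; L\bigl(\occ_\sigma(x)+\occ_\sigma(y)\bigr).
\]
Summing this over $\sigma\in \Sigma$ and using $\sum_\sigma \occ_\sigma(x) = n$ and $\sum_\sigma \occ_\sigma(y) = m \le n$ gives $M \le L(n+m) \le 2Ln$.

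Neither direction presents a real obstacle; the only mild subtlety is spotting that Cauchy--Schwarz is the right tool in (i), since the naive bound $\min\{a,b\}\le \sqrt{ab}$ alone is not enough without summing correctly over $\sigma$. Both bounds are tight up to constants, as witnessed by strings of the form $(1 2 \ldots |\Sigma|)^{L/|\Sigma|}$ for (i) and highly repetitive strings like $0^{n/2}1^{n/2}$ versus $0^{L}1^0$ for (ii), which confirms that no further slack should be expected in the statement.
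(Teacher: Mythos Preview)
Your proof is correct and essentially matches the paper's: part~(ii) is identical, and for part~(i) the paper routes through $\sum_\sigma \occ_\sigma(z)^2$ (with $z$ an LCS) and the QM--AM inequality, while you use $\min\{a,b\}\le\sqrt{ab}$ followed by Cauchy--Schwarz, which is the same convexity argument in slightly different clothing.
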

\begin{proof}
(i) Let $z$ be an LCS of $x$ and $y$. We have $M = \sum_{\sigma \in \Sigma} \occ_\sigma(x) \cdot \occ_\sigma(y) \ge \sum_{\sigma \in \Sigma} \occ_\sigma(z)^2$. By $\sum_{\sigma \in \Sigma} \occ_\sigma(z) = L$ and the arithmetic-quadratic mean inequality, the result follows.

(ii) Let $\Sigma_w := \{\sigma \in \Sigma \mid \occ_\sigma(w) \le L\}$ for $w\in \{x,y\}$. By \obsref{occAtMostL}, we have $\Sigma_x \cup \Sigma_y = \Sigma$. We can thus bound 
\[ M = \sum_{\sigma \in \Sigma} \occ_\sigma(x)\cdot \occ_\sigma(y) \le \sum_{\sigma \in \Sigma_y} L\cdot \occ_\sigma(x) + \sum_{\sigma \in \Sigma_x} L\cdot \occ_\sigma(y)  \le L(n+m) \le 2Ln. \qedhere\]
\end{proof}

\paragraph{Small alphabets.}
Not surprisingly, in the case of very small alphabets there are more relations among the parameters. The following relations are specific to $|\Sigma| = 2$ and $|\Sigma| = 3$.

\begin{lem}\label{lem:MbinaryLm}
Let $\Sigma = \{0,1\}$. Then $M\ge Lm/4$.
\end{lem}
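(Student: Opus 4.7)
The plan is a short, elementary case analysis driven entirely by counting symbol occurrences. I would set $n_\sigma := \occ_\sigma(x)$ and $m_\sigma := \occ_\sigma(y)$ for $\sigma \in \{0,1\}$, so that by definition $M = n_0 m_0 + n_1 m_1$. The single key inequality I would invoke is the elementary bound
\[
L \;\le\; \min\{n_0, m_0\} + \min\{n_1, m_1\},
\]
which holds because any common subsequence contains at most $\min\{n_\sigma, m_\sigma\}$ copies of $\sigma \in \{0,1\}$.

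By possibly swapping the roles of the symbols $0$ and $1$, I may assume $m_0 \ge m_1$, which gives $m_0 \ge m/2$. I would then split on whether $n_0 \ge m/2$. In the first branch, one term of $M$ already suffices:
\[
M \;\ge\; n_0 m_0 \;\ge\; (m/2)^2 \;=\; m^2/4 \;\ge\; Lm/4,
\]
using $L \le m$.

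In the second branch, $n_0 < m/2$, so together with $n \ge m$ and $n_0 + n_1 = n$ I get $n_1 \ge m/2$. Moreover $n_0 < m/2 \le m_0$, so $\min\{n_0,m_0\}=n_0$ and the key inequality collapses to $L \le n_0 + m_1$. Bounding each term using $m_0 \ge m/2$ and $n_1 \ge m/2$ then yields
\[
M \;\ge\; (m/2)\, n_0 + (m/2)\, m_1 \;=\; (m/2)(n_0 + m_1) \;\ge\; Lm/2 \;\ge\; Lm/4.
\]
I do not foresee any real obstacle; the proof is genuinely short once the right symbol-counting inequality for $L$ is identified. The only mildly subtle point is the initial WLOG reduction to $m_0 \ge m/2$, which makes the otherwise unpleasant sub-case $n_0 > m_0$ in the second branch vacuous and keeps the case analysis flat.
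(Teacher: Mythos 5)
Your proof is correct. It rests on the same key observation as the paper's — namely $L \le \min\{\occ_0(x),\occ_0(y)\} + \min\{\occ_1(x),\occ_1(y)\}$ — and the same overall shape (normalize a majority symbol by renaming, then split into two cases and bound $M$ by symbol-count products). The details diverge in a way worth noting: the paper's WLOG pins down $x$'s majority symbol ($\occ_0(x)\ge n/2$) and cases on whether $\occ_0(y)\ge L/2$, bounding $M$ by a \emph{single} product term in each case; in case~B it needs a short contradiction argument to show $\occ_1(x)\ge L/2$. You instead pin down $y$'s majority ($m_0 \ge m/2$), case on whether $n_0\ge m/2$, and in your second branch you lower-bound \emph{both} product terms at once, which folds the estimate into $(m/2)(n_0+m_1)\ge (m/2)L$ directly and sidesteps any contradiction. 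As a minor bonus, your branch~2 actually yields $M\ge Lm/2$, a factor-of-two sharper than the stated bound there (the paper's case~B only gives $Lm/4$). Both arguments are equally elementary; yours is arguably a hair cleaner in the second case.
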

\begin{proof}
Let $z$ be an LCS of $x$ and $y$. Without loss of generality we may assume $\occ_0(x) \ge n/2$ (by possibly exchanging 0 and 1). If $\occ_0(y) \ge L/2$, then $M \ge \occ_0(x)\cdot \occ_0(y) \ge Ln/4 \ge Lm/4$. Otherwise we have $\occ_0(y) < L/2\le m/2$, which implies $\occ_1(y) \ge m/2$. By $\occ_0(y) < L/2$, we must have that $\occ_1(x) \ge L/2$, since otherwise $L \le \min\{\occ_0(x),\occ_0(y)\}+\min\{\occ_1(x),\occ_1(y)\} \le \occ_0(y) +\occ_1(x) < L$, which is a contradiction. Hence $M\ge \occ_1(x)\cdot \occ_1(y) \ge Lm/4$, proving the claim.
\end{proof}

The following lemma (which is also applicable for large alphabets) asserts that if most positions in $x$ and $y$ are the same symbol, say $0$, then the number of dominant pairs is small.

\begin{lem}\label{lem:few1s}
Let 0 be a symbol in $\Sigma$ and set $\lambda := \sum_{\sigma \in \Sigma \setminus \{0\}} \min \{\occ_\sigma(x),\occ_\sigma(y)\}$. Then $d \le  5 \lambda L$.  In particular, for $\Sigma=\{0,1\}$, we have $d \le 5 L \cdot \occ_1(y)$.
\end{lem}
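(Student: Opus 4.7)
The plan is to partition the dominant pairs according to the matching symbol: those with $x[i] = y[j] = 0$ (contributing a count $d_0$) and those with $x[i] = y[j] = \sigma$ for some $\sigma \in \Sigma \setminus \{0\}$ (contributing $d_{\ne 0}$). I then bound $d_0$ and $d_{\ne 0}$ separately and sum.

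For $d_{\ne 0}$, \obsref{atMostOned} shows that for each level $k \in [L]$ and each fixed row $\bar{i}$ there is at most one $k$-dominant pair $(\bar{i}, j)$, and symmetrically for each fixed column. Restricting to rows with $x[\bar{i}] = \sigma$, the number of $k$-dominant pairs matching $\sigma$ is at most $\occ_\sigma(x)$, and symmetrically at most $\occ_\sigma(y)$; hence for fixed $k$ and $\sigma$ it is at most $\min\{\occ_\sigma(x),\occ_\sigma(y)\}$. Summing over $k \in [L]$ and over $\sigma \in \Sigma \setminus \{0\}$ yields $d_{\ne 0} \le \lambda L$.

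For $d_0$, I aim to show $d_0 = O(\lambda L)$, so that combined with the bound above we get $d \le 5\lambda L$ (the hypothesis of the lemma together with \assumpref{symbols} ensures $\lambda \ge 1$ once $|\Sigma| \ge 2$, so the constant can be absorbed). The idea is a charging argument: fix a level $k$ and sort the $k$-dominant $0$-pairs as $(i_1,j_1),\ldots,(i_t,j_t)$ with $i_1 < \cdots < i_t$, whence $j_1 > \cdots > j_t$ by the antichain property of dominant pairs at the same level. For two consecutive such pairs $(i_a, j_a)$ and $(i_{a+1}, j_{a+1})$, dominance of $(i_a, j_a)$ gives $L[i_a, j_{a+1}] \le k-1$ (since $j_{a+1} < j_a$), whereas $L[i_{a+1}, j_{a+1}] = k$. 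Hence $L[\cdot, j_{a+1}]$ strictly increases at some row $i^* \in (i_a, i_{a+1}]$, meaning $x[i^*]$ matches some $y[j^{**}]$ with $j^{**} \le j_{a+1}$; a symmetric argument along row $i_{a+1}$ yields a matching column in $(j_{a+1}, j_a]$. A case analysis on whether these witness symbols are $0$ or non-zero pins down a non-zero occurrence in $x[i_a+1..i_{a+1}]$ or $y[j_{a+1}+1..j_a]$ to which the pair $(i_{a+1},j_{a+1})$ is charged.

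The main obstacle is the bookkeeping for this charging argument: each non-zero occurrence must be charged only $O(1)$ times per level so that summing over $k$ yields $O(\lambda L)$. A naive charging risks over-counting when several consecutive $0$-dominant pairs cluster near the same non-zero symbol, so one needs a scheme (e.g.\ based on the nearest non-zero symbol in a canonical direction, or on the staircase transition pattern of the level set $\{(i,j) : L[i,j] = k\}$) that yields a near-bijective charging. The "in particular" consequence for $\Sigma = \{0,1\}$ then follows immediately: $\Sigma \setminus \{0\} = \{1\}$, so $\lambda = \min\{\occ_1(x),\occ_1(y)\} \le \occ_1(y)$ after possibly swapping $x$ and $y$.
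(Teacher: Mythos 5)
Your bound on the dominant pairs matching a non-zero symbol is exactly the paper's first step, using \obsref{atMostOned} to conclude that for each level $k$ and each $\sigma \ne 0$ there are at most $\min\{\occ_\sigma(x),\occ_\sigma(y)\}$ $k$-dominant $\sigma$-pairs, giving $\lambda L$ in total. The second half, however, contains a genuine gap that is not mere bookkeeping: the central claim that a case analysis ``pins down a non-zero occurrence in $x[i_a+1..i_{a+1}]$ or $y[j_{a+1}+1..j_a]$'' is simply false, since both the row witness and the column witness can be $0$-matches. Consider $x = 0^5 1 0^3$ and $y = 0^3 1 0^5$ (so $\lambda = 1$, $L = 8$). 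One checks that $L[7,7] = 6$, $L[6,7] = L[7,6] = 5$ and $L[8,6] = 6$, $L[7,6] = L[8,5] = 5$, so the two $6$-dominant pairs at level $k=6$ are $(7,7)$ and $(8,6)$, and both are $0$-pairs. But $x[8] = 0$ and $y[7] = 0$, so the ranges $x[8..8]$ and $y[7..7]$ contain no non-zero symbol at all. Thus no scheme based on locating a fresh non-zero symbol between adjacent $0$-dominant pairs at the same level can work; the obstacle is not \emph{which} non-zero occurrence to charge, but that there may be none.

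The paper handles $d_0$ with a \emph{global} window constraint rather than a local adjacency argument. For a $0$-dominant pair $(i,j)$ at level $k$, set $\ell_x := \occ_0(x[1..i])$ and $\ell_y := \occ_0(y[1..j])$. Then $k - \lambda \le \min\{\ell_x,\ell_y\} \le k$: the upper bound holds because $0^{\min\{\ell_x,\ell_y\}}$ is a common subsequence of $x[1..i]$ and $y[1..j]$, while the lower bound holds because $L[i,j] \le \sum_{\sigma} \min\{\occ_\sigma(x[1..i]),\occ_\sigma(y[1..j])\} \le \min\{\ell_x,\ell_y\} + \lambda$. Hence $\ell_x$ or $\ell_y$ lies in a window of size $\lambda+1$, and since each value of $\ell_x$ (resp.\ $\ell_y$) determines $i$ (resp.\ $j$) and thus, by \obsref{atMostOned}, at most one $k$-dominant pair, level $k$ has at most $2(\lambda+1)$ $0$-dominant pairs. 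Summing over $k$ and adding $\lambda L$ gives $(3\lambda + 2)L \le 5\lambda L$ once $\lambda \ge 1$ (which holds by \assumpref{symbols} when $|\Sigma|\ge 2$). In the counterexample above, $(7,7)$ has $\min\{\ell_x,\ell_y\} = \min\{6,6\} = 6$ and $(8,6)$ has $\min\{\ell_x,\ell_y\} = \min\{7,5\} = 5$; both lie in the window $[5,6]$, which is precisely why two $0$-dominant pairs can coexist at this level without any intervening non-zero symbol.
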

\begin{proof}
Let $1 \le k \le L$ and $\sigma \in \Sigma\setminus\{0\}$. By \obsref{atMostOned}, there are at most $\min \{ \occ_\sigma(x), \occ_\sigma(y)\}$ $k$-dominant pairs $(i,j)$ with $x[i] = y[j] = \sigma$. Hence in total, there are at most $\lambda \cdot L$ dominant pairs $(i,j)$ with $x[i] = y[j] \ne 0$.

To count the remaining dominant pairs, which are contributed by $0$, we use a similar argument to~\lemref{dLgap}. Let $1 \le k \le L$ and consider any pair $(i,j)$ with $x[i] = y[j] = 0$, say $x[i]$ is the $\ell_x$-th occurrence of 0 in $x$ and $y[j]$ is the $\ell_y$-th occurrence of 0 in $y$. If $(i,j)$ is a $k$-dominant pair then $k-\lambda \le \min \{ \ell_x, \ell_y\} \le k$. Indeed, if $\min \{\ell_x,\ell_y\} < k-\lambda$, then 
\[L[i,j] \le \sum_{\sigma \in \Sigma} \min \{\occ_\sigma(x[1..i]),\occ_\sigma(y[1..j])\} \le \min\{\ell_x,\ell_y\} + \lambda < k,\]
contradicting the definition of a $k$-dominant pair. Moreover, if $\min\{\ell_x,\ell_y\} > k$, then $0^{\min\{\ell_x,\ell_y\}}$ is a common subsequence of $x[1..i], y[1..j]$ of length strictly larger than $k$, which is again a contradiction to $(i,j)$ being a $k$-dominant pair.

Hence, we have $k - \lambda \le \ell_x \le k$ or $k - \lambda \le \ell_y \le k$. 
Since any choice of $\ell_x$ uniquely determines~$i$ by \obsref{atMostOned} (and symmetrically $\ell_y$ determines $j$), there are at most $2\lambda + 2$ $k$-dominant pairs with $x[i]=y[j] = 0$.
In total, we have at most $(3\lambda + 2)L\le 5 \lambda L$ dominant pairs (note that $\lambda \ge 1$ by Assumption~\ref{assump:symbols} and $|\Sigma| \ge 2$).
\end{proof}

\begin{lem}\label{lem:Mbinarynd}
If $\Sigma = \{0,1\}$ then $M \ge nd/(5L)$.
\end{lem}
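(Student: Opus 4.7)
The plan is to combine \lemref{few1s} with a direct counting of matching pairs. Since $\Sigma = \{0,1\}$ is symmetric, I may assume without loss of generality that $0$ is the majority symbol in $y$, i.e., $\occ_0(y) \ge m/2$ and thus $\occ_1(y) \le m/2$. Applying \lemref{few1s} with $\sigma = 0$ gives $d \le 5L \cdot \min\{\occ_1(x), \occ_1(y)\}$, so it suffices to show
\[
\min\{\occ_1(x), \occ_1(y)\} \; \le \; M/n.
\]

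Writing $p := \occ_1(x)$ and $q := \occ_1(y)$, the bound $q \le m/2$ is the only consequence of the WLOG choice I will use, and the number of matching pairs factors cleanly as $M = (n-p)(m-q) + pq$. The target inequality $\min\{p,q\} \cdot n \le M$ then splits naturally into two cases:
\begin{itemize}
\item If $p \le q$, expanding and rearranging, the inequality becomes $n(m-q-p) \ge p(m-2q)$. From $q \le m/2$ one has $m - 2q \ge 0$, and $p \le q$ implies $m - q - p \ge m - 2q$; together with $p \le n$ this gives the bound.
\item If $q < p$, the inequality rearranges to $2nq + p(m-2q) \le nm$, which follows at once from $p \le n$ and $m - 2q \ge 0$, since then $p(m-2q) \le n(m-2q)$.
\end{itemize}
Plugging this back into the bound from \lemref{few1s} yields $d \le 5L \cdot M/n$, i.e., $M \ge nd/(5L)$.

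The only genuine subtlety is the right choice of $\sigma$ in \lemref{few1s}: picking $\sigma$ to be the majority symbol of $y$ is what provides the slack $q \le m/2$ (equivalently $m - 2q \ge 0$) that drives both branches of the case analysis. Once this choice is made, the remainder is elementary arithmetic, since the heavy combinatorial content --- bounding the number of dominant pairs in terms of how often the minority symbol occurs --- is already encapsulated in \lemref{few1s}. I therefore expect no real obstacle beyond this bookkeeping.
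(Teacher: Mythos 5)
Your proof is correct: you reduce, exactly as the paper does, to showing $M \ge n \cdot \min\{\occ_1(x),\occ_1(y)\}$ and then invoke \lemref{few1s}, so the decomposition and the key lemma are the same. Where you differ is in the choice of WLOG normalization. You take $0$ to be the majority symbol of $y$, which gives the slack $m - 2\occ_1(y) \ge 0$ but then forces a two-case algebraic check depending on whether $\occ_1(x) \le \occ_1(y)$. The paper instead normalizes so that $\min\{\occ_0(x),\occ_0(y)\} \ge \min\{\occ_1(x),\occ_1(y)\} =: \lambda$; this immediately gives $\occ_0(y),\occ_1(y) \ge \lambda$, whence $M = \sum_\sigma \occ_\sigma(x)\occ_\sigma(y) \ge \lambda\sum_\sigma \occ_\sigma(x) = \lambda n$ in a single line, with no case split. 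Both normalizations are valid since $M,n,d,L$ are invariant under swapping $0$ and $1$; the paper's is simply chosen to make the counting bound fall out directly, avoiding your bookkeeping. You might note that the extra arithmetic your route requires is a symptom of picking a WLOG tailored to \lemref{few1s} rather than to the target inequality $M \ge n\lambda$.
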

\begin{proof}
Without loss of generality assume that $\min \{\occ_0(x),\occ_0(y)\} \ge \min \{\occ_1(x),\occ_1(y)\}$ (by possibly exchanging 0 and 1). Then $\lambda = \min\{\occ_1(x),\occ_1(y)\}$ satisfies $\occ_\sigma(y)\ge \lambda$ for all $\sigma\in \Sigma$. Thus,  $M = \sum_{\sigma \in \Sigma} \occ_\sigma(x)\cdot \occ_\sigma(y) \ge (\occ_0(x)+\occ_1(x))\cdot \lambda =\lambda n$. By \lemref{few1s}, we have $\lambda \ge d/(5L)$ and the claim follows.
\end{proof}

For ternary alphabets, the following weaker relation holds.

\begin{lem}\label{lem:Mternarymd}
If $\Sigma = \{0,1,2\}$ then $M \ge md/(80 L)$.
\end{lem}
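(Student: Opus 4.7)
The plan is to apply \lemref{few1s} with a carefully chosen designated symbol and then reduce the statement to a clean lower bound on $M$ in terms of a single symbol frequency. Writing $\mu_\sigma := \min\{\occ_\sigma(x), \occ_\sigma(y)\}$ for $\sigma \in \Sigma$, I will begin by renaming the three symbols so that $\mu_0 \ge \mu_1 \ge \mu_2$. Applying \lemref{few1s} with $\sigma_0 := 0$ then yields $d \le 5 L(\mu_1 + \mu_2) \le 10 L \mu_1$, so that the target inequality $M \ge md/(80L)$ reduces to the simpler claim $M \ge m \mu_1/8$.

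To establish this reduced claim, I will make repeated use of the trivial bounds $M \ge \occ_\sigma(x) \occ_\sigma(y)$ for each $\sigma$, together with $\occ_\sigma(x), \occ_\sigma(y) \ge \mu_\sigma$. Specifically, since $\mu_0 \ge \mu_1$, the combined bound $M \ge \mu_0 \occ_0(y) + \mu_1 \occ_1(y) \ge \mu_1(\occ_0(y) + \occ_1(y))$ and its $x$-symmetric counterpart $M \ge \mu_1(\occ_0(x)+\occ_1(x))$ will be the main tools. The case analysis then splits naturally into: (i) $\mu_1 \ge m/8$, where $M \ge \mu_1^2 \ge m\mu_1/8$ already suffices; (ii) $\occ_2(y) \le m/2$, which gives $\occ_0(y)+\occ_1(y) \ge m/2$ and hence $M \ge \mu_1 m/2$; (iii) $\occ_2(x) \le n - m/2$, which gives $\occ_0(x)+\occ_1(x) \ge m/2$ (using $n \ge m$) and hence $M \ge \mu_1 m/2$; and (iv) the leftover case $\occ_2(x) > n - m/2$ and $\occ_2(y) > m/2$, which implies $\occ_2(x), \occ_2(y) > m/2$ and therefore $\mu_2 > m/2$, so $\mu_1 \ge \mu_2 > m/2$, contradicting the failure of (i).

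I expect the main subtle point to be the choice of designated symbol in \lemref{few1s}: picking the symbol with maximum $\mu_\sigma$ is essential, as it converts the lemma's bound $d \le 5L\lambda$ into $d \le 10 L \mu_1$ in terms of a single parameter, which is exactly what the case analysis needs to close. Any other choice would force me to bound $M$ by a sum of two comparable $\mu$'s, and the above clean trichotomy would break. Beyond this, everything reduces to elementary manipulations of symbol counts via $M = \sum_\sigma \occ_\sigma(x)\occ_\sigma(y)$, so I do not anticipate further obstacles.
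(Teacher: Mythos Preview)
Your proof is correct and takes a genuinely different route from the paper's. The paper designates the symbol that appears most often in an LCS $z$ (so $\occ_0(x),\occ_0(y)\ge L/3$), first handles the easy cases $\occ_0(y)\ge m/2$ or $\occ_0(x)\ge n/2$ via $M\ge Lm/6\ge dm/(18L)$, and otherwise deletes all $0$'s and invokes the binary-alphabet bound \lemref{MbinaryLm} on the resulting strings $x',y'$ of length $\ge m/2$ to get $M\ge M(x',y')\ge \lambda m/16$, which combined with \lemref{few1s} gives the claim.

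By contrast, you order the symbols by $\mu_\sigma=\min\{\occ_\sigma(x),\occ_\sigma(y)\}$, reduce immediately to $M\ge m\mu_1/8$ via \lemref{few1s}, and close with a direct four-case analysis on symbol frequencies. This is more elementary: it never appeals to \lemref{MbinaryLm}, never touches the LCS $z$, and works purely with the identity $M=\sum_\sigma \occ_\sigma(x)\occ_\sigma(y)$. The paper's detour through the binary case is thus unnecessary, and your argument is arguably cleaner. One cosmetic remark: the parenthetical ``(using $n\ge m$)'' in your case~(iii) is misplaced --- that case only needs $\occ_0(x)+\occ_1(x)=n-\occ_2(x)\ge m/2$, which follows directly from $\occ_2(x)\le n-m/2$; the inequality $n\ge m$ is actually needed in case~(iv) to conclude $\occ_2(x)>n-m/2\ge m/2$.
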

\begin{proof}
Let $z$ be an LCS of $x$ and $y$. We permute the symbols such that $\sigma = 0$ maximizes $\occ_\sigma(z)$. Thus, we have $\occ_0(x) \ge \occ_0(z) \ge L/|\Sigma|=L/3$ and symmetrically $\occ_0(y)\ge L/3$. 

If $\occ_0(y) \ge m/2$ then we have $M \ge \occ_0(x)\cdot \occ_0(y) \ge Lm/6 \ge dm/(18L)$ by \lemref{dUBs}(ii). Similarly, if $\occ_0(x) \ge n/2$ then we obtain $M \ge Ln/6 \ge dn/(18L)\ge dm/(18L)$. Hence, it remains to consider the case $\occ_0(x) \le n/2$ and $\occ_0(y) \le  m/2$. Let $x',y'$ be the subsequences obtained by deleting all 0s from $x$ and $y$, respectively, and note that $|x'|,|y'| \ge m/2$. Since $x',y'$ have alphabet size 2, \lemref{MbinaryLm} is applicable and yields $M(x', y') \ge L(x',y')\cdot m(x',y') / 4$. Observe that there is a common subsequence of $x',y'$ of length at least $\lambda/2$, where $\lambda = \sum_{\sigma \in \Sigma \setminus \{0\}} \min \{\occ_\sigma(x),\occ_\sigma(y)\}$ (consider the longer subsequence of $1^{\min\{\occ_1(x),\occ_1(y)\}}$ and $2^{\min\{\occ_2(x),\occ_2(y)\}}$). Hence,
\[M(x,y)\ge M(x',y') \ge \frac{1}{4}\cdot L(x',y')\cdot m(x',y') \ge \frac{1}{4}\cdot \frac{\lambda}{2} \cdot \frac{m}{2}  = \frac{\lambda m}{16}.\]
The claim now follows from $\lambda \ge d/(5L)$ as proven in \lemref{few1s}.
\end{proof}

\section{Technical Tools and Constructions}
\label{sec:basicFacts}

To prepare later constructions and ease their analysis, this section collects several technical results. We start off with the simple fact that equal prefixes can be greedily matched.

\begin{lem}[Greedy Prefix Matching]\label{lem:greedy}
For any strings $w,x,y$, we have $L(w x, w y)=|w| + L(x,y)$ and $d(w x, w y) = |w| + d(x,y)$. 
\end{lem}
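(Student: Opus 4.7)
The plan is to reduce both equalities to a single structural fact about the DP table $L[i,j] := L(wx[1..i], wy[1..j])$ and then read off $L(wx,wy)$ and the dominant pair count from it. Write $W := |w|$ for brevity.

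First I would prove by induction on $i+j$ the identity
\[ L[i,j] = W + L(x[1..i{-}W],\, y[1..j{-}W]) \qquad \text{for all } i,j \ge W. \]
The base case $i = j = W$ follows from $L(w,w) = W$. For the inductive step, once $i,j > W$ we have $wx[i] = x[i{-}W]$ and $wy[j] = y[j{-}W]$, so each term on the right-hand side of the standard LCS recurrence for $L[i,j]$ translates, via the inductive hypothesis, to the corresponding term for the LCS of $x[1..i{-}W]$ and $y[1..j{-}W]$. The boundary cases $i = W$ or $j = W$ are handled by observing that $w$ is a common prefix of $wx[1..i]$ and $wy[1..j]$, so $L[W,j] = L[i,W] = W$ for all $i,j \ge W$. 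Setting $i = W+|x|$ and $j = W+|y|$ yields the first equality.

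For the $d$-equality I would partition the DP grid into three regions and count dominant pairs in each:
\begin{enumerate}[leftmargin=2\parindent]
\item[(a)] Prefix region $(i,j \le W)$: Here $L[i,j] = \min(i,j)$, since the shorter of $w[1..i]$ and $w[1..j]$ is literally a prefix, hence a subsequence, of the longer. The two strict dominance inequalities force $i \le j$ and $j \le i$ respectively, so the dominant pairs are exactly $(1,1), \ldots, (W,W)$, contributing $W$.
\item[(b)] Mixed region ($\min\{i,j\} \le W < \max\{i,j\}$): One checks directly that $L[i,j] = \min(i,j)$ here as well, and this value is constant in the larger coordinate once that coordinate exceeds $W$, so one of the strict dominance inequalities always fails. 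This region contributes no dominant pairs.
\item[(c)] Suffix region $(i,j > W)$: By the identity above, the strict dominance inequalities for $(i,j)$ in $(wx,wy)$ translate to the strict dominance inequalities for $(i{-}W,\, j{-}W)$ in $(x,y)$, giving a bijection with the $d(x,y)$ dominant pairs of $(x,y)$.
\end{enumerate}
Adding the three contributions yields $d(wx,wy) = W + d(x,y)$.

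The only mildly delicate point is verifying the translation of dominance at the boundary $i = W+1$ or $j = W+1$ in region~(c), where $L[i{-}1,j]$ or $L[i,j{-}1]$ actually sits in the mixed region; this is a direct check using the explicit formula from~(b) together with the convention $L(x[1..0], \cdot) = 0$ used in the dominance test for $(x,y)$. Everything else is routine bookkeeping with the LCS recurrence.
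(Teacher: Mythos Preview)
Your proof is correct and essentially matches the paper's argument: both establish that the DP table of $(wx,wy)$ restricted to $i,j \ge |w|$ is the DP table of $(x,y)$ shifted by $|w|$, that the diagonal $(1,1),\ldots,(|w|,|w|)$ accounts for the only dominant pairs with a coordinate $\le |w|$, and that the remaining dominant pairs biject with those of $(x,y)$. The only stylistic difference is that the paper proves the LCS identity by a short combinatorial argument on common subsequences (reducing to a single prepended symbol and case-splitting on whether the subsequence starts with that symbol) rather than by induction on the LCS recurrence, but this is a minor variation and your DP-table approach has the mild advantage of directly furnishing the prefix identity needed for the dominant-pair count.
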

\begin{proof}
For the first statement, it suffices to prove the claim when $w=0$ is a single symbol (by induction and renaming of symbols). 
Consider any common subsequence $z$ of $0x, 0y$. If $z$ is of the form $0 z'$, then $z'$ is a common subsequence of $x,y$, so $|z| \le 1 + L(x,y)$. If the first symbol of $z$ is not $0$, then the first symbols of $0x,0y$ are not matched, so $z$ is a common subsequence of $x,y$ and we obtain $|z| \le L(x,y)$. In total, $L(0x,0y) \le 1 + L(x,y)$. The converse holds by prepending $0$ to any LCS of $x,y$. 

For the second statement, let $x' = wx$ and $y' = wy$. For $i\in [|w|]$, we have $L(x'[1..i],y'[1..i]) = i$. Hence $(i,i)$ is the unique $i$-dominant pair of $x',y'$ and no other dominant pairs $(i,j)$ with $i\le |w|$ or $j \le |w|$ exist. This yields $|w|$ dominant pairs. Consider now any $(i,j)$ with $i = |w| + \bar{i}$ and $j = |w| + \bar{j}$ where $\bar{i} \in [|x|], \bar{j} \in [|y|]$. By the first statement, $L(x'[1..i],y'[1..j]) = |w| + L(x[1..\bar{i}], y[1..\bar{j}])$. Thus $(i,j)$ is a $(|w|+k)$-dominant pair of $x'$ and $y'$ if and only if $(\bar{i},\bar{j})$ is a $k$-dominant pair of $x$ and $y$. This yields $d(x,y)$ additional dominant pairs, proving the claim.
\end{proof}

For bounding the number of dominant pairs from below we often use the following observation.

\begin{obs}\label{obs:prefix}
For any strings $a$, $x$, $b$, $y$, we have $d(ax,by) \ge d(a,b)$.
\end{obs}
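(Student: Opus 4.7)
The plan is to show that every dominant pair of $(a,b)$ remains a dominant pair of $(ax,by)$, which immediately yields $d(ax,by) \ge d(a,b)$.

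First I would set up notation: let $L[i,j] := L(a[1..i],b[1..j])$ be the LCS table for $(a,b)$, and $L'[i,j] := L((ax)[1..i],(by)[1..j])$ be the LCS table for $(ax,by)$. The key observation is that for any $i \le |a|$ and $j \le |b|$, the prefixes agree: $(ax)[1..i] = a[1..i]$ and $(by)[1..j] = b[1..j]$, so $L'[i,j] = L[i,j]$, and analogously $L'[i-1,j] = L[i-1,j]$ and $L'[i,j-1] = L[i,j-1]$ whenever these indices are in range.

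Next I would apply the definition of a dominant pair. Suppose $(i,j)$ is a $k$-dominant pair of $(a,b)$, i.e., $L[i,j]=k$, $L[i-1,j] < k$, and $L[i,j-1] < k$. By the equality of the relevant table entries above, we have $L'[i,j]=k$, $L'[i-1,j] < k$, and $L'[i,j-1] < k$, so $(i,j)$ is also a $k$-dominant pair of $(ax,by)$. This gives an injection from dominant pairs of $(a,b)$ into dominant pairs of $(ax,by)$, yielding the claimed inequality.

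I do not foresee a real obstacle: the argument is essentially that the dominant-pair property is a local condition on the prefixes at position $(i,j)$, and appending $x$ and $y$ does not affect the prefixes at positions $i \le |a|$ and $j \le |b|$. The only small point to be careful about is the edge case $i=1$ or $j=1$, which is handled by the standard convention $L[0,j]=L[i,0]=0$ that holds for both tables.
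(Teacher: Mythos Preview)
Your proof is correct and is exactly the paper's argument spelled out in full detail: the paper's one-line proof simply observes that any prefixes of $a,b$ are also prefixes of $ax,by$, which is precisely your key observation that $L'[i,j]=L[i,j]$ for $i\le|a|$, $j\le|b|$, from which the preservation of the dominant-pair condition is immediate.
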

\begin{proof}
This trivially follows from the fact that any prefixes $a'$, $b'$ of $a$, $b$ are also prefixes of $ax$,$by$.
\end{proof}

\subsection{Generating dominant pairs}

The dependency of $d$ on the other parameters is quite complicated, and hence it is a rather complex task to generate a desired number of dominant pairs while respecting given bounds on all other parameters. We present different constructions for this purpose in the following.

\begin{figure}
\centering
\small
\input{pics/dom1.tikz}
\caption{The $L$-table for the strings $a=(01)^{R+S}$ and $b=0^R (01)^S$ with $R=4, S=5$ (where the entry in row $j$ and column $i$ denotes $L(a[1..i],b[1..j])$). The indicated dominant pairs visualize the results of \lemref{genDomPairs}.}
\label{fig:domPairsI}
\end{figure}

The following lemma establishes the first such construction, illustrated in \figref{domPairsI}. We remark that statements \itemrefs{genDomPairsLmore}{genDomPairsdmore} are technical tools that we will only use for $|\Sigma| = \Oh(1)$ in \secref{hardnessSmallSigma}.

\begin{lem}[Generating dominant pairs]\label{lem:genDomPairs}
Let $R,S\ge 0$ and define $a := (01)^{R+S}$ and $b := 0^{R} (01)^{S}$. The following statements hold.
\begin{enumerate}[label=(\roman{*})]
 \item\label{itm:genDomPairsL} We have $L(a,b) = |b| = R+2S$.
 \item\label{itm:genDomPairsd} We have $R\cdot S \le d(a,b) \le \min\{2(R+1),5S\} \cdot (R+2S)=\Oh(R\cdot S)$. 
 \item\label{itm:genDomPairsLmore} For any $\alpha, \beta, \beta' \ge 0$, we have $L(a 1^\alpha, 0^\beta b 0^{\beta'}) = |b|=R + 2S$.
 \item\label{itm:genDomPairsdmore} For any $\alpha, \beta, \beta' \ge 0$, we have $R\cdot S\le d(a 1^\alpha, 0^\beta b 0^{\beta'}) \le 2(\max\{R+\alpha,\beta+\beta'\}+1)(R+2S)$.
\end{enumerate}
\end{lem}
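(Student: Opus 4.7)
My plan is to handle the four parts in the order (i) $\to$ (iii) $\to$ (ii) $\to$ (iv), reusing the subsequence-embedding from (i) in (iii), and reusing the explicit dominant-pair construction from (ii) in (iv).

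For (i), I show that $b$ embeds as a subsequence of $a$ by matching the $r$-th zero of $b$ (for $r = 1, \dots, R$) with the leading $0$ of the $r$-th ``$01$''-block of $a$, and matching the trailing $(01)^S$-factor of $b$ with the last $S$ ``$01$''-blocks of $a$. This gives $L(a,b) \ge |b|$, while $L(a,b) \le |b|$ is trivial. For (iii), the lower bound $L \ge |b|$ follows since $b$ is a subsequence of both padded strings (via (i) for $a 1^\alpha$, and trivially for $0^\beta b 0^{\beta'}$). The matching upper bound is a counting argument: any common subsequence contains at most $\min\{R+S,\, \beta+R+S+\beta'\} = R+S$ zeros and $\min\{R+S+\alpha,\, S\} = S$ ones, summing to $|b|$.

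For (ii), both upper bounds follow directly from prior lemmas. Applying \lemref{few1s} with symbol $0$ gives $\lambda = \min\{\occ_1(a), \occ_1(b)\} = S$, hence $d(a,b) \le 5SL = 5S(R+2S)$. Applying \lemref{dLgap} with $\Delta = |a| - L = R$ (and $\delta = 0$) gives $d(a,b) \le 2(R+1)(R+2S)$. The lower bound $R \cdot S$ is the technical heart of the argument. I will exhibit the family of candidate dominant pairs
\[ (i_{r,s},\, j_{r,s}) := (2(r+s-1),\; R+2s), \qquad (r,s) \in \{1, \dots, R\} \times \{1, \dots, S\}, \]
which are pairwise distinct (since $j_{r,s}$ identifies $s$ and $i_{r,s}$ then identifies $r$). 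Verifying each is dominant reduces to three $L$-value computations: $L[i_{r,s}, j_{r,s}] = r + 2s - 1$, achieved by matching the first $r$ zeros of $a$ into $0^R \subseteq b$ and then aligning the identical factors $a[2r..i_{r,s}] = 1(01)^{s-1} = b[R+2..R+2s]$; $L[i_{r,s}, j_{r,s} - 1] = r + 2s - 2$, immediate since $b[1..j_{r,s}-1]$ contains only $s - 1$ ones; and the delicate equality $L[i_{r,s} - 1, j_{r,s}] = r + 2s - 2$.

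The last of these equalities is the main obstacle, since symbol counting alone is not tight here. I plan to case-split on the number $t$ of $b$-ones matched by an LCS of $a[1..i_{r,s}-1]$ and $b[1..j_{r,s}]$. If $t \le s - 1$, the bound $\occ_0(a[1..i_{r,s}-1]) + t \le (r+s-1) + (s-1) = r + 2s - 2$ follows by counting. Otherwise $t = s$, and I argue positionally: the matched one-positions in $a$ form an increasing even sequence $p_1 < \cdots < p_s \le i_{r,s} - 2$, so $p_1 \le 2r - 2$; before $p_1$ at most $\occ_0(a[1..p_1-1]) \le r - 1$ zeros can be matched; between consecutive matched ones (separated by a single $0$ in $b$'s $(01)^s$-suffix) at most one zero-match fits; after the last matched one nothing can be matched since $b[1..j_{r,s}]$ ends with $1$. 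Summing yields $(r-1) + (s-1) + s = r + 2s - 2$, with equality witnessed by an explicit construction. Finally for (iv), the upper bound is immediate from \lemref{dLgap}: using $L = R + 2S$ from (iii), we have $\max\{|a 1^\alpha|,\, |0^\beta b 0^{\beta'}|\} - L = \max\{R+\alpha,\, \beta+\beta'\}$, giving the claimed bound on $d$. For the lower bound, the family from (ii) lifts to $(i_{r,s},\, j_{r,s} + \beta)$ in the padded instance: prepending $0^\beta$ only increases the zero-count of the second string (which does not tighten the one-based positional bound above), while the suffixes $1^\alpha$ after $a$ and $0^{\beta'}$ after $b$ do not affect $L$-values at these prefix-bounded pairs. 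Thus the verification of (ii) carries over unchanged.
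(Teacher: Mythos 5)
Your proof is correct, and it reaches the same conclusions as the paper, but your lower bound argument for $d$ (parts (ii) and (iv)) takes a genuinely different route. The paper first establishes the auxiliary formula $L((01)^r, 0^{\beta+R}(01)^s) = r+s$ (for $s \le r \le R+s$) via greedy suffix matching, then shows \emph{existence} of an $(r+s)$-dominant pair inside the rectangle $(2(r-1),2r]\times(\beta+R+2(s-1),\beta+R+2s]$ purely by monotonicity of $L[\cdot,\cdot]$: an $(r+s)$-dominant pair dominating $(2r,\beta+R+2s)$ exists, and it cannot fall outside the rectangle without contradicting the computed $L$-values at $(r-1,s)$ and $(r,s-1)$. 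Since the rectangles are pairwise disjoint, this gives $RS$ dominant pairs without ever pinning down their coordinates. You instead exhibit an \emph{explicit} family $(2(r+s-1),R+2s)$ and directly verify the three defining identities of a dominant pair, which forces you into the positional case analysis for $L[i-1,j]$ (the split on whether all $s$ ones of $b[1..j]$ are matched). Both are valid; the paper's argument is shorter once the formula is in place and avoids the delicate bookkeeping, while yours is more self-contained and concrete (the dominant pairs are displayed). Your (i) and (iii) are also proved by direct subsequence embedding rather than the paper's greedy-suffix-matching calculation, but that is a cosmetic difference. The upper bounds in (ii) and (iv) are obtained exactly as in the paper, via \lemref{few1s} and \lemref{dLgap}.
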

\begin{proof}
All statements follow from the following fact. 
\begin{enumerate}[label=($\ast$)]
\item \label{itm:genDomPairsHelper} For any $0\le s \le S, s\le r\le R+s$ and $\beta\ge 0$, we have $L((01)^r, 0^\beta 0^{R} (01)^{s}) = r+s$. 
\end{enumerate}
To prove \ref{itm:genDomPairsHelper}, note that by \lemref{greedy} (reversing the strings)
we have \[L((01)^r,0^{\beta+R} (01)^s) = 2s + L((01)^{r-s}, 0^{\beta+R}) = 2s + \min\{\occ_0((01)^{r-s}),\beta+R\} = 2s + (r-s) = r + s.\]
Statement \ref{itm:genDomPairsL} now follows from setting $s=S$, $r=R+S$, and $\beta = 0$. 

To see \ref{itm:genDomPairsLmore}, note that $L(a 1^\alpha,0^\beta b 0^{\beta'}) \ge L(a,b) = |b|$ by \itemref{genDomPairsL}. For the upper bound, we compute 
\begin{eqnarray*}
L(a 1^\alpha,0^\beta b 0^{\beta'}) & \le & \min\{\occ_0(a1^\alpha),\occ_0(0^\beta b0^{\beta'})\}+\min\{\occ_1(a1^\alpha),\occ_1(0^\beta b0^{\beta'})\} \\
& = & \min\{R+S, R+S+\beta+\beta'\} + \min\{R+S+\alpha,S\} = R+2S=L(a,b).
\end{eqnarray*}

To prove \itemref{genDomPairsd}, note that $d(a,b) \le 5 \cdot \occ_1(b) \cdot L(a,b) = 5 S (R+2S)$ by \lemref{few1s}.  The bound $d(a,b) \le 2(R+1)\cdot (R+2S)$ follows from \itemref{genDomPairsdmore} by setting $\alpha=\beta=\beta'=0$, hence it remains to prove~\itemref{genDomPairsdmore}.

 For the lower bound, we use $d(a 1^\alpha, 0^\beta b 0^{\beta'}) \ge d(a, 0^\beta b)$ (by \obsref{prefix}) and consider $L'[r,s] := L((01)^r, 0^{\beta+R} (01)^s)$. We prove that for any $1\le s\le S$, $s < r\le R+s$, we have at least one $(r+s)$-dominant pair $(i,j)$ with $2(r-1) < i \le 2r$ and $\beta+R+2(s-1) < j \le \beta+R+2s$. Indeed, $L'[r,s] = r+s$ (by \ref{itm:genDomPairsL}) implies that an $(r+s)$-dominant pair $(i,j)$ with $i\le 2r$ and $j \le \beta + R+2s$ exists. If we had $i\le 2(r-1)$, then by monotonicity of $L(x[1..i],y[1..j])$ in $i$ and $j$ we would have $L'[r-1,s] \ge r+s$, contradicting $L'[r-1,s] = r+s-1$ (by \ref{itm:genDomPairsL}). Thus, we obtain $i > 2(r-1)$, and symmetrically we have $j > \beta+R+2(s-1)$. Hence, for every $1\le s\le S$, $s < r\le R+s$, we have at least one dominant pair which is not counted for any other choice of $r$ and $s$. Since for any $1\le s\le S$ there are $R$ choices for $s< r\le R+s$, we conclude that $d(a,b) \ge S\cdot R$. For the upper bound, note that $\Delta(a1^\alpha ,0^\beta b 0^{\beta'}) = \max\{R+\alpha,\beta+\beta'\}$ by \itemref{genDomPairsLmore}, and hence \lemref{dLgap} yields $d(a1^\alpha,0^\beta b0^{\beta'}) \le 2\cdot L(a1^\alpha ,0^\beta b 0^{\beta'}) \cdot (\Delta(a1^\alpha,0^\beta b0^{\beta'})+1) = 2(R+2S)(\max\{R+\alpha,\beta+\beta'\}+1)$.
\end{proof}

The previous construction uses alphabet size $|\Sigma|=2$, enforcing $M(a,b)\ge L(a,b)^2/2$. If the desired number of matching pairs is much smaller than $L^2$, which can only be the case if the desired alphabet size is large, we have to use a more complicated construction exploiting the larger alphabet. To make the analysis more convenient, we first observe a simple way to bound the number of dominant pairs from below: if we can find $k$ pairwise non-dominating index pairs $(i,j)$ that have the same LCS value (of the corresponding prefixes) and whose predecessors $(i-1,j-1)$ have a strictly smaller LCS value, then at least $k/2$ dominant pairs exist.
\begin{lem}\label{lem:countDomPairs}
For any strings $x,y$ set $L[i,j] := L(x[1..i],y[1..j])$. 
Suppose that there are index pairs $(i_1,j_1),\dots,(i_k,j_k)$ with $i_1 < i_2 < \dots < i_k$ and $j_1 > j_2 > \dots > j_k$ such that for some $\gamma$ and all $1\le \ell \le k$ we have $L[i_\ell,j_\ell] = \gamma$ and $L[i_\ell - 1, j_\ell -1] = \gamma-1$. Then the number of $\gamma$-dominant pairs of $x$ and $y$  is at least $k/2$.
\end{lem}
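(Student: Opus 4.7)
The strategy is to assign to each index pair $(i_\ell,j_\ell)$ a witness $\gamma$-dominant pair lying weakly below it, and then argue via the hypothesis on $L[i_\ell-1,j_\ell-1]$ that no such witness can be reused more than twice. Together with the count $\sum_\ell 1 = k$, this yields at least $k/2$ distinct $\gamma$-dominant pairs.

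\textbf{Step 1 (existence of witnesses).} I would first observe the following general fact: for every pair $(i,j)$ with $L[i,j]=\gamma$, there exists a $\gamma$-dominant pair $(i',j')$ satisfying $i'\le i$ and $j'\le j$. This is immediate by taking a minimal element (in the coordinate-wise partial order) of the nonempty set $\{(i',j')\colon L[i',j']=\gamma,\ i'\le i,\ j'\le j\}$; any strictly dominating pair in the full table with $L$-value $\gamma$ would also lie in this set, contradicting minimality. Applying this to each given pair, I obtain $\gamma$-dominant pairs $D_\ell=(i'_\ell,j'_\ell)$ with $i'_\ell\le i_\ell$ and $j'_\ell\le j_\ell$ for $\ell=1,\dots,k$.

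\textbf{Step 2 (each dominant pair is a witness at most twice).} This is the main point. Suppose for contradiction that a single $\gamma$-dominant pair $D=(i',j')$ equals $D_{\ell_1}=D_{\ell_2}=D_{\ell_3}$ for some $\ell_1<\ell_2<\ell_3$. Then $i'\le i_{\ell_1}<i_{\ell_2}$ using the hypothesis $i_{\ell_1}<i_{\ell_2}$, and $j'\le j_{\ell_3}<j_{\ell_2}$ using the hypothesis $j_{\ell_3}<j_{\ell_2}$. Hence $i'\le i_{\ell_2}-1$ and $j'\le j_{\ell_2}-1$, which by monotonicity of $L$ in both arguments forces
\[
L[i_{\ell_2}-1,\,j_{\ell_2}-1]\ \ge\ L[i',j']\ =\ \gamma.
\]
This contradicts the assumed equality $L[i_{\ell_2}-1,j_{\ell_2}-1]=\gamma-1$. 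Therefore the multiset $\{D_1,\dots,D_k\}$ has each element appearing at most twice, so its support has size at least $\lceil k/2\rceil$, giving at least $k/2$ distinct $\gamma$-dominant pairs.

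\textbf{Anticipated obstacle.} The only subtlety is recognizing that the assumption $L[i_\ell-1,j_\ell-1]=\gamma-1$ is precisely what kills the "middle" index $\ell_2$ in a triple collision: the monotone/anti-monotone arrangement of the $i_\ell$ and $j_\ell$ ensures that in any triple $\ell_1<\ell_2<\ell_3$ the middle pair $(i_{\ell_2},j_{\ell_2})$ strictly dominates any common witness in both coordinates. This is the reason the bound is $k/2$ rather than $k$; without the anti-chain structure one would only get a weaker guarantee. Once this observation is in place the proof is essentially a one-line pigeonhole.
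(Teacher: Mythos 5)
Your proof is correct and follows essentially the same strategy as the paper: assign to each $(i_\ell,j_\ell)$ a $\gamma$-dominant pair dominating it, then use the hypothesis $L[i_\ell-1,j_\ell-1]=\gamma-1$ together with the anti-chain ordering of the $(i_\ell,j_\ell)$ to argue the witnesses are mostly distinct. The bookkeeping differs slightly: the paper localizes each witness to $i_{\ell-1}\le i^*_\ell\le i_\ell$ (via a case distinction on whether the witness shares the $i$- or the $j$-coordinate with $(i_\ell,j_\ell)$) and then deduces $p_\ell\ne p_{\ell+2}$, whereas you directly rule out any triple collision $D_{\ell_1}=D_{\ell_2}=D_{\ell_3}$ by squeezing the common witness strictly below the middle pair in both coordinates. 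Your variant is arguably a bit cleaner since it avoids the case analysis and the explicit localization, but the core idea — each witness is reused at most twice — is the same.
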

\begin{proof}
For each $k$, fix any $\gamma$-dominant pair $p_\ell=(i^*_\ell, j^*_\ell)$ that dominates $(i_\ell,j_\ell)$. Note we may have $p_\ell = (i_\ell,j_\ell)$ if $(i_\ell,j_\ell)$ is itself a dominant pair, and thus $p_\ell$ always exists. Set $i_0 := 1$. We argue that for every $\ell$, we have $i_{\ell-1} \le i^*_\ell \le i_\ell$.

Note that for all $\ell$, we have $L[i_\ell - 1,j_\ell-1]<\gamma$ and hence $L[i,j] < \gamma$ for all $i<i_\ell$, $j<j_\ell$. Thus, we have either (1) $i^*_\ell = i_\ell$ (and $j^*_\ell \le j_\ell$) or (2) $j^*_\ell = j_\ell$ (and $i^*_\ell \le i_\ell$). Case (1) trivially satisfies $i_{\ell-1} \le i^*_\ell \le i_\ell$. Thus, it remains to argue that in case (2) we have $i_{\ell-1} \le i^*_\ell$. Indeed, otherwise we have $i^*_\ell < i_{\ell-1}$ and $j^*_\ell = j_\ell < j_{\ell-1}$, which implies $L[i^*_\ell,j^*_\ell] < \gamma$ and $(i^*_\ell,j^*_\ell)$ is no $\gamma$-dominant pair. 

Note that the above property implies $p_\ell \ne p_{\ell+2}$ for all $1\le \ell \le k-2$, since $i^*_\ell \le i_\ell < i_{\ell+1} \le i^*_{\ell+2}$. Thus, the number of $\gamma$-dominant pairs is bounded from below by $|\{ p_\ell \mid 1\le \ell \le k\}| \ge k/2$.
\end{proof}

Note that the previous lemma would not hold without the condition $L[i_\ell-1,j_\ell-1]= \gamma-1$. 
We are set to analyze our next construction, which is illustrated in \figref{domPairsII}.

\begin{lem}[Generating dominant pairs, large alphabet]\label{lem:genDomPairs-largeSigma}
Let $t \ge 2$, $1 \le t' \le t$, and $S \ge R \ge 1$. Over alphabet $\Sigma = \{1,\dots,t\}$ we define the strings 
\begin{eqnarray*}
a & := & ((1\dots t) \concat (t' \dots 1))^{R} \concat (1\dots t)^{S-R}, \\
b & := & (1\dots t)^S.
\end{eqnarray*}
It holds that
\begin{enumerate}[label=(\roman{*})]
\item\label{itm:genDomPairs-largeSigma-L} $L(a,b) = |b| = St$,
\item\label{itm:genDomPairs-largeSigma-d} Assume that $S\ge R(t'+1)$. Then $(St)(Rt')/8\le d(a,b) \le 4(St)(Rt')$,
\item\label{itm:genDomPairs-largeSigma-M} $tS^2 \le M(a,b) \le t(S+R)S$.
\end{enumerate}
\end{lem}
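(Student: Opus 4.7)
The three statements are handled in turn. For (i), the $S$ forward blocks of $a$ (each equal to $1\dots t$, coming from the $R$ double-blocks and the $S-R$ trailing single blocks) read in order spell out $b=(1\dots t)^S$, so $b$ is a subsequence of $a$ and $L(a,b)\ge|b|=St$; combined with the trivial upper bound $L(a,b)\le|b|$, this yields $L(a,b)=St$. For (iii), each symbol $\sigma\in\{1,\dots,t'\}$ appears once in every forward block and once in every reversed block of $a$, so $\occ_\sigma(a)=S+R$, while each $\sigma\in\{t'+1,\dots,t\}$ appears only in the forward blocks, giving $\occ_\sigma(a)=S$; every symbol occurs $S$ times in $b$. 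Summing, $M(a,b)=t'(S+R)S+(t-t')S^2=tS^2+t'RS$, and the bounds $tS^2\le M(a,b)\le tS(S+R)$ follow from $1\le t'\le t$.

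For (ii), from (i) we have $\Delta(a,b)=|a|-L(a,b)=(St+Rt')-St=Rt'$, so Lemma~\ref{lem:dLgap} yields the upper bound $d(a,b)\le 2L(\Delta+1)=2St(Rt'+1)\le 4(St)(Rt')$, using $Rt'\ge 1$. For the matching lower bound we invoke Lemma~\ref{lem:countDomPairs} at $\Theta(St)$ different LCS levels: for each suitable $\gamma$ we shall exhibit $K\ge Rt'/2$ matching pairs $(i_1,j_1),\dots,(i_K,j_K)$ with $i_1<\cdots<i_K$, $j_1>\cdots>j_K$, $L[i_s,j_s]=\gamma$, and $L[i_s-1,j_s-1]=\gamma-1$, so that Lemma~\ref{lem:countDomPairs} gives $|D_\gamma|\ge K/2\ge Rt'/4$; summing over the $\Theta(St)$ levels delivers the required $d(a,b)\ge (St)(Rt')/8$. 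The pairs are indexed by the $Rt'$ shifts $(r,\ell)\in\{1,\dots,R\}\times\{1,\dots,t'\}$, which enumerate all symbols inside the reversed blocks: put $i(r,\ell):=r(t+t')-\ell+1$ (the position of $\ell$ inside the $r$-th reversed block) and $j(k,\ell):=(k-1)t+\ell$ for a suitably chosen $k=k(\gamma,r,\ell)$. The map $(r,\ell)\mapsto i(r,\ell)$ is strictly monotone in the lex order of $(r,-\ell)$, and $k(\gamma,r,\ell)$ is chosen so that $j$ strictly decreases along that order; the hypothesis $S\ge R(t'+1)$ is exactly what ensures that $b$ is long enough to simultaneously accommodate all $Rt'$ shifts for $\Theta(St)$ distinct levels $\gamma$.

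The main obstacle will be verifying the two LCS identities $L[i(r,\ell),j(k,\ell)]=\gamma$ and $L[i(r,\ell)-1,j(k,\ell)-1]=\gamma-1$. The underlying matching is: use the first $r$ forward blocks of $a$ for the first $r$ blocks of $b$, then fold the strictly decreasing prefix $(t',t'-1,\dots,\ell)$ of the $r$-th reversed block into the next $t'-\ell+1$ blocks of $b$ (one symbol per $b$-block, possible precisely because the prefix is decreasing while every $b$-block is increasing), and finally match the trailing forward blocks of $a$ against the remaining blocks of $b$. Optimality of this matching is forced by the counting upper bound $L\le\sum_\sigma\min(\occ_\sigma(a[1..i]),\occ_\sigma(b[1..j]))$, and the predecessor drop follows from the same analysis applied to $a[1..i-1]$ and $b[1..j-1]$, where removing one symbol from each side decreases the optimum by exactly one. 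Once these identities are pinned down, the monotonicity of the chosen pair sequence and the summation over levels proceed routinely.
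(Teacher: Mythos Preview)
Your treatments of (i), (iii), and the upper bound in (ii) are correct and essentially identical to the paper's. The gap is in the lower bound of (ii).

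Your plan is to take prefixes $a[1..i(r,\ell)]$ ending \emph{inside} the $r$-th reversed block. This differs from the paper, which instead takes prefixes $a(k,\ell)=(ww')^R\,v[1..\ell]$ ending in the trailing forward-only part $v=w^{S-R}$ together with $b(k,\ell)=w^{R+k}\,v[1..\ell]$. Since these share the suffix $v[1..\ell]$, greedy suffix matching reduces the LCS computation to the clean fact $(\ast)$: $L((ww')^R,w^{R+k})=Rt+k$ for $0\le k\le Rt'$, which the paper proves by a partitioning argument (split $b'$ into its $R+k$ copies of $w$ and bound each piece's contribution).

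Your sketch has two concrete problems. First, the described matching refers to ``trailing forward blocks of $a$'', but $a[1..i(r,\ell)]$ with $r\le R$ contains no symbols beyond the $r$-th reversed block, so that part of the alignment is vacuous; the description does not actually justify the claimed LCS value for the prefixes you chose. Second, and more seriously, the optimality step is wrong: the counting bound $L\le\sum_\sigma\min(\occ_\sigma(a'),\occ_\sigma(b'))$ is \emph{not} tight here. For instance, with $t=t'=3$, $r=2$, $\ell=2$, $k=4$ one has $a'=12332112332$ and $b'=12312312312$; the counting bound gives $3+4+3=10$, but one checks directly that $L(a',b')=8$. The paper's proof of $(\ast)$ needs the more delicate partitioning argument precisely because naive counting overshoots. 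Without a correct determination of $L[i(r,\ell),j(k,\ell)]$, you cannot specify $k(\gamma,r,\ell)$, verify the anti-monotonicity required by Lemma~\ref{lem:countDomPairs}, or carry out the level count. (Your predecessor-drop claim, by contrast, is fine and in fact automatic: whenever $a[i]=b[j]$ one always has $L[i,j]=L[i-1,j-1]+1$.)
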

\begin{figure}
\centering
\small
\input{pics/dom-largeSigma.tikz}
\caption{The $L$-table for strings $a,b$ of \lemref{genDomPairs-largeSigma} with $t=t'=3$, $R=2$, $S=7$.}
\label{fig:domPairsII}
\end{figure}

\begin{proof}
Note that \itemref{genDomPairs-largeSigma-L} trivially follows from the fact that $b$ is a subsequence of $a$. For \itemref{genDomPairs-largeSigma-M}, observe that for all $\sigma\in \Sigma$, we have $S\le \occ_\sigma(a) \le R+S$ and $\occ_\sigma(b)=S$, from which the claim follows immediately. The upper bound of \itemref{genDomPairs-largeSigma-d} follows from $d(a,b) \le 2\cdot L(a,b) \cdot (\Delta(a,b)+1) = 2 (St) (Rt'+1) \le 4(St)(Rt')$ (by \lemref{dLgap}). To prove the remaining lower bound, we establish the following fact.

\begin{enumerate}[label=($\ast$)]
\item\label{itm:genDomPairs-largeSigma-basic} Let $w:= 1\dots t$ and $w':= t' \dots 1$. Then $L((w w')^{R}, w^{R+k} ) = Rt + k$ for all $0\le k\le t'R$.
\end{enumerate}
Let us postpone the proof and show that \itemref{genDomPairs-largeSigma-d} follows from \itemref{genDomPairs-largeSigma-basic}. Define $v := w^{S-R}$. For $0 \le k \le K := \min\{S-R,Rt'\}$ and $0\le \ell \le (S-R-k)t$, we let $a(k,\ell) := (ww')^R v[1..\ell]$ and $b(k,\ell) := w^{R+k} v[1..\ell]$. Note that $a(k,\ell)$ and $b(k,\ell)$ are prefixes of $a$ and $b$, respectively. By greedy suffix matching (i.e., \lemref{greedy} applied to the reversed strings) and~\itemref{genDomPairs-largeSigma-basic}, we obtain
\[ L(a(k,\ell),b(k,\ell)) = \ell + L((ww')^R,w^{R+k})= Rt+k+\ell.\] 
Hence, any $0\le k \le K$, $1\le \ell \le (S-R-k)t$ give rise to an index pair $(i,j)$ with $L(a[1..i],b[1..j])=L(a(k,\ell),b(k,\ell)) > L(a(k,\ell-1),b(k,\ell-1))=L(a[1..i-1],b[1..j-1])$. Let $I_\gamma$ denote the set of all such index pairs $(i,j)$ that additionally satisfy $L(a[1..i],b[1..j])=\gamma$. Then for any $\gamma$, no $(i,j)\in I_\gamma$ dominates another $(i',j')\in I_\gamma$. Thus by \lemref{countDomPairs}, $d(a,b) \ge \sum_{\gamma} |I_\gamma|/2$. By counting all possible choices for $k$ and $\ell$, we obtain $\sum_\gamma |I_\gamma| = \sum_{0\le k \le K} t(S-R-k) \ge tK(S-R)/2$. This yields $d(a,b) \ge t \cdot  \min\{S-R, Rt'\} \cdot (S-R)/4$. For $S \ge R(t'+1)$, we have $S-R\ge S/2$ as well as $S-R \ge Rt'$ and the lower bound of \itemref{genDomPairs-largeSigma-d} follows.

To prove~\itemref{genDomPairs-largeSigma-basic}, let $a' := (ww')^R$ and $b' := w^{R+k}$. For the lower bound, it is easy to see that we can completely match $R$ of the copies of $w$ in $b'$ to copies of $w$ in $a'$, and at the same time match a single symbol in each of the remaining $k$ copies of $w$ in $b$ to a single symbol in some copy of $w'$ in $a'$ (since $k\le R|w'|=Rt'$). This yields $L(a',b') \ge R|w| + k = Rt+k$.

For the upper bound, we write $b'= \bigconcat_{j=1}^{R+k} b_j$ with $b_j := w$ and consider a partitioning $a' = \bigconcat_{j=1}^{R+k} a_j$ such that $L(a',b') = \sum_{j=1}^{R+k} L(a_j,b_j)$. For any $a_j$, let $w(a_j)$ denote the number of symbols that $a_j$ shares with any occurrence of $w$ (if, e.g.,  $a_j = x w' y$ for some prefix $x$ of $w$ and some suffix $y$ of $w$, then $w(a_j) = |x| + |y|$). We first show that
\begin{equation} L(a_j,b_j) \le \begin{cases} 1 & \text{if } w(a_j) = 0, \\ \min\{w(a_j),|w|\} & \text{otherwise.} \end{cases} \label{eq:LUBwave}\end{equation}
Note that trivially $L(a_j,b_j) \le |b_j| = |w|$. Hence for an upper bound, we may assume that $w(a_j) < |w|$, and in particular that $a_j$ is a subsequence of $a_j' = x w' y$ for some prefix $x = \sigma_x \dots t$ of $w$ and some suffix $y = 1 \dots \sigma_y$ of $w$ with $\sigma_y \le \sigma_x$, where $|x| + |y| = w(a_j)$. Note that any longest common subsequence $z$ of $a_j'$ and $b_j=w=1\dots t$ is an increasing subsequence of $a_j'$. Hence, if $z$ starts with a symbol $\sigma' \ge \sigma_y$, then $z$ is an increasing subsequence in $x \; t' \dots \sigma'$; easy inspection shows that in this case $|z| \le \max\{ |x|, 1\}$. If $z$ starts with a symbol $\sigma' \le \sigma_x$, then $z$ is an increasing subsequence in $\sigma' \dots 1 \; y$; again, one can see that $|z| \le \max\{|y|,1\}$ holds in this case. Thus, $L(a_j,b_j) \le L(a_j',b_j) = |z| \le \max\{|x|,|y|, 1\} \le \max\{|x|+|y|,1\} = \max\{w(a_j), 1\}$, concluding the proof of~\eqref{eq:LUBwave}.

Let $J = \{j \mid w(a_j) \ge 1\}$. We compute
\begin{eqnarray*}
 L(a',b') = \sum_{j=1}^{R+k} L(a_j,b_j) & \le & \left(\sum_{j\in J} \min\{w(a_j),|w|\}\right) + (R+k - |J|) \\
& \le & \min\left\{\sum_{j=1}^{R+k} w(a_j), |J|\cdot |w|\right\} + (R+k - |J|) \\
& \le & \min\{ R \cdot  |w|, |J|\cdot |w| \} + R+k - |J| \le R|w| + k = Rt+k,
\end{eqnarray*}
where the last inequality follows from the observation that $|J| = R$ maximizes the expression $\min\{R\cdot |w|,|J|\cdot|w|\} - |J|$. 
This finishes the proof of \itemref{genDomPairs-largeSigma-basic} and thus the lemma.
\end{proof}

\subsection{Block elimination and dominant pair reduction}

We collect some convenient tools for the analysis of later constructions. The first allows us to ``eliminate'' $0^\ell$-blocks when computing the LCS of strings of the form $x0^\ell y, 0^\ell z$, provided that $\ell$ is sufficiently large.

\begin{lem} \label{lem:zeroblocklcs}
  For any strings $x,y,z$ and $\ell \ge \occ_0(x) + |z|$ we have
  $L( x 0^\ell y, 0^\ell z ) = \ell + L( 0^{\occ_0(x)} y, z )$. 
\end{lem}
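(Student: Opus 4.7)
Write $q := \occ_0(x)$ throughout; I will prove the two inequalities separately.

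\emph{Lower bound $L(x0^\ell y, 0^\ell z) \ge \ell + L(0^q y, z)$.} The plan is to exhibit a common subsequence of this length. I will match the $q$ zeros of $x$ against the first $q$ zeros of the right-hand $0^\ell$ block, then match the first $\ell - q$ zeros of the middle $0^\ell$ block on the left against the remaining $\ell - q$ zeros of the right-hand $0^\ell$ block (valid since $\ell \ge q$), and finally extend by any LCS of $0^q y$ (the last $q$ unused zeros of the middle block followed by $y$) and $z$. The resulting embedding is consistent on both sides by construction and has length exactly $\ell + L(0^q y, z)$.

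\emph{Upper bound $L(x0^\ell y, 0^\ell z) \le \ell + L(0^q y, z)$.} Take an arbitrary common subsequence $w$ with a fixed embedding into each string. On the left, let $t_1, t_2, t_3$ be the numbers of characters of $w$ matched into $x$, the middle $0^\ell$ block, and $y$, respectively. On the right, let $a$ be the number of characters matched into the leading $0^\ell$ block, so the remaining suffix $\tilde w$ of length $|w| - a$ is matched into $z$. Since positions are increasing on both sides, the intersection pattern of these two splits breaks into just two cases, depending on whether $t_1 \le a$ or $t_1 > a$.

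In the easy case $t_1 \le a$, every one of the $t_1$ characters matched into $x$ on the left is also matched into the $0^\ell$ prefix on the right, so it must be a zero, forcing $t_1 \le q$. A short analysis of the embedding shows that $\tilde w$ is a subsequence of $0^{t_1 + t_2 - a}\, y$ (the unused suffix of the middle zeros followed by the $y$-part) as well as of $z$, hence $|\tilde w| \le L(0^{t_1 + t_2 - a} y, z)$. Combining the trivial inequality $L(0^k y, z) \le L(0^q y, z) + \max\{0, k - q\}$ with the bounds $t_1 \le q$ and $t_2 \le \ell$ yields $|w| = a + |\tilde w| \le \ell + L(0^q y, z)$, as desired.

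\emph{Main obstacle.} The hard case is $t_1 > a$, where some characters of $w$ are matched into non-zero positions of $x$ on the left and into $z$ on the right—a ``crossing'' that is not obviously bounded by $L(0^q y, z)$. The key observation I will use is that in this case only the first $a$ characters in $x$ fall into the $0^\ell$ prefix on the right and must therefore be zeros, giving $a \le q$; the remaining $|w| - a$ characters lie in $z$, so $|w| \le q + |z|$. The hypothesis $\ell \ge q + |z|$ then directly gives $|w| \le \ell \le \ell + L(0^q y, z)$. Recognizing that the hypothesis is precisely the right strength to tame this crossing case—and is essentially used nowhere else—will be the crux of the argument.
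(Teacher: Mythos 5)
Your proof is correct, and the key idea — a case split on whether the matches into $x$ on the left all land inside the $0^\ell$-prefix on the right (your $t_1 \le a$ vs.\ $t_1 > a$), with the hypothesis $\ell \ge \occ_0(x)+|z|$ used precisely to kill the crossing case via $|w| \le \occ_0(x)+|z| \le \ell$ — is the same one the paper uses. The only difference is cosmetic: in the easy case you track indices and massage the inequality $L(0^k y,z)\le L(0^{\occ_0(x)} y,z)+\max\{0,k-\occ_0(x)\}$ directly, while the paper observes that no non-zero symbol of $x$ can be matched, replaces $x$ by $0^{\occ_0(x)}$, and appeals to greedy prefix matching (\lemref{greedy}).
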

\begin{proof}
  Let $u := x 0^\ell y$ and $v := 0^\ell z$. 
  In case we match no symbol in the $0^\ell$-block of $v$ with a symbol in $0^\ell y$ in $u$, then at most $\min\{\occ_0(x),\ell\}$ symbols of the $0^\ell$-block of $v$ are matched. The remainder $z$ yields at most $|z|$ matched symbols.
  Otherwise, in case we match any symbol in the $0^\ell$-block of $v$ with a symbol in $0^\ell y$ in $u$, then no symbol $\sigma \ne 0$ of $x$ can be matched. Thus, in this case we may replace $x$ by $0^{\occ_0(x)}$. 
  Together this case distinction yields
  $L( u,v ) = \max\{ \min\{\occ_0(x),\ell\} + |z|, L(0^{\occ_0(x) + \ell} y, 0^\ell z) \}$. 
  Using \lemref{greedy}, we obtain
  $L(u,v) = \max\{ \occ_0(x) + |z|, \ell + L(0^{\occ_0(x)} y, z) \}$. The assumption $\ell \ge \occ_0(x) + |z|$ now yields the claim.
\end{proof}

The following lemma bounds the number of dominant pairs of strings of the form $x'=yx$, $y'=zy$ by $d(x',y') = \Oh(|z|\cdot |y'|)$. If $|x'|\ge |y'|$, this provides a bound of $\Oh(\delta(x',y') \cdot m(x',y'))$ instead of the general, weaker bound $\Oh(\Delta(x',y') \cdot m(x',y'))$ of \lemref{dLgap}.

\begin{lem} \label{lem:domPairRedbase}
For any strings $x,y,z$, let $x' = yx$, $y'=zy$. Then 
\[d(x',y') \le |y|\cdot (|z|+1) + d(x',z) \le |y|\cdot (|z| + 1) + |z|^2. \]
\end{lem}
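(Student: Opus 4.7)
The plan is to partition the dominant pairs of $(x',y')$ by the column index $j$ into two parts:
\emph{Part A} consisting of pairs with $j \le |z|$, and \emph{Part B} consisting of pairs with $j > |z|$.

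For Part A, since $y'[1..j] = z[1..j]$ whenever $j \le |z|$, the entries $L(x'[1..i], y'[1..j])$ and $L(x'[1..i], z[1..j])$ coincide, as do the relevant neighboring entries in the $L$-table. Hence the dominance condition of $(x',y')$ at $(i,j)$ with $j \le |z|$ matches the dominance condition of $(x',z)$ at $(i,j)$, and Part A contributes at most $d(x',z)$ pairs.

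For Part B, write $j = |z|+j'$ with $j' \in [1,|y|]$; the goal is to bound this part by $|y|(|z|+1)$. The key structural observation is that for any dominant pair $(i,|z|+j')$ of $(x',y')$, we have $L[i,|z|+j'] > L[i,|z|+j'-1] \ge L[i,|z|]$, so $L[i,|z|+j'] \ge L[i,|z|]+1$, while trivially $L[i,|z|+j'] \le L[i,|z|]+j'$. Introduce the function $\alpha(i) := L[i,|z|]$, which is non-decreasing in $i$ and takes values in $\{0,1,\dots,|z|\}$; thus $\alpha$ partitions the rows into at most $|z|+1$ intervals $I_0, I_1, \dots, I_{|z|}$. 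I would combine this partition with Observation~\ref{obs:atMostOned} (which gives distinct $k$-values per column) to show that summed across all columns $|z|+1, \dots, |z|+|y|$, the total number of dominant pairs is at most $|y|(|z|+1)$.

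The second inequality $d(x',z) \le |z|^2$ follows directly from Lemma~\ref{lem:dUBs}(i): $d(x',z) \le L(x',z) \cdot m(x',z) \le |z|\cdot|z|$, since both $L(x',z) \le |z|$ and $\min(|x'|,|z|) \le |z|$.

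The main obstacle is making the Part B bound rigorous. A naive per-column argument using Observation~\ref{obs:atMostOned} gives only $L[|x'|,|z|+j'] \le |z|+j'$ dominant pairs in column $|z|+j'$, summing to $|y||z| + \binom{|y|+1}{2}$, which exceeds $|y|(|z|+1)$ when $|y| \ge 2$. The refinement must exploit that $L[i,|z|]$ is globally bounded by $|z|$, so the ``effective width'' of $k$-values available for dominant pairs in columns beyond $|z|$ does not grow linearly in $j'$. A careful telescoping across columns (bounding the number of strict decreases of $a_k^{(j)} := \min\{i : L[i,j] \ge k\}$ as $j$ ranges over $(|z|, |z|+|y|]$) together with a case split on whether $k \le L(x',z)$ or $k > L(x',z)$ appears necessary; these two cases respectively account for the $d(x',z)$ and $|y|(|z|+1)$ terms in the final bound.
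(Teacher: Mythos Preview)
Your Part~A argument and the final inequality $d(x',z)\le |z|^2$ are correct and match the paper. The gap is in Part~B: you never use the hypothesis that $x'$ \emph{begins with} $y$, and without it the bound is simply false. For instance, with $z=\varepsilon$ and $x'=(01)^{R+S}$, $y=0^R(01)^S$ (so $x'$ does not start with $y$), Lemma~\ref{lem:genDomPairs} gives $d(x',y)\ge RS$, whereas the claimed bound would be $|y|=R+2S$. Your function $\alpha(i)=L[i,|z|]$ and the inequalities $\alpha(i)+1\le L[i,|z|+j']\le \alpha(i)+j'$ hold for arbitrary $x'$, so no telescoping built on them alone can recover $|y|(|z|+1)$.

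The paper's argument for Part~B is short and exploits the structure $x'=yx$ directly. Fix a column $j=|z|+\ell$ with $\ell\in[|y|]$. Since $x'[1..\ell]=y[1..\ell]$ is a subsequence of $y'[1..j]=z\,y[1..\ell]$, two things follow. First, for $i<\ell$ we have $L[i,j]=i=L[i,|z|+i]$, so $(i,j)$ is not dominant (the value $i$ is already attained at the earlier column $|z|+i<j$). Second, for $i\ge \ell$ we get $L[i,j]\ge \ell=j-|z|$; combined with the trivial $L[i,j]\le j$, the $k$-value of any dominant pair in column $j$ lies in $\{j-|z|,\dots,j\}$, so by Observation~\ref{obs:atMostOned} there are at most $|z|+1$ dominant pairs in that column. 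Summing over the $|y|$ columns gives $|y|(|z|+1)$, with no telescoping needed.
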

\begin{proof}
For every prefix $\py=y'[1..j]$, we bound the number of dominant pairs $(i,j)$ of $x',y'$. Clearly, all prefixes $\py$ of $z$ (i.e., $j \le |z|$) contribute $d(x',z) \le L(x',z) \cdot m(x',z) \le |z|^2$ dominant pairs.

It remains to consider $\py = z \,y[1..\ell]$ (i.e., $j = |z| + \ell$) for $\ell \in [|y|]$. For $i < \ell$, the string $\px = x'[1..i] = y[1..i]$ is a subsequence of $\py$, i.e., $L(\px,\py) = i$, but the prefix $z \, y[1..i]$ of $\py$ already satisfies $L(\px, z \, y[1..i]) = i$. Hence, there are no dominant pairs with $i < \ell$.
Thus, consider $i \ge \ell$ and let $\px = x'[1..i]$. Clearly, $y[1..j]$ is a common subsequence of $\px,\py$. This yields $L(\px,\py) \ge j = |\py|-|z|$ and hence any such dominant pair $(i,j)$ satisfies $j - |z| \le L(x'[1..i],y'[1..j]) \le j$. By \obsref{atMostOned}, there are at most $|z|+1$ such dominant pairs for fixed $j$. This yields at most $|y|\cdot (|z|+1)$ dominant pairs $(i,j)$ with $|z| < j \le |y'|$, concluding the proof.
\end{proof}

\begin{figure}
\centering
\small
\input{pics/domRed.tikz}
\caption{Illustration of \lemref{dreduction}. The strings $x' = y2^\ell x$,$y'=2^\ell y$ are defined using $x=(01)^{R+S}$, $y=0^R (01)^S$, $R=4$, $S=5$, and $\ell=2$. The number of dominant pairs is significantly reduced compared to \figref{domPairsI}.}
\label{fig:domPairsRed}
\end{figure}

The above lemma gives rise to a surprising technique: Given strings $x,y$, we can build strings $x',y'$ such that $L(x',y')$ lets us recover $L(x,y)$, but the number of dominant pairs may be reduced significantly, namely to a value $d(x',y') = \Oh(\delta(x,y) \cdot n(x,y))$,  independently of $d(x,y)$. The effect of this technique is illustrated in \figref{domPairsRed}.

\begin{lem}[Dominant Pair Reduction] \label{lem:dreduction}
  Consider strings $x,y$ and a number $\ell > |y|-L(x,y)$. 
  \begin{enumerate}
  \item[(i)] If $2$ is a symbol not appearing in $x,y$, then $x' := y 2^\ell x$ and $y' := 2^\ell y$ satisfy
  $L(x',y') = L(x,y) + \ell$ and $d(x,y) \le 3 \ell \cdot |y|$. 
  \item[(ii)] For any symbols $0,1$ (that may appear in $x,y$) set $x'' := 0^k 1^k y 1^\ell 0^k 1^k x$ and $y'' := 1^\ell 0^k 1^k y$ with $k := 2|y| + |x| + 1$. Then $L(x'',y'') = L(x,y) + \ell + 2k$ and $d(x,y) \le \Oh(\ell(|x| + |y| + \ell))$.
  \end{enumerate}
\end{lem}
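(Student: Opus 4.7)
I prove parts (i) and (ii) in parallel; the LCS equalities follow from essentially the same case analysis, while the two dominant-pair bounds are immediate corollaries of \lemref{domPairRedbase}.

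For the LCS equality of part (i), let $z$ be a common subsequence of $x' = y 2^\ell x$ and $y' = 2^\ell y$. Since $2$ does not appear in $x$ or $y$, any matched $2$ in $z$ comes from the $2^\ell$ block in both strings. Splitting on the number $t$ of matched $2$s: if $t = 0$, then $z$ is a common subsequence of $yx$ and $y$, so $|z| \le L(yx, y) = |y|$; if $t \ge 1$, the prefix of $y'$ before the matched $2$s is empty, and the suffix $y$ of $y'$ must match within a suffix $2^{\ell - t} x$ of $x'$, contributing at most $L(y,x) = L(x,y)$ since $y$ is $2$-free. Thus $|z| \le t + L(x,y) \le \ell + L(x,y)$. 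Under the hypothesis $\ell > |y| - L(x,y)$, this strictly exceeds $|y|$ and is attained by matching all $\ell$ twos together with an LCS of $x,y$.

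Both dominant-pair bounds follow directly from \lemref{domPairRedbase}. For part (i), write $x' = y \cdot (2^\ell x)$ and $y' = 2^\ell \cdot y$; the lemma yields $d(x', y') \le |y|(\ell + 1) + d(x', 2^\ell)$. The term $d(x', 2^\ell)$ equals $\ell$ exactly: the $2^\ell$ block is the only run of $2$s in $x'$, so for each $j \in [\ell]$ the unique $j$-dominant pair of $(x', 2^\ell)$ is $(|y|+j, j)$. Hence $d(x',y') \le |y|(\ell+1) + \ell \le 3 \ell |y|$ for $|y|, \ell \ge 1$. For part (ii), write $y'' = 1^\ell \cdot (0^k 1^k y)$ and $x'' = (0^k 1^k y) \cdot (1^\ell 0^k 1^k x)$; the lemma yields $d(x'', y'') \le (2k + |y|)(\ell + 1) + \ell^2$, which is $\Oh(\ell(|x| + |y| + \ell))$ since $k = \Theta(|x| + |y|)$.

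For the LCS equality of part (ii), the lower bound $\ell + 2k + L(x,y)$ is achieved by the canonical matching: align $1^\ell$ of $y''$ with the middle $1^\ell$ of $x''$, the $0^k 1^k$ block of $y''$ with the second $0^k 1^k$ block of $x''$, and the $y$ suffix of $y''$ with $x$ at the end of $x''$. For the upper bound, consider any common subsequence $z$ and let $\lambda$ denote the number of $0$s from the $0^k$ block of $y''$ matched into the initial $0^k$ block of $x''$. If $\lambda \ge 1$, then since this initial block of $x''$ has no $1$s and precedes every $1$ of $x''$, no symbol from the $1^\ell$ prefix of $y''$ can be matched, forcing $|z| \le 0 + k + k + |y| = 2k + |y|$, which is strictly less than $\ell + 2k + L(x,y)$ by the hypothesis $\ell > |y| - L(x,y)$. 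The hard case is $\lambda = 0$, where the $0^k$ block of $y''$ matches only into the $y$-, the middle-$0^k$-, or the $x$-region of $x''$; here I would bound the LCS by block-wise counting of matched positions, exploiting that $k = 2|y| + |x| + 1$ is large enough so that any alternative to the canonical matching (such as aligning the $y$ of $y''$ with the $y$-copy inside $x''$) costs strictly more than it gains. Making this counting rigorous via an exchange argument—showing that every deviation from the canonical matching loses at least as many symbols as it saves—is the most delicate aspect of the proof.
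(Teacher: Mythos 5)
Your treatment of part (i) and both dominant-pair bounds is correct and matches the paper's argument (the paper establishes $d(x',2^\ell)=\ell$ by deleting all non-$2$ symbols from $x'$ and applying \lemref{greedy}, but your direct counting is equivalent). The gap is in the LCS upper bound for part~(ii): you correctly dispatch the case that some $0$ of the $0^k$-block of $y''$ is matched into the \emph{left} $0^k$-block of $x''$, but you leave the complementary case as a sketch of an ``exchange argument'' that you acknowledge is not rigorous. This is where the real work lies, and the exchange approach is not the cleanest route.

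The paper resolves the remaining case by splitting it once more, according to whether some $0$ of the $0^k$-block of $y''$ is matched into the \emph{right} $0^k$-block of $x''$. If none is (so the $0^k$-block of $y''$ is matched entirely into the $y$- and $x$-copies inside $x''$), that block contributes at most $\occ_0(y)+\occ_0(x)\le |x|+|y|$ matches, and the rest of $y''$ contributes at most $|1^\ell 1^k y|=\ell+k+|y|$, giving $|z|\le |x|+2|y|+\ell+k<\ell+2k$ since $k>|x|+2|y|$. If some $0$ of $y''$'s $0^k$-block does hit the right $0^k$-block of $x''$, then by order the entire suffix $1^k y$ of $y''$ must be matched inside $0^k 1^k x$, so $|z|\le \ell+k+L(0^k1^kx,\,1^ky)$; now \lemref{zeroblocklcs} (with $0$ and $1$ swapped, using $k\ge|y|$) gives $L(0^k1^kx,1^ky)=k+L(x,y)$, yielding exactly $|z|\le \ell+2k+L(x,y)$. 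Without invoking the block-elimination lemma at this point, the choice $k=2|y|+|x|+1$ has no clear role, and an ad-hoc exchange argument risks missing the degenerate matchings that the refined case split rules out cheaply.
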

\begin{proof}
  (i) We clearly have $L(x',y') \ge L(2^\ell, 2^\ell) + L(x, y) \ge \ell + L(x,y)$. For the other direction, let $z$ be a common subsequence of $x',y'$. If $z$ contains no $2$, then by inspecting $y'$ we obtain $|z| \le |y| = L(x,y) + (|y|-L(x,y)) < L(x,y) + \ell$, so $z$ is no LCS. Otherwise, if $z$ contains a $2$, then no symbol from the copy of $y$ in $x'$ can be matched by $z$, implying $|z| \le L(2^\ell x, y') = \ell + L(x,y)$.

For the dominant pairs, we apply \lemref{domPairRedbase} to obtain $d(x',y') \le |y| (\ell+1) + d(x', 2^\ell)$. Note that  $d(x',2^\ell) = d(2^\ell,2^\ell) = \ell$, since we can delete all characters different from 2 in $x'$ without affecting the dominant pairs of $x',2^\ell$ and then apply \lemref{greedy}. Thus, $d(x',y') \le |y| (\ell+1) + \ell \le 3 \ell \cdot |y|$.

  (ii) The argument is slightly more complicated when the padding symbols may appear in $x,y$. Clearly, we have $L(x'',y'') \ge L(1^\ell 0^k 1^k x, 1^\ell 0^k 1^k y) \ge \ell + 2k + L(x,y)$. For the other direction, let $z$ be a common subsequence of $x'',y''$. If $z$ does not match any $0$ in the $0^k$-block of $y''$ with a symbol in a $0^k$-block in $x''$, then from the $0^k$-block of $y''$ we match at most $|y|+|x|$ symbols, and we obtain $|z| \le (|y| + |x|) + |1^\ell 1^k y| = |x| + 2|y| + \ell + k < \ell + 2k$, since $k > |x| + 2|y|$, so $z$ is no longest common subsequence. If $z$ matches a $0$ in the $0^k$-block of $y''$ with a symbol in the left $0^k$-block of $x''$, then no symbol in the $1^\ell$-block in $y''$ is matched by $z$, so we obtain $|z| \le |0^k 1^k y| = 2k + L(x,y) + (|y| - L(x,y)) < 2k + L(x,y) + \ell$, so $z$ is no longest common subsequence. It remains the case where $z$ matches some $0$ in the $0^k$-block of $y''$ with a symbol in the right $0^k$-block of $x''$. Then the part $1^k y$ of $y''$ has to be matched to a subsequence of $0^k 1^k x$ in $x''$. This yields $|z| \le \ell + k + L(0^k 1^k x, 1^k y)$. Since $k > |y|$ we can apply \lemref{zeroblocklcs} (swapping the roles of 0 and 1) to obtain $L(0^k 1^k x, 1^k y) = k + L(x,y)$, so as desired we have $|z| \le \ell + 2k + L(x,y)$.

For the dominant pairs, we apply \lemref{domPairRedbase} to obtain $d(x'',y'') \le |0^k 1^k y| \cdot  (\ell + 1) + \ell^2 = \Oh(\ell (|x|+|y|+\ell))$.
\end{proof}

\section{Paddings}
\label{sec:paddings}

In this section we construct paddings that allow to augment any strings from $\LCS_\le(\Valpha)$ to become strings in $\LCS(\Valpha)$. Specifically, we prove \lemref{paddingstotal}. So let $\Valpha$ be a parameter setting satisfying \tabref{paramChoiceRestr}, let $p \in \params^* = \{n,m, L, \delta, \Delta, |\Sigma|, M, d\}$ be a parameter, and let $n \ge 1$. 
We say that strings $x,y$ \emph{prove \lemref{paddingstotal} for parameter $p$} if $(n,x,y)$ is an instance of $\LCS_\le(\Valpha)$ with $p(x,y) = \Theta(n^{\alpha_p})$, and given $n$ we can compute $x=x(n)$, $y=y(n)$, and $L(x,y)$ in time $\Oh(n)$. Note that for the first requirement of being an instance of $\LCS_\le(\Valpha)$, we have to show that $p'(x,y) \le \Oh(n^{\alpha_{p'}})$ for any parameter $p' \in \params^*$. Recall that we write $p = n^{\alpha_p}$ for the target value of parameter $p$. 

\begin{lem} \label{lem:padL}
  Let $\Sigma'$ be an alphabet of size $\min\{|\Sigma|,L\}$. Then the strings $x := y := \bigconcat_{\sigma \in \Sigma'} \sigma^{\lfloor L/|\Sigma'| \rfloor}$ prove \lemref{paddingstotal} for parameter $L$.
\end{lem}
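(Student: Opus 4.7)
The plan is to verify the three requirements---(1) $L(x,y) = \Theta(n^{\alpha_L})$, (2) $q(x,y) = O(n^{\alpha_q})$ for every $q \in \params^*$, and (3) the strings and LCS length are computable in $O(n)$ time---for this single symmetric construction. Write $s := |\Sigma'| = \min\{|\Sigma|, L\}$ and $k := \lfloor L/s \rfloor$, so $x = y$ consists of $s$ equal blocks of length $k$ and $|x| = sk$. A quick case distinction shows $sk = \Theta(L)$: if $|\Sigma| \ge L$ then $s = L$ and $k = 1$, giving $sk = L$; otherwise $s = |\Sigma| < L$ and the standard bound $k \ge L/s - 1$ yields $sk \in [L/2, L]$ (splitting further on whether $L \ge 2s$ or $s \le L < 2s$, in which latter case $k = 1$ and $sk = s > L/2$).

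The key structural observation is that since $x = y$, the prefix $x[1..\min(i,j)]$ is a common prefix of $x[1..i]$ and $y[1..j]$, so $L(x[1..i],y[1..j]) = \min(i,j)$ for all $0 \le i \le |x|$ and $0 \le j \le |y|$. In particular $L(x,y) = sk = \Theta(n^{\alpha_L})$, which settles (1). Moreover, $L[i,j] > L[i-1,j]$ forces $j \ge i$ and $L[i,j] > L[i,j-1]$ forces $i \ge j$, so the dominant pairs are exactly $\{(i,i) : 1 \le i \le sk\}$ and $d(x,y) = sk = \Theta(L)$. Since $x, y$, and the explicit value $L(x,y) = sk$ have size at most $n$ and are given by closed-form expressions, (3) is immediate.

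For (2) I read off each parameter and compare against \tabref{paramChoiceRestr}. The easy cases are $n(x,y) = m(x,y) = sk \le L \le n^{\alpha_m} \le n$, $\delta(x,y) = \Delta(x,y) = 0$, $\Sigma(x,y) = s \le n^{\alpha_\Sigma}$, and $d(x,y) = sk = O(n^{\alpha_L}) \le O(n^{\alpha_d})$ using $\alpha_d \ge \alpha_L$ from \tabref{paramChoiceRestr}. The only bound requiring thought is on $M$: each of the $s$ symbols occurs $k$ times in both strings, so $M(x,y) = sk^2 = (sk) \cdot k = \Theta\!\left(\max\{L, L^2/|\Sigma|\}\right)$. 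The two subcases are handled by the two relevant terms of $\alpha_M \ge \max\{\alpha_d, 2\alpha_L - \alpha_\Sigma\}$: when $|\Sigma| \ge L$ we get $M = \Theta(L)$, bounded by $n^{\alpha_L} \le n^{\alpha_d} \le n^{\alpha_M}$; when $|\Sigma| < L$ we get $M = \Theta(L^2/|\Sigma|) = \Theta(n^{2\alpha_L - \alpha_\Sigma}) \le O(n^{\alpha_M})$.

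There is no genuine obstacle: the symmetry $x = y$ and uniform block structure let every parameter be computed by inspection, and the only spot needing brief care is the two-case analysis of $M$ matching the two lower-bound terms on $\alpha_M$ in \tabref{paramChoiceRestr}.
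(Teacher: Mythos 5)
Your proof is correct and follows essentially the same approach as the paper: compute each parameter of the construction by inspection and compare to the restrictions in Table~\ref{tab:paramChoiceRestr}. The only cosmetic differences are that you re-derive $d(x,y)=|x|$ directly from the $L[i,j]=\min(i,j)$ observation rather than citing the Greedy Prefix Matching Lemma (\lemref{greedy}), and you bound $M$ in the case $|\Sigma|\ge L$ via the chain $\alpha_L \le \alpha_d \le \alpha_M$ whereas the paper uses $\alpha_L \le \alpha_m \le 1 \le \alpha_M$; both are valid.
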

\begin{proof}
  Note that $\lfloor L/|\Sigma'| \rfloor = \Theta(L / |\Sigma'|)$, since $|\Sigma'| \le L$.
  Thus, indeed $L(x,y) = |x| = \Theta(L/|\Sigma'|) \cdot |\Sigma'| = \Theta(L)$. Moreover, $L(x,y)$ can be computed in time $\Oh(n)$, as well as $x$ and $y$. For the number of matching pairs we note that $M(x,y) \le |\Sigma'| \cdot (L/|\Sigma'|)^2$, which is $\max\{L, L^2 / |\Sigma|\}$ by choice of $|\Sigma'|$. This is $\Oh(M)$, using the parameter relations $M \ge n \ge m \ge L$ and $M \ge L^2 / |\Sigma|$ (see \tabref{paramRelations}).
  
  The remaining parameters are straight-forward. Using $m(x,y) = n(x,y) = L(x,y) = \Theta(L)$ and the parameter relations $L \le m \le n$ we obtain that $m(x,y) \le \Oh(m)$ and $n(x,y) \le \Oh(n)$. Moreover, $\delta(x,y) = \Delta(x,y) = 0 \le \delta \le \Delta$. The alphabet size $|\Sigma(x,y)| = |\Sigma'|$ is at most $|\Sigma|$ by choice of $|\Sigma'|$. By \lemref{greedy} we obtain $d(x,y) = |x| = \Theta(L) \le \Oh(d)$ using the parameter relation $L \le d$.
\end{proof}

\begin{lem} \label{lem:padDelta}
  The strings $x := 1^{\Delta+1}$ and $y := 1$ prove \lemref{paddingstotal} for parameter $\Delta$. The strings $x := 1$ and $y := 1^{\delta+1}$ prove \lemref{paddingstotal} for parameter $\delta$.
\end{lem}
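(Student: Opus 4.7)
The plan is to verify each construction by directly computing the seven parameters of $\params^*$ and comparing against the inequalities of \tabref{paramChoiceRestr}. For $x := 1^{\Delta+1}$, $y := 1$, one immediately has $L(x,y) = 1$ and hence $\Delta(x,y) = |x| - L(x,y) = \Delta$, exactly matching the target $n^{\alpha_\Delta}$. The remaining parameters evaluate to $n(x,y) = \Delta+1$, $m(x,y) = 1$, $\delta(x,y) = 0$, $|\Sigma(x,y)| = 1$, and $M(x,y) = \Delta+1$; finally $d(x,y) = 1$, because $L[1,1] = 1$ while $L[0,1] = L[1,0] = 0$, so $(1,1)$ is the unique dominant pair. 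Each of these is $O(n^{\alpha_q})$ for the corresponding parameter $q$: this uses $\alpha_n = 1 \ge \alpha_\Delta$ and $\alpha_M \ge 1 \ge \alpha_\Delta$ from \tabref{paramChoiceRestr}, together with the trivial fact $n^{\alpha_q} \ge 1$ whenever $\alpha_q \ge 0$.

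The second construction $x := 1$, $y := 1^{\delta+1}$ is entirely symmetric. Now $L(x,y) = 1$ and $\delta(x,y) = |y| - L(x,y) = \delta$, hitting the target. The one caveat is that $|x| < |y|$ whenever $\delta \ge 1$, so I would explicitly invoke the provision in the last paragraph of \secref{appendixIntro} that permits applying the definitions of \figref{paramSummary} directly to intermediate strings. This yields $n(x,y) = 1$, $m(x,y) = \delta+1$, $\Delta(x,y) = 0$, $|\Sigma(x,y)| = 1$, $M(x,y) = \delta+1$, and $d(x,y) = 1$; verifying $O(n^{\alpha_q})$ for each proceeds by the same style of argument, using in particular $\alpha_\delta \le \alpha_m$ and $\alpha_M \ge 1 \ge \alpha_\delta$ from \tabref{paramChoiceRestr}.

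For the running-time requirement of \lemref{paddingstotal}, both strings have length at most $n+1$ (since $\alpha_\Delta, \alpha_\delta \le 1$ bound $\Delta, \delta \le n$), so writing them down takes $O(n)$ time, and $L(x,y) = 1$ can be returned without any computation. I anticipate no genuine obstacles: the proof is a routine case analysis of each parameter against the linear inequalities of \tabref{paramChoiceRestr}, with the only mild point of care being the switched roles of $x$ and $y$ in the $\delta$-padding, which is handled by the \figref{paramSummary} convention.
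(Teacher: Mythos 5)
Your proof is correct and follows essentially the same route as the paper: compute all seven parameters explicitly, confirm that $\Delta(x,y)$ (resp.\ $\delta(x,y)$) hits the target exactly, and bound the rest using the restrictions of \tabref{paramChoiceRestr}. You supply slightly more detail than the paper (the explicit dominant-pair computation, and the explicit appeal to the \figref{paramSummary} convention for the $\delta$-padding where $|x|<|y|$, which the paper dismisses as ``symmetric''), but the substance and sequence of verifications coincide.
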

\begin{proof}
  The analysis is straight-forward. Note that indeed $\Delta(x,y) = \Delta$, and that $L(x,y) = 1 \le \Oh(L)$. We have $n(x,y) = \Delta+1 \le \Oh(n)$ and $m(x,y)=1 \le \Oh(m)$. Clearly, $L(x,y)$, $x$, and $y$ can be computed in time $\Oh(n)$. Moreover, $\delta(x,y) = 0 \le \delta$ and the alphabet size is $1 \le \Oh(\Sigma)$. Finally, we have $M(x,y) = \Theta(\Delta) \le \Oh(n) \le \Oh(M)$ using the parameter relations $L \le n \le M$, and using the relation $d \le L m$ we obtain $d(x,y) \le 1 \le \Oh(d)$.
  
  The analysis for $\delta$ is symmetric; the same proof holds almost verbatim.
\end{proof}

\begin{lem}
  The strings constructed in \lemref{padL} or the strings constructed in \lemref{padDelta} prove \lemref{paddingstotal} for parameters $n$ and $m$. 
\end{lem}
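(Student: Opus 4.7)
The plan is to exploit the fact that \lemrefs{padL}{padDelta} have already established that their respective constructions yield instances of $\LCS_\le(\Valpha)$, verifying all upper bounds $q(x,y) \le O(n^{\alpha_q})$ for every $q \in \params^*$. Hence for each of $p \in \{n,m\}$ there is nothing left to check except that the specific construction chosen attains the target $p(x,y) = \Theta(n^{\alpha_p})$. Recall that \lemref{padL} gives $|x|=|y| = \Theta(L) = \Theta(n^{\alpha_L})$, while the two variants of \lemref{padDelta} give $n(x,y) = \Delta+1 = \Theta(n^{\alpha_\Delta})$ and $m(x,y) = \delta + 1 = \Theta(n^{\alpha_\delta})$, respectively. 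So the argument reduces to a purely bookkeeping case analysis showing that at least one of $\alpha_L,\alpha_\Delta$ equals $\alpha_n = 1$ and at least one of $\alpha_L,\alpha_\delta$ equals $\alpha_m$, in every parameter setting allowed by \tabref{paramChoiceRestr}.

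For parameter $n$, I will split on whether $\alpha_\Delta = 1$. If yes, the $\Delta$-padding of \lemref{padDelta} directly gives $n(x,y) = \Delta + 1 = \Theta(n)$. If $\alpha_\Delta < 1$, the row for $\Delta$ in \tabref{paramChoiceRestr} forces $\alpha_L = \alpha_m = 1$, so \lemref{padL} yields $n(x,y) = |x| = \Theta(L) = \Theta(n)$. For parameter $m$, I will split on whether $\alpha_L = \alpha_m$. If yes, \lemref{padL} provides $m(x,y) = |y| = \Theta(L) = \Theta(n^{\alpha_m})$. If $\alpha_L < \alpha_m$, the row for $\delta$ in \tabref{paramChoiceRestr} forces $\alpha_\delta = \alpha_m$, so the $\delta$-padding of \lemref{padDelta} delivers $m(x,y) = \delta + 1 = \Theta(n^{\alpha_\delta}) = \Theta(n^{\alpha_m})$.

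The only real obstacle here is to read the two conditional rows of \tabref{paramChoiceRestr} correctly and confirm that they provide exactly the dichotomies needed ($\alpha_\Delta = 1$ or $\alpha_L=1$ for covering $n$; $\alpha_L=\alpha_m$ or $\alpha_\delta=\alpha_m$ for covering $m$). Everything else — verifying the $O(n^{\alpha_q})$ upper bound on the remaining parameters, the $\Oh(n)$ construction and evaluation time for $L(x,y)$ — is inherited verbatim from \lemrefs{padL}{padDelta} and requires no additional argument.
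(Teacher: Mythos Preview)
Your proposal is correct and essentially matches the paper's proof. The paper derives the dichotomies directly from the identities $n = L + \Delta$ and $m = L + \delta$ (hence $\alpha_L = 1$ or $\alpha_\Delta = 1$, and symmetrically $\alpha_L = \alpha_m$ or $\alpha_\delta = \alpha_m$), whereas you read off the same dichotomies from the conditional rows of \tabref{paramChoiceRestr}; this is the same argument in two equivalent phrasings.
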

\begin{proof}
  Since $n = L + \Delta$ we have $L = \Theta(n)$ or $\Delta = \Theta(n)$, i.e., $\alpha_L=1$ or $\alpha_\Delta=1$. In the first case, in \lemref{padL} we construct strings of length $\Theta(L) = \Theta(n)$, and thus these strings prove \lemref{paddingstotal} not only for parameter $L$ but also for parameter $n$. In the second case, the same argument holds for the first pair of strings constructed in \lemref{padDelta}. The parameter $m$ is symmetric.
\end{proof}

\begin{lem}
  Let $w := 1 2 \ldots |\Sigma|$ be the concatenation of $|\Sigma|$ unique symbols. The strings $w,w$ or the strings $w,\rev(w)$ prove \lemref{paddingstotal} for parameter $|\Sigma|$. 
\end{lem}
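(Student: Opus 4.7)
The plan is to split on which of the two equalities $\alpha_L = \alpha_m$ or $\alpha_\delta = \alpha_m$ holds: use $(x,y) = (w,w)$ in the former case and $(x,y) = (w,\rev(w))$ in the latter. The restriction on $\delta$ in \tabref{paramChoiceRestr} guarantees that at least one of these equalities must hold for any non-trivial parameter setting, so this case split is exhaustive. In either case, $w$, $\rev(w)$, and $L(x,y)$ are trivially computable in time $\Oh(n)$.

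First I would verify claim (1): since both $w$ and $\rev(w)$ use all $|\Sigma|$ symbols of the underlying alphabet, $|\Sigma|(x,y) = |\Sigma| = \Theta(n^{\alpha_\Sigma})$ in either case. Next I would handle the length and sparsity parameters uniformly for both choices: we have $n(x,y) = m(x,y) = |\Sigma| = O(n^{\alpha_\Sigma}) \le O(n^{\alpha_m}) \le O(n)$, using the relation $\alpha_\Sigma \le \alpha_m \le 1$ from \tabref{paramChoiceRestr}. Since every symbol occurs exactly once in each string, there is precisely one matching pair per symbol, so $M(x,y) = |\Sigma|$ and $d(x,y) \le M(x,y) = |\Sigma|$; these are $O(n^{\alpha_d})$ and $O(n^{\alpha_M})$ respectively thanks to the entries $\max\{\alpha_L,\alpha_\Sigma\} \le \alpha_d$ and $\alpha_d \le \alpha_M$ in \tabref{paramChoiceRestr}.

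The main work is bounding the similarity parameters $L, \delta, \Delta$, and this is where the case split is essential. In the case $\alpha_L = \alpha_m$, I would choose $(x,y) = (w,w)$, for which $L(x,y) = |\Sigma|$ and $\delta(x,y) = \Delta(x,y) = 0$; these satisfy the required bounds since $\alpha_\Sigma \le \alpha_m = \alpha_L$ and the zero values are trivially below $n^{\alpha_\delta}$ and $n^{\alpha_\Delta}$. In the complementary case $\alpha_\delta = \alpha_m$ (possibly with $\alpha_L < \alpha_m$), I would choose $(x,y) = (w,\rev(w))$; since the symbols of $w$ are pairwise distinct, any common subsequence of $w$ and $\rev(w)$ appears in both the increasing and the decreasing order of symbols and therefore has length at most $1$. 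Hence $L(x,y) = 1 \le n^{\alpha_L}$ and $\delta(x,y) = \Delta(x,y) = |\Sigma|-1 = O(n^{\alpha_\Sigma})$, which fits under the required bounds by $\alpha_\Sigma \le \alpha_m = \alpha_\delta \le \alpha_\Delta$.

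The main obstacle is precisely this case split: no single choice between $(w,w)$ and $(w,\rev(w))$ works uniformly, because the two pairs sit at opposite ends of the $L$-versus-$\delta$ spectrum. However, the dichotomy is unavoidable and cleanly matches the structure of \tabref{paramChoiceRestr}, so no further gadgetry is required. All other verifications are routine counting over the (permutation-like) strings $w$ and $\rev(w)$.
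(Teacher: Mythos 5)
Your proof is correct and follows essentially the same route as the paper's: the same dichotomy between $\alpha_L = \alpha_m$ (use $(w,w)$) and $\alpha_\delta = \alpha_m$ (use $(w,\rev(w))$), with the same parameter-by-parameter verification using $|\Sigma| \le d \le M$ and $|\Sigma| \le m \le n$. The only cosmetic difference is that the paper justifies the dichotomy via the identity $m = L + \delta$ rather than reading it off \tabref{paramChoiceRestr}, and you swapped the phrase ``$O(n^{\alpha_d})$ and $O(n^{\alpha_M})$ respectively'' (it should pair $M$ with $\alpha_M$ and $d$ with $\alpha_d$), but this does not affect correctness since $|\Sigma|$ is dominated by both.
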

\begin{proof}
  Clearly, both pairs of strings realize an alphabet of size exactly $|\Sigma|$. Since $m = L + \delta$ we have $L = \Theta(m)$ or $\delta = \Theta(m)$. In the first case, we use $L(w,w) = |\Sigma| \le m = \Theta(L)$ and $\delta(w,w) = \Delta(w,w) = 0 \le \delta \le \Delta$. In the second case, we have $L(w,\rev(w)) = 1 \le \Oh(L)$ and $\delta(w,\rev(w)) = \Delta(w,\rev(w)) = |\Sigma|-1 \le m = \Theta(\delta) \le \Oh(\Delta)$.
  
  The remaining parameters are straight-forward. Let $(x,y) \in \{(w,w), (w,\rev(w))\}$. We have $n(x,y) = m(x,y) = |\Sigma| \le m \le n$. Moreover, $d(x,y) \le M(x,y) = |\Sigma| \le d \le M$ using the relations $|\Sigma| \le d \le M$.
  Clearly, the strings and their LCS length can be computed in time $\Oh(n)$.
\end{proof}

\subsection{Matching Pairs}

\begin{lem}
  If $\alpha_\Delta = 1$ then $x := 1^{\lfloor M/n \rfloor + \Delta}$ and $y := 1^{\lfloor M/n \rfloor}$ prove \lemref{paddingstotal} for parameter $M$. 
\end{lem}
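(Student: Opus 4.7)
The plan is to verify the two requirements of \lemref{paddingstotal} for the parameter $p = M$: namely, $M(x,y) = \Theta(n^{\alpha_M})$, and $q(x,y) = \Oh(n^{\alpha_q})$ for every other $q \in \params^*$. Let $k := \lfloor M/n \rfloor$. The first preparatory step is to locate $k$ within useful bounds. From \tabref{paramRelations} we have $n \le M$ and $M \le 2Ln$, so $1 \le M/n \le 2L$. Since $M/n \ge 1$ we have $k \ge 1$ and moreover $k = \Theta(M/n)$ (because $\lfloor t \rfloor \ge t/2$ for $t \ge 1$), and also $k \le 2L$. Since $x = 1^{k+\Delta}$ and $y = 1^{k}$ are unary, one reads off immediately that they can be constructed in time $\Oh(n)$, that $L(x,y) = \min\{|x|,|y|\} = k$ (computable in $\Oh(1)$ time), that $\delta(x,y) = 0$, and that $\Delta(x,y) = \Delta$.

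Next I would verify the parameter bound on $M$ itself. Since $\alpha_\Delta = 1$, we have $\Delta = \Theta(n)$ and therefore $|x| = k + \Delta = \Theta(n)$ (using $k \le 2L \le 2n$). Because $x,y$ are unary,
\[ M(x,y) = |x|\cdot|y| = (k+\Delta)\cdot k = \Theta(n) \cdot \Theta(M/n) = \Theta(M), \]
as required for (1).

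It then remains to check (2) for each $q \in \params^*$: $n(x,y) = k + \Delta = \Theta(n) = \Oh(n)$; $m(x,y) = k \le 2L \le 2m = \Oh(m)$; $L(x,y) = k \le 2L = \Oh(L)$; $\delta(x,y) = 0 \le \delta$; $\Delta(x,y) = \Delta$; $|\Sigma(x,y)| = 1 \le \Oh(|\Sigma|)$; and $M(x,y) = \Theta(M)$ as already shown. The only mildly subtle item is $d$: by \lemref{greedy} applied to the entirely common prefix $1^{k}$, we obtain $d(x,y) = k + d(1^{\Delta},\varepsilon) = k \le 2L$, and then the relation $L \le d$ of \tabref{paramRelations} gives $d(x,y) \le 2L \le 2d = \Oh(n^{\alpha_d})$.

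I expect no substantial obstacle here; the argument is essentially a bookkeeping exercise driven by the choice $k = \lfloor M/n\rfloor$. The only point that requires a small observation (rather than a direct calculation) is bounding $d(x,y)$: one must notice that despite the strings being entirely unary and so maximally ``matchable,'' the number of dominant pairs is merely $\min\{|x|,|y|\} = k$, which is controlled by $L$ and hence by $d$. Everything else follows from the two key parameter relations $n \le M \le 2Ln$, which sandwich $k$ in the correct range.
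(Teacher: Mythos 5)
Your proof is correct and follows essentially the same approach as the paper. The only cosmetic difference is in verifying $M(x,y)=\Theta(M)$: you note directly that $k+\Delta=\Theta(n)$ (since $\alpha_\Delta=1$ and $k\le 2L\le 2n$) and multiply, whereas the paper expands $(k+\Delta)k = k^2 + \Delta k$ and bounds the two summands separately; both hinge on the same sandwiching $n\le M\le 2Ln$ and the assumption $\alpha_\Delta=1$, and the rest of the parameter checks (including the Greedy Prefix Matching step for $d$) match the paper's argument.
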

\begin{proof}
  Note that $\lfloor M/n \rfloor = \Theta(M/n)$ by the parameter relation $M \ge n$.
  We have $M(x,y) = \Theta((M/n)^2 + \Delta M/n)$. By the parameter relations $M \le 2 L n \le 2 n^2$ the first summand is $\Oh(n \cdot M/n) = \Oh(M)$. Since $\alpha_\Delta = 1$, the second summand is $\Theta(M)$. Thus, we indeed have $M(x,y) = \Theta(M)$.
  
  The remainder is straight-forward. Clearly, $x$, $y$, and $L(x,y) = \lfloor M/n \rfloor$ can be computed in time $\Oh(n)$. Since $M \le 2 L n$ we also obtain $L(x,y) = m(x,y) = \lfloor M/n \rfloor \le \Oh(L) \le \Oh(m)$. Moreover, $n(x,y) = \lfloor M/n \rfloor + \Delta \le \Oh(n)$ by the relations $M/n \le 2L \le 2n$ and $\Delta \le n$. Note that $\Delta(x,y) = \Delta$ and $\delta(x,y) = 0 \le \delta$. The alphabet size is 1. By \lemref{greedy} we have $d(x,y) = \lfloor M/n \rfloor \le 2L \le 2 d$.
\end{proof}

\begin{lem}
  Assume $\alpha_\Delta < 1$ and let $\Sigma'$ be an alphabet of size $\min\{ \lceil m^2 /M \rceil, |\Sigma| \}$. Then $x := y := \bigconcat_{\sigma \in \Sigma'} \sigma^{\lfloor m/|\Sigma'| \rfloor}$ prove \lemref{paddingstotal} for parameter $M$. 
\end{lem}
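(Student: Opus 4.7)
The plan is to exploit the assumption $\alpha_\Delta < 1$ to sharply restrict the parameter setting, then verify each parameter individually. From \tabref{paramChoiceRestr}, $\alpha_\Delta < 1$ forces $\alpha_L = \alpha_m = 1$, which means $L = \Theta(m) = \Theta(n)$ up to constants. In particular, the parameter relation $M \geq L^2 / |\Sigma|$ (\lemref{Mbounds}) translates to $M \geq \Omega(m^2 / |\Sigma|)$, equivalently $m^2/|\Sigma| = O(M)$. This will be the key inequality that makes the construction work.

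Let me write $s := |\Sigma'|$ and $\ell := \lfloor m/s \rfloor$, so that $x = y$ has length $s\ell$. First I would check that $s \leq m$ (so that $\ell \geq 1$ and the string is well-defined): we have $s \leq |\Sigma| \leq m$ by \assumpref{symbols}, and also since $M \geq n \geq m$, we get $\lceil m^2/M \rceil \leq m$. Next I would argue $s\ell = \Theta(m)$: the upper bound $s\ell \leq m$ is trivial, and for the lower bound I would split on whether $s \leq m/2$ (where $\ell \geq m/(2s)$ gives $s\ell \geq m/2$) or $s > m/2$ (where one checks $s \leq m$ still forces $\ell \geq 1$ and then $s\ell > m/2$ directly). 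Thus $L(x,y) = n(x,y) = m(x,y) = s\ell = \Theta(m) = \Theta(n^{\alpha_L}) = \Theta(n^{\alpha_m}) = \Theta(n)$, which matches $O(n^{\alpha_n}) = O(n)$ and $O(n^{\alpha_m})$ and $O(n^{\alpha_L})$.

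The central step is verifying $M(x,y) = s \ell^2 = \Theta(m^2/s) = \Theta(M)$. I would perform a case distinction on which term achieves the minimum defining $s$. If $s = \lceil m^2/M \rceil$, then since the parameter relation $M \leq 2Ln = O(m^2)$ (using $L = \Theta(m)$ and $m = \Theta(n)$) ensures $m^2/M \geq \Omega(1)$, we get $s = \Theta(m^2/M)$ and so $M(x,y) = \Theta(m^2 \cdot M/m^2) = \Theta(M)$. If instead $s = |\Sigma| \leq \lceil m^2/M \rceil$, then $|\Sigma| \leq m^2/M + 1$ yields $m^2/|\Sigma| = \Omega(M)$, while the parameter relation $L^2/|\Sigma| \leq M$ combined with $L = \Theta(m)$ gives $m^2/|\Sigma| = O(M)$, so altogether $M(x,y) = \Theta(m^2/|\Sigma|) = \Theta(M)$.

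The remaining parameters are all routine. Since $x = y$, we have $\delta(x,y) = \Delta(x,y) = 0 \leq n^{\alpha_\delta} \leq n^{\alpha_\Delta}$; the alphabet size is $s \leq |\Sigma|$ by construction; and by \lemref{greedy} the number of dominant pairs is $d(x,y) = s\ell = \Theta(m) = \Theta(L) = O(d)$ using the relation $L \leq d$ from \tabref{paramRelations}. The triple $x$, $y$, $L(x,y) = s\ell$ is trivially computable in time $O(n)$. The main obstacle I anticipate is getting the case analysis for $M(x,y)$ exactly right, particularly handling the ``degenerate'' sub-case $M = \Theta(m^2)$ where $\lceil m^2/M \rceil = 1$; here one has $s = 1$, $\ell = m$, and $M(x,y) = m^2$ which coincides with $M$ up to constants precisely because $\alpha_L = \alpha_m = 1$ forces $M = O(m^2)$.
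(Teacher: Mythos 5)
Your proof is correct and takes essentially the same approach as the paper: derive $\alpha_L = \alpha_m = 1$ from $\alpha_\Delta < 1$, observe $M(x,y) = |\Sigma'|\lfloor m/|\Sigma'|\rfloor^2 = \Theta(m^2/|\Sigma'|)$, and use the parameter relations $M \le 2Ln$ and $M \ge L^2/|\Sigma|$ with $L = \Theta(m) = \Theta(n)$ to conclude $M(x,y) = \Theta(M)$. The paper phrases the key step as $M(x,y) = \Theta(\max\{M, m^2/|\Sigma|\}) = \Theta(M)$ rather than via your explicit case split on which term of the $\min$ is attained, but this is only a cosmetic difference.
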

\begin{proof}
  Observe that $\alpha_\Delta < 1$ implies $\alpha_L = \alpha_m = 1$, so that $n = \Theta(L) = \Theta(m)$ (see \tabref{paramChoiceRestr}). The number of matching pairs is $M(x,y) = |\Sigma'| \cdot \lfloor m/|\Sigma'| \rfloor^2$. By the parameter relation $m \ge |\Sigma|$ and $|\Sigma| \ge |\Sigma'|$ we have $\lfloor m/|\Sigma'| \rfloor = \Theta(m/|\Sigma'|)$, and by $M \le 2 L n = \Theta(m^2)$ we obtain $\lceil m^2/M \rceil = \Theta(m^2/M)$. Thus, $M(x,y) = \Theta(m^2 / |\Sigma'|) = \Theta(\max\{M,m^2 / |\Sigma|\})$ by choice of $|\Sigma'|$. Using $m = \Theta(L)$ and the parameter relation $M \ge L^2 / |\Sigma|$ we indeed obtain $M(x,y) = \Theta(M)$.
  
  The remainder is straight-forward. Since $x=y$ we have $L(x,y) = m(x,y) = n(x,y) = |\Sigma'| \cdot \lfloor m/|\Sigma'| \rfloor \le m = \Theta(L) = \Theta(n)$. Moreover, $\delta(x,y) = \Delta(x,y) = 0 \le \delta \le \Delta$. The alphabet size is $|\Sigma(x,y)| = |\Sigma'| \le |\Sigma|$ by choice of $|\Sigma'|$. By \lemref{greedy} we have $d(x,y) = L(x,y) \le \Oh(L) \le \Oh(d)$. Clearly, $x$, $y$, and $L(x,y)$ can be computed in time $\Oh(n)$.
\end{proof}

\subsection{Dominant Pairs}

We start with a simple construction that always works on constant-sized alphabets ($\alpha_\Sigma = 0$).

\begin{lem} \label{lem:paddsmallS}
Assume $\alpha_d \le 2\alpha_L \le \alpha_M$ and set $x:= (01)^{R+S}$ and $y:= 0^R (01)^{R+S}$ (as analyzed in \lemref{genDomPairs}), instantiated with $R=\lfloor \min\{\Delta,\sqrt{d}\} \rfloor$, $S=\lceil d/R \rceil$. Then $x,y$ prove \lemref{paddingstotal} for parameter~$d$. 
\end{lem}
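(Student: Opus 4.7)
The approach is to apply Lemma~\ref{lem:genDomPairs} with $R,S$ chosen so that $R \cdot S = \Theta(d)$. This immediately yields $d(x,y) = \Theta(d)$, after which the only remaining work is to verify that every other parameter stays within its target bound $O(n^{\alpha_p})$. First I would check that $R \ge 1$, $S \ge 1$, and $S \ge R$ hold once $n$ is sufficiently large (using $\alpha_d, \alpha_\Delta > 0$): since $R \le \sqrt d$, we have $d/R \ge \sqrt d \ge R$, so $S \ge R$, $R\cdot S = \Theta(d)$, and $R+S = \Theta(S)$. Because $y = 0^R(01)^{R+S}$ is slightly longer than the form $0^R(01)^S$ analyzed in Lemma~\ref{lem:genDomPairs}, I would derive $d(x,y) = \Theta(RS)$ by combining Observation~\ref{obs:prefix} (applied to the prefix $0^R(01)^S$ of $y$) with Lemma~\ref{lem:genDomPairs}\itemref{genDomPairsd} for the lower bound, and Lemma~\ref{lem:dLgap} for the upper bound, using that $x$ is a subsequence of $y$ so $L(x,y) = |x| = 2(R+S)$ and $\Delta(x,y) = R$. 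Clearly $x, y$, and $L(x,y)$ are computable in time $O(n)$.

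The central bound to check is $L(x,y) = \Theta(S) \le O(L)$, which I would split according to which argument of the min in $R$ is active. If $R = \lfloor\Delta\rfloor$ (so $\Delta \le \sqrt d$), then $S = \Theta(d/\Delta) \le O(L)$ by Lemma~\ref{lem:dLgap} ($d \le 2L(\Delta+1)$). If $R = \lfloor\sqrt d\rfloor$, then $S = \Theta(\sqrt d) \le O(L)$ uses the hypothesis $\alpha_d \le 2\alpha_L$. A direct count over the two symbols gives $M(x,y) = \Theta((R+S) \cdot S) = \Theta(L^2)$, which fits into $O(M)$ by the hypothesis $2\alpha_L \le \alpha_M$. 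The remaining parameters are immediate: $|\Sigma(x,y)| = 2$, $\delta(x,y) = 0$, $\Delta(x,y) = R \le \Delta$, and $n(x,y), m(x,y) = \Theta(R+S) = \Theta(L) \le O(n)$.

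The main technical point lies in the choice of $R$: it must be small enough to keep both $\Delta(x,y) = R$ and $L(x,y) = \Theta(R+S)$ within their target bounds, yet large enough so that $S = \lceil d/R\rceil$ does not exceed $O(L)$. The choice $R = \lfloor\min\{\Delta,\sqrt d\}\rfloor$ is essentially the only balance point, and the hypotheses $\alpha_d \le 2\alpha_L \le \alpha_M$ are precisely what render this balance feasible on both the $L$- and the $M$-side; parameter settings with $\alpha_d > 2\alpha_L$ necessarily require a construction over a larger alphabet (handled elsewhere via Lemma~\ref{lem:genDomPairs-largeSigma}).
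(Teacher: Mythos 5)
Your proposal is correct and follows essentially the same approach as the paper's proof: verify $S \ge R$, invoke \lemref{genDomPairs} to get $d(x,y) = \Theta(RS) = \Theta(d)$, bound $n(x,y) = \Theta(R+S) = \Oh(d/\Delta + \sqrt d) \le \Oh(L)$ using $d \le 2L(\Delta+1)$ and $\alpha_d \le 2\alpha_L$, and conclude the remaining parameters straightforwardly. You also correctly spotted and carefully worked around what appears to be a typo in the lemma statement ($y = 0^R(01)^{R+S}$ rather than the $0^R(01)^S$ actually analyzed in \lemref{genDomPairs}); the paper's own proof silently treats $y$ as $0^R(01)^S$, but, as you observe, the asymptotic parameter bounds come out the same either way.
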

\begin{proof}
Note that by definition $R\le \sqrt{d}$, and hence $S\ge d/R \ge R$. By \lemref{genDomPairs}\itemref{genDomPairsd}, we obtain $d(x,y)= \Theta(R\cdot S) = \Theta(R \cdot d/R) = \Theta(d)$. For the other parameters, note that $n(x,y) = 2(R+S) = \Oh(d/R) = \Oh(d/\Delta+\sqrt{d})$. By the relation $d\le 2L(\Delta+1)$, we have $d/\Delta \le \Oh(L)$, and by the assumption $\alpha_d\le 2\alpha_L$, we have $d \le \Oh(L^2)$ and hence $\sqrt{d} \le \Oh(L)$. Thus, $n(x,y) \le \Oh(L)$.

The remainder is straight-forward. By $L(x,y)\le m(x,y) \le n(x,y) \le \Oh(L) \le \Oh(m) \le \Oh(n)$ we have verified $n,m,L$. Consequently, also $M(x,y) \le n(x,y)^2 \le \Oh(L^2) \le \Oh(M)$ by the assumption $2\alpha_L \le \alpha_M$. Trivially, $|\Sigma(x,y)|=2=\Oh(|\Sigma|)$. Observing that $\delta(x,y) = 0 \le \Oh(\delta)$ and $\Delta(x,y) = R \le \Oh(\Delta)$ by \lemref{genDomPairs}\itemref{genDomPairsL} concludes the parameter verification. Since $x,y$ and $L(x,y)=R+2S$ (by \lemref{genDomPairs}\itemref{genDomPairsL}) can be computed in time $\Oh(n)$, the claim follows.
\end{proof}

The construction above creates a long LCS of length $L(x,y) = \Theta(m(x,y))$ which forces $d(x,y)=\Oh(L(x,y)^2)$. With super-constant alphabet sizes, one can construct larger numbers of dominant pairs (compared to $L(x,y)$) by exploiting the crossing gadgets defined in \defref{crossinggadget}.

\begin{lem}
  Assume $\alpha_d > 2 \alpha_L$ and set $v := (01)^{R+S}$ and $w := 0^{R} (01)^{S}$ with $R=S=L$. Construct $x := \cross^\x(v,\ldots,v)$ and $y := \cross^\y(w,\ldots,w)$ on $\lfloor d/L^2 \rfloor$ copies of $v$ and $w$. Then $x,y$ prove \lemref{paddingstotal} for parameter $d$. 
\end{lem}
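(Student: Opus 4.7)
The plan is to invoke the crossing-gadget construction of \defref{crossinggadget} with the dominant-pair-generator from \lemref{genDomPairs} as the building block. First I would apply \lemref{genDomPairs} with $R=S=L$ to $v = (01)^{2L}$ and $w = 0^L(01)^L$ to obtain $L(v,w) = \Theta(L)$, $d(v,w) = \Theta(L^2)$, $M(v,w) = \Theta(L^2)$, and $|v|,|w| = \Theta(L)$. Let $K := \lfloor d/L^2 \rfloor$; the hypothesis $\alpha_d > 2\alpha_L$ ensures $K \ge 1$ (for large enough $n$), while \lemref{dUBs} gives $K \le \Oh(|\Sigma|)$.

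Next I would combine $K$ copies of $(v,w)$ on pairwise disjoint alphabets via $\cross^\x, \cross^\y$. The defining properties of the crossing gadget guarantee that (a) $L(x,y) = L(v,w) = \Theta(L)$, since any two copies' alphabets appear in opposite orders in $x$ versus $y$, forcing an LCS to draw from a single copy only; (b) $d(x,y) = \Theta(K \cdot d(v,w)) = \Theta(d)$, since each dominant pair of an individual $(v,w)$ lifts to a distinct dominant pair of $(x,y)$ and contributions from different copies do not interact; and (c) $M(x,y) = K \cdot M(v,w) = \Theta(d)$, which is $\Oh(M)$ by the trivial relation $d \le M$. The alphabet size is $|\Sigma(x,y)| = 2K = \Theta(d/L^2) = \Oh(|\Sigma|)$ by $d \le L^2|\Sigma|$ from \lemref{dUBs}.

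For the remaining parameters, $n(x,y), m(x,y) = \Theta(K \cdot L) = \Theta(d/L)$, which is $\Oh(n)$ and $\Oh(m)$ by the relation $d \le Lm \le Ln$. Combining $\alpha_d > 2\alpha_L$ with $\alpha_d \le \alpha_L + \alpha_m$ (from \lemref{dUBs}) yields $\alpha_L < \alpha_m$, which by \tabref{paramChoiceRestr} forces $\alpha_\delta = \alpha_m$ and $\alpha_\Delta = 1$. Hence $\delta(x,y) \le m(x,y) = \Oh(\delta)$ and $\Delta(x,y) \le n(x,y) = \Oh(\Delta)$. Finally, the strings and $L(x,y) = \Theta(L)$ can be constructed in time $\Oh(n)$, as $\cross$ is a simple concatenation over relabeled copies.

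The main obstacle is the precise justification that dominant pairs from distinct crossing-copies lift to distinct dominant pairs of $(x,y)$, without interference that would shrink the count below $K \cdot d(v,w)$. This hinges on the crossing orientation (reversal of order in $\cross^\y$) combined with disjoint alphabets: for any prefix pair, the alphabets still present in both the $x$-prefix and $y$-prefix form a contiguous block of indices, and among these only one copy actually contributes to the LCS due to the crossing, so a dominant pair of one copy cannot be dominated by an index pair stemming from another copy. Once this is formally extracted from \defref{crossinggadget}, the rest is routine parameter bookkeeping analogous to \lemref{paddsmallS}.
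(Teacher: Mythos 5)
Your proposal is correct and matches the paper's proof essentially step for step: you build the base gadget from \lemref{genDomPairs} with $R=S=L$, combine $\lfloor d/L^2\rfloor$ copies via the crossing construction, and verify the parameter bounds using the same relations ($d\le L^2|\Sigma|$, $d\le Lm$, $d\le M$, and the case analysis forcing $\alpha_\delta = \alpha_m$). The only cosmetic difference is that you re-derive the additivity of $d$, $M$, $n$, $m$, $|\Sigma|$ and the max-behavior of $L$ under crossing rather than invoking the Crossing Alphabets Lemma (\lemref{crossingalphabet}) directly, which already packages exactly that argument.
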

\begin{proof}
  Note that $\lfloor d/L^2 \rfloor = \Theta(d/L^2)$ since we assume $\alpha_d > 2 \alpha_L$. By the Crossing Alphabets Lemma (\lemref{crossingalphabet}), we obtain $L(x,y) = L(v,w)=3L$, and in particular $L(x,y)$, $x$, and $y$ can be computed in time $\Oh(n)$. 

   Furthermore, the Crossing Alphabets Lemma also yields $d(x,y) = \Theta(d/L^2) \cdot d(v,w) = \Theta(d)$, where the bound $d(v,w)= \Theta(L^2)$ follows from \lemref{genDomPairs}\itemref{genDomPairsL}. Similarly, we observe that $\Delta(x,y) \le n(x,y) = \Theta(d/L^2) \cdot n(v,w) = \Theta(d/L)$, which is at most $\Oh(\Delta) \le \Oh(n)$ by the parameter relation $d \le 2 L (\Delta+1)$. Likewise, $m(x,y) \le n(x,y) = \Oh(d/L) \le \Oh(m)$ by the parameter relation $d \le Lm$. Moreover, $M(x,y) = \Oh(d/L^2) \cdot M(v,w) = \Oh(d) \le \Oh(M)$ by $d\le M$. Finally, the assumption $2\alpha_L < \alpha_d$ together with the parameter relation $d\le Lm$, i.e., $\alpha_d \le \alpha_L + \alpha_m$, forces $\alpha_L < \alpha_m$. Hence, $\alpha_\delta = \alpha_m$, i.e.,  $\delta = \Theta(m)$ (see \tabref{paramChoiceRestr}), and thus $\delta(x,y)\le m(x,y) \le \Oh(m)= \Oh(\delta)$.
   Since $v$ and $w$ have alphabet size 2 and we use $\lfloor d/L^2 \rfloor$ copies over disjoint alphabets, we have $|\Sigma(x,y)| = 2 \lfloor d/L^2 \rfloor \le \Oh(|\Sigma|)$ by the parameter relation $d \le L^2 |\Sigma|$, which concludes the proof. 
\end{proof}

For super-constant alphabet sizes, the number of matching pairs $M(x,y)$ can attain values much smaller than $L(x,y)^2$, which is an orthogonal situation to the lemma above. In this case, we use a different generalization of the first construction (that we already prepared in \secref{basicFacts}).

\begin{lem} \label{lem:paddlargeS}
Assume $\alpha_M < 2 \alpha_L$ and set $x := (1\dots t\;t'\dots 1)^R (1\dots t)^{S-R}$ and $y:= (1\dots t)^S$ (as analyzed in \lemref{genDomPairs-largeSigma}) instantiated with 
 \begin{align*}
t  & := \Big\lfloor \frac{L^2}{M} \Big\rfloor, & t' &:= \min\{r,t\} & R & := \Big\lceil \frac{r}{t} \Big\rceil & S & := 4 \Big\lceil \frac{d}{rt} \Big\rceil,
\end{align*}
  where $r := \min\{\Delta, \lfloor \sqrt{d/t} \rfloor \}$.
 Then $x,y$ prove \lemref{paddingstotal} for parameter~$d$.
\end{lem}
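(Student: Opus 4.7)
The plan is to apply \lemref{genDomPairs-largeSigma} to the instantiated strings and verify that each parameter in $\params^*$ meets the corresponding upper bound of \lemref{paddingstotal}, while simultaneously $d(x,y) = \Theta(d)$. I would proceed in four steps.

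\emph{Preconditions for \lemref{genDomPairs-largeSigma}.} The assumption $\alpha_M < 2\alpha_L$ makes $t = \Theta(L^2/M) \ge 2$ for large $n$. Combining $M \ge L^2/|\Sigma|$ (\tabref{paramRelations}) with $d \ge |\Sigma|$ (\lemref{dSigma}) gives $d \ge t$, so $\lfloor \sqrt{d/t}\rfloor = \Theta(\sqrt{d/t}) \ge 1$, and hence $r = \Theta(\min\{\Delta, \sqrt{d/t}\}) \ge 1$ (in the non-degenerate regime $\Delta \ge 1$). From $r \le \sqrt{d/t}$ I get $r^2 t \le d$, and a short calculation then verifies both $R \le S$ and the hypothesis $S \ge R(t'+1)$ needed for \lemref{genDomPairs-largeSigma}\itemref{genDomPairs-largeSigma-d}.

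\emph{Target count $d(x,y) = \Theta(d)$.} Splitting on $r \le t$ (so $t'=r$, $R=1$, $Rt' = r$) versus $r > t$ (so $t'=t$, $R = \lceil r/t\rceil$, $Rt' = \Theta(r)$) yields $Rt' = \Theta(r)$ in both cases. Since $d/(rt) \ge r \ge 1$ (using $r^2 t \le d$), we have $S = \Theta(d/(rt))$ and $St = \Theta(d/r)$. Plugging into \lemref{genDomPairs-largeSigma}\itemref{genDomPairs-largeSigma-d} produces $d(x,y) = \Theta((St)(Rt')) = \Theta(d)$.

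\emph{Other parameters, excluding $M$.} The pivotal estimate is $r = \Omega(d/L)$: the relation $d \le 2L(\Delta+1)$ gives $\Delta = \Omega(d/L)$, while $\sqrt{d/t} = \sqrt{dM}/L \ge d/L$ uses $M \ge d$ (\tabref{paramRelations}). Hence $L(x,y) = St = O(L)$, implying $m(x,y) = L(x,y) = O(m)$, $\delta(x,y) = 0 \le \delta$, $\Delta(x,y) = Rt' = \Theta(r) = O(\Delta)$, and $n(x,y) = St + Rt' = O(L) = O(n)$. The alphabet size is $|\Sigma(x,y)| = t \le L^2/M \le |\Sigma|$ by the relation $M \ge L^2/|\Sigma|$.

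\emph{Bounding $M(x,y)$ — the main obstacle.} By \lemref{genDomPairs-largeSigma}\itemref{genDomPairs-largeSigma-M} and $R \le S$, we have $M(x,y) = \Theta(tS^2) = \Theta(d^2/(r^2 t))$. A case split on the $\min$ defining $r$ handles this: if $r = \Theta(\sqrt{d/t})$ then $r^2 t = \Theta(d)$ and $M(x,y) = \Theta(d) = O(M)$ via $d \le M$; otherwise $r = \Delta$ and $r^2 t = \Theta(\Delta^2 L^2/M)$, giving $M(x,y) = \Theta(d^2 M/(\Delta^2 L^2)) = O(M)$ via $d = O(L\Delta)$ (from $d \le 2L(\Delta+1)$). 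Finally, $x$, $y$, and $L(x,y) = St$ are computable in time $O(n)$, completing the verification of \lemref{paddingstotal} for parameter $d$.
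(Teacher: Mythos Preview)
Your proposal is correct and follows essentially the same approach as the paper: verify the preconditions of \lemref{genDomPairs-largeSigma} via the case split on $r\le t$ versus $r>t$ to get $Rt'=\Theta(r)$, establish $St=\Theta(d/r)=O(L)$ via the estimate $r=\Omega(d/L)$, and then check all parameters. The only cosmetic difference is in the $M$ bound: the paper avoids your case split by directly reusing $St = d/(Rt') = O(L)$ to write $M(x,y)=O(S^2 t)=O((St)^2/t)=O(L^2 \cdot M/L^2)=O(M)$, which is slightly cleaner but equivalent in substance.
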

\begin{proof}
We first verify the conditions of \lemref{genDomPairs-largeSigma}.
Observe that by the assumption $\alpha_M < 2\alpha_L$ we indeed have $t= \Theta(L^2/M)$ and $t\ge 2$ (for sufficiently large $n$). From the parameter relation $M \ge L^2 / |\Sigma|$ we obtain $t \le |\Sigma|$, and from the parameter relation $d \ge |\Sigma|$ (and $\alpha_\Delta > 0$) this yields $r \ge 1$. Thus, $1 \le t' \le t$.
Moreover, $r = \Theta(\min\{\Delta, \sqrt{d/t}\})$. Observe that $r \le Rt' \le 2r$. Indeed, if $r \le t$ then $R = 1$ and $t' = r$, and if $r > t$ then $r/t \le R \le 2r/t$ and $t' = t$. In particular, we have
$$Rt' = \Theta\big(\min\{\Delta, \sqrt{d/t}\}\big) \quad \text{and} \quad S = \Theta\Big(\frac{d}{Rt'\cdot t}\Big).$$
Note that $R(t'+1) \le 2Rt' \le 4r \le S$, since $r \le \sqrt{d/t}$. In particular, this yields $1 \le R \le S$, so that all conditions of \lemref{genDomPairs-largeSigma} are satisfied. 

In the remainder we show that $x,y$ satisfy the parameter constraints. We have $n(x,y) \le (R+S)t = \Oh(St) = \Oh(d/(Rt')) = \Oh(d/\Delta + \sqrt{dt})$. Note that $d/\Delta = \Oh(L)$ by the parameter relation $d\le 2L(\Delta+1)$, and that $\sqrt{dt} = \Oh(\sqrt{dL^2/M}) = \Oh(L)$ by the parameter relation $d\le M$. Thus, $L(x,y) \le m(x,y)\le n(x,y) \le \Oh(L) \le \Oh(m) \le \Oh(n)$, which satisfies the parameters $L,m,n$.

For $d$, note that by \lemref{genDomPairs-largeSigma}\itemref{genDomPairs-largeSigma-d}, we have $d= \Theta((Rt')\cdot (St)) = \Theta((Rt') \cdot d/(Rt')) = \Theta(d)$. For~$M$, \lemref{genDomPairs-largeSigma}\itemref{genDomPairs-largeSigma-M} shows that $M(x,y) = \Oh(S^2 t) = \Oh( (d/(Rt'))^2 \cdot (1/t)) = \Oh( L^2 \cdot (M/L^2) ) = \Oh(M)$, where we used $d/(Rt') = \Oh(L)$ as shown above. Since $L(x,y) = |b|=St$, we obtain $\delta(x,y) = 0 \le \Oh(\delta)$ and $\Delta(x,y) = Rt' \le \Oh(\Delta)$. Finally, $|\Sigma(x,y)| = t = \Theta(L^2/M) \le \Oh(|\Sigma|)$ follows from the parameter relation $M\ge L^2/|\Sigma|$.
Observing that $x,y$, and $L(x,y)=St$ can be computed in time $\Oh(n)$ concludes the proof.
\end{proof}

\section{Hardness for Large Alphabet}
\label{sec:hardness}

In this section, we consider a parameter setting $\Valpha$ satisfying the relations of \tabref{paramChoiceRestr}, and we prove a lower bound for $\LCS_\le(\Valpha)$ assuming \OVH, thus proving \thmref{hardness}. We split our proof into the two cases $\alpha_\delta = \alpha_m$ (where $L$ may be small) and $\alpha_L = \alpha_m$ (where $L$ is large). For readability, but abusing notation, for the target value $\lceil n^{\alpha_p} \rceil$ of parameter $p$ we typically simply write~$p$. 

In this section we can assume that 
\begin{equation} \label{eq:lowerassumptions}
  \alpha_L,\alpha_m,\alpha_\delta,\alpha_\Delta > 0 \quad \text{and} \quad \alpha_d > 1,  \tag{LB}
\end{equation}
since otherwise the known $\tOh(n + \min\{d, \delta m, \delta \Delta\})$ algorithm runs in (near-)optimal time $\tOh(n)$ and there is nothing to show (here we used the parameter relations $d \le L m$ and $L,m,\delta,\Delta \le n$).

\subsection{Small LCS}
\label{sec:hardnesssmallLCS}

Assume $\alpha_\delta = \alpha_m$, i.e., $\delta = \Theta(m)$. In this case, the longest common subsequence might be arbitrarily small, i.e., any value $0 < \alpha_L \le \alpha_m$ is admissible.  

\subsubsection{Hard Core}
At the heart of our constructions lies the previous reduction from \OV\ to \LCS of \cite{BringmannK15}, which we restate here for our purposes.

\begin{lem}\label{lem:core}
Let two sets $\sA=\{a_1,\dots,a_A\}$ and $\sB=\{b_1,\dots,b_B\}$ of vectors in $\{0,1\}^D$ with $A\ge B$ be given. In time $\Oh(AD)$, we can construct strings $x_1,\dots,x_{2A}$ and $y_1,\dots,y_B$ over $\{0,1\}$ and $\gamma,\gamma' = \Theta(D)$ such that the strings $x$ and $y$ defined by
\begin{alignat*}{9}
x &:=\qquad \, && x_1 \; && 0^{\gamma} \; && x_2 \; && 0^{\gamma} \,\ldots\; && x_{2A-1} \; && 0^{\gamma} \; && x_{2A}&&,  \\
y &:= \, 0^{A\gamma'} \; && y_1 \; && 0^{\gamma} \; && y_2 \; && 0^{\gamma} \,\ldots\; && y_{B-1} \; && 0^{\gamma} \; && y_B \; && 0^{A\gamma'},
\end{alignat*}
satisfy the following properties:
\begin{enumerate}[label=(\roman{*})]
\item\label{itm:core-L} We can compute some $\rho$ in time $\Oh(AD)$ such that $L(x,y) \ge \rho$ if and only if there is a pair $i,j$ with $\langle a_i, b_j\rangle = 0$.
\item\label{itm:core-length} $|x|,|y| = \Oh(AD)$.
\item\label{itm:core-numones} $\occ_1(y) = \Oh(BD)$.
\item\label{itm:core-initialzeroes} For all $\beta \ge 0$, we have $L(x,0^\beta y) = L(x,y)$.
\end{enumerate}
\end{lem}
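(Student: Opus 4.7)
The plan is to invoke the reduction from \OV to LCS due to Bringmann and K\"unnemann~\cite{BringmannK15}. I would first recall their construction of ``normalized vector gadgets'' $\NVG^{\x}(a)$ and $\NVG^{\y}(b)$ over alphabet $\{0,1\}$, each of length $\Theta(D)$, such that $L(\NVG^{\x}(a),\NVG^{\y}(b))$ takes some value $L_{=}$ when $\langle a,b\rangle = 0$ and $L_{=}-1$ otherwise. Setting $x_i := \NVG^{\x}(a_i)$ for $i \in [2A]$ (possibly duplicating each vector, to allow the flexibility of aligning one $y_j$ to two consecutive $x_i$-slots as is done in their proof) and $y_j := \NVG^{\y}(b_j)$ for $j \in [B]$, together with zero-separators of length $\gamma = \Theta(D)$ and outer paddings of length $A\gamma'$ with $\gamma' = \Theta(D)$, gives strings $x$ and $y$ matching the displayed form. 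Properties \itemref{core-length} and \itemref{core-numones} are immediate from the length of each gadget and the observation that the number of $1$s in $y$ lives entirely in the $B$ gadgets $y_j$, each containing $O(D)$ ones.

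For property \itemref{core-L}, I would import the central alignment lemma of~\cite{BringmannK15}: choosing $\gamma$ large enough forces any optimal LCS alignment to match each $y_j$ against a contiguous ``slot'' of $x$ that contains a single $x_{i(j)}$-gadget (with $i(j)$ monotone in $j$), and the surplus $A-B$ gadgets in $x$ and the outer $0^{A\gamma'}$ blocks are absorbed by matching zeros. Summing the per-gadget LCS contributions and the zero-matching contributions yields a threshold $\rho$ (an explicit function of $A,B,D,\gamma,\gamma'$ computable in $O(AD)$ time) such that $L(x,y) \ge \rho$ holds iff at least one pair $(i(j),j)$ yields a matched gadget, i.e., iff some orthogonal pair exists.

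Property \itemref{core-initialzeroes} is the one new property beyond~\cite{BringmannK15}. The direction $L(x,0^\beta y) \ge L(x,y)$ is trivial since $y$ is a subsequence of $0^\beta y$. For the reverse direction, I would apply \lemref{zeroblocklcs} (block elimination) to the prefix. Writing $y = 0^{A\gamma'} y'$ and noting that $x$ begins with $x_1$, an $O(D)$-long binary gadget with at most $O(D)$ zeros in its prefix, the leading $0^{A\gamma'}$-block of $y$ already exceeds any useful number of zeros matchable from the start of $x$. Formally, \lemref{zeroblocklcs} implies that any additional zeros prepended to $y$ contribute only to the already-saturated initial zero-matching, so $L(x,0^\beta y) = L(x,y)$.

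The main obstacle is a bookkeeping one rather than a conceptual one: carefully pinning down the constants $\gamma,\gamma'$ from~\cite{BringmannK15} so that (a) the alignment structure forcing \itemref{core-L} still holds, and (b) the saturation argument for \itemref{core-initialzeroes} goes through via a clean application of \lemref{zeroblocklcs}. I expect both to follow by choosing $\gamma,\gamma'$ as sufficiently large multiples of $D$ so that each zero-block dominates the total length of the non-zero gadget content it is meant to absorb.
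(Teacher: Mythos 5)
Your overall route matches the paper's: properties \itemref{core-L} and \itemref{core-length} come directly from the main reduction of~\cite{BringmannK15} instantiated with $2A$ copies of the $x$-side normalized vector gadgets, and \itemref{core-numones} is immediate from the fact that all $1$'s in $y$ sit inside the $B$ gadgets $y_j$ of length $\Oh(D)$ each.

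The issue is your proof of \itemref{core-initialzeroes}. \lemref{zeroblocklcs} computes $L(u\, 0^\ell v,\; 0^\ell w)$ under the hypothesis $\ell \ge \occ_0(u) + |w|$, which crucially requires a $0^\ell$-block to appear in \emph{both} input strings. Here $0^\beta y = 0^{\beta + A\gamma'}y''$ does start with a long zero block, but $x$ contains no zero run of length anywhere near $A\gamma' = \Theta(AD)$: its internal separators are only $0^\gamma$ with $\gamma = \Theta(D)$. So there is no way to decompose $x$ as $u\, 0^\ell v$ with $\ell = \beta + A\gamma'$ (or even $\ell = A\gamma'$), and \lemref{zeroblocklcs} simply does not apply in this configuration. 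The intuition you describe — that the leading $0^{A\gamma'}$-block of $y$ already saturates all useful initial zero-matching — is correct, but it needs a direct argument rather than a citation of \lemref{zeroblocklcs}. The paper's proof is exactly that direct argument: it observes that the construction in~\cite{BringmannK15} already chooses $\gamma'$ large enough that $A\gamma' \ge \occ_0(x)$, and then notes that any common subsequence $z$ of $x$ and $0^\beta y$ begins with at most $\occ_0(x)$ zeros (bounded by the $x$-side embedding), so this leading zero-run can be re-embedded entirely into the $0^{A\gamma'}$-block of $y$ and the remainder of $z$ (which, if nonempty, starts with a $1$, hence lies past the prefix zeros) embeds into the suffix of $y$; thus $z$ is already a common subsequence of $x,y$. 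Replacing your appeal to \lemref{zeroblocklcs} with this monotone re-embedding argument, and stating explicitly that $A\gamma' \ge \occ_0(x)$ holds for the choice of $\gamma'$ in~\cite{BringmannK15}, closes the gap.
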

\begin{proof}
Claims \itemref{core-L} and \itemref{core-length} are a restatement of the main result in~\cite{BringmannK15}, instantiated for LCS. Claim \itemref{core-numones} follows directly from the construction. It is easy to check that in~\cite{BringmannK15} the value of $\gamma'$ is chosen large enough to satisfy $A \gamma' \ge \occ_0(x)$. This yields claim \itemref{core-initialzeroes}, since any common subsequence $z$ of $x,0^\beta y$ starts with at most $\occ_0(x)$ symbols $0$, which can all be matched to the $0^{A \gamma'}$-block of $y$, so $z$ is also a common subsequence of $x,y$, and we obtain $L(x,y) = L(x,0^\beta y)$.
\end{proof}

\subsubsection{Constant Alphabet}
First assume $\alpha_\Sigma = 0$ and thus $|\Sigma| = \Oh(1)$.
Consider any $n \ge 1$ and target values $p = n^{\alpha_p}$ for $p \in \params$. 
Let $\sA= \{a_1,\dots,a_A\}$, $\sB=\{b_1,\dots,b_B\} \subseteq \{0,1\}^D$ be a given \OV instance with $D=n^{o(1)}$ and where we set $A := \lfloor L/D \rfloor$ and $B := \lfloor d/(LD) \rfloor$.
Note that $A = n^{\alpha_{L} - o(1)} = n^{\Omega(1)}$ and $B=n^{\alpha_d-\alpha_L-o(1)} = n^{\Omega(1)}$ by (\ref{eq:lowerassumptions}). Also note that \UOVH implies that solving such \OV instances takes time $(AB)^{1-o(1)} = n^{\alpha_d - o(1)} = d^{1-o(1)}$. 

Construct strings $x,y$ as in \lemref{core}. Then from the LCS length $L(x,y)$ we can infer whether $\sA,\sB$ has an orthogonal pair of vectors by \lemref{core}(i). Moreover, this reduction runs in time $\Oh(A D) = \Oh(L) \le \Oh(d^{1-\eps})$ for sufficiently small $\eps > 0$ (since $\alpha_L \le \alpha_m \le 1 < \alpha_d$ by \tabref{paramChoiceRestr} and (\ref{eq:lowerassumptions})). We claim that $(n,x,y)$ is an instance of $\LCS_\le(\Valpha)$. This shows that any algorithm solving $\LCS_\le(\Valpha)$ in time $\Oh(d^{1-\eps})$ implies an algorithm for our \OV instances with running time $\Oh(d^{1-\eps})$, contradicting \UOVH. Hence, in the current case $\alpha_\delta = \alpha_m$ and $\alpha_\Sigma = 0$, any algorithm for $\LCS_\le(\Valpha)$ takes time $d^{1-o(1)}$, proving part of \thmref{hardness}.

It remains to show the claim that $(n,x,y)$ is an instance of $\LCS_\le(\Valpha)$. Using the parameter relations $L \le m \le n$, \lemref{core}(ii), and the definition of $A$, we have $L(x,y) \le m(x,y) \le n(x,y) = |x| \le \Oh(AD) = \Oh(L) \le \Oh(m) \le \Oh(n)$, so indeed $p(x,y) \le \Oh(p) = \Oh(n^{\alpha_p})$ for $p \in \{L,m,n\}$. Similarly, we obtain $\delta(x,y) \le \Delta(x,y) \le n(x,y) \le \Oh(m) = \Oh(\delta) \le \Oh(\Delta)$, where the equality holds in the current case $\alpha_\delta = \alpha_m$. Since $x,y$ use the binary alphabet $\{0,1\}$, we have $|\Sigma(x,y)| = 2 \le \Oh(n^{\alpha_\Sigma})$. For the number of matching pairs we have $M(x,y) \le n(x,y)^2 = \Oh((AD)^2) = \Oh(L^2)$. Since we are in the case $\alpha_\Sigma = 0$, from the parameter relation $M \ge L^2 / |\Sigma|$ (\lemref{Mbounds}(i)) we obtain $L^2 \le \Oh(M)$ and thus also $M(x,y)$ is sufficiently small. Finally, we use \lemrefs{few1s}{core}(iii) to bound $d(x,y) \le \Oh(L(x,y) \cdot \occ_1(y)) \le \Oh(AD \cdot BD)$, which by definition of $A,B$ is $\Oh(d)$. This proves that $(n,x,y)$ belongs to $\LCS_\le(\Valpha)$. 

We remark that our proof also yields the following lemma, which we will use for small alphabets.

\begin{lem} \label{lem:smallLCShardnessConstantAlphabet}
  Let $\Valpha$ be a parameter setting satisfying \tabref{paramChoiceRestr} with $\alpha_\Sigma = 0$ and $\alpha_\delta = \alpha_m$. There is a constant $\gamma \ge 1$ such that any algorithm for $\LCS_\le^\gamma(\Valpha,\{0,1\})$ takes time $d^{1-o(1)}$, unless \OVH fails. This holds even restricted to instances $(n,x,y)$ of $\LCS_\le^\gamma(\Valpha,\{0,1\})$ with $|x|, |y| \le \gamma \cdot n^{\alpha_L}$ and $\occ_1(y) \le \gamma \cdot n^{\alpha_d - \alpha_L}$ satisfying $L(x,0^\beta y) = L(x,y)$ for all $\beta \ge 0$.
\end{lem}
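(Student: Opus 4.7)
The plan is to simply re-read the reduction already carried out in the preceding paragraphs of \secref{hardnesssmallLCS} and observe that it produces exactly the type of hard instances asserted by the lemma. No new construction is needed; we only need to verify the three additional structural properties.

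First, I would recall the setup from the constant-alphabet case: given an \OV instance $\sA = \{a_1,\dots,a_A\}$, $\sB = \{b_1,\dots,b_B\} \subseteq \{0,1\}^D$ with $D = n^{o(1)}$, $A = \lfloor L/D\rfloor$, and $B = \lfloor d/(LD)\rfloor$, feed it into \lemref{core} to obtain binary strings $x,y$ in time $\Oh(AD)$. The analysis already given shows that $(n,x,y)$ belongs to $\LCS_\le^\gamma(\Valpha,\{0,1\})$ for some constant $\gamma \ge 1$ (crucially using $\alpha_\Sigma = 0$ to bound $M(x,y) = \Oh(L^2) = \Oh(M)$ via $M \ge L^2/|\Sigma|$, and $\alpha_\delta = \alpha_m$ to absorb $\delta(x,y) \le \Oh(m) = \Oh(\delta)$), and that a purported $\Oh(d^{1-\eps})$ algorithm would decide the \OV instance in time $n^{\alpha_d - \eps\alpha_d} = (AB)^{1-\Omega(1)}$, contradicting \UOVH.

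Second, the three additional properties follow directly from the corresponding clauses of \lemref{core}, after adjusting the constant $\gamma$ if necessary:
\begin{itemize}
\item $|x|, |y| \le \gamma \cdot n^{\alpha_L}$ is \lemref{core}\itemref{core-length} together with $AD = \Theta(L) = \Theta(n^{\alpha_L})$;
\item $\occ_1(y) \le \gamma \cdot n^{\alpha_d - \alpha_L}$ is \lemref{core}\itemref{core-numones} together with $BD = \Theta(d/L) = \Theta(n^{\alpha_d - \alpha_L})$;
\item $L(x, 0^\beta y) = L(x,y)$ for all $\beta \ge 0$ is \lemref{core}\itemref{core-initialzeroes} verbatim.
\end{itemize}

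There is no real obstacle here, since the heavy lifting was already done both in \lemref{core} and in the parameter verification earlier in \secref{hardnesssmallLCS}. The only point requiring a moment of care is the bookkeeping that the hard instances constructed for $\LCS_\le(\Valpha)$ at large alphabet are in fact \emph{already} binary (the reduction of~\cite{BringmannK15} as restated in \lemref{core} outputs strings over $\{0,1\}$), so the same hard core serves both the $\Sigma$-unrestricted statement (\thmref{hardness}) and the binary-restricted statement needed later in \secref{hardnessSmallSigma} — the lemma merely makes the latter explicit, including the three auxiliary properties that will be convenient when these hard instances are embedded into larger constructions.
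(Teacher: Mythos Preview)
The proposal is correct and matches the paper's own treatment exactly: the paper does not give a separate proof of this lemma but simply remarks that the preceding reduction (the constant-alphabet case of \secref{hardnesssmallLCS}) already yields it. Your verification that the three auxiliary properties follow from \lemref{core}\itemref{core-length}, \itemref{core-numones}, and \itemref{core-initialzeroes} together with the choices $AD = \Theta(L)$ and $BD = \Theta(d/L)$ is precisely the content being asserted.
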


\subsubsection{Superconstant Alphabet}
To tackle the general case $\alpha_\Sigma \ge 0$ (while still assuming $\alpha_\delta = \alpha_m$), we use the following fact which is similar to the Disjoint Alphabets Lemma.

\begin{lem}[Crossing Alphabets] \label{lem:crossingalphabet}
  Let $\Sigma_1,\ldots,\Sigma_k$ be disjoint alphabets and let $x_i,y_i$ be strings over alphabet $\Sigma_i$. Consider $x := x_1 \ldots x_k$ and $y := y_k \ldots y_1$, i.e., the order in $y$ is reversed. For any parameter $p \in \{ n,m,|\Sigma|,M,d \}$ we have $p(x,y) = \sum_{i=1}^k p(x_i,y_i)$. Moreover, $L(x,y) = \max_{i} L(x_i,y_i)$.
\end{lem}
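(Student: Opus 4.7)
The proof will rest on one central structural observation: \emph{any common subsequence $z$ of $x$ and $y$ uses symbols from only a single alphabet $\Sigma_\ell$.} Indeed, if $z$ contained a symbol $\sigma_a \in \Sigma_i$ followed by a symbol $\sigma_b \in \Sigma_j$ with $i \neq j$, then the layout $x = x_1 \ldots x_k$ would force $i \le j$, while the reversed layout $y = y_k \ldots y_1$ would force $i \ge j$, a contradiction. From this, the formula $L(x,y) = \max_\ell L(x_\ell, y_\ell)$ is immediate: the upper bound follows by restricting any LCS to its (unique) relevant alphabet $\Sigma_\ell$, and the lower bound follows by embedding any $\LCS(x_\ell, y_\ell)$ as a common subsequence of $x$ and $y$.

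For the parameters $n, m, |\Sigma|$, the claim is trivial since lengths add under concatenation and the $\Sigma_\ell$ are disjoint. For $M$, disjointness forces any matching pair $(i,j)$ of $(x,y)$ to satisfy $x[i] = y[j] \in \Sigma_\ell$ for a unique~$\ell$, and then $x[i]$ necessarily sits inside the block $x_\ell$ of $x$ while $y[j]$ sits inside the block $y_\ell$ of $y$. This yields a bijection between the matching pairs of $(x,y)$ and the disjoint union of matching pairs over all $(x_\ell,y_\ell)$, hence $M(x,y) = \sum_\ell M(x_\ell,y_\ell)$.

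For the number of dominant pairs, the plan is to lift the structural observation to prefixes. Fix any pair $(i,j)$ with $x[i] = y[j] \in \Sigma_\ell$, and let $i'$ and $j'$ be the positions of $x[i]$ in $x_\ell$ and of $y[j]$ in $y_\ell$, respectively. The prefix $x[1..i]$ uses only alphabets $\Sigma_1,\dots,\Sigma_\ell$, while $y[1..j]$ uses only alphabets $\Sigma_k,\dots,\Sigma_\ell$, so the \emph{common} alphabet of the two prefixes is exactly $\Sigma_\ell$. Applying the crossing observation to these prefixes shows that any LCS of $x[1..i]$ and $y[1..j]$ uses symbols only from $\Sigma_\ell$, which gives
\[
  L(x[1..i],y[1..j]) \;=\; L(x_\ell[1..i'],\,y_\ell[1..j']).
\]
The same identity holds for $(i{-}1,j)$ and $(i,j{-}1)$ (with $x_\ell[1..0] := \varepsilon$ and $y_\ell[1..0] := \varepsilon$), so the defining conditions $L[i,j] > L[i{-}1,j]$ and $L[i,j] > L[i,j{-}1]$ for $(i,j)$ being a dominant pair of $(x,y)$ are equivalent to the corresponding conditions for $(i',j')$ being a dominant pair of $(x_\ell,y_\ell)$. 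This produces a bijection and thus $d(x,y) = \sum_\ell d(x_\ell, y_\ell)$.

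I expect no serious obstacle: the entire argument is driven by the single observation about crossing alphabets, and the only bookkeeping is the mild boundary check when $i' = 1$ or $j' = 1$ (where the prefix of the relevant block is empty and the corresponding $L$ value is zero, still matching the formula). The classification of matching pairs by their shared alphabet and the reduction of prefix LCS computations to a single block follow essentially from disjointness of the $\Sigma_\ell$'s combined with the crossing property.
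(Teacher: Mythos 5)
Your proof is correct and follows essentially the same approach as the paper: both derive the formula for $L$ from the observation that a common subsequence can only use symbols from one alphabet $\Sigma_\ell$, and both prove the formula for $d$ by extending this observation to prefixes so that $L(x[1..i],y[1..j])$ collapses to the corresponding block-local LCS, giving a bijection between dominant pairs. The only difference is that you spell out the boundary case $i'=1$ or $j'=1$ a bit more explicitly than the paper does, but this is a cosmetic rather than substantive difference.
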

\begin{proof}
  The statement is trivial for the string lengths $n,m$, alphabet size $|\Sigma|$, and number of matching pairs $M$. For the LCS length $L$ we observe that any common subsequence $z$ that matches a symbol in $\Sigma_i$ cannot match any symbols in other alphabets, which yields $L(x,y) \le \max_{i} L(x_i,y_i)$. Since any common subsequence of $x_i,y_i$ is also a common subsequence of $x,y$, we obtain equality. 
  
  Since every dominant pair is also a matching pair, every dominant pair of $x,y$ stems from prefixes $x_1 \ldots x_{j-1} x'$ and $y_k \ldots y_{j+1} y'$, with $x'$ a prefix of $x_{j}$ and $y'$ a prefix of $y_{j}$ for some $j$. Since $L(x_1 \ldots x_{j-1} x', y_k \ldots y_{j+1} y') = L(x',y')$, we obtain that the dominant pairs of $x,y$ of the form $x_1 \ldots x_{j-1} x'$, $y_k \ldots y_{j+1} y'$ are in one-to-one correspondence with the dominant pairs of $x_{j},y_{j}$. Since these dominant pairs of $x,y$ are incomparable, this yields the claim for parameter $d$.
\end{proof}

We make use of the above lemma by invoking the following construction.

\begin{defn} \label{def:crossinggadget}
  Let $\Sigma_1,\dots,\Sigma_t$ be a collection of disjoint two-element alphabets. For any string $z$ over $\{0,1\}$ and $\Sigma_i$, let $z\lift\Sigma_i$ denote the string $z$ lifted to $\Sigma_i$, i.e., we replace the symbols $\{0,1\}$ in $z$ bijectively by $\Sigma_i$. Then for given $x_1,\dots,x_t,y_1,\dots,y_t\in \{0,1\}^*$ we construct
\begin{align*}
\cross^\x(x_1,\dots,x_t) &:= & x_1 & \lift \Sigma_1 \quad & x_2 & \lift \Sigma_2 & \quad & \dots \quad &  x_t & \lift \Sigma_t, \\
\cross^\y(y_1,\dots,y_t) &:= & y_t & \lift \Sigma_t \quad & y_{t-1} & \lift \Sigma_{t-2} & \quad & \dots \quad & y_1 & \lift \Sigma_1. 
\end{align*}
\end{defn}

We adapt the construction from \lemref{core} using the following trick that realizes an "OR" of $t \le \Oh(\Sigma)$ instances, without significantly increasing the parameters $d$ and $M$.

Consider any $n \ge 1$ and target values $p = n^{\alpha_p}$ for $p \in \params$. 
Let $\sA= \{a_1,\dots,a_A\}$, $\sB=\{b_1,\dots,b_B\} \subseteq \{0,1\}^D$ be a given \OV instance with $D=n^{o(1)}$ and where we set $A = \lfloor \frac{d}{\min\{L,\sqrt{d}\}\cdot D}\rfloor$ and $B=\lfloor \frac{\min\{L,\sqrt{d}\}}{D} \rfloor$. 
Note that \UOVH implies that solving such \OV instances takes time $(AB)^{1-o(1)} = n^{\alpha_d - o(1)} = d^{1-o(1)}$. 
Since clearly $A \ge B$, we can partition $\sA$ into $t := \lceil A/B \rceil$ groups $\sA_1,\ldots,\sA_t$ of size $B$ (filling up the last group with all-ones vectors). Using the relation $d \le L^2 |\Sigma|$ (\lemref{dUBs}(ii)), we obtain $t \le \Oh(d/L^2 + 1) \le \Oh(\Sigma)$. 

For each $i=1,\ldots,t$ we construct strings $x_i$ and $y_i$ for the sets $\sA_i$ and $\sB$ using \lemref{core}. Finally, we set $x := \cross^\x(x_1,\dots,x_t)$ and $y := \cross^\y(y_1,\dots,y_t)$. By the Crossing Alphabets Lemma and \lemref{core}.(i) from $L(x,y)$ we can infer whether $\sA,\sB$ has an orthogonal pair of vectors. We claim that $(n,x,y)$ is an instance of $\LCS_\le(\Valpha)$. This shows that any algorithm solving $\LCS_\le(\Valpha)$ in time $\Oh(d^{1-\eps})$ implies an algorithm for our \OV instances with running time $\Oh(d^{1-\eps})$, contradicting \UOVH. Hence, in the current case $\alpha_\delta = \alpha_m$, any algorithm for $\LCS_\le(\Valpha)$ takes time $d^{1-o(1)}$, proving part of \thmref{hardness}.

It remains to show the claim that $(n,x,y)$ is an instance of $\LCS_\le(\Valpha)$. This is similar to the proof for the case $\alpha_\Sigma = 0$, additionally using the Crossing Aphabets Lemma. Specifically, we obtain $m(x,y) \le n(x,y) = |x| = \sum_{i=1}^t |x_i| \le \Oh(t \cdot BD) = \Oh(AD) = \Oh(\max\{d/L, \sqrt{d}\})$, which is at most $\Oh(m) \le \Oh(n)$ using the parameter relations $d \le L m \le m^2$ (\lemref{dUBs}.(i)).
Similarly, we obtain $\delta(x,y) \le \Delta(x,y) \le n(x,y) \le \Oh(m) = \Oh(\delta) \le \Oh(\Delta)$, where the equality holds in the current case $\alpha_m = \alpha_\delta$. 
For $L$ we obtain $L(x,y) = \max_i L(x_i,y_i) \le |y_i| = \Oh(BD) \le \Oh(L)$. 
Since $t \le \Oh(\Sigma)$ we have $|\Sigma(x,y)| \le \Oh(\Sigma)$. Using the parameter relation $d \le M$ we have $d(x,y) \le M(x,y) = \sum_{i=1}^t M(x_i,y_i) \le t\cdot  |x_i| \cdot |y_i| = t \cdot \Oh((BD)^2) = \Oh(AD \cdot BD) = \Oh(d) \le \Oh(M)$. This proves that $(n,x,y)$ belongs to $\LCS_\le(\Valpha)$.

\subsection{Large LCS}
\label{sec:hardnesslargeLCS}

Now assume $\alpha_L = \alpha_m$, i.e., $L = \Theta(m)$. Then the number of deletions in the shorter string might be arbitrary small, i.e., any value $0 < \alpha_\delta \le \alpha_m$ is admissible.  In this case, the construction of~\cite{BringmannK15} is no longer applicable. The new 1vs1/2vs1 gadgets that we design for constructing hard strings for small $\delta$ can be seen as one of our main contributions.

\subsubsection{Hard Core}

The following lemma (which effectively represents an intermediate step in the proof of~\cite{BringmannK15}) yields the basic method to embed sets of vectors into strings $x$ and $y$.

\begin{lem}\label{lem:coreSingleStrings}
Let two sets $\sA= \{a_1,\dots,a_A\}$ and $\sB= \{b_1,\dots,b_B\}$ of vectors in $\{0,1\}^D$ be given. In time $\Oh((A+B)D)$ we can construct strings $x_1,\dots,x_A$ of length $\ell_\x$ and $y_1,\dots,y_B$ of length $\ell_\y$ over alphabet $\{0,1\}$, as well as integers $\rho_1 < \rho_0$,  such that for all $i\in[A], j\in [B]$ we have
\begin{enumerate}
\item[(i)] $\ell_\y \le \ell_\x \le \Oh(D)$,
\item[(ii)] $L(x_i,y_j) = \rho_0$ if $\langle a_i,b_j \rangle = 0$,
\item[(iii)] $L(x_i,y_j) = \rho_1$ if $\langle a_i,b_j \rangle \ne 0$, and
\item[(iv)] $L(x_i,y_j) > \ell_\y/2$.
\end{enumerate}
\end{lem}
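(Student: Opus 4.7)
The plan is to give a construction in the spirit of the normalized vector gadgets underlying the main reduction of~\cite{BringmannK15}. For each vector $v \in \{0,1\}^D$ I will build a gadget of length $\Theta(D)$ over $\{0,1\}$; calling these gadgets $x(v)$ and $y(v)$ and setting $x_i := x(a_i)$, $y_j := y(b_j)$ will give strings whose pairwise LCS encodes orthogonality of the associated vectors.

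First I would define constant-size \emph{coordinate gadgets} $\CG_\x(\alpha)$ and $\CG_\y(\beta)$ for $\alpha,\beta \in \{0,1\}$ with the property that $L(\CG_\x(\alpha),\CG_\y(\beta))$ equals a fixed constant $c_0$ whenever $\alpha \beta = 0$ and drops to $c_0 - 1$ when $\alpha = \beta = 1$. Concatenating the $D$ coordinate gadgets, separated by sufficiently long guard blocks of $0$'s, forces any optimal alignment to match the $k$-th coordinate gadget of $x(a)$ against the $k$-th coordinate gadget of $y(b)$; the LCS of the resulting raw gadget pair is then exactly $D c_0 - \langle a, b \rangle$.

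The main obstacle is that this raw value depends on the number of conflicting coordinates, whereas the lemma demands a single value $\rho_1$ for every non-orthogonal pair. To obtain the desired \emph{OR}-behavior, I would embed the raw gadget for $a$ inside a larger structure $x(a)$ which offers a second, \emph{fallback} alignment against $y(b)$ of value exactly $\rho_1 := D c_0 - 1$, achievable regardless of $a,b$. By surrounding the fallback substring and the raw gadget with sufficiently long runs of $0$'s acting as mutually exclusive separators, the two matching options can be made mutually incompatible within any common subsequence, so that $L(x(a), y(b)) = \max\{D c_0 - \langle a, b \rangle,\, \rho_1\}$, which is $\rho_0 := D c_0$ if the vectors are orthogonal and $\rho_1$ otherwise.

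Finally I would pad each $y_j$ to a common length $\ell_\y$ and each $x_i$ to a common length $\ell_\x \ge \ell_\y$ by appending $0$'s, keeping $\ell_\x = \Theta(D)$ and arranging the gadgets so that this padding introduces no new matches (e.g., by ending the gadgets with a $1$ before the padding begins). Properties~(i)--(iii) follow directly. For property~(iv), observe that $\rho_1 = D c_0 - 1 = \Theta(D) = \Theta(\ell_\y)$; by choosing $c_0$ sufficiently large relative to the gadget constants determining $\ell_\y$, we obtain $L(x_i,y_j) \ge \rho_1 > \ell_\y/2$. The delicate point throughout is the fine-tuning of guard lengths and the fallback structure, ensuring that the $\max$-behavior above holds exactly rather than only up to additive errors.
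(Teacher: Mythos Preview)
Your sketch is essentially the normalized vector gadget construction of~\cite{BringmannK15}, which is also what the paper invokes---but the paper treats it as a black box for properties~(ii) and~(iii) and does \emph{not} argue (i) and~(iv) by tuning constants inside the gadget. Instead, given the gadgets $x_i',y_j'$ of lengths $\ell_\x',\ell_\y'$ from~\cite{BringmannK15}, the paper simply sets
\[
x_i := 1^{\ell_\y'}\,0^{\ell_\y'+1}\,x_i', \qquad y_j := 0^{\ell_\y'+1}\,y_j',
\]
and uses the block-elimination \lemref{zeroblocklcs} together with greedy prefix matching to get $L(x_i,y_j) = L(x_i',y_j') + \ell_\y'+1$. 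Then $\ell_\y = 2\ell_\y'+1$ and the trivial bound $L(x_i,y_j)\ge \ell_\y'+1 > \ell_\y/2$ yields~(iv) immediately, while $\ell_\x = 2\ell_\y'+1+\ell_\x' \ge \ell_\y$ gives~(i). This is entirely modular and independent of what the gadgets look like internally.

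Your route for~(iv)---``choose $c_0$ sufficiently large''---is not quite right as stated: $c_0 = L(\CG_\x(\alpha),\CG_\y(\beta)) \le |\CG_\y(\beta)|$, so enlarging $c_0$ forces the coordinate gadgets (and hence $\ell_\y$) to grow as well, and the guard blocks and fallback structure in $\ell_\y$ must also be tracked. One can make this work by carefully balancing all of these lengths, but it requires reopening the construction rather than treating it as a black box. The paper's prefix trick avoids this entirely and is what you should use here.
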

\begin{proof}
We can construct strings $x'_1,\dots, x'_A$ of length $\ell_\x'=\Oh(D)$ and $y'_1,\dots,y'_B$ of length $\ell_\y'=\Oh(D)$ and integers $\rho_1' < \rho_0'$ as in \cite[Claim III.6]{BringmannK15} (using the so called normalized vector gadget) that satisfy $L(x'_i,y'_j) = \rho'_{0}$ if $\langle a_i,b_j \rangle = 0$ and $L(x'_i,y_j')=\rho_1'$ otherwise. To additionally enforce conditions (i) and (iv), we define $x_i := 1^{\ell_\y'} 0^{\ell_\y' + 1} x_i' $ and $y_j := 0^{\ell_\y' + 1} y_j'$. Since $L(x_i,y_j) =L(x'_i,y'_j) + \ell_\y'+1$ by \lemrefs{zeroblocklcs}{greedy}, we thus obtain conditions (ii) and (iii) for $\rho_0 := \rho_0' + \ell_\y' + 1$  and $\rho_1:=\rho_1' + \ell_\y' + 1$. Since by definition $\ell_\y = 2\ell_\y' + 1$ holds, the first condition follows directly and the trivial bound $L(x_i,y_j) \ge \ell_\y' + 1 > \ell_\y/2 $ shows that the last condition is fulfilled. 
\end{proof}

\paragraph{1vs1/2vs1 gadget.} The aim of the following construction is to embed given strings $y_1,\dots, y_Q$ into a string $y$ and strings $x_1, \dots,x_P$ into  $x$, where $P=\Theta(Q)$,  such that in an LCS each $y_j$ is either aligned with a single string $x_i$ or with several strings $x_i, x_{i+1},\dots,x_{i'}$. In the first case, $|y_j| - L(x_i,y_j)$ characters of $y_j$ are not contained in an LCS of $x$ and $y$, while in the second case $y_j$ can be completely aligned. By choosing $P=2Q-N$ for an arbitrary $1\le N \le Q$, it will turn out that the LCS aligns $N$ strings $y_j$ with a single partner $x_i$, and the remaining $Q-N$ strings $y_j$ with two strings $x_i, x_{i+1}$ each. Thus, only $N$ strings $y_j$ are not completely aligned.

To formalize this intuition, let $P\ge Q$. We call a set $\Lambda = \{ (i_1,j_1), \dots, (i_k, j_k) \}$ with $0 \le k\le Q$ and $1\le i_1 < i_2 < \dots < i_k\le P$ and $1\le j_1 \le j_2 \le \dots \le j_k \le Q$ a \emph{(partial) multi-alignment}. Let $\Lambda(j) = \{ i \mid (i,j) \in \Lambda\}$.  We say that every $j\in [Q]$ with $|\Lambda(j)|=k$ is \emph{$k$-aligned}. We will also refer to a 1-aligned $j\in [Q]$ as being \emph{uniquely aligned to $i$}, where $\Lambda(j) = \{i\}$. Every $j\in[Q]$ with $\Lambda(j)=\emptyset$ is called \emph{unaligned}. Note that each $i\in[P]$ occurs in at most one $(i,j)\in \Lambda$. 
 We denote the set of multi-alignments as $\algnMult_{P,Q}$.

We will also need the following specialization of multi-alignments. We call a multi-alignment $\Lambda\in \algnMult_{P,Q}$ a \emph{(1,2)-alignment}, if each $j$ is either 1-aligned or 2-aligned. Let $\algnTwo_{P,Q}$ denote the set of all (1,2)-alignments.

Given strings $x_1,\dots,x_P$ of length $\ell_\x$ and $y_1,\dots,y_Q$ of length $\ell_\y$, we define the \emph{value} $v(\Lambda)$ of a multi-alignment $\Lambda\in\algnMult_{P,Q}$ as $v(\Lambda) = \sum_{j=1}^Q v_j$ where 
\[v_j := \begin{cases} 
0 & \text{ if } j \text{ is unaligned}, \\
L(x_i,y_j) & \text{ if } j \text{ is uniquely aligned to } i, \\
\ell_\y & \text{ if } j \text{ is } k\text{-aligned for } k\ge 2. 
\end{cases} \]

\begin{lem}\label{lem:2vs1base}
Given strings $x_1,\dots,x_P$ of length $\ell_\x$ and $y_1,\dots,y_Q$ of length $\ell_\y$, construct 
\begin{eqnarray*}
x & := & \guard(x_1) \; \guard(x_2) \; \dots \; \guard(x_P), \\
y & := & \guard(y_1) \; \guard(y_2) \; \dots \; \guard(y_Q), 
\end{eqnarray*} 
where $\guard(w) := 0^{\gamma_1} \; 1^{\gamma_2} \; (01)^{\gamma_3} \; w \; 1^{\gamma_3}$ with $\gamma_3 := \ell_\x + \ell_\y$, $\gamma_2 := 8 \gamma_3$ and $\gamma_1 := 6\gamma_2$. 
Then we have 
\begin{equation}\label{eq:2vs1}
 \max_{\Lambda\in \algnTwo_{P,Q}} v(\Lambda) \le L(x,y) - Q (\gamma_1 + \gamma_2 + 3\gamma_3) \le \max_{\Lambda\in \algnMult_{P,Q}} v(\Lambda). 
\end{equation}
\end{lem}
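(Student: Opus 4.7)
The plan is to prove the two inequalities via dual arguments: the lower bound on $L(x,y)$ is a constructive embedding directed by a given $(1,2)$-alignment, while the upper bound is a structural decomposition of an arbitrary LCS into a multi-alignment.

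For the lower bound, fix $\Lambda \in \algnTwo_{P,Q}$; using $P \ge Q$, extend $\Lambda$ so that every unaligned $j \in [Q]$ is instead assigned a fresh $\guard(x_i)$ in an order-respecting way. Process $j = 1, \ldots, Q$ sequentially and, for each, extract from the assigned window of $x$-guards a contribution of at least $\gamma_1 + \gamma_2 + 3\gamma_3 + v_j$. When $j$ is $1$-aligned to $i$, greedily match the $0^{\gamma_1}$-, $1^{\gamma_2}$-, and $(01)^{\gamma_3}$-blocks of $\guard(y_j)$ against those of $\guard(x_i)$, insert an LCS of $x_i$ and $y_j$, and match the final $1^{\gamma_3}$; this yields $L(\guard(x_i),\guard(y_j)) = \gamma_1 + \gamma_2 + 3\gamma_3 + L(x_i,y_j)$. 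When $j$ is $2$-aligned to $\{i,i'\}$ with $i < i'$, use the three prefix blocks of $\guard(x_i)$, then embed $y_j \in \{0,1\}^{\ell_\y}$ as a subsequence of the $(01)^{\gamma_3}$-block of $\guard(x_{i+1})$ (valid since $\gamma_3 \ge \ell_\y$), and finish against the trailing $1^{\gamma_3}$ of $\guard(x_{i'})$, giving $\gamma_1 + \gamma_2 + 3\gamma_3 + \ell_\y$. Summation produces the required common subsequence.

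For the upper bound, fix any LCS $z$ of $x,y$ and define a multi-alignment $\Lambda \in \algnMult_{P,Q}$ as follows: set $j(i)$ to be the index of the $y$-guard supplying the majority of the characters matched in $\guard(x_i)$ (ties broken to preserve monotonicity), and let $\Lambda(j) = \{i : j(i) = j\}$. Monotonicity of matched positions on both sides makes $\Lambda$ a valid multi-alignment. Decompose $|z| = \sum_j |z_j|$ with $z_j$ the sub-LCS induced by $\guard(y_j)$, and bound each $|z_j|$ by cases on $|\Lambda(j)|$: when $|\Lambda(j)| \le 1$, the contribution is essentially confined to a single $\guard(x_i)$, giving $|z_j| \le \gamma_1 + \gamma_2 + 3\gamma_3 + L(x_i, y_j)$; when $|\Lambda(j)| \ge 2$, the trivial bound $|z_j| \le |\guard(y_j)| = \gamma_1 + \gamma_2 + 3\gamma_3 + \ell_\y$ suffices. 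Either way $|z_j| \le \gamma_1 + \gamma_2 + 3\gamma_3 + v_j$, summing to the claimed inequality.

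The main obstacle will be making the case $|\Lambda(j)| \le 1$ truly rigorous when a single $\guard(x_i)$ is shared between two adjacent $y$-guards $\guard(y_j), \guard(y_{j+1})$, so that one side may have essentially all its matches in $\guard(x_i)$ yet not own it in $\Lambda$. Handling this requires tracking how the characters of $\guard(x_i)$ are split between the two; the calibrated paddings $\gamma_1 = 6\gamma_2$, $\gamma_2 = 8\gamma_3$, $\gamma_3 = \ell_\x + \ell_\y$ are chosen precisely so that any such leakage costs the other side an equal amount of uncovered padding, keeping the per-guard accounting closed. Formalizing this will amount to per-block bookkeeping of the $0^{\gamma_1}$-, $1^{\gamma_2}$-, $(01)^{\gamma_3}$-, and trailing-$1^{\gamma_3}$-blocks on both sides, in the spirit of \lemref{zeroblocklcs}.
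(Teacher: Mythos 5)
Your lower bound argument is correct and matches the paper's: fix a $(1,2)$-alignment, take $z_j := \bigcirc_{i\in\Lambda(j)}\guard(x_i)$, use greedy prefix matching for $1$-aligned $j$, and embed $y_j$ into the middle $(01)^{\gamma_3}$-block of the second guard for $2$-aligned $j$. (Your remark about "extending $\Lambda$ so that every unaligned $j$ is assigned a fresh $\guard(x_i)$" is unnecessary, since a $(1,2)$-alignment by definition has no unaligned $j$; this is harmless.)

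The upper bound, however, has a genuine gap — one you yourself flag as "the main obstacle," but the resolution you sketch ("per-block bookkeeping in the spirit of \lemref{zeroblocklcs}") is not carried out, and your choice of $\Lambda$ makes it hard to carry out. You define $\Lambda$ by the majority of \emph{matched characters} inside $\guard(x_i)$. This is a match-dependent criterion, and it gives no control over the $x$-extent that $\guard(y_j)$ can draw on: even when $|\Lambda(j)|\le 1$ (say $j$ owns only $\guard(x_i)$), a substantial fraction of $\guard(x_{i-1})$ or $\guard(x_{i+1})$ can still be matched to $\guard(y_j)$ — exactly the leakage case you describe. Your claimed per-$j$ bound $|z_j|\le\gamma_1+\gamma_2+3\gamma_3+L(x_i,y_j)$ does not follow from the majority rule alone. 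The paper avoids this by doing something structurally different: it first partitions the string $x$ (not the LCS) into $x=z_1\cdots z_Q$ so that $L(x,y)=\sum_j L(z_j,\guard(y_j))$, then defines $(i,j)\in\Lambda$ iff $z_j$ contains strictly more than half of the $0^{\gamma_1}$-block of $\guard(x_i)$. This purely positional criterion immediately confines $z_j$ to a small window of $x$ — at most $0^{\gamma_1/2}\,H(x_{i-1})\,0^{\gamma_1}\,H(x_i)\,0^{\gamma_1/2}$ in the $1$-aligned case, and a single $0^{\gamma_1/2}\,H(x_i)\,0^{\gamma_1/2}$ in the unaligned case, where $H(w)=1^{\gamma_2}(01)^{\gamma_3}w1^{\gamma_3}$ — after which the estimate $L(z_j,\guard(y_j))\le\gamma_1+\gamma_2+3\gamma_3+v_j$ follows by repeated application of \lemref{zeroblocklcs} to strip the $0^{\gamma_1}$-, $1^{\gamma_2}$-, and $(01)^{\gamma_3}$-blocks. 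You would need to either switch to such an $x$-positional majority rule, or find a different argument that handles the leakage under your match-based rule; the proposal as stated does not close the gap.
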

\begin{proof}
For the first inequality of \eqref{eq:2vs1}, let $\Lambda\in \algnTwo_{P,Q}$. For every $y_j$, we define $z_j = \bigconcat_{i\in \Lambda(j)} \guard(x_i)$. Consider a 1-aligned $j$ and let $i\in [P]$ be the index $j$ is uniquely aligned to. We have that $z_j = \guard(x_i) = 0^{\gamma_1} 1^{\gamma_2} (01)^{\gamma_3} x_i 1^{\gamma_3}$ and hence by \lemref{greedy}, we obtain $L(z_j, \guard(y_j)) = \gamma_1 + \gamma_2 + 3\gamma_3 + L(x_i,y_j) = \gamma_1+\gamma_2+3\gamma_3 + v_j$.
Likewise, consider a 2-aligned $j$ and let $i,i'\in [P]$ be such that $\Lambda(j) = \{i,i'\}$. Then $z_j = \guard(x_i) \guard(x_{i'})$. We compute
\begin{eqnarray*}
L(z_j, \guard(y_j)) & = & \gamma_1 + \gamma_2 + 3\gamma_3 + L(x_i 1^{\gamma_3} 0^{\gamma_1} 1^{\gamma_2} (01)^{\gamma_3} x_{i'}, y_j) \\
 & \ge  & \gamma_1 + \gamma_2 + 3\gamma_3 + L((01)^{\gamma_3}, y_j) \\
 & = & \gamma_1 + \gamma_2 + 3\gamma_3 + \ell_\y = \gamma_1 + \gamma_2 + 3\gamma_3 + v_j,
\end{eqnarray*}
where the first line follows from \lemref{greedy}, the second line from monotonicity and the third line from $\gamma_3 \ge \ell_\y = |y_j|$.
Observe that $z_1 z_2 \dots z_Q$ is a subsequence of $x$. We conclude that
\[ L(x,y) \ge \sum_{j=1}^Q L(z_j, \guard(y_j)) = Q(\gamma_1 + \gamma_2 + 3\gamma_3) + \sum_{j=1}^Q v_j. \]

It remains to prove the second inequality of~\eqref{eq:2vs1}. Write $x=z_1 z_2 \dots z_Q$ such that $L(x,y) = \sum_{j=1}^Q L(z_j,\guard(y_j))$. We define a multi-alignment $\Lambda$ by letting $(i,j) \in \Lambda$ if and only if $z_j$ contains strictly more than half of the $0^{\gamma_1}$-block of $\guard(x_i)$. Note that the thus defined set satisfies the definition of a multi-alignment, since no two $z_j$'s can contain more than half of $\guard(x_i)$'s $0^{\gamma_1}$-block and if $(i,j), (i',j')\in \Lambda$, then $j < j'$ implies $i< i'$. It remains to show that $L(z_j, \guard(y_j)) \le \gamma_1 + \gamma_2 + 3\gamma_3 + v_j$ for all $j$ to prove the claim.

In what follows, we use the shorthand $H(w) := 1^{\gamma_2} (01)^{\gamma_3} w 1^{\gamma_3}$. Note that $G(w) = 0^{\gamma_1} H(w)$.
Consider an unaligned $j\in[Q]$. By definition, $z_j$ is a subsequence of $0^{\gamma_1/2} H(x_i) 0^{\gamma_1/2}$ for some $i\in [P]$. We can thus bound (using \lemref{greedy})
$$
L(z_j,\guard(y_j)) \le  L(0^{\gamma_1/2} H(x_i) 0^{\gamma_1/2}, 0^{\gamma_1} H(y_j)) = \frac{\gamma_1}{2} + L(H(x_i) 0^{\gamma_1/2}, 0^{\gamma_1/2} H(y_j)).
$$
By \lemref{zeroblocklcs} with $\ell := \gamma_1/2 \ge 2\gamma_2 + 6 \gamma_3 + \ell_\x + \ell_\y = |H(x_i)| + |H(y_j)| \ge \occ_0(H(x_i)) + |H(y_j)|$, 
$$
L(H(x_i) 0^{\gamma_1/2}, 0^{\gamma_1/2} H(y_j)) = \gamma_1/2 + L(0^{\occ_0(H(x_i))}, H(y_j)) \le \gamma_1/2 + \occ_0(H(y_j)) \le \gamma_1/2 + \gamma_3 + \ell_\y.
$$
Hence, in total we have $L(z_j, \guard(y_j)) \le \gamma_1 + \gamma_3 + \ell_\y  \le \gamma_1 + \gamma_2 + 3\gamma_3 = \gamma_1 + \gamma_2 + 3\gamma_3 + v_j$, as desired.

Consider a $j\in [Q]$ that is uniquely aligned (under $\Lambda$) to some $i$. Then $z_j$ is a subsequence of $0^{\gamma_1/2} H(x_{i-1}) 0^{\gamma_1} H(x_i) 0^{\gamma_1/2}$. Analogously to above we compute 
\begin{eqnarray*}
L(z_j, \guard(y_j)) & \le&  \frac{\gamma_1}{2} + L(H(x_{i-1}) 0^{\gamma_1} H(x_i) 0^{\gamma_1/2}, 0^{\gamma_1/2} H(y_j)) \\
& = & \gamma_1 + L(0^{\occ_0(H(x_{i-1})) + \gamma_1} H(x_i) 0^{\gamma_1/2}, H(y_j))  \\
& = & \gamma_1 + L(0^{\occ_0(H(x_{i-1})) + \gamma_1} 1^{\gamma_2} (01)^{\gamma_3} x_i 1^{\gamma_3} 0^{\gamma_1/2}, 1^{\gamma_2} (01)^{\gamma_3} y_j 1^{\gamma_3}).
\end{eqnarray*}
Using \lemref{zeroblocklcs} with symbol $0$ replaced by $1$ yields, since $\ell := \gamma_2 \ge 3 \gamma_3 + \ell_\y = |(01)^{\gamma_3} y_j 1^{\gamma_3}|$ and $\occ_1(0^{\occ_0(H(x_{i-1})) + \gamma_1}) = 0$, 
$$
L(z_j, \guard(y_j)) \le \gamma_1 + \gamma_2 + L( (01)^{\gamma_3} x_i 1^{\gamma_3} 0^{\gamma_1/2}, (01)^{\gamma_3} y_j 1^{\gamma_3})
= \gamma_1 + \gamma_2 + 2\gamma_3 + L(x_i 1^{\gamma_3} 0^{\gamma_1/2}, y_j 1^{\gamma_3}).
$$
Similarly, using \lemref{zeroblocklcs} with symbol $0$ replaced by $1$ on the reversed strings yields, since $\ell := \gamma_3 \ge \ell_\y = |y_j|$ and $\occ_1(0^{\gamma_1/2}) = 0$, 
$$ L(x_i 1^{\gamma_3} 0^{\gamma_1/2}, y_j 1^{\gamma_3}) = \gamma_3 + L(x_i,y_j). $$
Hence, we obtain the desired $L(z_j, \guard(y_j)) \le \gamma_1 + \gamma_2 + 3\gamma_3 + L(x_i,y_j) = \gamma_1 + \gamma_2 + 3\gamma_3 + v_j$.

It remains to consider $j\in [Q]$ that is $k$-aligned for $k\ge 2$. In this case, the claim follows from the trivial bound $L(z_j, \guard(y_j)) \le |\guard(y_j)| = \gamma_1 + \gamma_2 + 3\gamma_3 + v_j$.

Thus $z_1,\dots,z_Q$ defines a multi-alignment $\Lambda\in \algnMult_{P,Q}$ with 
\[L(x,y) = \sum_{j=1}^Q L(z_j, \guard(y_j)) \le Q(\gamma_1 + \gamma_2 + 3\gamma_3) + v(\Lambda), \]
proving the second inequality of \eqref{eq:2vs1}.
\end{proof}

We can now show how to embed an \OV instance $\sA= \{a_1,\dots,a_A\}, \sB=\{b_1,\dots,b_B\} \subseteq \{0,1\}^D$ with $A\le B$ into strings $x$ and $y$ of length $\Oh(B\cdot D)$ whose LCS can be obtained by deleting at most $\Oh(A\cdot D)$ symbols from $y$. For this we will without loss of generality assume that $A$ divides $B$ by possibly duplicating some arbitrary element of $\sB$ up to $A-1$ times without affecting the solution of the instance.

The key idea is that for any $P$ and $Q=2P-N$ with $N\in\{0,\dots,P\}$, $\algnTwo_{P,Q}$ is non-empty and each $\Lambda\in \algnTwo_{P,Q}$ has exactly $N$ uniquely aligned $j\in [Q]$ and exactly $P-N$ 2-aligned $j\in[Q]$. At the same time each $\Lambda\in \algnMult_{P,Q}$ leaves at least $N$ indices $j\in [Q]$ either unaligned or uniquely aligned. 

\begin{lem}\label{lem:2vs1gadget}
Let $a_1,\dots,a_A,b_1,\dots b_B\subseteq\{0,1\}^D$ be given with $A\mid B$. Construct the corresponding strings $x_1,\dots,x_A$ of length $\ell_\x$, $y_1,\dots,y_B$ of length $\ell_\y \le \ell_\x \le \Oh(D)$, and integers $\rho_0,\rho_1$ as in \lemref{coreSingleStrings} and define
\begin{eqnarray*}
\tilde{x} & := & (\tilde{x}_1,\dots,\tilde{x}_P) = (\overbrace{x_1,\dots,x_A,x_1,\dots,x_A,\dots,x_1,\dots,x_A}^{2\cdot(B/A)+3 \text{ {\rm groups of size }} A}), \\
\tilde{y} & := & (\tilde{y}_1,\dots,\tilde{y}_Q) = (\underbrace{y_1,\dots,y_1}_{A \text{ {\rm copies of }} y_1}, y_1, \dots, y_B, \underbrace{y_1, \dots, y_1}_{A \text{ {\rm copies of }}y_1}),
\end{eqnarray*}
where $P:= 2B+3A$ and $Q:= B+2A$. 
Then the instance $x := \bigconcat_i \guard(\tilde{x}_i)$, $y:=\bigconcat_j \guard(\tilde{y}_j)$ of~\lemref{2vs1base} (with the corresponding choice of $\gamma_1,\gamma_2$ and $\gamma_3$) satisfies the following properties:
\begin{itemize}
\item[(i)] For every $i\in [A], j \in [B]$, there is a (1,2)-alignment $\Lambda\in \algnTwo_{P,Q}$ such that some $\ell\in[Q]$ is uniquely aligned to some $k\in[P]$ with $\tilde{x}_k = x_i$ and $\tilde{y}_{\ell} = y_j$.
\item[(ii)] We have $L(x,y) \ge Q(\gamma_1+\gamma_2+3\gamma_3) + (A-1)\rho_1 +\rho_0 + (Q-A)\ell_\y$ if and only if there are $i\in [A], j\in [B]$ with $\langle a_i,b_j \rangle = 0$.
\item[(iii)] We have $|y| \le |x| \le \Oh(B\cdot D)$ and $\delta(x,y) = \Oh(A\cdot D)$.
\end{itemize}
\end{lem}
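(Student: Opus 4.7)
All three parts rest on Lemma~\ref{lem:2vs1base}. Part~(iii) is arithmetic: each $\guard(w)$ has length $\gamma_1+\gamma_2+3\gamma_3+|w| = O(D)$, and with $P = O(B)$, $Q = O(B)$, $\ell_\x, \ell_\y = O(D)$ this yields $|x|, |y| = O(BD)$; the inequality $|y| \le |x|$ follows from $P > Q$ and $\ell_\x \ge \ell_\y$. For $\delta$, I lower-bound $L(x,y)$ via Lemma~\ref{lem:2vs1base} using any (1,2)-alignment, which has exactly $A$ uniquely aligned and $B+A$ 2-aligned indices since $2Q - P = A$; combined with $L(x_i,y_j) \ge \rho_1$ this gives $L(x,y) \ge Q(\gamma_1+\gamma_2+3\gamma_3) + A\rho_1 + (B+A)\ell_\y$, hence $\delta(x,y) \le A(\ell_\y - \rho_1) = O(AD)$.

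For~(i), fix $\ell_0 := A+j$ so that $\tilde y_{\ell_0} = y_j$ and seek a (1,2)-alignment that uniquely aligns $\ell_0$ to some $k_0$ with $\tilde x_{k_0} = x_i$. Decomposing the $\ell_0 - 1$ preceding indices as $a$ uniquely and $b = \ell_0 - 1 - a$ doubly aligned fills positions $1,\ldots, a + 2b = k_0 - 1$, so $a = 2A + 2j - 1 - k_0$. Since a (1,2)-alignment must have exactly $A$ uniquely aligned indices in total, the constraint $0 \le a \le A - 1$ restricts $k_0$ to the length-$A$ window $[A+2j,\ 2A+2j-1]$. Because the positions $k$ with $\tilde x_k = x_i$ form an arithmetic progression of common difference $A$, this window contains exactly one valid $k_0$, and $k_0 \le 2A+2B-1 < P$ keeps it in range. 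Completing the alignment after $\ell_0$ with the remaining $A-1-a$ uniquely and $B + a - j + 1$ doubly aligned indices uses precisely the positions $k_0+1,\ldots,P$.

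For~(ii), Lemma~\ref{lem:2vs1base} sandwiches $L(x,y) - Q(\gamma_1+\gamma_2+3\gamma_3)$ between $\max_{\algnTwo_{P,Q}} v(\Lambda)$ and $\max_{\algnMult_{P,Q}} v(\Lambda)$. If $\langle a_i, b_j\rangle = 0$ for some $i,j$, apply~(i) to obtain a (1,2)-alignment whose uniquely aligned pair $(k_0, \ell_0)$ contributes $\rho_0$ rather than $\rho_1$, giving $v(\Lambda) \ge \rho_0 + (A-1)\rho_1 + (B+A)\ell_\y$. For the converse, suppose no orthogonal pair exists and consider any $\Lambda \in \algnMult_{P,Q}$. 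Replacing any $k$-aligned index ($k \ge 3$) by a 2-aligned one never decreases $v$, so I may assume $|\Lambda(\ell)| \in \{0, 1, 2\}$; letting $N_1$ and $N_2$ denote the counts, the problem becomes maximizing $N_1 \rho_1 + N_2 \ell_\y$ subject to $N_1 + N_2 \le Q$, $N_1 + 2N_2 \le P$, and $N_1, N_2 \ge 0$. Using $\rho_1 > \ell_\y/2$ from Lemma~\ref{lem:coreSingleStrings}(iv) together with $\ell_\y > \rho_0 > \rho_1$, a vertex check of this linear program shows the maximum is attained at $N_1 = A$, $N_2 = B+A$, yielding $v(\Lambda) \le A\rho_1 + (B+A)\ell_\y < \rho_0 + (A-1)\rho_1 + (B+A)\ell_\y$.

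The main obstacle is the combinatorial argument in~(i): the parameters $P = 2B + 3A$, $Q = B + 2A$ and the $A$-fold padding of $y_1$ at each end of $\tilde y$ are calibrated so that for every $(i,j)$ the crucial window $[A+2j,\ 2A+2j-1]$ lies inside $[1,P]$ and has length exactly $A$, matching the period of the $x_i$'s inside $\tilde x$. The analysis in~(ii) is routine once the $k \ge 3$ reduction is justified and the linear program's constraints are written down.
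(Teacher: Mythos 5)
Your proof is correct and follows essentially the same route as the paper's: part (i) is the paper's predecessor/successor count rewritten in terms of the window $[A+2j,\,2A+2j-1]$, part (iii) uses the same (1,2)-alignment lower bound for $L(x,y)$, and part (ii) recasts the paper's argument (which parametrizes by $\lambda$, the number of $\ge 2$-aligned indices, and maximizes a min of two linear functions) as an explicit two-variable LP over $(N_1,N_2)$ — a clean but equivalent reformulation, using the same key inequality $\ell_\y/2 < \rho_1 \le \ell_\y$ from Lemma~\ref{lem:coreSingleStrings}(iv).
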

\begin{proof}
For (i), we let $j\in [B]$ and note that $y_j = \tilde{y}_\ell$ for $\ell := A + j$. We will show that for every $\lambda \in \{0,\dots, A-1\}$, there is a (1,2)-alignment $\Lambda$ with $(k,\ell) \in \algnTwo_{P,Q}$ where $k:=2(A+j)-1-\lambda$. By the cyclic structure of $\tilde{x}$, $(\tilde{x}_k)_{0 \le \lambda < A}$ cycles through all values $x_1,\dots,x_A$. Hence, for some choice of~$\lambda$ the desired $\tilde{x}_{k} = x_i$ follows, yielding the claim.

To see that for any $\lambda \in \{0,\dots,A-1\}$, some $\Lambda\in \algnTwo_{P,Q}$ with $(k,\ell)\in \Lambda$ exists, observe that there are $\ell-1 = A+j-1$ predecessors of $\tilde{y}_\ell$ and $k-1 = 2(A+j-1)-\lambda = 2(\ell - 1) - \lambda$ predecessors of $\tilde{x}_k$. Hence there is a (1,2)-alignment $\Lambda_1\in\algnTwo_{k-1,\ell-1}$ (leaving $\lambda$ indices $j\in [Q]$ uniquely aligned). Similarly, observe that there are $Q-\ell = B+A-j$ successors of $\tilde{y}_\ell$ and $P-k = 2B+A-2j + \lambda+1 = 2(Q-\ell)-(A-\lambda-1)$ successors of $\tilde{x}_k$, hence there is a (1,2)-alignment $\Lambda_2\in \algnTwo_{P-k, Q-\ell}$ (which leaves $A-(\lambda+1)$ indices $j$ uniquely aligned). By canonically composing $\Lambda_1$, $(k,\ell)$ and $\Lambda_2$ we can thus obtain $\Lambda\in \algnTwo_{P,Q}$ with $(k,\ell)\in \Lambda$.

For (ii), assume that there are $i\in [A],j\in [B]$ satisfying $\langle a_i,b_j \rangle = 0$. By (i), there is some $\Lambda\in \algnTwo_{P,Q}$ where some $\ell\in[Q]$ is uniquely aligned to some $k\in[P]$ such that $\tilde{x}_k = x_i$ and $\tilde{y}_\ell = y_j$. To apply \lemref{2vs1base}, observe that $\Lambda$ has $Q-A$ 2-aligned $j\in [Q]$, which contribute value $\ell_\y$ to $v(\Lambda)$, and $A$ uniquely aligned $j\in [Q]$, in particular, $\ell$ is uniquely aligned to $k$. Since any $\tilde{x}_i$ corresponds to some $x_{i'}$, every $\tilde{y}_j$ corresponds to some $y_{j'}$ and $L(x_{i'},y_{j'}) \in \{\rho_0,\rho_1\}$, we conclude that $\ell$ contributes $\rho_0$ to $v(\Lambda)$ and the other $A-1$ uniquely aligned $j$ contribute at least $\rho_1$. Hence by the lower bound in \lemref{2vs1base}, we obtain $L(x,y) \ge Q(\gamma_1+\gamma_2+3\gamma_3) + v(\Lambda)$, where $v(\Lambda) \ge (A-1)\rho_1 + \rho_0 + (Q-A)\ell_\y$.

Assume now that no $i\in [A],j\in[B]$ satisfy $\langle a_i,b_j \rangle = 0$, and let $\Lambda\in \algnMult_{P,Q}$. Then any $j\in[Q]$ uniquely aligned to some $i\in[P]$ contributes $L(\tilde{x}_i,\tilde{y}_j) = \rho_1$ to $v(\Lambda)$. Let $\lambda$ be the number of $j\in[Q]$ that are $k$-aligned for any $k\ge 2$, each contributing $\ell_\y$ to $v(\Lambda)$. Then there are at most $\min\{P-2\lambda, Q-\lambda\}$ uniquely aligned $j\in[Q]$ (since every $k$-aligned $j$ blocks at least two $i\in[P]$ for other alignments), and the remaining $j\in[Q]$ are unaligned, with no contribution to $v(\Lambda)$. Hence $v(\Lambda) \le \lambda \ell_\y + \min\{P-2\lambda,Q-\lambda\} \cdot \rho_1 = \min\{P\rho_1 + (\ell_\y - 2\rho_1)\lambda,Q\rho_1 + (\ell_\y - \rho_1)\lambda\}$. Note that $\ell_\y / 2< \rho_1 \le \ell_\y$ (by \lemref{coreSingleStrings}(iv)), hence this minimum of linear functions with leading coefficients $\ell_\y - 2\rho_1< 0$ and $\ell_\y - \rho_1 \ge 0$ is maximized when both have the same value, i.e., when $\lambda = P-Q = Q-A$. Thus, $v(\Lambda) \le (Q-A)\ell_\y + A\rho_1 < (Q-A)\ell_\y + (A-1)\rho_1 + \rho_0$. Thus by the upper bound of \lemref{2vs1base} we conclude that $L(x,y) < Q(\gamma_1 + \gamma_2 + 3\gamma_3) + (Q-A)\ell_\y + (A-1)\rho_1 + \rho_0$.

For (iii), since $P \ge Q$ and $\ell_\x \ge \ell_\y$ we have $|x| \ge |y|$, and by $P \le \Oh(A)$ and $|G(\tilde x_i)| \le \Oh(\ell_\x) \le \Oh(D)$ we obtain $|x| \le \Oh(AD)$.
Note that for any (1,2)-alignment $\Lambda\in \algnTwo_{P,Q}$, we have 
$$v(\Lambda) = Q\cdot \ell_\y - \sum_{j \text{ uniquely aligned to } i} (\ell_\y - L(x_i,y_j)) = Q\cdot \ell_\y - \Oh(A \cdot D),$$ 
since by $P = 2Q-A$ the number of uniquely aligned indices $j$ in $\Lambda$ equals $A$, and $\ell_\y = \Oh(D)$. Hence by \lemref{2vs1base}, $L(x,y) \ge Q(\gamma_1+\gamma_2 +3\gamma_3) + Q\ell_\y - \Oh(A \cdot D) = |y| - \Oh(A\cdot D)$, implying $\delta(x,y) = |y| - L(x,y) \le \Oh(A \cdot D)$. 
\end{proof}

\subsubsection{Constant Alphabet}

First assume $\alpha_\Sigma = 0$ and thus $|\Sigma| = \Oh(1)$.
Consider any $n \ge 1$ and target values $p = n^{\alpha_p}$ for $p \in \params$. 
We write $\lfloor x \rfloor_2$ for the largest power of 2 less than or equal to $x$. 
Let $\sA= \{a_1,\dots,a_A\}$, $\sB=\{b_1,\dots,b_B\} \subseteq \{0,1\}^D$ be a given \OV instance with $D=n^{o(1)}$ and where we set 
$$A := \Big\lfloor \frac 1D \min\Big\{ \delta, \frac{d}{\min\{m,\Delta\}} \Big\} \Big\rfloor_2 \quad \text{and} \quad B := \Big\lfloor \frac 1D \min\{m,\Delta\} \Big\rfloor_2. $$
By $\alpha_m,\alpha_\Delta \le 1$ and (\ref{eq:lowerassumptions}) we obtain $A \ge n^{\min\{\alpha_{L}, \alpha_d - 1\} - o(1)} = n^{\Omega(1)}$ and $B=n^{\min\{\alpha_m,\alpha_\Delta\}-o(1)} = n^{\Omega(1)}$. Also note that \UOVH implies that solving such \OV instances takes time $(AB)^{1-o(1)} = \min\{d, \delta m, \delta \Delta\}^{1-o(1)}$, which is the desired bound. 
We claim that $A \le B$, implying $A \mid B$. Indeed, if $\delta \le d/\min\{m,\Delta\}$ this follows from the simple parameter relations $\delta \le m$ and $\delta \le \Delta$. Otherwise, if $\delta > d/\min\{m,\Delta\}$, then in particular $\delta \Delta > d$, implying $d < \Delta^2$. Together with the parameter relations $d \le Lm \le m^2$ we indeed obtain $d / \min\{m,\Delta\} \le \min\{m,\Delta\}$.

Thus, we may construct strings $x, y$ as in \lemref{2vs1gadget}. We finish the construction by invoking the Dominant Pair Reduction (\lemref{dreduction}) to obtain strings $x' := 0^k 1^k y 1^\ell 0^k 1^k x$ and $y' := 1^\ell 0^k 1^k y$ with $k := 2|y| + |x| + 1$ and $\ell := \Theta(A \cdot D)$ with sufficiently large hidden constant, so that $\ell > \delta(x, y)$. Then from the LCS length $L(x',y')$ we can infer whether $\sA,\sB$ has an orthogonal pair of vectors by $L(x',y') = L(x, y) + \ell + 2k$ and \lemref{2vs1gadget}(ii). Moreover, this reduction runs in time $\Oh(|x'|+|y'|) = \Oh(|x| + |y|) = \Oh(B D) \le \Oh(\min\{d,\delta m, \delta \Delta\}^{1-\eps})$ for sufficiently small $\eps > 0$ (since $\alpha_\delta > 0$ and $\alpha_d > 1 \ge \alpha_m, \alpha_\Delta$ by (\ref{eq:lowerassumptions}) and \tabref{paramChoiceRestr}). We claim that $(n,x',y')$ is an instance of $\LCS_\le(\Valpha)$. This shows that any algorithm solving $\LCS_\le(\Valpha)$ in time $\Oh(\min\{d,\delta m, \delta \Delta\}^{1-\eps})$ implies an algorithm for our \OV instances with running time $\Oh(\min\{d,\delta m, \delta \Delta\}^{1-\eps})$, contradicting \UOVH. Hence, in the current case $\alpha_L = \alpha_m$ and $\alpha_\Sigma = 0$, any algorithm for $\LCS_\le(\Valpha)$ takes time $\min\{d,\delta m, \delta \Delta\}^{1-o(1)}$, proving part of \thmref{hardness}.

It remains to show the claim that $(n,x',y')$ is an instance of $\LCS_\le(\Valpha)$. 
From \lemrefs{dreduction}{2vs1gadget}(iii) we obtain $L(x',y') = \ell + 2k + L(x, y) = \ell + 2k + |y| - \delta(x, y) \ge |y'| - \Oh(AD)$, and thus $\delta(x',y') \le \Oh(AD) \le \Oh(\delta)$.
Using the parameter relations $L \le m \le n$, \lemref{2vs1gadget}(iii), and the definition of $B$, we have $L(x',y') \le m(x',y') \le n(x',y') = |x'| \le \Oh(BD) = \Oh(\min\{m,\Delta\})$, which together with the relation $m \le n$ and the assumption $\alpha_L = \alpha_m$ shows that $p(x',y') \le \Oh(p) = \Oh(n^{\alpha_p})$ for $p \in \{L,m,n\}$. Similarly, we obtain $\Delta(x',y') \le n(x',y') \le \Oh(\min\{m,\Delta\}) \le \Oh(\Delta)$. Since $x',y'$ use the binary alphabet $\{0,1\}$, we have $|\Sigma(x',y')| = 2 \le \Oh(n^{\alpha_\Sigma})$. For the number of matching pairs we have $M(x',y') \le n(x',y')^2 = \Oh((BD)^2) = \Oh(L^2)$. Since we are in the case $\alpha_\Sigma = 0$, from the parameter relation $M \ge L^2 / |\Sigma|$ (\lemref{Mbounds}(i)) we obtain $L^2 \le \Oh(M)$ and thus also $M(x',y')$ is sufficiently small. Finally, we use \lemref{dreduction} to bound $d(x',y') \le \Oh(\ell \cdot |y|) \le \Oh(AD \cdot BD)$, which by definition of $A,B$ is $\Oh(d)$. This proves that $(n,x',y')$ belongs to $\LCS_\le(\Valpha)$. 

We remark that our proof also yields the following, which we will use for small alphabets.

\begin{lem} \label{lem:largeLCShardnessConstantAlphabet}
  Let $\Valpha$ be a parameter setting satisfying \tabref{paramChoiceRestr} with $\alpha_\Sigma = 0$ and $\alpha_L = \alpha_m$. There is a constant $\gamma \ge 1$ such that any algorithm for $\LCS_\le^\gamma(\Valpha,\{0,1\})$ takes time $\min\{d,\delta m, \delta \Delta\}^{1-o(1)}$, unless \OVH fails. This holds even restricted to instances $(n,x,y)$ of $\LCS_\le^\gamma(\Valpha,\{0,1\})$ with $|y| \le |x| \le \gamma \cdot \min\{ n^{\alpha_m}, n^{\alpha_\Delta} \}$.
\end{lem}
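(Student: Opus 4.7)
The plan is to observe that \lemref{largeLCShardnessConstantAlphabet} follows by a direct inspection of the reduction already constructed for the large-LCS constant-alphabet case in \secref{hardnesslargeLCS}. The target lower bound $\min\{d,\delta m,\delta\Delta\}^{1-o(1)}$ and the restriction to parameter settings with $\alpha_L = \alpha_m$, $\alpha_\Sigma = 0$ match the setting of that reduction verbatim, so I only need to extract the two additional properties: the strings use alphabet $\{0,1\}$, and they satisfy the sharper length bound $|y| \le |x| \le \gamma \cdot \min\{n^{\alpha_m}, n^{\alpha_\Delta}\}$.

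Concretely, I would reuse the reduction from an \UOVH-hard instance $\sA,\sB \subseteq \{0,1\}^D$ with $|\sA| = A$, $|\sB| = B$ where $A = \lfloor\min\{\delta, d/\min\{m,\Delta\}\}/D\rfloor_2$ and $B = \lfloor\min\{m,\Delta\}/D\rfloor_2$, producing strings $x,y$ via \lemref{2vs1gadget} and then applying the Dominant Pair Reduction (\lemref{dreduction}) with $\ell = \Theta(AD)$ to obtain $x',y'$. The proof of the main case already established: (a) $(n,x',y')$ is an instance of $\LCS_\le^\gamma(\Valpha,\{0,1\})$ for some constant $\gamma \ge 1$; (b) a subroutine computing $L(x',y')$ decides the \OV instance; and (c) the total time of the reduction is $\Oh(BD) = n^{o(1)} \cdot \min\{n^{\alpha_m},n^{\alpha_\Delta}\}$.

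For the two extra properties: the alphabet is $\{0,1\}$ by construction, since both the \lemref{2vs1gadget} gadgets and the Dominant Pair Reduction only introduce symbols $0$ and $1$. The length bound follows because \lemref{2vs1gadget}(iii) gives $|y| \le |x| \le \Oh(BD)$, the Dominant Pair Reduction increases the sizes only by $\Oh(|x|+|y|+\ell) = \Oh(BD)$ (since $\ell = \Oh(AD) = \Oh(BD)$), and $BD \le \gamma \cdot \min\{n^{\alpha_m}, n^{\alpha_\Delta}\}$ by the choice of $B$ and $D = n^{o(1)}$, possibly after enlarging $\gamma$. The inequality $|y'| \le |x'|$ is built into the construction of $x',y'$ in \lemref{dreduction}, which prepends the same padding to both strings and appends $x$ to $x'$ while keeping $y'$ shorter.

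I do not expect a genuine obstacle here: the statement is essentially a bookkeeping remark that records precisely which side conditions the main large-LCS reduction happens to deliver. The only mild care needed is to confirm that the constants absorbed into $\gamma$ in the hidden $\Oh(\cdot)$ are uniform over all parameter settings $\Valpha$ in the current regime, which is immediate since the construction only depends on $\Valpha$ through the numerical targets~$A,B$ and the padding lengths, all of which scale polynomially with the indicated exponents.
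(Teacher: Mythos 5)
Your proposal is correct and follows the paper's own route: the paper explicitly presents this lemma as a remark extracted from the constant-alphabet, large-LCS reduction of that section (choose $A,B$ as you state, build $x,y$ via the 1vs1/2vs1 gadget, apply the Dominant Pair Reduction variant for binary alphabets, and then observe that the already-established bounds $|x'| \le \Oh(BD) = \Oh(\min\{n^{\alpha_m}, n^{\alpha_\Delta}\})$ and $\Sigma(x',y') = \{0,1\}$ give exactly the extra side conditions). One small imprecision: the binary Dominant Pair Reduction sets $x' = 0^k 1^k y \, 1^\ell 0^k 1^k \, x$ and $y' = 1^\ell 0^k 1^k \, y$, so the two strings do not share a common prefix as your phrasing suggests; the inequality $|y'| \le |x'|$ instead follows directly from $|x'| - |y'| = 2k + |x| > 0$.
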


\subsubsection{Superconstant Alphabet}

The crucial step in extending our construction to larger alphabets is to adapt the 1vs1/2vs1 gadget such that the strings use each symbol in the alphabet $\Sigma$ roughly evenly, thus reducing the number of matching pairs by a factor $|\Sigma|$. 

Recall that given a 2-element alphabet $\Sigma'$ and a string $z$ over $\{0,1\}$, we let $z \lift \Sigma'$ denote the string $z$ lifted to alphabet $\Sigma'$ by bijectively replacing $\{0,1\}$ with $\Sigma'$.

\begin{lem}\label{lem:2vs1Largebase}
Let $P=2B+3A$ and $Q=B+2A$ for some $A \mid B$. Given strings $x_1,\dots,x_P$ of length $\ell_\x$ and $y_1,\dots,y_Q$ of length $\ell_\y$, we define, as in \lemref{2vs1base}, $\guard(w) := 0^{\gamma_1} \; 1^{\gamma_2} \; (01)^{\gamma_3} \; w \; 1^{\gamma_3}$ with $\gamma_3 := \ell_\x + \ell_\y$, $\gamma_2 := 8 \gamma_3$ and $\gamma_1 := 6\gamma_2$.
Let $\Sigma_1, \dots, \Sigma_t$ be disjoint alphabets of size 2 with $Q/t \ge A/2 + 1$. We define  
\begin{eqnarray*}
x & := & H(x_1) \; H(x_2) \; \dots \; H(x_P), \\
y & := & \guard(y_1)\lift \Sigma_{f(1)} \quad \guard(y_2)\lift \Sigma_{f(2)} \quad \dots \quad \guard(y_Q) \lift \Sigma_{f(Q)}, 
\end{eqnarray*} 
where $f(j) = \lceil \frac{j}{Q}\cdot t \rceil$ and
\begin{equation*}
H(x_i) := \begin{cases}
\guard(x_i) \lift \Sigma_{k+1} \; \guard(x_i) \lift \Sigma_k & \text{ if } \bigcup_{j= \lceil i/2 \rceil}^{\lfloor (i+A)/2 \rfloor} \{f(j)\} = \{k,k+1\} \\
\guard(x_i) \lift \Sigma_k & \text{ if } \bigcup_{j= \lceil i/2 \rceil}^{\lfloor (i+A)/2 \rfloor} \{f(j)\} = \{k\} 
\end{cases}
\end{equation*}

Then we have 
\begin{equation}\label{eq:2vs1large}
 \max_{\Lambda\in \algnTwo_{P,Q}} v(\Lambda) \le L(x,y) - Q (\gamma_1 + \gamma_2 + 3\gamma_3) \le \max_{\Lambda\in \algnMult_{P,Q}} v(\Lambda). 
\end{equation}
\end{lem}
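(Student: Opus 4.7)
The plan is to follow the proof structure of \lemref{2vs1base} very closely, with extra care to track the lifted alphabets. The central structural fact I would first establish is: whenever $(i,j)$ belongs to a $(1,2)$-alignment $\Lambda \in \algnTwo_{P,Q}$, the index $j$ lies in the window $W_i := \{\lceil i/2 \rceil, \dots, \lfloor (i+A)/2 \rfloor\}$ used to define $H(x_i)$. Indeed, the $j-1$ predecessors of $y_j$ each occupy $1$ or $2$ positions of $x$, pinning $i$ to $[j, 2j-1]$, while $P = 2Q - A$ symmetrically yields $i \in [2j-A, j+B+A]$. Monotonicity of $f$ together with the hypothesis $Q/t \ge A/2 + 1 \ge |W_i|$ forces $f(W_i)$ to contain at most two values, which must be consecutive; consequently $H(x_i)$ contains a lifted copy $\guard(x_i) \lift \Sigma_{f(j)}$ whenever $j \in W_i$.

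For the first inequality of~\eqref{eq:2vs1large}, I would mimic the realization scheme of \lemref{2vs1base}: for each $(i,j) \in \Lambda$, replace the role of $\guard(x_i)$ by its lifted copy $\guard(x_i) \lift \Sigma_{f(j)}$ inside $H(x_i)$, which exists by the structural fact. For a uniquely aligned $j \to i$ this single lifted copy is enough to match $\guard(y_j) \lift \Sigma_{f(j)}$ for a contribution of $\gamma_1 + \gamma_2 + 3\gamma_3 + L(x_i, y_j)$. For a $2$-aligned $j$ to $(i, i+1)$, both $H(x_i)$ and $H(x_{i+1})$ carry $\Sigma_{f(j)}$-copies, and since $H(x_i)$ wholly precedes $H(x_{i+1})$ in $x$ these copies appear in the correct left-to-right order; the base case's matching of $\guard(y_j)$ inside $\guard(x_i)\guard(x_{i+1})$ then transcribes verbatim to give the contribution $\gamma_1 + \gamma_2 + 3\gamma_3 + \ell_\y$.

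For the second inequality I would again follow \lemref{2vs1base}: partition $x = z_1 z_2 \cdots z_Q$ optimally so that $L(x,y) = \sum_j L(z_j, \guard(y_j) \lift \Sigma_{f(j)})$, and declare $(i,j) \in \Lambda$ iff $z_j$ covers strictly more than half of the $\Sigma_{f(j)}$-lifted $0^{\gamma_1}$-block of $\guard(x_i)$ inside $H(x_i)$ (well defined whenever such a copy exists). The reversed inner ordering $\Sigma_{k+1}, \Sigma_k$ of $H(x_i)$ is precisely what ensures $\Lambda$ is a valid multi-alignment: if we had $(i,j), (i,j') \in \Lambda$ with $j < j'$, then $f(j) = k < k+1 = f(j')$, forcing $z_j$ to cover part of the later $\Sigma_k$-copy while $z_{j'}$ would need to cover part of the earlier $\Sigma_{k+1}$-copy; but $z_{j'}$ starts in $x$ only after $z_j$ ends, contradicting the covering requirement. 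The per-$j$ estimates $L(z_j, \guard(y_j) \lift \Sigma_{f(j)}) \le \gamma_1 + \gamma_2 + 3\gamma_3 + v_j$ then follow from the same \lemref{zeroblocklcs}-based arguments as in \lemref{2vs1base}, applied to the $\Sigma_{f(j)}$-subsequence of $z_j$, which is a contiguous substring of the $\Sigma_{f(j)}$-restriction of $x$.

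The main obstacle I anticipate is the delicate interplay between the reversed inner ordering of $H(x_i)$ and the monotone outer ordering of $f$: this ordering must be simultaneously compatible with realizing every $(1,2)$-alignment in the lower bound while also preventing a single $H(x_i)$ from being used for two distinct $y_j$'s in different alphabets in the upper bound. Once these compatibility conditions are verified carefully at the alphabet transitions (in particular when $H(x_i)$ genuinely carries two copies), the remaining calculations should transcribe essentially verbatim from the proof of \lemref{2vs1base}.
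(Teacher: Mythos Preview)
Your proposal is correct, and the first inequality matches the paper's argument closely: both observe that under any $\Lambda\in\algnTwo_{P,Q}$ each $i\in\Lambda(j)$ satisfies $j\in W_i$, so $H(x_i)$ carries a $\Sigma_{f(j)}$-copy of $\guard(x_i)$, and the lower bound then follows verbatim from the $\{0,1\}$ case.

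For the second inequality, however, the paper takes a different and shorter route than you do. Instead of reconstructing a multi-alignment directly in the lifted setting and reproving the per-$j$ estimates, the paper simply projects: given an optimal partition $x=z'_1\cdots z'_Q$, it deletes from each $z'_j$ all symbols outside $\Sigma_{f(j)}$ and relabels to $\{0,1\}$, obtaining strings $z_j$ with $L(z'_j,\guard(y_j)\lift\Sigma_{f(j)})=L(z_j,\guard(y_j))$. The entire content of the upper bound then reduces to the single claim that $z_1\cdots z_Q$ is a subsequence of $x_{\{0,1\}}:=\guard(x_1)\cdots\guard(x_P)$, after which \lemref{2vs1base} applies directly to $(x_{\{0,1\}},y_{\{0,1\}})$. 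That claim is exactly where the reversed inner ordering $\Sigma_{k+1},\Sigma_k$ in $H(x_i)$ is used: if both copies $w_{k+1},w_k$ of a doubled $H(x_i)$ survived into $z$, they would be used by some $z'_j$ with $f(j)=k$ and some $z'_{j'}$ with $f(j')=k+1$, forcing $j<j'$ while $w_{k+1}$ precedes $w_k$ in $x$, a contradiction. So each $H(x_i)$ contributes at most one $\guard(x_i)$ to $z$, in the correct order. This bypasses your anticipated ``main obstacle'' entirely: there is no need to verify that your directly constructed $\Lambda$ is a valid multi-alignment, nor to redo the \lemref{zeroblocklcs}-based case analysis. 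Your approach is a legitimate alternative, but the paper's reduction buys you the upper bound of \lemref{2vs1base} as a black box.
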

\begin{proof}
Note that $H(\cdot)$ is well-defined, since $f(\cdot)$ maps $\{1,\dots,Q\}$ to constant-valued intervals of length at least $Q/t -1 \ge A/2$, as $f(j) = k$ if and only if $j \in \big( \frac{Qk}t - \frac Qt, \frac{Qk}t\big]$, containing at least $Q/t-1$ integers. Hence for every $i$, the $\le A/2$ values $f(\lceil i/2 \rceil),\dots, f(\lfloor (i+A)/2 \rfloor)$ can touch at most 2 different constant-valued intervals.

The proof of \eqref{eq:2vs1large} is based on the proof of \lemref{2vs1base} (the analogous lemma for alphabet $\Sigma = \{0,1\}$). For the first inequality of \eqref{eq:2vs1large}, let $\Lambda\in \algn_{P,Q}^{1,2}$ and define for every $j$ the substring $z'_j = \bigconcat_{i\in \Lambda(j)} H(x_i)$. Note that under $\Lambda$, each $\Lambda(j)$ consists of one or two elements from $\{2j - A, \dots, 2j\}$, since there are at most $2Q-P = A$ uniquely aligned $j$. In other words, for any $i \in \Lambda(j)$ we have $j \in \{ \lceil i/2 \rceil, \ldots, \lfloor (i+A)/2 \rfloor \}$. Thus, by definition each $H(x_i)$ for $i\in \Lambda(j)$ contains $\guard(x_i)\lift \Sigma_{f(j)}$ as a substring and hence $z'_j$ contains $\bigconcat_{i\in \Lambda(j)} \guard(x_i)\lift \Sigma_{f(j)}$ as a subsequence. This proves 
\[L\big(z'_j,\guard(y_j) \lift \Sigma_{f(j)} \big) \ge L\Big(\bigconcat_{i\in \Lambda(j)} \guard(x_i) \lift \Sigma_{f(j)}, \guard(y_j) \lift \Sigma_{f(j)}\Big) = L\Big(\bigconcat_{i\in \Lambda(j)} \guard(x_i), \guard(y_j)\Big),\]
which reduces the proof to the case of $\Sigma = \{0,1\}$ -- note that the last term is equal to $L(z_j,\guard(y_j))$ in the proof of the same inequality of \lemref{2vs1base} and thus the remainder follows verbatim.

It remains to show the second inequality of \eqref{eq:2vs1large}. Essentially as in the proof of \lemref{2vs1base}, we write $x=z'_1 z'_2 \dots z'_Q$ with $L(x,y) = \sum_{j=1}^Q L(z'_j, \guard(y_j)\lift \Sigma_{f(j)})$. 
For every $z'_j$, we obtain a string $z_j$ by deleting all symbols not contained in $\Sigma_{f(j)}$ and then lifting it to the alphabet $\{0,1\}$. We conclude that $L(z'_j,\guard(y_j)\lift \Sigma_{f(j)}) = L(z_j,\guard(y_j))$. We claim that $z := z_1 z_2 \dots z_Q$ is a subsequence of $x_{\{0,1\}} := \guard(x_1) \dots \guard(x_P)$ (which is equal to the string $x$ that we constructed in the case of $\Sigma=\{0,1\}$). Indeed, if $H(x_i)$ is of the form $w_{k+1} w_k$ for some $k$ with $w_\ell = \guard(x_i)\lift \Sigma_\ell$, then symbols of at most one of $w_k$ and $w_{k+1}$ are contained in $z$. To see this, note that if $w_k$ is not deleted then at least one of its symbols is contained in some $z'_j$ with $f(j) = k$, but then no symbol in $w_{k+1}$ can be contained in $z'_{j'}$ with $f(j') = k+1$, since this would mean $j' > j$, so $w_{k+1}$ is deleted. Thus,
\[ L(x,y) = \sum_{j=1}^Q L\big(z'_j, \guard(y_j) \lift \Sigma_{f(j)}\big) = \sum_{j=1}^Q L\big(z_j, \guard(y_j)\big) \le L(x_{\{0,1\}},y_{\{0,1\}}),\]
where $y_{\{0,1\}} := \guard(y_1)\dots \guard(y_Q)$ is the string $y$ that we constructed in the case of $\Sigma = \{0,1\}$. Hence, the second inequality of \eqref{eq:2vs1large} follows from the proof of \lemref{2vs1base}.
\end{proof}

By the same choice of vectors as in \lemref{2vs1gadget}, we can embed orthogonal vectors instances.

\begin{lem}\label{lem:2vs1gadgetlarge}
Let $a_1,\dots,a_A,b_1,\dots b_B\subseteq\{0,1\}^D$ be given with $A\mid B$. Construct the corresponding strings $x_1,\dots,x_A$ of length $\ell_\x$, $y_1,\dots,y_B$ of length $\ell_\y \le \ell_\x \le \Oh(D)$ and integers $\rho_0,\rho_1$ as in \lemref{coreSingleStrings} and define
\begin{eqnarray*}
\tilde{x} & := & (\tilde{x}_1,\dots,\tilde{x}_P) = (\overbrace{x_1,\dots,x_A,x_1,\dots,x_A,\dots,x_1,\dots,x_A}^{2\cdot(B/A)+3 \text{ {\rm groups of size }} A}), \\
\tilde{y} & := & (\tilde{y}_1,\dots,\tilde{y}_Q) = (\underbrace{y_1,\dots,y_1}_{A \text{ {\rm copies of }} y_1}, y_1, \dots, y_B, \underbrace{y_1, \dots, y_1}_{A \text{ {\rm copies of }}y_1}),
\end{eqnarray*}
where $P:= 2B+3A$ and $Q:= B+2A$. 
For disjoint alphabets $\Sigma_1,\dots,\Sigma_t$ of size 2 with $Q/t \ge A/2+1$, we construct the instance $x := \bigconcat_i H(\tilde{x}_i)$, $y:=\bigconcat_j \guard(\tilde{y}_j)$ of \lemref{2vs1Largebase} (with the corresponding choice of $\gamma_1,\gamma_2$ and $\gamma_3$). This satisfies the following properties:
\begin{itemize}
\item[(i)] We have that $L(x,y) \ge Q(\gamma_1+\gamma_2+3\gamma_3) + (A-1)\rho_1 +\rho_0 + (Q-A)\ell_\y$ if and only if there are $i\in [A], j\in [B]$ with $\langle a_i,b_j \rangle = 0$.
\item[(ii)] We have $|y| \le |x| \le \Oh(B\cdot D)$ and $\delta(x,y) = \Oh(A\cdot D)$. 
\end{itemize}
\end{lem}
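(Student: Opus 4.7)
The plan is to mirror the proof of \lemref{2vs1gadget} almost verbatim, replacing uses of \lemref{2vs1base} by \lemref{2vs1Largebase}. The key point is that the combinatorial structure of $P, Q, A, B$ and the sequences $\tilde x, \tilde y$ is unchanged compared to \lemref{2vs1gadget}, so the alignment-existence argument of \lemref{2vs1gadget}(i) transfers verbatim. In particular, for every $i \in [A], j \in [B]$ there exists a $(1,2)$-alignment $\Lambda \in \algnTwo_{P,Q}$ in which some $\ell \in [Q]$ with $\tilde y_\ell = y_j$ is uniquely aligned to some $k \in [P]$ with $\tilde x_k = x_i$. This is a purely index-theoretic statement that does not depend on the alphabet over which the guards are built.

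For (i), I will argue both directions through \lemref{2vs1Largebase}. For the forward direction, given $\langle a_i, b_j \rangle = 0$, invoke the alignment above. Under this $\Lambda$, the $Q - A$ many $2$-aligned indices each contribute $\ell_\y$ to $v(\Lambda)$; the $A$ uniquely aligned indices each contribute $L(\tilde x_{k'}, \tilde y_{\ell'}) \in \{\rho_0, \rho_1\}$ (by \lemref{coreSingleStrings}), and at least the specific pair $(k,\ell)$ contributes $\rho_0$. Hence $v(\Lambda) \ge (A-1)\rho_1 + \rho_0 + (Q-A)\ell_\y$, and the lower bound of \lemref{2vs1Largebase} gives the required bound on $L(x,y)$. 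For the converse, assume there is no orthogonal pair and let $\Lambda \in \algnMult_{P,Q}$. Every uniquely aligned index now contributes exactly $\rho_1$ (rather than possibly $\rho_0$). The same linear-programming argument as in \lemref{2vs1gadget}(ii) — maximizing $\lambda \ell_\y + \min\{P-2\lambda, Q-\lambda\} \cdot \rho_1$ where $\lambda$ is the number of $k$-aligned indices with $k \ge 2$, using $\ell_\y/2 < \rho_1 \le \ell_\y$ from \lemref{coreSingleStrings}(iv) — yields $v(\Lambda) \le (Q-A)\ell_\y + A \rho_1 < (Q-A)\ell_\y + (A-1)\rho_1 + \rho_0$. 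The upper bound of \lemref{2vs1Largebase} then finishes the direction.

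For (ii), the length bound is routine: each block $H(\tilde x_i)$ consists of at most two lifted copies of $\guard(\tilde x_i)$, so $|x| \le 2P \cdot \Oh(D) = \Oh(BD)$ (using $A \le B$), while $|y| = Q \cdot \Oh(D) = \Oh(BD)$ and $|y| \le |x|$ follows from $P \ge Q$ and $\ell_\x \ge \ell_\y$. For $\delta(x,y) = \Oh(AD)$, I pick any $(1,2)$-alignment $\Lambda \in \algnTwo_{P,Q}$; its $Q-A$ many $2$-aligned indices contribute $\ell_\y$ each and its $A$ uniquely aligned indices contribute at least $\rho_1 \ge \ell_\y - \Oh(D)$ each, so $v(\Lambda) \ge Q \ell_\y - \Oh(AD)$. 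Combined with the lower bound of \lemref{2vs1Largebase}, this yields $L(x,y) \ge |y| - \Oh(AD)$, i.e.\ $\delta(x,y) \le \Oh(AD)$.

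The only point requiring a sanity check is that the hypothesis $Q/t \ge A/2 + 1$ of \lemref{2vs1Largebase} is indeed part of the premise of this lemma (it is), so no new constraint is introduced. There is no serious obstacle: the large-alphabet lifting in \lemref{2vs1Largebase} was engineered precisely so that the combinatorial $v(\Lambda)$ analysis of the binary case carries over unchanged, and the alignment-existence argument from \lemref{2vs1gadget}(i) is alphabet-agnostic.
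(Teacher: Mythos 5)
Your proof is correct and follows the same route as the paper: invoke \lemref{2vs1Largebase} in place of \lemref{2vs1base} (possible because \eqref{eq:2vs1large} is identical to \eqref{eq:2vs1}), re-run the argument of \lemref{2vs1gadget}(i)--(ii) verbatim for part (i), and adapt \lemref{2vs1gadget}(iii) for part (ii) after noting $|x|$ at most doubles. The paper states this very tersely by reference to the earlier proofs; you simply spell out the same steps in more detail.
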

\begin{proof}
The lemma and its proof are a slight adaptation of \lemref{2vs1gadget}: For (i), since \lemref{2vs1Largebase} proves \eqref{eq:2vs1large} which is identical to \eqref{eq:2vs1}, we can follow the proof of \lemref{2vs1gadget}(i) and (ii) verbatim (since we have chosen $\tilde{x}$ and $\tilde{y}$ as in this lemma). 
For (ii), the bounds $|y| \le |x| \le \Oh(B\cdot D)$ and $\delta(x,y) = \Oh(A\cdot D)$ follow exactly as in \lemref{2vs1base} (note that only $|x|$ has increased by at most a factor of 2, so that $|x| \le \Oh(B \cdot D)$ still holds by the trivial bound). 
\end{proof}

We can now finish the proof of \thmref{hardness} for the case of $\alpha_L = \alpha_m$ and $\alpha_\Sigma > 0$.
Consider any $n \ge 1$ and target values $p = n^{\alpha_p}$ for $p \in \params$. 
Let $\sA= \{a_1,\dots,a_A\}$, $\sB=\{b_1,\dots,b_B\} \subseteq \{0,1\}^D$ be a given \OV instance with $D=n^{o(1)}$ and where we set, as in the case $\alpha_\Sigma = 0$,
$$A := \Big\lfloor \frac 1D \min\Big\{ \delta, \frac{d}{\min\{m,\Delta\}} \Big\} \Big\rfloor_2 \quad \text{and} \quad B := \Big\lfloor \frac 1D \min\{m,\Delta\} \Big\rfloor_2. $$
As before, we have $A \mid B$, so we may construct strings $x, y$ as in \lemref{2vs1gadgetlarge}, where we set $t := \min\{\lfloor Q/(A/2+1) \rfloor, |\Sigma|\} = \Theta(\min\{B/A,|\Sigma|\})$. We finish the construction by invoking the Dominant Pair Reduction (\lemref{dreduction}) to obtain strings $x' := y 2^\ell x$ and $y' := 2^\ell y$, where $2$ is a symbol not appearing in $x,y$ and we set $\ell := \Theta(A \cdot D)$ with sufficiently large hidden constant, so that $\ell > \delta(x, y)$. 

For the remainder of the proof we can follow the case $\alpha_\Sigma=0$ almost verbatim. The only exception is the bound on the number of matching pairs. Note that symbol $2$ appears $\Oh(AD)$ times in $x'$ and $y'$. As in $x$ and $y$ every symbol appears roughly equally often and the total alphabet size is $\Theta(t)$, for any symbol $\sigma \ne 2$ we have $\occ_\sigma(x) \le \Oh(|x| / t)$ and $\occ_\sigma(y) \le \Oh(|y|/t)$, implying $\occ_\sigma(x'), \occ_\sigma(y') \le \Oh(BD/t)$. Hence, $M(x',y') \le \Oh((AD)^2 + t \cdot (BD/t)^2)$. Using $t = \Theta(\min\{B/A,|\Sigma|\})$ and $A \le B$, we obtain $M(x',y') \le \Oh(\max\{AD \cdot BD, (BD)^2 / |\Sigma|\}) \le \Oh(\max\{d, m^2 / |\Sigma|\}$. The assumption $\alpha_L = \alpha_m$ and the parameter relations $M \ge L^2 / |\Sigma|$ and $M \ge d$ now imply $M(x',y') \le \Oh(M)$. This concludes the proof of \thmref{hardness}.

\section{Hardness for Small Constant Alphabet}
\label{sec:hardnessSmallSigma}

In this section, we show hardness of the parameter settings $\LCS(\alpha, \Sigma)$ for alphabets of constant size $|\Sigma| \ge 2$, i.e., we prove \thmref{main2}. The general approach, as outlined in \secref{roughproof}, is to take the hard instances $x,y$ of $\LCS_\le(\Valpha, \{0,1\})$ constructed in \secref{hardness} and pad them to instances $x',y'$ of $\LCS(\Valpha, \Sigma)$. 
Notably, unlike the black-box method of \lemref{reductiontoclosure} that effectively considered each parameter separately, we now cannot make extensive use of the Disjoint Alphabets Lemma, as this would introduce more symbols than admissible. 
Instead, for small alphabet size such as $|\Sigma|=2$ we need to pad all parameters simultaneously in a combined construction, taking care of the interplay of the parameters manually. Additionally, for $|\Sigma|\in \{2,3\}$, more complex parameter relations hold. 

Unfortunately, this general approach fails for $\Sigma=\{0,1\}$, i.e., we cannot always pad hard strings $x,y$ of $\LCS_\le(\Valpha, \{0,1\})$ to $\LCS(\Valpha, \{0,1\})$. Surprisingly, the reason is that by an $\Oh(n+\delta M/n)$-time algorithm (given in \secref{algo}), some parameter settings $\LCS(\Valpha,\{0,1\})$ are indeed simpler to solve than $\LCS_\le(\Valpha, \{0,1\})$ (conditional on SETH). In these cases, we take hard instances $(n,x,y)$ from $\LCS_\le(\Valpha', \{0,1\})$ for a suitably defined ``simpler'' parameter setting $\Valpha'$ and pad $x,y$ to instances of $\LCS(\Valpha, \{0,1\})$. 

As in \secref{hardness}, we distinguish between the two cases $\alpha_\delta = \alpha_m$ (i.e., $\delta = \Theta(m)$ and any $0< \alpha_L \le \alpha_m$ is admissible) and $\alpha_L = \alpha_m$ (i.e., $L = \Theta(m)$ and any $0 < \alpha_\delta < \alpha_m$ is admissible).

\subsection{Small LCS}

In this section, we assume $\alpha_\delta = \alpha_m$. It can be checked that this assumption implies $\alpha_\Delta = 1$, i.e., $\Delta = \Theta(n)$. Moreover, if $|\Sigma| = 2$ then the assumption and the parameter relation $M \ge nd/(80L) \ge \Omega(n d /m)$ imply $\delta M / n = \Omega(d)$. Thus, the desired running time bound simplifies to $d^{1-o(1)}$. \thmref{main2} in this regime follows from the following statement (and \lemref{paramsnecessary}).

\begin{lem}\label{lem:smallLCSsmallSigma}
Let $(\Valpha,\Sigma)$ be a parameter setting satisfying \tabref{paramChoiceRestr} with $\alpha_\delta=\alpha_m$. There is a constant $\gamma \ge 1$ such that any algorithm for $\LCS^\gamma(\Valpha,\Sigma)$ takes time $d^{1-o(1)}$ unless \OVH\ fails.
\end{lem}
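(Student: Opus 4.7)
The plan is to build hard instances for $\LCS(\Valpha,\Sigma)$ by lifting the binary-alphabet hardness of $\LCS_\le$ (from \lemref{smallLCShardnessConstantAlphabet}) to the exact-parameter setting, via an integrated padding within the small alphabet $\Sigma$. I would start by choosing an auxiliary parameter setting $\Valpha'$ satisfying \tabref{paramChoiceRestr} with $\alpha'_\Sigma=0$, $\alpha'_\delta = \alpha'_m$, and crucially $\alpha'_d = \alpha_d$, so that invoking \lemref{smallLCShardnessConstantAlphabet} yields binary strings $x_0, y_0$ with $|x_0|,|y_0| = O(n^{\alpha'_L})$, $\occ_1(y_0) = O(n^{\alpha'_d-\alpha'_L})$, and $L(x_0, 0^\beta y_0) = L(x_0,y_0)$ for every $\beta \ge 0$, while any algorithm for them requires time $d^{1-o(1)}$. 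The parameters $\alpha'_L, \alpha'_m$ are left as free knobs, to be fixed minimally as a function of the case $|\Sigma| \in \{2,3\}$ or $|\Sigma|\ge 4$.

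Next I would construct $x', y'$ by wrapping the hard core with blocks of single symbols from $\Sigma$, according to a template of the form
\begin{equation*}
  x' \;=\; 0^{a_1}\, x_0 \, 0^{a_2}\, 1^{b}\, 2^{c_1}\cdots, \qquad
  y' \;=\; 0^{a'_1}\, y_0 \, 0^{a'_2}\, 1^{b'}\, 2^{c_2}\cdots,
\end{equation*}
where blocks of symbols beyond $\{0,1\}$ appear only if $|\Sigma|\ge 3$. The idea is that by \lemref{greedy} and \lemref{zeroblocklcs}, the LCS decomposes cleanly into $L(x_0,y_0)$ plus fixed padding contributions, with the $0$-buffers on both sides of the hard core absorbing all potential cross-matches (using $L(x_0, 0^\beta y_0) = L(x_0,y_0)$). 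The $0$-blocks then inflate $n$ and $\Delta$ (both $\Theta(n)$ since $\alpha_\Delta=1$ in this regime) and fix $M$; the $1$-blocks control the $1$-count in each string; and, when $|\Sigma|\ge 3$, further symbol blocks control $L$, $m$, $\delta$, and $|\Sigma|$ independently. Picking block lengths amounts to solving a linear feasibility problem in the exponents $\alpha_p$; this is exactly solvable because $\Valpha$ satisfies the restrictions of \tabref{paramChoiceRestr}, including the case-specific relations $M\ge Lm/4$ ($|\Sigma|=2$) and $M \ge md/(80L)$ ($|\Sigma|=3$). Finally, to ensure $d(x',y') = \Theta(n^{\alpha_d})$ rather than being inflated by cross-matches between the padding blocks and the hard core, I would interpose an application of the Dominant Pair Reduction (\lemref{dreduction}) around the core, so that outside the core the number of dominant pairs is $O(\ell\cdot |y'|)$ for a parameter $\ell$ we can tune freely.

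The main obstacle is the $|\Sigma|=2$ case: without a fresh symbol, every $0$ or $1$ appearing in the padding is a candidate partner for every $0$ or $1$ in $y_0$, and the forced relations $M \ge Lm/4$, $M \ge nd/(5L)$ leave little room to arrange the padding lengths. Here I expect the argument to hinge on isolating $x_0,y_0$ between long $0^a$-buffers so that \lemref{zeroblocklcs} factors out the hard-core contribution exactly, and on applying Dominant Pair Reduction so that $d(x',y')$ depends only on the buffer length and $|y'|$ rather than on $d$ of intermediate strings. For $|\Sigma|=3$ the extra symbol gives enough slack to decouple $M$ from $L$ and simplifies the bookkeeping substantially, and for $|\Sigma|\ge 4$ we essentially recover the large-alphabet construction of \secref{hardnesssmallLCS} with minor adjustments to keep the alphabet size bounded. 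A parallel verification through all three cases, using the fact that in the current regime $\min\{d,\delta\Delta,\delta M/n\}^{1-o(1)}$ collapses to $d^{1-o(1)}$ (via the parameter relation $\delta M/n = \Omega(d)$ for $|\Sigma|=2$ and the trivial relation $\delta m \ge d$ for $|\Sigma|\ge 3$), then finishes the proof.
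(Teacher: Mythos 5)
Your high-level strategy — lift the constant-alphabet hard core of \lemref{smallLCShardnessConstantAlphabet}, wrap it with single-symbol padding blocks chosen to hit the target exponents, and verify the regime simplification $\min\{d,\delta\Delta,\delta M/n\}^{1-o(1)} = d^{1-o(1)}$ — matches the paper's outline. However, there is a genuine gap in the handling of $d$: you have no mechanism to \emph{generate} $\Theta(n^{\alpha_d})$ dominant pairs. The hard core $(n,x_0,y_0)$ lives in the \emph{downward closure} $\LCS_\le(\Valpha',\{0,1\})$, which only guarantees $d(x_0,y_0)=\Oh(n^{\alpha_d})$, not $\Omega(n^{\alpha_d})$; for a pessimistic OV instance the core could have far fewer dominant pairs. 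Wrapping with blocks $0^a$, $1^b$, $2^c, \ldots$ adds only linearly many dominant pairs (\lemref{greedy} and the Disjoint Alphabets Lemma), which is insufficient since $\alpha_d>1$ in the nontrivial regime. The Dominant Pair Reduction (\lemref{dreduction}) is the wrong tool: it provides only an \emph{upper bound} $d=\Oh(\ell(|x|+|y|+\ell))$ and is designed to cap $d$ in the large-LCS case, not to inflate it; tuning $\ell$ does not give you a lower bound.

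What is missing is the role of \lemref{bbbI} together with \lemref{genDomPairs}: the paper prepends the alternating strings $a = (01)^{R+S}$ and $b = 0^R(01)^S$, which by \lemref{genDomPairs} contribute $d(a,b)=\Theta(R\cdot S)$ dominant pairs, and by \obsref{prefix} these survive as a lower bound $d(x',y')\ge d(a,b)\ge RS=\Omega(d)$ for the padded strings. Choosing $R=L$, $S=\lfloor d/L\rfloor$ hits exactly $\Theta(d)$, and \lemref{bbbI} guarantees $L(x',y')=R+2S+\ell+L(x_0,y_0)$ so the hard core's LCS threshold is preserved. The matching upper bound $d(x',y')=\Oh(d)$ then follows from \lemref{few1s} by controlling $\occ_1(y')$. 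Without such an explicit dominant-pair generator your construction does not place the padded instance in $\LCS^\gamma(\Valpha,\Sigma)$, only in its downward closure. (A secondary, non-fatal, simplification: you do not actually need the auxiliary $\Valpha'$ here; since $\alpha_\delta=\alpha_m$, the given $\Valpha$ itself satisfies \tabref{paramChoiceRestr} with $\alpha_\Sigma=0$, and the paper applies \lemref{smallLCShardnessConstantAlphabet} directly to $\Valpha$.)
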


We prove the above lemma in the remainder of this section.
Note that any parameter setting $(\Valpha,\Sigma)$ satisfying \tabref{paramChoiceRestr} gives rise to a parameter setting $\Valpha$ satisfying \tabref{paramChoiceRestr} with $\alpha_\Sigma = 0$ (where the converse does not hold in general). 
Recall that for any such $\Valpha$, in \lemref{smallLCShardnessConstantAlphabet} we constructed hard instances $(n,x,y)$ of $\LCS^\gamma_\le(\Valpha, \{0,1\})$ with an additional threshold $\rho$ such that deciding $L(x,y) \ge \rho$ decides the corresponding \OV\ instance, yielding hardness of $\LCS^\gamma_\le(\Valpha, \{0,1\})$. Furthermore, the constructed instances have the additional guarantees that $|x|,|y| \le \gamma \cdot n^{\alpha_L}$ and $\occ_1(y) \le \gamma \cdot n^{\alpha_d-\alpha_L}$ and for any $\beta \ge 0$ we have $L(x, 0^\beta y) = L(x,y)$.  

Hence, to prove the above lemma it suffices to show how to compute, given any such instance $(n,x,y)$ and threshold $\rho$, an instance $x',y'$ of $\LCS^{\gamma'}(\Valpha, \Sigma)$ (for some $\gamma'\ge 1)$ and an integer $\rho'$ in time $\Oh(n)$ such that $L(x',y') \ge \rho'$ if and only if $L(x,y) \ge \rho$. More precisely, we show how to compute an integer $\tau$ in time $\Oh(n)$ such that $L(x',y') \ge \tau + \rho$ if and only if $L(x,y) \ge \rho$.

We will do this successively for alphabet sizes $|\Sigma|=2,|\Sigma|=3, |\Sigma|=4$, and $|\Sigma| \ge 5$. To this end, the following basic building block will be instantiated with different parameters. Recall that in \lemref{genDomPairs}, we defined strings $a=(01)^{R+S}$ and $b=0^R(01)^S$ with the properties $L(a,b)=|b|=R+2S$ and $d(a,b)=\Theta(R\cdot S)$.

\begin{lem}[Basic building block]\label{lem:bbbI}
Let $x,y$ be given strings. Given $\alpha,\beta, R, S \ge 0$, we set $\ell :=|x|+|y|$,  and define 
\begin{alignat*}{2}
x' \;& := \quad a \; 1^\alpha \; 0^\ell \; x \quad &&= \quad (01)^{R+S} \; 1^\alpha \; 0^\ell \; x \\
y' \;& := \quad b \; 0^\beta \; 0^\ell \; y \quad &&= \quad 0^R(01)^S \; 0^\beta \; 0^\ell \; y.
\end{alignat*}
Then we have $L(x',y') = L(a,b) +\ell + L(x,0^\beta y)=R+2S+\ell+L(x,0^\beta y)$. 
If $L(x, 0^\beta y) = L(x,y)$ then we even have $L(x',y') = R+2S+\ell+L(x,y)$.
\end{lem}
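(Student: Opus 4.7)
The lemma asserts two equalities; the second is a direct consequence of the first applied under the hypothesis $L(x, 0^\beta y) = L(x, y)$, so my plan focuses on establishing
\[
L(x', y') \;=\; L(a, b) + \ell + L(x, 0^\beta y) \;=\; R + 2S + \ell + L(x, 0^\beta y).
\]
I will prove the lower and upper bounds separately.

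For the \emph{lower bound}, I construct an explicit common subsequence of the claimed length. First, $b = 0^R(01)^S$ is a subsequence of $a = (01)^{R+S}$: take the $R$ zeros at odd positions $1, 3, \ldots, 2R-1$ of $a$, and then the substring $(01)^S$ at positions $2R+1, \ldots, 2R+2S = |a|$. This yields $L(a, b) = |b| = R + 2S$ matches entirely inside the $a$-prefix of $x'$. Next, since concatenated zero blocks commute, I rewrite the RHS suffix $b \cdot 0^\beta \cdot 0^\ell \cdot y$ as $b \cdot 0^\ell \cdot 0^\beta \cdot y$, and by greedy prefix matching (\lemref{greedy}) obtain $L(0^\ell \cdot x, 0^\ell \cdot 0^\beta y) = \ell + L(x, 0^\beta y)$, realized by aligning the two $0^\ell$-blocks and then matching $x$ with $0^\beta y$. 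The $1^\alpha$-block of $x'$ is simply skipped. Concatenating these two partial matches respects all LHS and RHS orderings and yields a common subsequence of length $R + 2S + \ell + L(x, 0^\beta y)$.

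For the \emph{upper bound}, I split the RHS as $y' = y'_1 \cdot y'_2$ with $y'_1 = b$ and $y'_2 = 0^{\beta+\ell} y$, so that
\[
L(x', y') = \max_{k} \; L(x'[1..k], b) + L(x'[k+1..], y'_2).
\]
The crucial technical tool is the following \emph{Key Claim}:
\[
L(1^\alpha \cdot 0^\ell \cdot x, \; 0^{\beta+\ell} \cdot y) \;\le\; \ell + L(x, 0^\beta y).
\]
I prove the Key Claim by dichotomy on whether any match of an optimal CS touches the $1^\alpha$-block. If none does, the CS is a CS of $0^\ell x$ and $0^{\beta+\ell} y$, and the bound follows from greedy prefix matching. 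If some match uses $1^\alpha$, then since $1^\alpha$ contains only $1$s, the corresponding RHS position is a $1$ in $y$ (the $0^{\beta+\ell}$ block contains none); by the order constraint on subsequence matchings, no subsequent match can lie in $0^{\beta+\ell}$, which precedes $y$ in $y'$. Hence all matches lie in $y$ and $|z| \le |y| \le \ell$ (using $\ell = |x| + |y|$). Applying the Key Claim at $k = |a|$ gives the desired bound: $L(a, b) + L(1^\alpha 0^\ell x, y'_2) \le (R + 2S) + \ell + L(x, 0^\beta y)$.

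The \emph{main obstacle} is handling split points $k < |a|$: here the suffix $x'[k+1..]$ inherits extra characters from $a$, which could in principle boost $L(x'[k+1..], y'_2)$ beyond the $\ell + L(x, 0^\beta y)$ bound provided by the Key Claim. To resolve this, I will establish a tradeoff lemma asserting that any such gain is offset by at least an equal loss in $L(x'[1..k], b)$ relative to $L(a, b) = R + 2S$. The intuition is that an extra zero of $a$ in $x'[k+1..]$ that is matched into the $0^{\beta+\ell}$-block on RHS is precisely a zero that is no longer available to match the $0^R$-prefix of $b$ in the first summand; similarly, an extra $1$ of $a$ used to match a $1$ of $y$ is a $1$ that could otherwise align with the $(01)^S$-part of $b$. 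Formalizing this exchange requires carefully tracking which characters of $a$ are used in which role, analogously to the Key Claim's dichotomy, after which summing the two terms yields the bound $R + 2S + \ell + L(x, 0^\beta y)$ for all $k$ and completes the proof.
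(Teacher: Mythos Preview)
Your lower bound and your Key Claim are both correct, and for split points $k\ge |a|$ your argument goes through: $L(x'[1..k],b)\le |b|=R+2S$ trivially, and $x'[k+1..]$ is a subsequence of $1^\alpha 0^\ell x$, so the Key Claim bounds the second summand. The genuine gap is the case $k<|a|$, which you yourself flag as the ``main obstacle'' and for which you only sketch an intuition. That intuition does not translate into a proof as stated: the extra characters of $a[k+1..]$ can match into $0^{\beta+\ell}$ \emph{and} into $y$ simultaneously, so there is no one-to-one exchange with characters ``lost'' from the first summand. For instance, with $r$ copies of $01$ remaining in $a[k+1..]$ one has to control $L((01)^{R+S-r}\,1^\alpha 0^\ell x,\;0^{\beta+\ell} y)$ against the deficit $L(a,b)-L((01)^r,b)$, and the natural symbol-count bounds depend on $\beta$ (which is arbitrary) in a way your sketch does not address. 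In short, the ``tradeoff lemma'' you promise is the whole difficulty, and you have not supplied it.

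The paper sidesteps this completely by splitting the \emph{other} string: it writes $x'=(a\,1^\alpha)\cdot(0^\ell x)$ and lets the split point vary in $y'$, i.e.\ $y'=wz$ with $L(x',y')=L(a\,1^\alpha,w)+L(0^\ell x,z)$. If $z$ lies inside $y$, a crude occurrence count already gives the bound. Otherwise $w$ is a subsequence of $b\,0^{\beta+\ell}$, and here the key ingredient you are missing is \lemref{genDomPairs}\itemref{genDomPairsLmore}: $L(a\,1^\alpha,\,0^{\beta'} b\,0^{\beta''})=|b|$ for all $\beta',\beta''\ge 0$. This lets one assume $w$ lies entirely inside $b$, after which the remaining argument is a two-line length bound. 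If you want to salvage your split on $y'$, the cleanest fix is to import exactly this lemma for the case $k<|a|$; but at that point you are essentially reconstructing the paper's argument with the roles of the strings swapped.
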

\begin{proof}
Clearly, $L(x',y') \ge L(a,b) + L(0^\ell, 0^\ell) + L(x,0^\beta y) = (R+2S)+\ell+L(x,0^\beta y)$ since $L(a,b)=|b|=R+2S$ by \lemref{genDomPairs}.
To prove a corresponding upper bound, note that we can partition $y'=w z$ such that $L(x',y') = L(a 1^\alpha, w) + L(0^\ell x, z)$. Consider first the case that $z$ is a subsequence of $y$. Then 
\begin{align*}
 L(x',y')  = L(a 1^n ,w) + L(0^\ell x, z)  
  \le L(a 1^n, y')  + L(0^\ell x, y),
\end{align*}
since $w,z$ are subsequences of $y',y$, respectively. Using $L(u,v) \le \sum_{\sigma\in\Sigma} \min\{ \occ_\sigma(u), \occ_\sigma(v) \}$ for any strings $u,v$, we obtain
\begin{align*}
 L(x',y') & \le (\occ_0(a 1^n) + \occ_1(y')) + (\occ_0(y) + \occ_1(0^\ell x)) \\
& = (R+S) + (S + \occ_1(y)) + \occ_0(y) + \occ_1(x) & & \\
& \le (R+2S) + \ell + L(x,0^\beta y),
\end{align*}
since $\ell \ge |x| + |y| \ge \occ_0(x) + \occ_0(y) + \occ_1(y)$.
It remains to consider the case that $z$ is not a subsequence of $y$ and hence $w$ is a subsequence of $b 0^{\beta+ \ell}$. By \lemref{genDomPairs}\itemref{genDomPairsLmore}, we can without loss of generality assume that $w$ is a subsequence of $b$, since $L(a1^\alpha ,b0^{\beta+\ell}) = L(a,b)$. We write $z = z' z''$ such that $z''$ is a subsequence of $0^{\ell+\beta} y$ and maximal with this property. Hence, $w z'$ is a subsequence of $b$. 
Using the facts $L(u,v) \le |v|$ and $L(u,v'v'') \le |v'| + L(u,v'')$, we bound
\begin{align*}
 L(x',y')   =  L(a 1^n, w)  + L(0^\ell x, z' z'') 
 \le  |w| + (|z'| + L(0^\ell x, z'')), 
\end{align*}
Since $w z'$ is a subsequence of $b$ and $z''$ is a subsequence of $0^{\ell+\beta} y$, this yields
\begin{align*}
 L(x',y')  \le  |b| + L(0^\ell x, 0^{\ell + \beta} y)  
 =  (R+2S) + \ell + L(x,0^\beta y), 
\end{align*}
where we used greedy prefix matching. This finishes the proof.
\end{proof}

We now instantiate the basic building block to prove \lemref{smallLCSsmallSigma} for $\Sigma= \{0,1\}$. Note that in the remainder we again simply write $p$ for the target value $\lceil n^{\alpha_p} \rceil$ of parameter $p \in \params$, while the parameter value attained by any strings $x,y$ is denoted by $p(x,y)$, as in \secref{hardness}. Note that the additional guarantees for $(n,x,y)$ are satisfied by \lemref{smallLCShardnessConstantAlphabet}.

\begin{lem} \label{lem:smallConstAlphabetTwo}
Consider a parameter setting $(\Valpha,\{0,1\})$ satisfying \tabref{paramChoiceRestr} with $\alpha_\delta = \alpha_m$. Let $(n,x,y)$ be an instance of $\LCS_\le^\gamma(\Valpha,\{0,1\})$ with $|x|,|y| \le \gamma \cdot L$ and $\occ_1(y) \le \gamma \cdot d/L$ satisfying $L(x, 0^{\beta'} y) = L(x,y)$ for any $\beta' \ge 0$. We obtain strings $x',y'$ from \lemref{bbbI} (recall that in this lemma we set $\ell := |x| + |y|$), where we choose 
\begin{align*}
R &:= L, & S & := \lfloor d/L \rfloor, &  
 \beta & := \tilde{m} := \max\{m,2|x|\}, & \alpha &:= \tilde{n} := \max\{n,\tilde{m}+|y|\}. 
\end{align*}
Then, setting $\kappa:= \lfloor M/n \rfloor$, the strings defined by
\begin{alignat*}{3}
x'' := &\quad 1^{\kappa} \; x' \quad &&= \quad 1^{\kappa} \; a \;1^{\tilde{n}} \; 0^\ell \; x \quad  &&= \quad 1^{\kappa} \; (01)^{R+S} \;1^{\tilde{n}} \; 0^\ell \; x, \\
y'' := & \quad 1^{\kappa} \; y' \quad &&= \quad 1^{\kappa} \; b \; 0^{\ell+\tilde{m}} \; y \quad  &&= \quad 1^{\kappa} \; 0^R (01)^S \; 0^{\ell+\tilde{m}} \; y.
\end{alignat*}
are an instance of $\LCS^{\gamma'}(\Valpha, \{0,1\})$ (for some constant $\gamma'\ge \gamma$) and can be computed in time $\Oh(n)$, together with an integer $\tau$ such that $L(x'',y'') \ge \tau + \rho$ if and only if $L(x,y)\ge \rho$.
\end{lem}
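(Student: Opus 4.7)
The key idea is that the prepended $1^\kappa$ creates a common prefix handled via greedy matching, while the basic building block (\lemref{bbbI}) accounts for the rest. So the proof factors cleanly into (a) computing $L(x'',y'')$ and (b) verifying each of the seven parameter bounds.

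\textbf{Step 1: LCS value.} By \lemref{greedy} applied to the common prefix $1^\kappa$, we have $L(x'',y'') = \kappa + L(x',y'')$ where $y'' = 1^\kappa y'$, and iterating once more yields $L(x'',y'') = \kappa + L(x',y')$. The assumption $L(x,0^{\beta'}y)=L(x,y)$ for all $\beta'\ge 0$ lets us apply \lemref{bbbI} (second case) to get $L(x',y') = R+2S+\ell+L(x,y)$. Setting $\tau := \kappa+R+2S+\ell$, we obtain $L(x'',y'')=\tau+L(x,y)$, so $L(x'',y'')\ge \tau+\rho$ iff $L(x,y)\ge \rho$. The computation of $x''$, $y''$ and $\tau$ is clearly $\Oh(n)$.

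\textbf{Step 2: Parameter verification.} Recall that the case $\alpha_\delta=\alpha_m$ forces $\alpha_\Delta=1$ and, since $|\Sigma|=2$, the relation $d\le L^2|\Sigma|$ gives $\alpha_d\le 2\alpha_L$, hence $S=\lfloor d/L\rfloor = \Theta(d/L) = \Oh(L)$. Moreover $\kappa=\lfloor M/n\rfloor \le 2L$ from $M\le 2Ln$. Using these, the length parameters follow: $|x''|=\kappa+2(R+S)+\tilde n+\ell+|x|=\Theta(n)$ (dominated by $\tilde n=\Theta(n)$), $|y''|=\Theta(m)$ (dominated by $\tilde m$). The LCS length $L(x'',y'')=\tau+L(x,y)=\Theta(L)$. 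By subtraction, $\delta(x'',y'')=\tilde m+|y|-L(x,y)=\Theta(m)=\Theta(\delta)$ and $\Delta(x'',y'')=\tilde n+|x|-L(x,y)=\Theta(n)=\Theta(\Delta)$. The alphabet is trivially $\{0,1\}$.

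For $M(x'',y'')$, split by symbol: $\occ_0(x'')\occ_0(y'') = \Oh(L\cdot m)=\Oh(M)$ by \lemref{MbinaryLm}. For the ones, $\occ_1(x'') = \Theta(n)$ (dominated by $\tilde n$, since all other contributions are $\Oh(L)\subseteq\Oh(n)$), and crucially $\occ_1(y'') = \kappa+S+\occ_1(y) = \Theta(\kappa+d/L)$ (here we use $\occ_1(y)\le \Oh(d/L)$ from \lemref{smallLCShardnessConstantAlphabet}). Expanding the product gives $\Theta(n\kappa+nd/L)=\Theta(M)$ via $n\kappa=\Theta(M)$ and $nd/L=\Oh(M)$ by \lemref{Mbinarynd}. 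The lower bound $M(x'',y'')\ge n\kappa=\Omega(M)$ is immediate.

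For $d(x'',y'')$, \lemref{greedy} yields $d(x'',y'')=\kappa+d(x',y')$. The lower bound $d(x',y')\ge d(a,b)=\Theta(RS)=\Theta(L\cdot d/L)=\Theta(d)$ follows from \obsref{prefix} and \lemref{genDomPairs}\itemref{genDomPairsd}. For the upper bound, \lemref{few1s} combined with $\occ_1(y')=\Theta(d/L)$ and $L(x',y')=\Oh(L)$ gives $d(x',y') \le 5 L(x',y')\cdot \occ_1(y') = \Oh(d)$. Since $\kappa\le 2L\le \Oh(d)$ (using $L\le d$), we conclude $d(x'',y'')=\Theta(d)$.

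\textbf{Main obstacle.} The delicate point is the upper bound on $d(x',y')$: the padding blocks $1^{\tilde n}$ and $0^{\tilde m+\ell}$ inflate $\Delta(x',y')$ to $\Theta(n)$, so the generic bound $d\le 2L(\Delta+1)$ of \lemref{dLgap} only gives $\Oh(Ln)$, which is way too weak (we need $\Oh(d)$ and here $\alpha_d\le 2\alpha_L\le 2<1+\alpha_\Delta+\alpha_L$ generically). The resolution is the asymmetry of our construction: we carefully ensure that $\occ_1(y')$ stays as small as $\Oh(d/L)$, so the sharper \lemref{few1s} delivers exactly the bound we need. This is also precisely why the hard-core instances guaranteed by \lemref{smallLCShardnessConstantAlphabet} include the $\occ_1(y)\le \gamma\cdot n^{\alpha_d-\alpha_L}$ guarantee, which at first sight looked like a technicality.
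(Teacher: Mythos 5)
Your proof is correct and follows essentially the same route as the paper: greedy prefix matching plus \lemref{bbbI} for the LCS value, then a parameter-by-parameter verification using the same lemmas (\lemref{MbinaryLm}, \lemref{Mbinarynd}, \lemref{few1s}, \lemref{genDomPairs}, \obsref{prefix}, \lemref{greedy}). One tiny slip: $|a|-|b|=R$, so $\Delta(x'',y'') = R + \tilde n + |x| - L(x,y)$, not $\tilde n + |x| - L(x,y)$; since $R=L=\Oh(n)$ this does not affect the conclusion $\Theta(n)$. Your ``Main obstacle'' remark correctly isolates why \lemref{few1s} (rather than the generic bound $d\le 2L(\Delta+1)$) is the right tool and why the $\occ_1(y)$ guarantee from \lemref{smallLCShardnessConstantAlphabet} is essential; the paper uses this implicitly but does not comment on it.
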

\begin{proof}
Note that 
\begin{equation}\label{eq:largelcssigma2lcs}
L(x'',y'') = \kappa  + L(x',y') = \kappa + (R+2S) + \ell + L(x,y),
\end{equation}
where the first equality follows from greedy prefix matching and the second follows from \lemref{bbbI}. Thus by setting $\tau = \kappa+(R+2S) + \ell$, we have that $L(x'',y'') \ge \tau + \rho$ if and only if $L(x, y)\ge \rho$. Clearly, $x'',y''$, and $\tau$ can be computed in time $\Oh(n)$, and $\Sigma(x'',y'') = \{0,1\}$.

We first verify that $|x|,|y|,\ell,R,S,|a|,|b|,\kappa = \Oh(L)$. \dopara{L}  By assumption, $|x|,|y|=\Oh(L)$ and thus $\ell = |x|+|y| = \Oh(L)$. By the parameter relation $d \le |\Sigma| \cdot L^2=2L^2$, we note that $d/L = \Oh(L)$ and hence by choice of $R,S$, we have $|a|,|b| = \Theta(R+S) = \Theta(L+d/L) = \Theta(L)$. Furthermore, the parameter relation $M\le 2Ln$ implies $\kappa \le M/n\le2L$. Since $L(x,y) \le |x| = \Oh(L)$,  the bound $L(x'',y'') = \kappa+R+2S + \ell + L(x,y) = R + \Oh(L) = \Theta(L)$ follows directly from~\eqref{eq:largelcssigma2lcs}.

Observe that $\tilde{n}$ is chosen such that $|x''| \ge |y''|$. Also, $\tilde{m} = \Theta(m)$ and $\tilde{n} = \Theta(n)$. \dopara{n,m} Since $L\le m\le n$, we thus have $|x''| = \kappa + |a| + \tilde{n} + \ell + |x| = \tilde{n} + \Oh(L) = \Theta(n)$ and $|y| = \kappa + |b| + \tilde{m} + \ell+|y|=\tilde{m} + \Oh(L) = \Theta(m)$. 

Note that by \eqref{eq:largelcssigma2lcs}, $\delta(x'',y'') = (\tilde{m} + |y|) - L(x,y) \ge \tilde{m} - |x| \ge m/2$. \dopara{\delta,\Delta} Hence, $\delta(x'',y'') = \Theta(m) = \Theta(\delta)$ (by the assumption $\alpha_\delta=\alpha_m$). 
Moreover, since $\delta = \Theta(m)$, for some constant $\eps > 0$ we have $\Delta = \delta + (n-m) \ge \eps m + n - m = n - (1-\eps)m \ge n - (1-\eps)n = \Omega(n)$ (where we used the parameter relation $m \le n$). Since also $\Delta \le n$ we have $\Delta = \Theta(n)$. By the same argument, using $\delta(x'',y'') = \Theta(m) = \Theta(m(x'',y''))$ and $n(x'',y'') = \Theta(n)$ as shown above, we obtain $\Delta(x'',y'') = \Theta(n(x'',y'')) = \Theta(n)$, and thus $\Delta(x'',y'') = \Theta(\Delta)$.

For $M$\dopara{M}, observe that $\occ_1(x'') = \kappa + \occ_1(a) + \tilde{n} + \occ_1(x)=\tilde{n} + \Oh(L) = \Theta(n)$. Moreover, $\occ_1(y)= \Oh(d/L)$ (by assumption) and $\occ_1(b) = S = \Oh(d/L)$ yield $\occ_1(y'') = \kappa+ \occ_1(b)+\occ_1(y) = \Theta(M/n)  + \Oh(d/L)$ (here $\kappa = \Theta(M/n)$ follows from the parameter relation $M\ge n$). This yields $\occ_1(x'')\cdot \occ_1(y'') = \Theta(M) + \Oh(dn/L) = \Theta(M)$ (using the parameter relation $M\ge nd/(5L)$). Also note that $\occ_0(x'') = \occ_0(a) + \ell + \occ_0(x) = \Oh(L)$ and $\occ_0(y'') = \occ_0(b) + \ell + \tilde{m} + \occ_0(y) = \tilde{m} + \Oh(L) = \Oh(m)$. This yields $\occ_0(x'')\cdot \occ_0(y'') = \Oh(Lm) = \Oh(M)$ (using the parameter relation $M\ge Lm/4$). Combining these bounds, we obtain $M(x'',y'') = \occ_0(x'')\cdot \occ_0(y'') + \occ_1(x'')\cdot \occ_1(y'') = \Theta(M)$. Note that the last two parameter relations used here exploited that we have $\Sigma= \{0,1\}$. 

It remains to determine the number of dominant pairs\dopara{d}.  Since $L(x',y') = \Theta(L)$ (as argued above) and $\occ_1(y') = \Oh(d/L)$, \lemref{few1s} yields $d(x',y') \le 5 L(x',y')\cdot \occ_1(y') = \Oh(L \cdot d/L)= \Oh(d)$. For a corresponding lower bound, from \obsref{prefix} and \lemref{genDomPairs} we obtain $d(x',y') \ge d(a,b) \ge R\cdot S = \Omega(d)$. By~\lemref{greedy}, the claim now follows from $d(x'',y'') = \kappa + d(x',y') = \Oh(L) + \Theta(d)=\Theta(d)$, where we use $\kappa = \Oh(L)$ and the parameter relation $d\ge L$.
 \end{proof}

The case $\Sigma = \{0,1,2\}$ is similar to $\{0,1\}$, except that we use the new symbol $2$ to pad the parameter $n$, we use symbol $1$ to pad $m$, and we have to swap the constructions for $x''$ and $y''$.

\begin{lem} \label{lem:smallConstAlphabetThree}
Consider a parameter setting $(\Valpha,\{0,1,2\})$ satisfying \tabref{paramChoiceRestr} with $\alpha_\delta = \alpha_m$. Let $(n,x,y)$ be an instance of $\LCS_\le^\gamma(\Valpha,\{0,1\})$ with $|x|,|y| \le \gamma \cdot L$ and $\occ_1(y) \le \gamma \cdot d/L$ satisfying $L(x, 0^{\beta'} y) = L(x,y)$ for any $\beta' \ge 0$. We obtain strings $x',y'$ from \lemref{bbbI} (recall that in this lemma we set $\ell := |x| + |y|$), where we choose
\begin{align*}
R &:= L, & S & := \lfloor d/L \rfloor, &  
 \beta & := 0, & \alpha &:= m. 
\end{align*}
Then, setting $\kappa:= \lfloor M/n \rfloor$ and $\tilde{n}:=\max\{n,\kappa+|a|+m+|x|\}$, the strings defined by
\begin{alignat*}{7}
x'' &:= \quad 2^{\tilde{n}} \; y' \quad & & = \quad 2^{\tilde{n}} & & \; b \; & & 0^{\ell} \; y \quad & & = \quad 2^{\tilde{n}} & & \; 0^R (01)^S \; & & 0^{\ell} \; y, \\
y'' &:=  \quad 2^{\kappa} \; x' \quad & & = \quad 2^{\kappa}  & & \; a  \;1^{m} \; & &  0^\ell \; x \quad & & = \quad 2^{\kappa}  & & \; (01)^{R+S}  \;1^{m} \; & &  0^\ell \; x.
\end{alignat*}
are an instance of $\LCS^{\gamma'}(\Valpha, \{0,1,2\})$ (for some constant $\gamma'\ge \gamma$) and can be computed in time $\Oh(n)$, together with an integer $\tau$ such that $L(x'',y'') \ge \tau + \rho$ if and only if $L(x,y)\ge \rho$.
\end{lem}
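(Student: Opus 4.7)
The plan is to mirror the structure of the proof of \lemref{smallConstAlphabetTwo}, adapting it in the places where the role of symbol $2$, the swap $x'' \leftrightarrow y''$, and the stronger parameter relations for $|\Sigma|=3$ matter. First, I would compute $L(x'',y'')$. Since $\tilde{n}\ge \kappa$ (as $\tilde{n}\ge |y''|\ge \kappa$) and neither $x'$ nor $y'$ contains the symbol $2$, \lemref{greedy} applied to the common prefix $2^\kappa$ gives
\[
L(x'',y'') \;=\; \kappa + L\big(2^{\tilde{n}-\kappa} y',\, x'\big) \;=\; \kappa + L(x',y').
\]
By \lemref{bbbI} with $\beta=0$, we then have $L(x',y')=R+2S+\ell+L(x,y)$, so setting $\tau := \kappa + R + 2S + \ell$ yields $L(x'',y'')\ge \tau+\rho$ iff $L(x,y)\ge \rho$. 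Clearly $x''$, $y''$, and $\tau$ can be produced in time $\Oh(n)$, and the alphabet is $\{0,1,2\}$.

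Next I would verify the parameters. I would first show that $|x|,|y|,\ell,R,S,|a|,|b|,\kappa=\Oh(L)$ exactly as in the proof of \lemref{smallConstAlphabetTwo}, using the parameter relations $d\le |\Sigma|L^2$ and $M\le 2Ln$. By choice $\tilde{n}=\max\{n,\kappa+|a|+m+|x|\}$ we have $|x''|\ge |y''|$, and the bound $\tilde{n}=\Theta(n)$ follows from $\kappa+|a|+m+|x|\le \Oh(L)+m\le\Oh(n)$. This gives $n(x'',y'')=\Theta(n)$, $m(x'',y'')=\kappa+|a|+m+\ell+|x|=\Theta(m)$ and $L(x'',y'')=\Theta(L)$. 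For $\delta$, compute $\delta(x'',y'') = |y''|-L(x'',y'') = (|a|-|b|) + m + (|x|-L(x,y)) = R + m + (|x|-L(x,y))$, which is $\Theta(m)=\Theta(\delta)$ since $R\le \Oh(L)\le \Oh(m)$ and $\alpha_\delta=\alpha_m$. The assumption $\alpha_\delta=\alpha_m$ also forces $\alpha_\Delta=1$, so as in the binary case $\Delta(x'',y'')=\Theta(n)=\Theta(\Delta)$.

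For the number of matching pairs I would split by symbol. The 2's contribute $\tilde{n}\cdot\kappa = \Theta(n\cdot M/n) = \Theta(M)$, already giving the lower bound. For the upper bound, the 0's contribute $\Oh(L)\cdot\Oh(L)=\Oh(L^2)\le\Oh(|\Sigma|\cdot M)=\Oh(M)$ via \lemref{Mbounds}(i); and the 1's contribute $(S+\occ_1(y))\cdot(R+S+m+\occ_1(x))=\Oh(d/L)\cdot\Oh(m)=\Oh(md/L)$, which is $\Oh(M)$ precisely by the ternary relation $M\ge md/(80L)$ of \lemref{Mternarymd}. Thus $M(x'',y'')=\Theta(M)$, and here is the single place where the argument genuinely needs $|\Sigma|=3$ (rather than just $|\Sigma|\ge 3$) -- so this step is the main obstacle, since the binary relation $M\ge Lm/4$ is too strong to hold for our strings and we must rely on the weaker ternary bound.

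Finally, for dominant pairs, \lemref{greedy} gives $d(x'',y'') = \kappa + d(x',y')$. For the upper bound, since $\occ_1(y')=S+\occ_1(y)=\Oh(d/L)$ and $L(x',y')=\Theta(L)$, \lemref{few1s} (applied with $\Sigma$ restricted to $\{0,1\}$) yields $d(x',y')\le 5L(x',y')\cdot \occ_1(y')=\Oh(d)$. For the lower bound, \obsref{prefix} and \lemref{genDomPairs}\itemref{genDomPairsd} give $d(x',y')\ge d(a,b)\ge R\cdot S = \Omega(d)$. Combining with $\kappa=\Oh(L)\le \Oh(d)$ we conclude $d(x'',y'')=\Theta(d)$, which finishes the parameter verification and shows $(n,x'',y'')\in \LCS^{\gamma'}(\Valpha,\{0,1,2\})$ for some $\gamma'\ge \gamma$.
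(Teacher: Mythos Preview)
Your proof is correct and follows essentially the same route as the paper. The only cosmetic difference is that the paper invokes the Disjoint Alphabets Lemma to split off the $2^{\tilde n}$/$2^\kappa$ blocks (for both $L$ and $d$), whereas you use greedy prefix matching on $2^\kappa$ and then discard the remaining $2^{\tilde n-\kappa}$ because $x'$ contains no $2$'s; in the dominant-pair step you should make that last deletion explicit (i.e., $d(2^{\tilde n-\kappa}y',x')=d(y',x')$), but it is immediate for the same reason.
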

\begin{proof}
Note that unlike the case $\{0,1\}$ the string $x$ now appears in $y''$ and $y$ appears in $x''$, so the constructions are swapped. This is necessary to realize $m$ and $M$, using the parameter relation $M \ge md/(80L)$ that holds for $\Sigma = \{0,1,2\}$. Observe that as usual $|x''| \ge |y''|$.

We first compute
\begin{equation}\label{eq:largelcssigma3lcs}
L(x'',y'') =  L(2^{\tilde{n}}, 2^\kappa)  + L(y',x') = \kappa + (R+2S) + \ell + L(x,y),
\end{equation}
where the first equality follows from the Disjoint Alphabets Lemma and the second equality from greedy prefix matching and \lemref{bbbI}. Thus, by setting $\tau = \kappa+(R+2S) + \ell$, we have $L(x'',y'') \ge \tau + \rho$ if and only if $L(x,y)\ge \rho$. Clearly, $x'',y''$, and $\tau$ can be computed in time $\Oh(n)$, and $\Sigma(x'',y'') = \{0,1,2\}$.

As in the case $\{0,1\}$, we have $|x|,|y|,\ell,R,S,|a|,|b|,\kappa = \Oh(L)$.\dopara{L, m, n} Thus, by \eqref{eq:largelcssigma3lcs}, we have $L(x'',y'') = R + \Oh(L) = \Theta(L)$. Furthermore, note that $\tilde{n} = \Theta(n)$. Thus, $|y''| = \kappa+|a| + m + \ell + |x| = m + \Oh(L) = \Theta(m)$ and $|x''| = \tilde{n} + |b| + \ell+|y|=\tilde{n} + \Oh(L) = \Theta(n)$. Since $L(x, y) \le |x| = \Oh(L)$,  the bound $L(x'',y'') = R+\Oh(L) = \Theta(L)$ follows directly from~\eqref{eq:largelcssigma3lcs}. 

By~\eqref{eq:largelcssigma3lcs},\dopara{\delta,\Delta} we see that $\delta(x'',y'') = |y''| - L(x'',y'') = R+m +(|x|-L(x,y)) \ge m$. Hence $\delta(x'',y'') = \Theta(m)$. Thus, $\Delta(x'',y'') = \delta(x'',y'') + (|x''| - |y''|) = \Theta(n)$ follows as in the case $\{0,1\}$.

For $M$\dopara{M,\Sigma}, observe that $\occ_1(y'') = \occ_1(a) + m + \occ_1(x)=m + \Oh(L) = \Theta(m)$. Moreover, $\occ_1(y)=\Oh(d/L)$ (by assumption) yields $\occ_1(x'') = \occ_1(b)+\occ_1(y) = S + \Oh(d/L) = \Oh(d/L)$. Also note that $\occ_0(y'') = \occ_0(a) + \ell + \occ_0(x) = \Oh(L)$ and $\occ_0(x'') = \occ_0(b) + \ell + \occ_0(y) = \Oh(L)$. Since furthermore $\occ_2(y'') = \Theta(M/n)$ (by the parameter relation $M\ge n$) and $\occ_2(x'') = \Theta(n)$, we conclude that $M(x'',y'') = \sum_{\sigma \in \{0,1,2\}} \occ_\sigma(x'')\cdot \occ_\sigma(y'') = \Oh(dm/L + L^2) + \Theta(M)$. By the parameter relations $M \ge md/(80L)$ (using that $\Sigma =\{0,1,2\}$) and $M\ge L^2/|\Sigma| = \Omega(L^2)$, this yields $M(x'',y'') = \Theta(M)$. 

For the remaining parameter $d$, by the disjoint alphabets lemma and \lemref{greedy} we have $d(x'',y'') = d(2^{\tilde{n}},2^{\kappa}) + d(y',x') = \kappa + d(x',y')$ (using symmetry $d(x,y) = d(y,x)$). The remaining arguments are the same as in the case $\{0,1\}$. 
 \end{proof}

In the case $\Sigma = \{0,1,2,3\}$ we can use the new symbol $3$ to pad $m$ (instead of using symbol $1$, as in the previous case). Note that now $x$ appears in $x''$ and $y$ in $y''$, as in the case $\{0,1\}$.

\begin{lem} \label{lem:smallConstAlphabetFour}
Consider a parameter setting $(\Valpha,\{0,1,2,3\})$ satisfying \tabref{paramChoiceRestr} with $\alpha_\delta = \alpha_m$. Let $(n,x,y)$ be an instance of $\LCS_\le^\gamma(\Valpha,\{0,1\})$ with $|x|,|y| \le \gamma \cdot L$ and $\occ_1(y) \le \gamma \cdot d/L$ satisfying $L(x, 0^{\beta'} y) = L(x,y)$ for any $\beta' \ge 0$. We obtain strings $x',y'$ from \lemref{bbbI} (recall that in this lemma we set $\ell := |x|+|y|$), where we choose
\begin{align*}
R &:= L, & S & := \lfloor d/L \rfloor, &  
 \beta & := 0, & \alpha &:= 0.
\end{align*}
Then, setting $\kappa:= \lfloor M/n \rfloor$ and $\tilde{n}:=\max\{n,m+\kappa+|y|\}$, the strings defined by
\begin{alignat*}{7}
 x'' &:= \quad  3 \; & &2^{\tilde{n}} \; x' \quad & &= \quad 3 \; & &  2^{\tilde{n}} \; a \; 0^\ell \; x \quad & &= \quad 3 \; & &  2^{\tilde{n}} \; (01)^{R+S} \; & & 0^\ell \; x, \\
 y'' &:=  \quad 3^{m} \; & & 2^{\kappa} \; y' \quad & &= \quad 3^m \; & & 2^{\kappa} \; b \; 0^{\ell} \; y \quad & &= \quad 3^m \; & & 2^{\kappa} \; 0^R (01)^S \; & & 0^{\ell} \; y,
\end{alignat*}
are an instance of $\LCS^{\gamma'}(\Valpha, \{0,1,2,3\})$ (for some constant $\gamma'\ge \gamma$) and can be computed in time $\Oh(n)$, together with an integer $\tau$ such that $L(x'',y'') \ge \tau + \rho$ if and only if $L(x,y)\ge \rho$.
\end{lem}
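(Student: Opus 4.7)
The plan is to follow \lemrefs{smallConstAlphabetTwo}{smallConstAlphabetThree} in spirit, exploiting the larger alphabet to separate the roles of the padding symbols: symbol~$2$ pads $|x''|$ up to $\Theta(n)$ and realizes the bulk $\Theta(M)$ of matching pairs, while the new symbol~$3$ pads $|y''|$ up to $\Theta(m)$ cheaply, contributing only $m$ matching pairs since $\occ_3(x'')=1$, which is absorbed by the parameter relation $M \ge n \ge m$.

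First, I would establish $L(x'',y'') = 1 + \kappa + L(x',y') = 1 + \kappa + R + 2S + \ell + L(x,y)$. Since the unique $3$ in $x''$ sits at position~$1$ and all $3$'s in $y''$ fill positions $1,\dots,m$, any common subsequence can match at most one $3$, and such a match must precede all subsequent alignments; a similar observation yields at most $\kappa$ matched $2$'s, leaving \lemref{bbbI} (with $\alpha=\beta=0$) to handle the $\{0,1\}$-suffix $L(x',y')$. Setting $\tau := 1 + \kappa + R + 2S + \ell$ yields $L(x'',y'') \ge \tau+\rho \iff L(x,y)\ge \rho$, and clearly $x'', y''$, and $\tau$ are computable in time $\Oh(n)$.

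Second, I would verify the parameter bounds. Reusing $|x|,|y|,\ell,R,S,|a|,|b|,\kappa = \Oh(L)$ from \lemref{smallConstAlphabetTwo}, one has $\tilde n = \Theta(n)$ (since $m + \kappa + |y| = \Oh(n)$), $|x''| = \Theta(n)$, $|y''| = m+\Oh(L) = \Theta(m)$, and $L(x'',y'')=\Theta(L)$; hence $\delta(x'',y'') = \Theta(m) = \Theta(\delta)$ and $\Delta(x'',y'') = \Theta(n) = \Theta(\Delta)$, using $\Delta = \Theta(n)$ as in \lemref{smallConstAlphabetTwo}. For $M(x'',y'')$, computing per-symbol contributions gives $\occ_3(x'')\occ_3(y'')=m$, $\occ_2(x'')\occ_2(y'')=\tilde n\kappa = \Theta(M)$, $\occ_0(x'')\occ_0(y'') = \Oh(L^2)$, and $\occ_1(x'')\occ_1(y'') = \Oh((L+d/L)\cdot d/L) = \Oh(d)$ (using $d\le |\Sigma|L^2 = 4L^2$); the parameter relations $M \ge n$, $M \ge L^2/|\Sigma|$, and $M \ge d$ then collapse the sum to $\Theta(M)$.

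The step I expect to be the main obstacle is the dominant-pair count, since the asymmetric $3$-padding prevents a direct appeal to the Disjoint Alphabets Lemma or \lemref{greedy}. I would partition the dominant pairs of $(x'',y'')$ by the matched symbol: those involving a $3$ must have $i=1$ (as $\occ_3(x'')=1$), so \obsref{atMostOned} produces at most one; those involving a $2$ have $j\in[m+1,m+\kappa]$, so \obsref{atMostOned} (column-wise) produces at most $\kappa$; and those with $x''[i]=y''[j]\in\{0,1\}$ must satisfy $i \ge \tilde n+2$ and $j \ge m+\kappa+1$. For pairs in the last range, $L(x''[1..i],y''[1..j]) = 1+\kappa+L(x'[1..i'],y'[1..j'])$ with $i':=i-\tilde n-1$, $j':=j-m-\kappa$, giving a level-shifted bijection with the dominant pairs of $(x',y')$. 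Hence $d(x'',y'') = 1+\kappa+d(x',y')$. Bounding $d(x',y')$ from above via \lemref{few1s} with $\occ_1(y') = S+\Oh(d/L) = \Oh(d/L)$ and $L(x',y')=\Theta(L)$ yields $\Oh(d)$; \obsref{prefix} together with \lemref{genDomPairs} gives $d(x',y') \ge d(a,b) = \Omega(RS) = \Omega(d)$; combined with $\kappa = \Oh(L) \le \Oh(d)$ by the parameter relation $M \le 2Ln$, this produces $d(x'',y'') = \Theta(d)$, completing the verification.
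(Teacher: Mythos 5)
Your proof is substantively correct and follows the paper's approach; the $L$ and $M$ computations match the paper's. The one place where you diverge is the dominant-pair count: the paper notes that $3$, $2^{\tilde n}$, $x'$ (resp.\ $3^m$, $2^\kappa$, $y'$) use pairwise disjoint alphabets, so the Disjoint Alphabets Lemma (\lemref{disjointalphabet}, which applies to the symmetric parameter $d$ even though $|3|<|3^m|$) gives $d(x'',y'') = d(3,3^m) + d(2^{\tilde n},2^\kappa) + d(x',y')$ immediately, and then \lemref{greedy} evaluates $d(3,3^m)=1$ and $d(2^{\tilde n},2^\kappa)=\kappa$. You instead run a manual per-symbol case analysis, and your step ``those involving a $2$ have $j\in[m+1,m+\kappa]$, so \obsref{atMostOned} (column-wise) produces at most $\kappa$'' has a gap: \obsref{atMostOned} is quantified over $(\bar j, k)$ pairs, so column-wise it gives at most one dominant pair \emph{per column and per level} $k$, hence $\Oh(\kappa L)$ a priori, not $\kappa$. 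To reach $\kappa$ you must additionally observe that for $j\in[m+1,m+\kappa]$ the level of any dominant pair in that column is forced to $k=1+(j-m)$ (since $L[i,j] = 1 + \min\{i-1,\,j-m\}$ for $2\le i\le\tilde n+1$, and the dominance conditions pin down $i-1=j-m$). The same subtlety applies to your ``at most one'' claim for symbol $3$ (there you need $L[1,j]\le 1$ to force $k=1$). The conclusion $d(x'',y'')=1+\kappa+d(x',y')$ is correct, but the Disjoint Alphabets route is cleaner and sidesteps both issues, which is why the paper calls this step routine rather than the main obstacle.
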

\begin{proof}
We compute
\begin{equation}\label{eq:smalllcssigma4lcs}
L(x'',y'') = L(3,3^m) + L(2^{\tilde{n}}, 2^{\kappa})  + L(x',y') = 1+\kappa + (R+2S) + \ell + L(x,y),
\end{equation}
where the first equality follows from the Disjoint Alphabets Lemma and the second follows from greedy prefix matching and \lemref{bbbI}. Hence, by setting $\tau = 1+\kappa+R+2S+\ell$, we have $L(x,y) \ge \rho$ if and only if $L(x'',y'') \ge \tau + \rho$. Clearly, $x'',y''$, and $\tau$ can be computed in time $\Oh(n)$, and $\Sigma(x'',y'') = \{0,1,2,3\}$.

As for the cases $\{0,1\}$ and $\{0,1,2\}$, we have $|x|,|y|,\ell, R, S, |a|,|b|,\ell, \kappa =\Theta(L)$.\dopara{n,m,L,\delta,\Delta} Note that by choice of $\tilde{n}$, we have again $|x''|\ge|y''|$ and $\tilde{n} = \Theta(n)$. Hence, $|x''| = 1+\tilde{n} + |a| + \ell + |x| = \tilde{n} + \Oh(L) = \Theta(n)$ and $|y''| = m + \kappa + |b| + \ell+|y|= m + \Oh(L) = \Theta(m)$. Since $L(x, y) \le |x| = \Oh(L)$,  the bound $L(x'',y'') = R+ \Oh(L) = \Theta(L)$ follows directly from~\eqref{eq:smalllcssigma4lcs}. Note that \eqref{eq:smalllcssigma4lcs} also implies that $\delta(x'',y'') = |y''| - L(x'',y'') = m-1 + |x|-L(x,y) \ge m - 1$ and hence $\delta(x'',y'') = \Theta(m)$. Thus, $\Delta(x'',y'') = \delta(x'',y'') + (|x''| - |y''|) = \Theta(n)$ follows as for the case $\{0,1\}$.

For $M$\dopara{M,\Sigma}, observe that $|a0^\ell x|,|b0^\ell y|= \Oh(L)$ implies that $M(a0^\ell x,b 0^\ell y) = \Oh(L^2)$. By the Disjoint Alphabets Lemma, we obtain
\[M(x'',y'') = M(2^{\tilde{n}},2^\kappa) + M(3,3^m)+ M(a0^\ell x, b0^\ell y) = \kappa \tilde{n} + \Oh(m + L^2) = \Theta(M),\]
where we used $\kappa \tilde{n} = \Theta(M/n \cdot n) = \Theta(n)$ (note that $M\ge n$ implies $\kappa = \Theta(M/n)$) and the parameter relations $M\ge n \ge m$ and $M\ge L^2/|\Sigma| = \Omega(L^2)$. 

For the remaining parameter $d$, as in the case $\{0,1\}$ we show that $d(x',y') = \Theta(d)$. Now the Disjoint Alphabets Lemma and \lemref{greedy} prove that $d(x'',y'') = d(3,3^m) + d(2^{\tilde{n}},2^{\kappa}) + d(x',y') = 1+\kappa+d(x',y') = d(x',y') + \Oh(L) = \Theta(d)$ using $\kappa = \Oh(L)$ and the parameter relation $d\ge L$. 
\end{proof}

Finally, observe that for any parameter setting $(\Valpha,\Sigma)$ with $|\Sigma| \ge 5$ satisfying \tabref{paramChoiceRestr}, also the parameter setting $(\Valpha, \{0,1,2,3\})$ satisfies \tabref{paramChoiceRestr}. Hence, the following lemma transfers the hardness of $\LCS(\Valpha,\{0,1,2,3\})$ to $\LCS(\Valpha, \Sigma)$.

\begin{lem} \label{lem:smallConstAlphabetFive}
Let $\Valpha$ be a parameter setting satisfying \tabref{paramChoiceRestr} with $\alpha_\delta = \alpha_m$. 
Let $\Sigma$ be an alphabet of size $|\Sigma| \ge 5$. If there is an $\Oh(n^\beta)$-time algorithm for $\LCS(\Valpha,\Sigma)$, then also $\LCS(\Valpha,\{0,1,2,3\})$ admits an $\Oh(n^\beta)$-time algorithm.
\end{lem}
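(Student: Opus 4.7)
The plan is to construct a simple reduction from $\LCS(\Valpha,\{0,1,2,3\})$ to $\LCS(\Valpha,\Sigma)$ by padding inputs with a constant-length string over the additional alphabet symbols. Given an instance $(n,x,y)$ of $\LCS^\gamma(\Valpha,\{0,1,2,3\})$, let $\Sigma' := \Sigma \setminus \{0,1,2,3\}$ with $|\Sigma'| = |\Sigma|-4 \ge 1$, enumerate $\Sigma' = \{\sigma_1, \ldots, \sigma_{|\Sigma'|}\}$, and set $w := \sigma_1 \sigma_2 \cdots \sigma_{|\Sigma'|}$. Then define $x'' := x \concat w$ and $y'' := y \concat w$. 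Note that $|x''| \ge |y''|$ is preserved, and the construction takes time $\Oh(n)$.

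Since $\{0,1,2,3\}$ and $\Sigma'$ are disjoint, the Disjoint Alphabets Lemma (\lemref{disjointalphabet}) yields $p(x'',y'') = p(x,y) + p(w,w)$ for every $p \in \params^*$. Because $|\Sigma|$ is constant, $p(w,w) = \Oh(1)$ for every parameter, so each parameter changes by at most an additive constant. In particular, $L(x'',y'') = L(x,y) + |\Sigma'|$, which lets us recover $L(x,y)$ exactly. The alphabet size of $(x'',y'')$ is exactly $4 + |\Sigma'| = |\Sigma|$, so the constructed instance uses the required alphabet. Moreover, since $\Valpha$ satisfies \tabref{paramChoiceRestr} for both constant-alphabet choices (going between $|\Sigma| \ge 5$ and $|\Sigma|=4$ does not change the list of admissible restrictions, because the additional constraints in \tabref{paramChoiceRestr} only apply to $|\Sigma| \in \{2,3\}$), the value $p(x'',y'') = p(x,y) + \Oh(1) = \Theta(n^{\alpha_p})$ for every $p$, and thus $(x'',y'')$ is an instance of $\LCS^{\gamma'}(\Valpha,\Sigma)$ for some constant $\gamma' \ge \gamma$.

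To finish, given an $\Oh(n^\beta)$-time algorithm $\mathcal{A}$ for $\LCS(\Valpha,\Sigma)$, we solve $\LCS(\Valpha,\{0,1,2,3\})$ on $(x,y)$ by constructing $(x'',y'')$ in time $\Oh(n)$, invoking $\mathcal{A}$ to obtain $L(x'',y'')$ in time $\Oh(n^\beta)$, and returning $L(x'',y'') - |\Sigma'|$. The total running time is $\Oh(n^\beta)$, proving the claim.

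There is essentially no obstacle here: the only thing to check carefully is that the padding changes every parameter by at most an additive constant and that the new parameter setting $(\Valpha,\Sigma)$ still satisfies \tabref{paramChoiceRestr} whenever $(\Valpha,\{0,1,2,3\})$ does. Both are immediate consequences of the Disjoint Alphabets Lemma and the observation that only alphabets of size $2$ and $3$ impose extra restrictions in \tabref{paramChoiceRestr}.
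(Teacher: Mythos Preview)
Your proof is correct and follows essentially the same route as the paper: pad with a constant-length string over the new symbols and invoke the Disjoint Alphabets Lemma to see that every parameter changes by $O(1)$. The only difference is that the paper prepends $w = 4\cdots\sigma$ to $x$ and $\rev(w)$ to $y$, which requires the Crossing Alphabets Lemma to compute $L(w,\rev(w))=1$ and then an extra argument (using $\alpha_\delta=\alpha_m$) to absorb the resulting increment $|w|-1$ to $\delta$ and $\Delta$; your choice of appending the \emph{same} string $w$ to both sides gives $\delta(w,w)=\Delta(w,w)=0$, so $\delta$ and $\Delta$ are unchanged and only the Disjoint Alphabets Lemma is needed. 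This is a minor simplification, not a different approach.
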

\begin{proof}
Given an instance $(x,y)$ of $\LCS(\Valpha,\{0,1,2,3\})$ with $n := |x|$, we show how to compute in time $\Oh(n)$ an instance $(x',y')$ of $\LCS(\Valpha,\Sigma)$ such that $L(x',y') = 1 + L(x,y)$. The claim then follows from applying the $\Oh(n^\beta)$-time algorithm on $x',y'$ (and subtracting 1 from the result).

Without loss of generality, let $\Sigma = \{0, \dots, \sigma\}$ with $\sigma \ge 4$. Define $x' := w x$ and $y' = w^R y$, where $w= 4 \dots \sigma$ and $w^R = \sigma \dots 4$. Then by the Disjoint Alphabets and Crossing Alphabets Lemmas (\lemrefs{disjointalphabet}{crossingalphabet}), we obtain $L(x',y') = L(w,w^R) + L(x,y) = 1 + L(x,y)$. It remains to show that $(x',y')$ is an instance of $\LCS(\Valpha, \Sigma)$. By the Crossing Alphabets Lemma, for all parameters $p\in \{d,M,n,m\}$ we have $p(w,w^R) =  \sum_{\sigma'=4}^\sigma p(\sigma',\sigma') = |\Sigma|-4$, and hence the Disjoint Alphabets Lemma yields $p(x',y') = p(w,w^R) + p(x,y) = |\Sigma|-4+p(x,y) = \Theta(p(x,y))$, by the parameter relations $n \ge m \ge |\Sigma|$ and $M \ge d \ge |\Sigma|$. For $L$ we obtain $L(x',y') = L(w,w^R) + L(x,y) = 1 + L(x,y) = \Theta(L(x,y))$. For $p\in \{\delta,\Delta\}$ this yields $p(x',y') = (|w|-1)  + p(x,y) = \Theta(p(x,y))$, since $\alpha_\Delta \ge \alpha_\delta = \alpha_m \ge \alpha_\Sigma$ (by the assumption $\alpha_\delta = \alpha_m$ and the parameter relations $\Delta \ge \delta$ and $m \ge |\Sigma|$) and thus $\Delta(x,y) \ge \delta(x,y) \ge \Omega(|w|-1)$. Hence, $(x',y')$ has the same parameters as $(x,y)$ up to constant factors, so all parameter relations satisfied by $(x,y)$ are also satisfied by $(x',y')$.
Since clearly $x',y'$ use alphabet $\Sigma$, indeed $(x',y')$ is an instance of $\LCS(\Valpha, \Sigma)$.
\end{proof}

\lemrefsss{smallConstAlphabetTwo}{smallConstAlphabetThree}{smallConstAlphabetFour}{smallConstAlphabetFive} of this section, together with the construction of hard strings in $\LCS_\le(\Valpha,\{0,1\})$ in \lemref{smallLCShardnessConstantAlphabet}, prove hardness of $\LCS(\Valpha,\Sigma)$ for any constant alphabet size in the case $\alpha_\delta = \alpha_m$, i.e., \lemref{smallLCSsmallSigma}.

\subsection{Large LCS, Alphabet Size at least 3}

In this section, we study the case that $\alpha_{L}=\alpha_m$ (and $\alpha_{\delta},\alpha_{\Delta}$ may be small). Additionally, we assume that $|\Sigma| \ge 3$. In this regime, \thmref{main2} follows from the following statement (and \lemref{paramsnecessary}).

\begin{lem}\label{lem:largelcsternaryalphabets}
Let $(\Valpha,\Sigma)$ be a parameter setting satisfying \tabref{paramChoiceRestr} with $\alpha_L=\alpha_m$ and $|\Sigma| \ge 3$. There is a constant $\gamma \ge 1$ such that any algorithm for $\LCS^\gamma(\Valpha,\Sigma)$ takes time $\min\{d, \delta m, \delta \Delta\}^{1-o(1)}$ unless \OVH\ fails.
\end{lem}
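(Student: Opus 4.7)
The plan mirrors the overall strategy of the previous subsection: start from the hard binary instances produced in \secref{hardnesslargeLCS}, then embed them into longer strings over $\Sigma$ while carefully simulating every target parameter of $(\Valpha,\Sigma)$. Concretely, I would first observe that for any parameter setting $(\Valpha,\Sigma)$ satisfying \tabref{paramChoiceRestr} with $\alpha_L=\alpha_m$ the companion setting $\Valpha'$ defined by overwriting $\alpha_\Sigma:=0$ and shrinking $\alpha_M$ to $\max\{1,\alpha_d,2\alpha_L\}$ still satisfies \tabref{paramChoiceRestr}. Then \lemref{largeLCShardnessConstantAlphabet} supplies, in time $\Oh(n)$, instances $(n,x,y)$ of $\LCS_\le(\Valpha',\{0,1\})$ with $|y|\le|x|\le \Oh(\min\{n^{\alpha_m},n^{\alpha_\Delta}\})$ and a threshold $\rho$ such that deciding $L(x,y)\ge\rho$ decides the underlying \OV instance, and such that any algorithm doing so in time $\min\{d,\delta m,\delta\Delta\}^{1-\eps}$ would refute \OVH.

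The core step is then to pad $(x,y)$ to an instance $(x'',y'')$ of $\LCS^{\gamma'}(\Valpha,\Sigma)$ together with an integer $\tau$ so that $L(x'',y'')=\tau+L(x,y)$. For this I would use the basic building block of \lemref{bbbI} to control $L$ and $d$ by prefixing with the gadget strings $a=(01)^{R+S}$, $b=0^R(01)^S$ of \lemref{genDomPairs} instantiated with $R=\lfloor\min\{\Delta,\sqrt{d}\}\rfloor$, $S=\lceil d/R\rceil$ (so that $R\cdot S=\Theta(d)$ and $R+2S=\Oh(L)$ by \tabref{paramChoiceRestr}), and then pad with additional symbols from $\Sigma\setminus\{0,1\}$ to reach the target $n,m,\delta,\Delta,\Sigma$. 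I would distinguish two cases according to \tabref{paramChoiceRestr}:
\begin{itemize}[leftmargin=2\parindent]
\item For $|\Sigma|=3$, a natural template is $x'':=2^{\tilde n}\,a\,0^\ell\,x$ and $y'':=2^{\tilde m}\,b\,0^\ell\,y$ with $\ell=|x|+|y|$ and $\tilde n,\tilde m$ chosen so that $|x''|-|y''|=\Theta(\Delta)$ and $|y''|-L(x'',y'')=\Theta(\delta)$. By the Disjoint Alphabets Lemma and greedy prefix matching, $L(x'',y'')=\min\{\tilde n,\tilde m\}+(R+2S)+\ell+L(x,y)$, which fixes the threshold $\tau$.
\item For $|\Sigma|\ge 4$, I would first design the construction for $|\Sigma|=4$ by additionally prefixing a short $3\,\ldots\,3^m$-block to $y''$ (as in \lemref{smallConstAlphabetFour}) in order to decouple the padding of $m$ from the padding of $n$; then reduce $|\Sigma|\ge 5$ to $|\Sigma|=4$ by a Crossing Alphabets trick as in \lemref{smallConstAlphabetFive}.
\end{itemize}

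In both cases, the parameters $L,m,n,\delta,\Delta,|\Sigma|$ follow from direct bookkeeping analogous to \secref{hardnessSmallSigma}. For $d$, the Disjoint Alphabets Lemma together with \obsref{prefix} and \lemref{genDomPairs}\itemref{genDomPairsd} gives $d(x'',y'')=\Theta(d)$. For $M$, the Disjoint Alphabets Lemma yields $M(x'',y'')=\Theta(\tilde n\cdot\tilde m)+\Oh(L^2)$, and combining the parameter relations $M\ge L^2/|\Sigma|$ (\lemref{Mbounds}) and, crucially for $|\Sigma|=3$, $M\ge md/(80L)$ (\lemref{Mternarymd}) with the assumption $\alpha_L=\alpha_m$, shows that this is $\Theta(M)$ exactly on the admissible range of parameter settings.

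I expect the main obstacle to be the interaction between the $(01)^{R+S}$-prefix, which contributes $\Theta(R\cdot S)=\Theta(d)$ dominant pairs and $\Omega(S)=\Omega(d/L)$ occurrences of the symbol $1$, and the required upper bound on $M$ in the $|\Sigma|=3$ regime. For parameter settings where $d$ is close to $L^2$ the prefix alone already contributes $\Theta(L)$ matches of symbol $1$ against $y$, and verifying $M(x'',y'')=\Oh(M)$ requires the tight bound $M\ge md/(80L)$ from \lemref{Mternarymd}. If any parameter setting forces the prefix to generate too many matches, I would fall back on the Dominant Pair Reduction (\lemref{dreduction}) applied with the symbol~$2$, which removes the matching-pair contribution of $a,b$ at the cost of an additive $\Oh(\ell\cdot|y|)$ term in $d$ that is absorbed by the slack in \tabref{paramChoiceRestr}. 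A careful case split on whether $\alpha_d\le 2\alpha_L$ or $\alpha_d>2\alpha_L$ (paralleling the split between \lemref{paddsmallS} and \lemref{paddlargeS}) should make the $M$-bound go through in all regimes, completing the proof.
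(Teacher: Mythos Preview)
Your template for $|\Sigma|=3$ has a genuine gap: it cannot realize the target value of $\delta$. With $x''=2^{\tilde n}a0^\ell x$ and $y''=2^{\tilde m}b0^\ell y$ and $\tilde m\le\tilde n$ (which you need so that the $2$-block contributes $\Theta(M)$ matching pairs with $\tilde n=\Theta(n)$, $\tilde m=\Theta(M/n)$), greedy prefix matching and \lemref{bbbII} give $L(x'',y'')=\tilde m+(R+2S)+\ell+L(x,y)$, hence $\delta(x'',y'')=|y|-L(x,y)=\delta(x,y)$. But the hard instances of \lemref{largeLCShardnessConstantAlphabet} only guarantee $\delta(x,y)=\Oh(AD)=\Oh(\min\{\delta,\,d/\min\{m,\Delta\}\})$, which is $o(\delta)$ whenever $\alpha_\delta>\alpha_d-\min\{\alpha_m,\alpha_\Delta\}$. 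In those (non-trivial) parameter settings your $(x'',y'')$ lies outside $\LCS(\Valpha,\Sigma)$. The two free lengths $\tilde n,\tilde m$ are simply not enough degrees of freedom to hit $n$, $M$, and $\delta$ simultaneously; your proposed fallback via Dominant Pair Reduction does not help here since the issue is $\delta$, not $M$ or $d$. The same problem carries over to your $|\Sigma|\ge 4$ plan, since \lemref{smallConstAlphabetFour} is tailored to $\alpha_\delta=\alpha_m$ and forces $\delta(x'',y'')=\Theta(m)$.

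The paper's fix is a different padding architecture. It uses symbol~$2$ only to set $\Delta$ and $M$ (via $2^{\kappa}2^{\Delta}$ in $x''$ and $2^{\kappa}$ in $y''$ with $\kappa=\lfloor M/n\rfloor$), and then pads $\delta$ and $m$ \emph{inside the binary part} using the block structure $x\mapsto 0^\mu 1^\nu 0^\mu x$, $y\mapsto 1^\nu 0^\mu y$, whose effect on $L$ and $d$ is worked out in a separate technical lemma (\lemref{deltapadding}). This decouples $\delta$ from the $2$-block and lets all five of $n,m,\delta,\Delta,M$ be set independently. The choice $S=\Theta(m)$, $R=\lfloor d/m\rfloor$ (rather than your $R=\min\{\Delta,\sqrt d\}$) also makes the $d$-analysis mesh with the new block via \lemref{bbbII-d}. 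Finally, the reduction from $|\Sigma|\ge 4$ goes to $|\Sigma|=3$ via the Disjoint Alphabets Lemma (prepending the same $w=3\ldots\sigma$ to both strings), which leaves $\delta$ and $\Delta$ unchanged; Crossing Alphabets as in \lemref{smallConstAlphabetFive} would add $|\Sigma|-4$ to $\delta$, which is harmless here but is not what the paper does.
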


By the following lemma, it suffices to prove the result for $\Sigma = \{0,1,2\}$ (note that for any $(\Valpha, \Sigma)$ satisfying \tabref{paramChoiceRestr} with $\alpha_L=\alpha_m$ and $|\Sigma|\ge4$, also $(\Valpha, \{0,1,2\})$ satisfies \tabref{paramChoiceRestr}, since the only additional constraint $\alpha_M \ge \alpha_m + \alpha_d - \alpha_L$ for ternary alphabets simplifies, by $\alpha_L=\alpha_m$, to the constraint $\alpha_M \ge \alpha_d$, which is satisfied by $\Valpha$).

\begin{lem}
Let $\Valpha$ be a parameter setting satisfying \tabref{paramChoiceRestr} with $\alpha_L = \alpha_m$. 
Let $\Sigma$ be an alphabet of size $|\Sigma| \ge 4$. If there is an $\Oh(n^\beta)$-time algorithm for $\LCS(\Valpha,\Sigma)$, then also $\LCS(\Valpha,\{0,1,2\})$ admits an $\Oh(n^\beta)$-time algorithm.
\end{lem}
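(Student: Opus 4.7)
The plan is to follow the pattern of \lemref{smallConstAlphabetFive} almost verbatim, reducing the base alphabet by one symbol. Write $\Sigma = \{0, 1, \ldots, \sigma\}$ with $\sigma \ge 3$. Given an instance $(x, y)$ of $\LCS^\gamma(\Valpha, \{0,1,2\})$, I would define $w := 3 \cdot 4 \cdots \sigma$ (a string of length $|\Sigma|-3 \ge 1$ over the fresh alphabet $\{3, \ldots, \sigma\}$) and construct $x' := w \cdot x$ and $y' := w^R \cdot y$, with $w^R := \sigma \cdots 4 \cdot 3$. The reduction takes $\Oh(n)$ time. An $\Oh((n')^\beta)$-time algorithm for $\LCS(\Valpha, \Sigma)$ applied to $(x', y')$, together with the identity $L(x, y) = L(x', y') - 1$, would then solve the given instance in time $\Oh(n^\beta)$.

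The core claim to verify is that $(x', y') \in \LCS^{\gamma'}(\Valpha, \Sigma)$ for some $\gamma'$ depending only on $\gamma$ and $|\Sigma|$. For the LCS identity, I would apply the Disjoint Alphabets Lemma (\lemref{disjointalphabet}) to the partition $(w, x)$ versus $(w^R, y)$ (valid since $\{3, \ldots, \sigma\}$ and $\{0, 1, 2\}$ are disjoint), giving $L(x', y') = L(w, w^R) + L(x, y)$; then invoke the Crossing Alphabets Lemma (\lemref{crossingalphabet}) on the $|\Sigma|-3$ singleton alphabets $\{3\}, \ldots, \{\sigma\}$ inside $w, w^R$ to conclude $L(w, w^R) = \max_i L(i, i) = 1$. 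By the same two lemmas, every other parameter $p \in \{n, m, d, M\}$ satisfies $p(x', y') = p(x, y) + (|\Sigma|-3)$, and consequently $\delta(x', y') = \delta(x, y) + (|\Sigma|-4)$ and $\Delta(x', y') = \Delta(x, y) + (|\Sigma|-4)$.

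Since $|\Sigma|$ is a fixed constant, every such additive shift is $\Oh(1)$ and can be absorbed into $\gamma'$: if $\alpha_p > 0$ then $n^{\alpha_p}$ eventually dominates the shift and $p(x', y') = \Theta(p(x, y))$ for large $n$; if $\alpha_p = 0$ then $p(x', y')$ remains $\Oh(1)$, which still respects the parameter setting. The only subtle difference from \lemref{smallConstAlphabetFive} is that we can no longer invoke $\alpha_\delta = \alpha_m \ge \alpha_\Sigma$ to conclude $\delta(x', y') = \Theta(\delta(x, y))$, since under our assumption $\alpha_L = \alpha_m$ the exponent $\alpha_\delta$ may be zero. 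This is, however, the only bump in the road, and the weaker bound $\delta(x', y') = \delta(x, y) + \Oh(1)$ is still enough to place $(x', y')$ into $\LCS^{\gamma'}(\Valpha, \Sigma)$ for sufficiently large $\gamma'$, which completes the reduction.
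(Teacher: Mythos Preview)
Your proof is correct but takes a slightly different route from the paper. You pad with $x' := wx$, $y' := w^R y$ (reversing $w$ on the $y$ side), mirroring \lemref{smallConstAlphabetFive}; this gives $L(x',y') = L(x,y) + 1$ and shifts $\delta,\Delta$ by the constant $|\Sigma|-4$, forcing you to argue that this additive $\Oh(1)$ can be absorbed into $\gamma'$ even when $\alpha_\delta = 0$. The paper instead uses $x' := wx$, $y' := wy$ with the \emph{same} $w$ on both sides. Then only the Disjoint Alphabets Lemma is needed (no Crossing Alphabets), giving $L(x',y') = L(x,y) + (|\Sigma|-3)$ and, crucially, $\delta(x',y') = \delta(x,y)$ and $\Delta(x',y') = \Delta(x,y)$ exactly. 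The point is that in the regime $\alpha_L = \alpha_m$, the parameter $L$ is large (comparable to $m$) and can safely absorb an additive $|\Sigma|-3$, whereas $\delta$ may be $\Theta(1)$ and is best left untouched. So the paper's choice of $w,w$ versus $w,w^R$ is precisely tailored to this regime and eliminates the ``bump in the road'' you had to argue around; your argument is nonetheless valid.
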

\begin{proof} 
Given an instance $(x,y)$ of $\LCS(\Valpha,\{0,1,2\})$ with $n := |x|$, we show how to compute in time $\Oh(n)$ an instance $(x',y')$ of $\LCS(\Valpha,\Sigma)$ such that $L(x',y') = |\Sigma|-3 + L(x,y)$. The claim then follows from applying the $\Oh(n^\beta)$-time algorithm on $x',y'$ (and subtracting $|\Sigma|-3$ from the result).

Without loss of generality, let $\Sigma = \{0, \dots, \sigma\}$ with $\sigma \ge 3$. Define $x' := w x$ and $y' = w y$, where $w= 3 \dots \sigma$. Then by the Disjoint Alphabets Lemma (\lemref{disjointalphabet}), we obtain $L(x',y') = L(w,w) + L(x,y) = |\Sigma|-3 + L(x,y)$. It remains to show that $(x',y')$ is an instance of $\LCS(\Valpha, \Sigma)$. By the Disjoint Alphabets Lemma, for all parameters $p \in \params^* = \{n,m, L, \delta, \Delta, |\Sigma|, M, d\}$ we have $p(x',y') = p(w,w) + p(x,y) = \sum_{\sigma'=3}^\sigma p(\sigma',\sigma') + p(x,y)$. For $p \in \{n,m,L,M,d\}$ we have $p(\sigma',\sigma') = 1$ and thus $p(x',y') = |\Sigma|-3 + p(x,y) = \Theta(p(x,y))$ by the assumption $\alpha_L = \alpha_m$ and the parameter relations $n \ge m \ge |\Sigma|$ and $M \ge d \ge |\Sigma|$. For $p\in \{\delta,\Delta\}$ this yields $p(x',y') = 0 + p(x,y) = p(x,y)$. Hence, $(x',y')$ has the same parameters as $(x,y)$ up to constant factors, so all parameter relations satisfied by $(x,y)$ are also satisfied by $(x',y')$.
Since clearly $x',y'$ use alphabet $\Sigma$, indeed $(x',y')$ is an instance of $\LCS(\Valpha, \Sigma)$.
\end{proof}

To prepare the construction, we adapt the construction of \lemref{bbbI} to obtain a desired value for $d$ (and, in later sections, $\delta$). Recall that \lemref{genDomPairs} defines $a=(01)^{R+S}$, $b=0^R (01)^S$ with $L(a,b)=|b|=R+2S$ and $d(a,b) = \Theta(R\cdot S)$.

\begin{lem}[Basic building block]\label{lem:bbbII}
Given strings $x,y$ and $R,S, \ell, \beta \ge 0$ with $\ell \ge R+|x|+|y|$ 
 we define 
\begin{alignat*}{10}
x' &\; := \quad & & \; a \; &0^\ell\; x \quad &= \quad & & \; (01)^{R+S} \; &0^\ell\; x, \\
y' &\; := \quad & 0^\beta &\; b \; &0^\ell\; y \quad &= \quad & 0^\beta &\; 0^R (01)^S \; &0^\ell\; y.
\end{alignat*}
Assume that $S \ge |x|$ \emph{or} $L(x,0^\beta y) = L(x,y)$. Then we have $L(x',y') = R+2S + \ell + L(x,y)$. 
\end{lem}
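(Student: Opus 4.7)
The lower bound $L(x',y') \geq R+2S+\ell+L(x,y)$ is immediate: by Lemma~\ref{lem:genDomPairs}\ref{itm:genDomPairsL}, $b$ is a subsequence of $a$ giving $|b| = R+2S$ matches, we then match the two $0^\ell$ blocks (gaining $\ell$), and finally take an LCS of $x$ and $y$; the initial $0^\beta$ prefix of $y'$ is simply skipped. For the upper bound, I fix the split $x' = a \cdot (0^\ell x)$ and write $L(x', y') = \max_{y' = wz} L(a, w) + L(0^\ell x, z)$ over all partitions $y' = wz$. I then perform a case analysis on where the split of $y'$ lies.

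\emph{Case 1 ($z$ is a subsequence of $y$).} I drop the split to obtain $L(a,w) + L(0^\ell x, z) \leq L(a, y') + L(0^\ell x, y)$ and estimate each factor by character counting: since $\occ_0(a) = \occ_1(a) = R+S$, we get $L(a, y') \leq (R+S) + (S + \occ_1(y)) = R + 2S + \occ_1(y)$, while $L(0^\ell x, y) \leq \occ_0(y) + \occ_1(x)$. Summing gives $R + 2S + |y| + \occ_1(x)$, which is $\leq R + 2S + \ell + L(x,y)$ by the hypothesis $\ell \geq R + |x| + |y| \geq \occ_1(x) + |y|$.
\emph{Case 2 (the split falls in the middle $0^\ell$ block).} Here $w = 0^\beta b \cdot 0^{\ell_1}$ and $z = 0^{\ell-\ell_1} y$ for some $0 \leq \ell_1 \leq \ell$. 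Lemma~\ref{lem:genDomPairs}\ref{itm:genDomPairsLmore} yields $L(a, w) = L(a, 0^\beta b 0^{\ell_1}) = R + 2S$. Greedy prefix matching (Lemma~\ref{lem:greedy}) gives $L(0^\ell x, 0^{\ell-\ell_1} y) = (\ell - \ell_1) + L(0^{\ell_1} x, y)$, and the split bound $L(0^{\ell_1} x, y) \leq \max_{y = y_L y_R} L(0^{\ell_1}, y_L) + L(x, y_R) \leq \ell_1 + L(x, y)$ closes the case.

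\emph{Case 3 (the split falls strictly inside $0^\beta b$).} This is the critical case and where either alternative assumption is used. Here $z = s \cdot 0^\ell y$ for a nonempty suffix $s$ of $0^\beta b$, and $L(a,w) \leq L(a, 0^\beta b) = R+2S$ by Lemma~\ref{lem:genDomPairs}\ref{itm:genDomPairsLmore}. The remaining task is to show $L(0^\ell x, s\cdot 0^\ell y) \leq \ell + L(x,y)$. The danger is that $s$ may begin with many zeros (from the $0^{\beta+R}$ prefix), which could potentially be matched with characters of the $x$-block, inflating the LCS. Under the assumption $L(x, 0^\beta y) = L(x,y)$, decomposing $0^\ell x = u_1 u_2$ and bounding $L(u_1, s) \leq \min(|u_1|, \occ_0(s))$ and $L(u_2, 0^\ell y) \leq (\ell - |u_1|) + L(x, 0^{|u_1|} y)$ via Lemma~\ref{lem:greedy} reduces the problem to estimating $L(x, 0^p y)$ for $p \leq \beta$; the assumption then forces this to equal $L(x,y)$, giving the desired bound. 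Under the alternative assumption $S \geq |x|$, I instead use that $(01)^S$ already contains every binary string of length $\leq |x|$ as a subsequence, so the contribution from $x$ being matched to $s$ rather than to $y$ is bounded by $|x| \leq S$, which can be absorbed into the $R+2S$ budget for the $(a,w)$-matching; a direct character-counting estimate on $L(0^\ell x, 0^p (01)^S 0^\ell y)$ combined with $\ell \geq R + |x| + |y|$ then yields the required bound. The main obstacle is ensuring Case 3 is airtight: the two assumptions handle structurally different ``bad'' matchings, and one must verify that together they cover every shape of $s$ (a suffix contained in $(01)^S$ versus one extending into $0^{\beta+R}$).
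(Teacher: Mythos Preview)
Your Cases 1 and 2 are fine, but Case 3 has a genuine gap. The intermediate target you set there, namely $L(0^\ell x,\, s\,0^\ell y)\le \ell+L(x,y)$, is simply false. Take $R=\beta=0$, $S=\ell=1$, $x=1$, $y=\varepsilon$; both alternative assumptions hold, yet for $w=\varepsilon$ and $s=01$ one gets $L(0^\ell x,\,s\,0^\ell y)=L(01,010)=2>1=\ell+L(x,y)$. The underlying problem is that you bounded $L(a,w)\le R+2S$ too crudely: when $w$ is short this throws away slack of order $S-\occ_1(w)$, and that slack is exactly what is needed to absorb the extra matches that $s$ can supply. Relatedly, your bound $L(u_1,s)\le\min(|u_1|,\occ_0(s))$ is only valid when $u_1$ lies entirely inside the $0^\ell$-block; once $u_1$ reaches into $x$, ones in $s$ can match ones in $x$, and the maximizing split can land there (as in the example above).

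The paper fixes this by keeping the tighter character bound $L(a,y_1)\le (R+S)+\occ_1(y_1)$ and then, writing $z$ for your $s$, showing $L(0^\ell x,\,z\,0^\ell y)\le \occ_1(z)+\ell+L(x,0^{\occ_0(z)}y)$ via the observation that each ``1'' in $z$ can raise the LCS by at most one (so one may strip all ones from $z$ at cost $\occ_1(z)$ and then apply greedy prefix matching). Under the first assumption this collapses to $\occ_1(z)+\ell+L(x,y)$, and since $\occ_1(y_1)+\occ_1(z)=S$ the two pieces sum to $R+2S+\ell+L(x,y)$. Under $S\ge|x|$ a further sub-case split is needed (whether or not a ``1'' of $z$ is matched into $x$), using $\occ_0(z)\le S-\occ_1(y_1)$ when $\occ_1(y_1)\ge 1$; your sketch in that branch is too vague to stand as a proof and does not recover this accounting.
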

\begin{proof}
Clearly, $L(x',y') \ge L(a,b) + L(0^\ell, 0^\ell) + L(x, y) = R+2S+\ell+L(x,y)$, since $L(a,b)=|b|=R+2S$. To prove the corresponding upper bound, we partition $y'=y_1 y_2$ such that $L(x',y') = L(a, y_1) + L(0^\ell x, y_2)$. Consider first the case that $y_2$ is a subsequence of $y$. Then 
\begin{align*}
 L(x',y')  \le L(a, y_1)  + L(0^\ell x, y) \le |a|+|y| = 2(R+S)+|y| \le (R+2S) + \ell + L(x,y),
\end{align*}
since $\ell \ge R+|y|$.

It remains to consider the case that $y_2$ is not a subsequence of $y$ and hence $y_1$ is a subsequence of $0^\beta b 0^{\ell}$. By \lemref{genDomPairs}\itemref{genDomPairsLmore}, we can without loss of generality assume that $y_1$ is a subsequence of~$0^\beta b$, since $L(a,0^\beta b 0^{\ell}) = |b| = L(a,0^\beta b)$. Hence, we can partition $0^\beta b = y_1 z$ with $L(x',y') \le L(a,y_1) + L(0^\ell x, z 0^\ell y)$.
We bound
\[ L(a, y_1) \le \min\{\occ_0(a), \occ_0(y_1)\} + \min\{\occ_1(a),\occ_1(y_1)\} \le (R+S) + \occ_1(y_1).\]

Observe that $L(0^\ell x, z 0^\ell y) \le \occ_1(z) + L(0^\ell x, 0^{\occ_0(z) + \ell} y)$, since each ``1'' in $z$ can increase the LCS by at most 1. By greedy prefix matching, we obtain $L(0^\ell x, z 0^\ell y) \le \occ_1(z) +  \ell + L(x, 0^{\occ_0(z)} y)$. With the assumption $L(x, 0^\beta y) = L(x,y)$ for any $\beta \ge 0$, this yields 
$$L(x',y') \le L(a,y_1) + L(0^\ell x, z 0^\ell y) \le (R + S + \occ_1(y_1)) + (\occ_1(z) +  \ell + L(x,y)) = R + 2S + \ell + L(x,y), $$
since $0^\beta b = y_1 z$ and hence $\occ_1(y_1) + \occ_1(z) = \occ_1(0^\beta b) = S$.

With the alternative assumption $S \ge |x|$, we bound $L(0^\ell x, z 0^\ell y) \le |0^\ell x| \le \ell + S$. If $\occ_1(y_1) = 0$ this yields
$$L(x',y') \le L(a,y_1) + L(0^\ell x, z 0^\ell y) \le (R + S) + (\ell + S) \le R + 2S + \ell + L(x,y). $$
Otherwise, if $\occ_1(y_1) \ge 1$, by inspecting the structure of $0^\beta b = 0^{\beta + R} (01)^S$ we see that $z$ is a subsequence of $(01)^{S-\occ_1(y_1)}$. Consider an LCS of $0^\ell x, z 0^\ell y$. If no ``1'' in $z$ is matched by the LCS, then we obtain 
$$ L(0^\ell x, z 0^\ell y) \le L(0^\ell x, 0^{\occ_0(z) + \ell} y) \le \occ_0(z) + L(0^\ell x, 0^{\ell} y) = \occ_0(z) + \ell + L(x,y), $$
where we used the fact $L(u,v w) \le |v| + L(uw)$ and greedy prefix matching.
Otherwise, if a ``1''  in $z$ is matched by the LCS to a ``1'' in $0^\ell x$, i.e., to a ``1'' in $x$, then the $0^\ell$-block of $0^\ell x$ is matched to a subsequence of $z$ and hence 
$$ L(0^\ell x, z 0^\ell y) \le L(0^\ell, z) + L(x, z 0^\ell y) \le \occ_0(z) + |x| \le \occ_0(z) + \ell + L(x,y), $$
where we used $\ell \ge |x|$.
Since $z$ is a subsequence of $(01)^{S-\occ_1(y_1)}$ and thus $\occ_0(z) \le S-\occ_1(y_1)$, in both cases we obtain
$$L(x',y') \le L(a,y_1) + L(0^\ell x, z 0^\ell y) \le (R + S+ \occ_1(y_1)) + (\occ_0(z) + \ell + L(x,y)) \le R + 2S + \ell + L(x,y), $$
which finally proves the claim.
\end{proof}

\begin{lem}\label{lem:bbbII-d}
Consider $x',y'$ as in \lemref{bbbII} with $\beta = 0$, and assume $S \ge |x|$. Then 
\[R\cdot S \le d(x',y') \le (R+1) (4R+6S + \ell) + d(x,y).\]
\end{lem}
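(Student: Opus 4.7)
For the lower bound, $a$ and $b$ are prefixes of $x'$ and $y'$ respectively, so Observation~\ref{obs:prefix} yields $d(x',y') \ge d(a,b)$, and Lemma~\ref{lem:genDomPairs}(\ref{itm:genDomPairsd}) gives $d(a,b) \ge R\cdot S$.

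For the upper bound, the plan is to count dominant pairs column-by-column via $d(x',y') = \sum_j |K_j|$ (where $|K_j|$ denotes the number of dominant pairs with column coordinate $j$) and split the sum using the decomposition $y' = b \cdot 0^\ell \cdot y$. For columns $j \le |b|$ I would use $\occ_0(a)=\occ_0(b)=R+S$ to argue that $b$ saturates $a$ in any LCS matching, which makes $L[i,j]$ stabilize at $L[|a|,j]$ for all $i > |a|$; hence the dominant pairs in this range coincide with the dominant pairs of $(a,b)$, which are bounded by $2(R+1)(R+2S)$ via Lemma~\ref{lem:genDomPairs}(\ref{itm:genDomPairsd}). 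For columns $j > |b|+\ell$, I would establish a prefix version of Lemma~\ref{lem:bbbII} (both sufficient conditions pass to prefixes, using in particular $S \ge |x| \ge |x[1..i']|$) to obtain $L[i,j] = R+2S+\ell + L(x[1..i-|a|-\ell],\,y[1..j-|b|-\ell])$ for $i \ge |a|+\ell$; this places the dominant pairs in the ``southeast'' subregion $\{i \ge |a|+\ell,\, j \ge |b|+\ell\}$ in bijection with the dominant pairs of $(x,y)$, contributing exactly $d(x,y)$.

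The remaining dominant pairs (those with $|b| < j \le |b|+\ell$, or with $j > |b|+\ell$ but $i < |a|+\ell$) I would handle via a per-column bound $|K_j| \le R+1$. By Observation~\ref{obs:atMostOned} each column hosts at most one $k$-dominant pair per value $k$, so it suffices to bound the set of admissible $k$-values in each column. The key structural insight is that the only difference between $a$ and $b$ is the $R$ excess $1$'s at the start of $a$, which confines the admissible LCS values in a given column to a band of width at most $R+1$ as $i$ traverses the initial portion of $x'$. Summing the three contributions and using $|y| \le \ell$ (from the hypothesis $\ell \ge R + |x| + |y|$), together with routine arithmetic to absorb lower-order terms, yields the stated bound $(R+1)(4R+6S+\ell)+d(x,y)$.

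The main obstacle will be proving the per-column bound $|K_j| \le R+1$ in the $0^\ell$-middle and the $y$-boundary. This requires tracking the jumps of $L[i,j]$ as $i$ ranges over $x'$ and pinning down the precise contribution of the $R$ excess $1$'s; I expect the cleanest route is to invoke Lemma~\ref{lem:zeroblocklcs} to peel off the $0^\ell$-block first, reducing to a smaller, more transparent instance where the combinatorics of $(01)^{R+S}$ against $0^R(01)^S$ can be analyzed directly.
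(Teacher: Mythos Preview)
Your lower bound matches the paper exactly, and your three-region decomposition for the upper bound---$d(a,b)$ for $j\le|b|$, a bijection with $d(x,y)$ in the southeast block via a prefix version of Lemma~\ref{lem:bbbII}, and a residual region---is also how the paper proceeds. The difference is that the paper bounds the residual region \emph{per row}, not per column. From the elementary fact $L(w[1..i],z)\ge i-(|w|-L(w,z))$ one gets $L(a[1..i],b)\ge i-R$, and hence for every $i\le|a|+\ell$ and every $j>|b|$ the value $L[i,j]$ lies in $[i-R,i]$; Observation~\ref{obs:atMostOned} then gives at most $R+1$ dominant pairs in row~$i$. Summing over the $|a|+\ell$ relevant rows and adding $d(a,b)\le 2(R+1)(R+2S)$ and $d(x,y)$ yields $(R+1)(4R+6S+\ell)+d(x,y)$ directly. (For the first region the paper's reason that no dominant pairs occur with $i>|a|$ is that $b[1..j]$ is already a subsequence of $a$, not the $\occ_0$ equality you cite.)

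Your per-column bound $|K_j|\le R+1$ is the genuine sticking point, and the route you sketch does not obviously get there. The natural symmetric ``band'' argument would use the column gap $j-L(x',y'[1..j])$, but for every $j\le|b|+\ell$ this gap is~$0$ (since $y'[1..j]$ is a subsequence of $x'$), which would predict at most one dominant pair per column. Yet column $|b|+1$ in fact contains exactly $R+1$ dominant pairs: one at $i=|a|+1$ and one at each $i=2r-1$ for $r=S+1,\dots,R+S$ (where the trailing $0$ of $b0$ first becomes useful against $(01)^{r-1}0$). So the band intuition does not transfer to columns in the way you describe, and peeling off $0^\ell$ via Lemma~\ref{lem:zeroblocklcs} does not change this picture. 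The per-row argument sidesteps the issue entirely because the row gap is genuinely bounded by $R$ throughout $i\le|a|+\ell$; switching to rows both removes the obstacle and makes the arithmetic land on the stated constant without extra work.
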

\begin{proof}
Note that the simple fact $L(u'u'',v) \le |u''| + L(u',v)$ implies that for any strings~$w,z$ and any $i$ we have $L(w,z) \le |w[(i+1)..|w|]| + L(w[1..i],z) = |w|-i + L(w[1..i],z)$, and hence $L(w[1..i], z) \ge i - (|w|-L(w,z))$. 
Recall that $L(a,b) = |b|=R+2S$ by \lemref{genDomPairs}\itemref{genDomPairsLmore}. This yields $L(a[1..i], b)\ge i - (|a|-(R+2S)) = i-R$ and $L(a, b[1..j]) \ge j - (|b| - |b|) = j$.

The lower bound follows from $d(x',y') \ge d(a, b) \ge R\cdot S$ by \obsref{prefix} and \lemref{genDomPairs}\itemref{genDomPairsdmore}. For the upper bound, we consider all possible prefixes $\px:=x'[1..i], \py:=y'[1..j]$ of $x',y'$ and count how many of them correspond to dominant pairs. Clearly, for $i\le |a|, j \le |b|$, there are $d(a,b)$ dominant pairs. 

By the above observation, for any $i \le |a|$, we have $L(a[1..i], b) \ge i - R$. Hence, any dominant pair of the form $(i,j)$ satisfies $i-R \le L(a[1..i], b) \le |a[1..i]| = i$. By \obsref{atMostOned}, there are at most $R+1$ such dominant pairs for fixed $i$. Thus, there are at most $|a| \cdot (R+1)$ dominant pairs with $i \le |a|$ and $j > |b|$. 
Similarly, for $j \le |b|$, we have $L(a, b[1..j]) \ge j$. Hence, there are no dominant pairs with $i > |a|$ and $j \le |b|$, since already the prefix $a$ of $x'[1..i]$ includes $b[1..j]$ as a subsequence. In total, there are at most $d(a,b) + |a|\cdot (R+1)$ dominant pairs with $i\le |a|$ or $j\le |b|$. 

Let $i=|a| + k$, $j=|b|+k$ with $k \in [\ell]$. Then $L(\px,\py) = L(a0^k,b0^k) = L(a,b)+k=|b|+k$ by greedy suffix matching and \lemref{genDomPairs}\itemref{genDomPairsLmore}. As any such choice could correspond to a dominant pair, we count at most $\ell$ dominant pairs. Analogously to above, for $i = |a|+k$, there can be at most $i-L(a0^k, b0^k) \le R$ dominant pairs with $j > |b|+k$. Symmetrically, for $j = |b|+k$, there are at most $j- L(a0^k, b 0^k) = 0$ dominant pairs with $i > |a|+k$. This yields at most $(R+1)\cdot \ell$ dominant pairs with $|a|<i \le |a|+ \ell$ or $|b|< j \le |b|+\ell$.

It remains to count dominant pairs with $i= |a|+\ell+\tilde{i}$ and $j = |b| + \ell+\tilde{j}$, with $\tilde{i} \in [|x|], \tilde{j} \in [|y|]$. Here, \lemref{bbbII} bounds $L(\px, \py) = L(a0^\ell x[1..\tilde{i}], b0^\ell y[1..\tilde{j}]) = L(a,b) + \ell + L(x[1..\tilde{i}],y[1..\tilde{j}])$. Hence, the dominant pairs of this form are in one-to-one correspondence to the dominant pairs of $x,y$. 

Summing up all dominant pairs, we obtain
\begin{align*}
 d(x',y') & \le d(a,b) + (R+1) (|a|+\ell) + d(x,y) \\
& \le (R+1) (4R+6S + \ell) + d(x,y),
\end{align*}
since $|a| = 2R + 2S$ and \lemref{genDomPairs}\itemref{genDomPairsdmore} yields $d(a,b) \le 2(R + 1) (R+2S)$.
\end{proof}

Finally, to pad $\delta$ (and later, in \secref{DeltaSmallerm}, $\Delta$), we need the following technical lemma.
\begin{lem}\label{lem:deltapadding}
Let $x,y$ be arbitrary and $\mu,\nu$ such that $\nu \ge \mu + |y|$. We define
\begin{alignat*}{4}
x' & := 0^\mu& & \; 1^\nu \; 0^\mu\;  x,\\
y' & := & & \; 1^\nu \; 0^\mu \; y.
\end{alignat*}
Then $L(x', y') = \mu + \nu + L(x,y)$ and $d(x', y') = 2\mu + \nu + \occ_1(y) + d(x,y)$.
\end{lem}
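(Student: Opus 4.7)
My plan is to handle the LCS identity and the dominant-pair count separately. For $L(x',y') = \mu + \nu + L(x,y)$, the natural move is to apply \lemref{zeroblocklcs} with the roles of $0$ and $1$ swapped. Parsing $x' = 0^\mu \cdot 1^\nu \cdot (0^\mu x)$ into the lemma's form $x_L \cdot 1^\ell \cdot y_L$ (with $x_L = 0^\mu$, $\ell = \nu$, $y_L = 0^\mu x$) and $y' = 1^\nu \cdot (0^\mu y)$ into the form $1^\ell \cdot z_L$ (with $z_L = 0^\mu y$), the hypothesis $\nu \ge \mu + |y|$ matches the required side condition $\ell \ge \occ_1(x_L) + |z_L| = 0 + \mu + |y|$. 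The (swapped) lemma yields $L(x',y') = \nu + L(0^\mu x,\, 0^\mu y)$, and one application of Greedy Prefix Matching (\lemref{greedy}) strips the shared $0^\mu$-prefix to give $\mu + \nu + L(x,y)$.

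For the dominant-pair count, I plan an exhaustive region analysis. Partition the columns $i$ of the $L$-table according to the four structural blocks of $x'$ (the prefix $0^\mu$, the $1^\nu$-block, the second $0^\mu$-block, the $x$-suffix) and the rows $j$ according to the three blocks of $y'$ (the $1^\nu$-block, the $0^\mu$-block, the $y$-suffix), yielding twelve rectangles. In each rectangle I derive a closed form for $L[i,j]$ by enumerating the handful of viable matching patterns (only-$0$'s, only-$1$'s, or first $0$'s from $x'$'s first $0^\mu$-block then $1$'s from $x'$'s $1^\nu$-block to $1$'s in $y$), and whenever prefixes of $x$ or $y$ are involved I reapply the swapped \lemref{zeroblocklcs} to the shortened instance.

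The dominant pairs then split into five contributions. Three ``padding'' groups arise from matching the blocks: $\mu$ pairs $(k, \nu+k)$ from matching $x'$'s prefix $0^\mu$ against $y'$'s $0^\mu$-block; $\nu$ pairs $(\mu+k, k)$ from matching the two $1^\nu$-blocks; and another $\mu$ pairs $(\mu+\nu+k, \nu+k)$ from matching $x'$'s second $0^\mu$-block against $y'$'s $0^\mu$-block after the $1^\nu$-blocks are aligned. The deepest rectangle, with $i$ in $x'$'s $x$-suffix and $j$ in $y'$'s $y$-suffix, yields $L[i,j] = \mu + \nu + L(x[1..i'], y[1..j'])$ via the swapped \lemref{zeroblocklcs}, so its dominant pairs correspond bijectively to the $d(x,y)$ dominant pairs of $(x,y)$. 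Finally, the rectangle with $i$ in $x'$'s $1^\nu$-block and $j$ in $y'$'s $y$-suffix gives $L[i,j] = \max\bigl(i-\mu,\ \mu + \min(i-\mu,\occ_1(y[1..j']))\bigr)$ with $j' = j - \nu - \mu$, and a short sub-case analysis pinpoints a dominant pair exactly at $(\mu + \occ_1(y[1..j']),\, j)$ whenever $y[j'] = 1$, contributing $\occ_1(y)$. Summing the five groups gives $2\mu + \nu + \occ_1(y) + d(x,y)$.

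The main obstacle is ruling out spurious dominant pairs in the ``middle'' rectangles. The most dangerous is the one with $i$ in $x'$'s \emph{second} $0^\mu$-block and $j$ in $y'$'s $y$-suffix: a first guess would predict $\Theta(\mu \cdot |y|)$ candidate pairs. However, the swapped \lemref{zeroblocklcs} applied to the shortened prefix reveals $L[i,j] = \nu + (i - \mu - \nu)$, which is \emph{independent} of $j$ in this slab, because $x'[1..i]$ contains no $1$-symbols past its $1^\nu$-block and so cannot exploit any $1$-symbols in $y$. Recognizing this collapse (and the analogous collapses in the remaining apparently non-trivial rectangles) is the most error-prone portion of the case analysis, but once it is in place the total simplifies cleanly to the claimed $2\mu + \nu + \occ_1(y) + d(x,y)$.
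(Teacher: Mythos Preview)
Your proposal is correct and follows essentially the same approach as the paper. Both arguments derive $L(x',y')$ via the swapped \lemref{zeroblocklcs} followed by \lemref{greedy}, and both count dominant pairs by an exhaustive case analysis over the block structure of $x'$ and $y'$. The only difference is organizational: you partition into twelve rectangles (four column-blocks of $x'$ times three row-blocks of $y'$) and identify five nonzero contributions $\mu+\nu+\mu+\occ_1(y)+d(x,y)$, whereas the paper groups cases by which block the prefixes end in and collapses your contributions 3 and 4 into a single step by observing that the dominant pairs with $i>\mu+\nu$ and $j>\nu$ are in bijection with those of $(0^\mu x,\,0^\mu y)$, yielding $\mu+d(x,y)$ directly via \lemref{greedy}. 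Your ``collapse'' observation for the $(C,\mathrm{III})$ rectangle and its analogues is exactly what the paper's more aggregated case analysis handles implicitly, and your formula for $L[i,j]$ in the $(B,\mathrm{III})$ rectangle matches the paper's computation verbatim.
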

\begin{proof}
Note that for any prefix $w, z$ of $0^\mu x, 0^\mu y$, we have 
\begin{equation}
L(0^\mu 1^\nu w, 1^\nu z) = \nu + L(w,z), \label{eq:eliminateOneBlock}
\end{equation}
 by \lemref{zeroblocklcs} (swapping the role of ``0''s and ``1''s). In particular, we obtain $L(x',y') = \nu + L(0^\mu x, 0^\mu y) = \mu +\nu + L(x,y)$ by greedy prefix matching.

For the second statement, we consider all possible prefixes $\px:=x'[1..i], \py:=y'[1..j]$ of $x',y'$ and count how many of them correspond to dominant pairs. Note that these prefixes have to end in the same symbol, since any dominant pair is a matching pair. Recall that $\px:=x'[1..i], \py:=y'[1..j]$ gives rise to a dominant pair if and only if $L(\px,\py) > L(x'[1..i-1],\py)$ and $L(\px,\py) > L(\px,y'[1..j-1])$.
\begin{itemize}
\item $\px = 0^\mu 1^\nu w, \py = 1^\ell$ (with $w$ non-empty prefix of $0^\mu x$, $\ell \in [\nu]$): These prefixes do not correspond to a dominant pair, since $L(0^\mu 1^\nu w, 1^\ell) = L(0^\mu 1^\ell, 1^\ell) = \ell$ is obtained already by a shorter prefix of $\px$.
\item $\px = 0^\mu 1^\nu w, \py = 1^\nu z$ (with $w$ non-empty prefix of $0^\mu x$, $z$ non-empty prefix of $0^\mu y$): These prefixes correspond to a dominant pair if and only if $w,z$ correspond to a dominant pair of $0^\mu x, 0^\mu y$, since by \eqref{eq:eliminateOneBlock} we have $L(\px, \py) = \nu + L(w, z)$. This yields $d(0^\mu x, 0^\mu y)$ dominant pairs, which by \lemref{greedy} evaluates to $\mu + d(x,y)$.
\item $\px = 0^k, \py = 1^\mu 0^\ell$ (with $k,\ell \in [\mu]$): Clearly, $L(\px,\py) = \min\{k, \ell\}$. It follows that $\px,\py$ corresponds to a dominant pair if and only if $k=\ell$. This yields exactly $\mu$ dominant pairs.
\item $\px = 0^\mu 1^k, \py = 1^\ell$ (with $k,\ell \in [\nu]$): Analogously to above, $L(\px,\py) = \min\{k, \ell\}$, hence this corresponds to a dominant pair if and only if $k = \ell$. This yields exactly $\nu$ dominant pairs.
\item $\px = 0^k, \py = 1^\nu 0^\mu z$ (with $k \in [\mu]$, $z$ non-empty prefix of $y$): We have $L(\px,\py) = L(0^k, 1^\nu 0^k) = k$, hence these prefixes do not correspond to a dominant pair, since the LCS is already obtained for a shorter prefix of $\py$.
\item $\px = 0^\mu 1^k, \py = 1^\nu 0^\mu z$ (with $k\in [\nu]$, $z$ non-empty prefix of $y$): Since we can either match some ``1'' of the $1^\nu$-block in $\py$ to a ``1'' in $\px$ (necessarily discarding the initial $0^\mu$-block of $\px$) or delete the complete $1^\nu$-block, we obtain
\begin{align*}
L(0^\mu 1^k, 1^\nu 0^\mu z) & =  \max\{L(1^k, 1^\nu 0^\mu z), L(0^\mu 1^k, 0^\mu z)\} \\
& =  \max\{k, \mu + L(1^k, z) \} = \max\{k, \mu + \min\{k, \occ_1(z)\}\}.
\end{align*}
Consider the case $L(\px, \py) = L(x'[1..i],y'[1..j]) = k$. Then also $L(x'[1..i],y'[1..(j-1)]) = k$, and hence $(i,j)$ is no dominant pair. If, however, $L(\px, \py) = \mu + \min\{k, \occ_1(z)\}$, then this corresponds to a dominant pair if and only if $k=\occ_1(z)$ (and $z$ ends in ``1''): if $k > \occ_1(z)$, then also $L(x'[1..(i-1)],y'[1..j]) = \mu + \occ_1(z)$, if $k < \occ_1(z)$, then also $L(x'[1..i],y'[1..(j-1)])= \mu + k$. Thus, there are exactly $\min\{\nu, \occ_1(y)\} = \occ_1(y)$ such dominant pairs.
\end{itemize}

In total, we have counted $\nu + 2\mu + \occ_1(y) + d(x,y)$ dominant pairs.
\end{proof}

We start with the basic building block from \lemref{bbbII} and then further pad the strings to obtain the desired $n,m,\Delta,\delta,M$ as follows. Note that the guarantee $|y| \le |x| \le \Oh(\min\{\Delta, m\})$ is satisfied by \lemref{largeLCShardnessConstantAlphabet}.

\begin{lem} \label{lem:largelcsternaryalphabets-step}
Let $(\Valpha,\{0,1,2\})$ be a parameter setting satisfying \tabref{paramChoiceRestr} with $\alpha_L=\alpha_m$. Let $(n,x,y)$ be an instance of $\LCS_\le^\gamma(\Valpha, \{0,1\})$ with $|y| \le |x| \le \Oh(\min\{\Delta, m\})$. We set
\begin{align*}
S & = \max\{m,|x|\}, & R& = \lfloor d/m \rfloor, 
\end{align*}
to instantiate the basic building block $x'= a 0^\ell x = (01)^{R+S} 0^\ell x$ and $y'= b 0^\ell y = 0^R (01)^S 0^\ell y$ of \lemref{bbbII} with $\ell := R+S+|x| + |y|$. Moreover, we define  $\kappa := \lfloor M/n \rfloor$ and $\tilde{m}:=\max\{m, \delta + 2R+2S+\ell+|x|\}$ to further pad the instance to 
\begin{alignat*}{6}
x'' &\; =  \quad 2^{\kappa} & \; 2^\Delta & \; 1^{\tilde{m}} & \; 0^\delta & \;  a \; &0^\ell \; x, \\
y'' &\; =  \quad 2^{\kappa} & \; 0^\delta &  \; 1^{\tilde{m}} & \; 0^\delta & \; b \; & 0^\ell \; y.
\end{alignat*}
Then $x'',y''$ is an instance of $\LCS^{\gamma'}(\Valpha,\{0,1,2\})$ for some constant $\gamma'\ge 1$ and can be computed in time $\Oh(n)$, together with an integer $\tau$ such that $L(x'',y'') = \tau+L(x,y)$.
\end{lem}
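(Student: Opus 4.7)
The plan is to first compute $L(x'',y'')$ in closed form by peeling the construction in layers, then verify each of the seven parameters of $(\Valpha,\{0,1,2\})$ in turn. The LCS computation is a four-step peeling: (i) by \lemref{greedy} the shared $2^\kappa$ prefix contributes $\kappa$; (ii) since the remainder of $y''$ contains no $2$'s, the orphan block $2^\Delta$ in $x''$ may be deleted without affecting the LCS; (iii) the resulting comparison between $1^{\tilde m} 0^\delta a 0^\ell x$ and $0^\delta 1^{\tilde m} 0^\delta b 0^\ell y$ fits exactly into \lemref{deltapadding} with $\mu := \delta$, $\nu := \tilde m$, and inner strings $b 0^\ell y$ and $a 0^\ell x$ (with $x', y'$ in the lemma swapped; the precondition $\tilde m \ge \delta + 2R + 2S + \ell + |x|$ is exactly the definition of $\tilde m$); and (iv) the innermost pair $(a 0^\ell x, b 0^\ell y)$ is exactly the basic building block of \lemref{bbbII} applied with $\beta = 0$, whose precondition $\ell \ge R + |x| + |y|$ holds by construction and whose side-condition $L(x, 0^0 y) = L(x,y)$ is trivial. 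Stacking these identities gives $L(x'',y'') = \kappa + \delta + \tilde m + R + 2S + \ell + L(x,y)$, so $\tau := \kappa + \delta + \tilde m + R + 2S + \ell$ (computable in $\Oh(n)$) satisfies the claim, and $x'',y''$ themselves are clearly computable in $\Oh(n)$.

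For the lengths and gaps, I first show that $R, S, \ell, |a|, |b|, \tilde m$ are all $\Theta(m)$ and $\kappa = \Oh(L) = \Oh(m)$, using the parameter relations $d \le Lm$, $L \le d$, $M \le 2Ln$, the assumption $\alpha_L = \alpha_m$, and the guarantee $|x|, |y| \le \Oh(\min\{m,\Delta\})$. Then $|x''| = \kappa + \Delta + \tilde m + \delta + |a| + \ell + |x| = \Theta(\max\{m,\Delta\}) = \Theta(n)$ via $n = L + \Delta$, while $|y''| = \kappa + 2\delta + \tilde m + |b| + \ell + |y| = \Theta(m)$, and the LCS identity gives $L(x'',y'') = \Theta(m) = \Theta(L)$. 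The gaps simplify to $\delta(x'',y'') = \delta + \delta(x,y) = \Theta(\delta)$ and $\Delta(x'',y'') = \Delta + R + \Delta(x,y) = \Theta(\Delta)$, where $R = \Oh(\Delta)$ follows from $d \le 2L(\Delta+1)$ and $\delta(x,y), \Delta(x,y)$ are $\Oh(\delta), \Oh(\Delta)$ by inheritance from the hard core.

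For the matching pair count, a per-symbol tally gives $M(x'',y'') = (\kappa + \Delta)\kappa + \Theta(m^2)$. The upper bound uses $\kappa^2, \kappa\Delta, m^2 \le \Oh(M)$, where crucially $m^2 \le \Oh(M)$ comes from $M \ge L^2/|\Sigma|$ specialized to $|\Sigma| = 3$. For the matching lower bound I case-split: if $\Delta = \Omega(m)$ then $\kappa \Delta = \Omega(M)$, otherwise $n = \Theta(m)$ and hence $M \le 2Ln = \Oh(m^2)$, so the $\Theta(m^2)$ term dominates. For the dominant pairs, \lemref{greedy} handles the $2^\kappa$ prefix, the alphabet disjointness of $2^\Delta$ from the rest of $Y$ shows the orphan block contributes nothing, \lemref{deltapadding} contributes $2\delta + \tilde m + \occ_1(a 0^\ell x) = \Theta(m)$ from the sandwich, and \lemref{bbbII-d} sandwiches the core between $R S$ and $(R+1)(4R + 6S + \ell) + d(x,y)$, both of which evaluate to $\Theta(d)$ using $RS = \Theta(d)$ and $d(x,y) \le \Oh(d)$ (inherited from the hard core). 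Altogether $d(x'',y'') = \kappa + \Theta(d) = \Theta(d)$, as $\kappa = \Oh(L) \le \Oh(d)$ is absorbed.

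The hard part is the simultaneous control over all seven parameters: each of the four construction layers must pad exactly the parameter(s) it targets while leaving all others untouched up to constant factors. The tightest point is the matching pair count, where the relation $M \ge L^2/|\Sigma|$ specialized to $|\Sigma| \ge 3$ is essential to absorb the unavoidable $\Theta(m^2)$ contribution from the $1^{\tilde m}$ blocks; this is precisely the parameter relation that fails for $|\Sigma| = 2$ and that will force the separate, more delicate treatment of binary alphabets later in \secref{hardnessSmallSigma}.
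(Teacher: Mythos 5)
Your proof is correct and follows the same strategy as the paper: peel $L(x'',y'')$ via greedy prefix matching on $2^\kappa$, discard the orphan $2^\Delta$ block, apply \lemref{deltapadding} (with roles swapped), and finally \lemref{bbbII}, then verify each parameter using the same relations ($d\le Lm$, $d\ge L$, $M\le 2Ln$, $M\ge L^2/|\Sigma|$, $d \le 2L(\Delta+1)$, $\alpha_L = \alpha_m$). The one small slip is asserting $R=\Theta(m)$: since $R=\lfloor d/m\rfloor$ and $d$ may be as small as $\Theta(m)$, in general $R$ is only $\Oh(m)$; but as you only invoke $R=\Oh(m)$, $R=\Oh(\Delta)$, and $RS=\Theta(d)$ downstream, the argument is unharmed (and your accounting of $M(x'',y'') = (\kappa+\Delta)\kappa + \Theta(m^2)$ and of the $\occ_1(a0^\ell x)$ term in the dominant-pair count is in fact slightly more careful than the paper's).
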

\begin{proof}
We first use the Disjoint Alphabets Lemma and greedy prefix matching to obtain
\begin{equation}\label{eq:largeLCSternarylcs}
L(x'',y'') = \kappa + L(1^{\tilde{m}} 0^\delta x', 0^\delta 1^{\tilde{m}} 0^\delta y') = \kappa + \tilde{m} + \delta + L(x',y') = \kappa + \tilde{m} + \delta + R+2S + \ell + L(x,y),
\end{equation}
where we used \lemref{deltapadding} for the second equality (with the roles of $x,y$ swapped) and \lemref{bbbII} for the last equality.
Observe that $x'',y''$ and $\tau = \kappa + \tilde{m} + \delta +  R+2S + \ell$ can be computed in time $\Oh(n)$.

It remains to verify that $x,y$ is an instance of $\LCS^{\gamma'}(\Valpha,\{0,1,2\})$ for some $\gamma'\ge 1$. We first observe that $S = \Theta(m)$ by $|x| = \Oh(m)$ and $R = \Oh(S)$ by the parameter relation $d\le Lm \le m^2$. Since also $|y| = \Oh(m)$, we conclude $R,S,|x|,|y|,\ell = \Oh(m)$. Observe that $\kappa = \Oh(M/n) = \Oh(m)$ by the relation $M\le mn$ and $\tilde{m} = \Theta(m)$ (using that $R,S,\ell, |x|= \Oh(m)$ and the parameter relation $\delta \le m$). Thus, $|x''| = \kappa + \Delta + \tilde{m} + \delta + 2(R+S) + \ell + |x| = \Theta(m + \Delta) + \Oh(m) = \Theta(n)$. Similarly, $|y''| = \kappa + \tilde{m} + \delta + R + 2S + \ell + |y| = \Theta(m + \delta) + \Oh(m)= \Theta(m)$. \dopara{n,m,L} By \eqref{eq:largeLCSternarylcs}, we also have $L(x'',y'') = \tilde{m} + \Oh(m) = \Theta(m)$, as desired.

Note that by $\Delta \ge \delta$, $|a| \ge |b|$ and $|x| \ge |y|$, we indeed have $|x''| \ge |y''|$.  Furthermore, using the relation $d\le 2L(\Delta+1) = \Oh(m\Delta)$, we obtain $R = \Oh(d/m) = \Oh(\Delta)$.  By~\eqref{eq:largeLCSternarylcs}, we obtain $\Delta(x'',y'') = \Delta + R + (|x| - L(x,y)) = \Delta + \Oh(\Delta) = \Theta(\Delta)$ since $|x|= \Oh(\Delta)$. Similarly, \eqref{eq:largeLCSternarylcs} yields $\delta(x'',y'') = \delta + (|y|-L(x,y)) = \delta + \Oh(\delta) = \Theta(\delta)$, since $|y|-L(x,y) = \delta(x,y) = \Oh(\delta)$.  \dopara{\delta,\Delta}

For the dominant pairs, we apply the disjoint alphabets lemma, \lemref{greedy} and \lemref{deltapadding} to compute\dopara{d}
\begin{equation}\label{eq:ternaryd}
d(x'',y'') = \kappa + d(1^{\tilde{m}} 0^\delta x', 0^\delta 1^{\tilde{m}} 0^\delta y') = \kappa + \tilde{m} + 2\delta + d(x',  y').
\end{equation}
\lemref{bbbII-d} yields the lower bound $d(x',y') \ge R\cdot S = \Omega(d)$ and the corresponding upper bound 
\[d(x',y') \le (R+1) (4R+6S + \ell) + d(x,y) = \Oh( R\cdot S + d(x,y) ) = \Oh(d).\]
Thus, \eqref{eq:ternaryd} yields $d(x'',y'') = \Theta(d) + \Oh(m) = \Theta(d)$, where we used that $d\ge L = \Omega(m)$ since $\alpha_L=\alpha_m$. 

For $M$, we count $\occ_2(x'') = \Delta + \kappa$ and $\occ_2(y'') = \kappa$, as well as $\occ_0(y''),\occ_1(y'') \le |y''| = \Oh(m)$, $\occ_0(x'') \le \delta+|x'| = \Oh(m)$ and $\occ_1(x'') \le \tilde{m} + |x'| = \Oh(m)$. Thus, $M(x'',y'') = (\Delta + \kappa) \kappa + \Oh(m^2)$ and by $M\ge L^2/|\Sigma| = \Omega(m^2)$ (since $\alpha_L=\alpha_M$), it suffices to prove that $(\Delta+\kappa)\kappa = \Theta(M)$ to verify $M(x'',y'') = \Theta(M)$. Indeed, we have $\kappa = \Theta(M/n)$, since $M\ge n$. If $\alpha_\Delta < 1$, we have $\alpha_m=\alpha_n=1$ and $M = \Omega(m^2)$ together with the relation $M\le mn = \Oh(m^2)$ implies $M=\Theta(m^2)$. Thus, $\kappa = \Theta(m)$ and hence $(\kappa+\Delta)\kappa = \Theta(m^2) = \Theta(M)$. If $\alpha_\Delta \ge \alpha_m$, then $\alpha_\Delta = 1$ and hence $\Delta+\kappa = \Delta + \Oh(m) = \Theta(n)$, which implies $(\kappa + \Delta) \kappa = \Theta( n \cdot M/n) = \Theta(M)$. Finally, note that indeed $\Sigma(x'',y'')= \{0,1,2\}$.
\end{proof}

Combining \lemref{largelcsternaryalphabets-step} with \lemref{largeLCShardnessConstantAlphabet} finally proves \lemref{largelcsternaryalphabets}.

\subsection{Large LCS, Alphabet Size 2}

In this section, we study the case that $\alpha_{L}=\alpha_m$ (and $\alpha_{\delta},\alpha_{\Delta}$ may be small) for the case of binary alphabets, i.e., $\Sigma = \{0,1\}$. In this regime, \thmref{main2} follows from the following statement (and \lemref{paramsnecessary}).

\begin{lem}\label{lem:largelcsbinaryalphabets}
Let $(\Valpha,\{0,1\})$ be a parameter setting satisfying \tabref{paramChoiceRestr} with $\alpha_L=\alpha_m$. There is a constant $\gamma \ge 1$ such that any algorithm for $\LCS^\gamma(\Valpha,\{0,1\}$ takes time $\min\{d, \delta \Delta, \delta M/n\}^{1-o(1)}$ unless \OVH\ fails.
\end{lem}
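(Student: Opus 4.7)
The strategy parallels \lemref{largelcsternaryalphabets-step} for ternary alphabets, but must respect two new constraints imposed by the binary setting. First, the lower bound weakens from $\min\{d,\delta m,\delta\Delta\}$ to $\min\{d,\delta M/n,\delta\Delta\}$, which is explained by the binary-specific relation $M\ge nd/(5L)$ of \lemref{Mbinarynd}: it forces $d=\Oh(LM/n)$ and thereby effectively replaces $m$ by $M/n$ throughout the hardness argument. Second, we cannot pad with a third symbol, so all padding must be balanced across $0$ and $1$ while simultaneously controlling $M$, which for binary strings is always $\Omega(Lm)$ by \lemref{MbinaryLm}.

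First I would obtain hard-core binary strings $(x,y)$ via \lemref{largeLCShardnessConstantAlphabet}, but instantiated with a \emph{modified} parameter setting $\Valpha'$ derived from $\Valpha$ by replacing $\alpha_m$ with $\alpha_{m'}:=\min\{\alpha_m,\alpha_M-1\}$ (so that the ``shorter-string'' length corresponds to $\min\{m,M/n\}$), together with $\alpha_{L'}:=\alpha_{m'}$ and the obvious choices for the other parameters. A short verification, using the binary constraints $M\ge Lm/4$ and $\alpha_M\le \alpha_L+1$ from \tabref{paramChoiceRestr} (which together yield $\alpha_M-1\in[2\alpha_m-1,\alpha_m]$), shows that $\Valpha'$ satisfies that table. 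The resulting strings have $|y|\le |x|=\Oh(\min\{M/n,\Delta\})$ and $\delta(x,y)=\Oh(\delta)$, and the embedded OV product is $(AB)=\Theta(\min\{d,\delta\Delta,\delta M/n\}/D^2)$, giving the desired conditional lower bound under \UOVH.

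Next I would pad $(x,y)$ to $(x'',y'')$ using only symbols from $\{0,1\}$, along the lines of
\begin{align*}
 x'' &:= 1^{\kappa}\; 0^{\mu}\; 1^{\tilde n}\; 0^{\delta_0}\; a\; 0^{\ell}\; x, \\
 y'' &:= 1^{\kappa}\; 0^{\mu+\delta_0}\; 1^{\tilde m}\; 0^{\delta_0}\; b\; 0^{\ell}\; y,
\end{align*}
with $(a,b)=((01)^{R+S},0^R(01)^S)$ from \lemref{bbbII} (choosing $R:=L$ and $S:=\lfloor d/L\rfloor$), and $\kappa:=\Theta(M/n)$, $\mu:=\Theta(\Delta)$, $\tilde m:=\Theta(m)$, $\tilde n:=\Theta(n)$, $\delta_0:=\Theta(\delta)$ tuned to match the target values. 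The length, LCS length and $\delta$ computations would compose \lemref{greedy}, \lemref{zeroblocklcs}, \lemref{bbbII}, and \lemref{deltapadding}, exactly in the style of the ternary proof.

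The main obstacle is the simultaneous control of $M$ and $d$ under the binary restriction, since every added $0$- or $1$-block inflates both. I would verify $M(x'',y'')=\Theta(M)$ by computing $\occ_0(x'')\cdot\occ_0(y'')$ and $\occ_1(x'')\cdot\occ_1(y'')$ explicitly and invoking the binary lower bounds $M\ge Lm/4$ and $M\ge nd/(5L)$ to certify that the target is tight. For $d(x'',y'')=\Theta(d)$, the upper bound follows from \lemref{few1s} applied globally: since $\occ_1(y'')=\Oh(M/n)$ by the choice of $\kappa$ and the hard-core guarantees, we obtain $d(x'',y'')=\Oh(L\cdot\occ_1(y''))=\Oh(LM/n)=\Oh(d)$; the matching lower bound comes from \lemref{bbbII-d} applied to the inner gadget plus the dominant pairs contributed by \lemref{deltapadding}. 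A case split between $M/n\ge \delta$ and $M/n<\delta$ may be necessary to ensure $A\le B$ in the OV embedding and to keep the $\delta$-padding consistent; in the latter case, swapping the roles of $A$ and $B$ preserves the analysis symmetrically.
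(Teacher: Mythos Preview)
Your single padding construction has a concrete flaw: the block $1^{\tilde m}$ with $\tilde m=\Theta(m)$ in $y''$ forces $\occ_1(y'')=\Omega(m)$. Since $x''$ contains $1^{\tilde n}$ with $\tilde n=\Theta(n)$, this gives
\[
M(x'',y'') \;\ge\; \occ_1(x'')\cdot\occ_1(y'') \;=\; \Omega(nm),
\]
which overshoots the target $M$ whenever $\alpha_M<1+\alpha_m$. For the same reason your $d$-upper-bound argument via \lemref{few1s} only yields $d(x'',y'')=\Oh(L\cdot\occ_1(y''))=\Oh(Lm)$, not $\Oh(d)$; the claim ``$\occ_1(y'')=\Oh(M/n)$ by the choice of $\kappa$ and the hard-core guarantees'' is simply false as long as the $1^{\tilde m}$ block is present. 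The underlying tension is that for $\alpha_\Delta>\alpha_m$ one of the two symbols in $y''$ must occur only $\Oh(M/n)\le\Oh(m)$ times, so padding $|y''|$ up to $\Theta(m)$ must use the \emph{other} symbol --- but then padding $|x''|$ up to $\Theta(n)$ with that same symbol blows up~$M$ as well. A single construction cannot resolve this for all admissible $(\alpha_M,\alpha_\delta,\alpha_d)$.

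The paper handles this with a three-way case split: (1) $\alpha_\Delta\le\alpha_m$, where $n=\Theta(m)$ forces $M=\Theta(m^2)$ and the constraint on $M$ disappears; (2) $\alpha_\Delta>\alpha_m$ and $\alpha_\delta\ge\alpha_M-1$, where the hard core is taken from the \emph{small-LCS} construction (\lemref{smallLCShardnessConstantAlphabet}) because it introduces few ``1''s; (3) $\alpha_\Delta>\alpha_m$ and $\alpha_\delta<\alpha_M-1$, using the large-LCS core but with a further split on whether $\alpha_S\le\alpha_R$ or $\alpha_S>\alpha_R$ and, in the latter subcase, a dedicated dominant-pair reduction gadget. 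In cases (2) and (3) the auxiliary parameter setting $\Valpha'$ is not just $\alpha'_m=\min\{\alpha_m,\alpha_M-1\}$; it involves $\alpha_d/2$ as well (see \obsref{MnConstructionalphaprime} and \obsref{deltaSmallerMn-alphaprime}), and verifying that the resulting OV product equals $\min\{d,\delta M/n\}$ is itself a non-trivial calculation that your sketch omits. Your idea of swapping $A$ and $B$ when $M/n<\delta$ gestures at the right dichotomy but does not supply the distinct paddings that each regime requires.
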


We present different constructions for three subcases, that we discuss shortly in the following paragraphs and in detail in the remainder of this section. We hope that this short discussion conveys enough intuition about the ``complexity'' of the task to make it believable that our lengthy and technical case distinction is indeed necessary.
  
\emph{Case 1:} $\alpha_\Delta \le \alpha_m = \alpha_L$. Then $n = \Theta(m)$ and it follows that \emph{any} binary strings $x,y$ satisfy $M(x,y) = \Theta(m^2)$, so $M$ poses no constraints, in particular there are no constraints on the numbers of ``0''s and ``1''s in the constructed strings. On the other hand, the potentially small value of $\Delta$ renders some of our gadgets useless (e.g., \lemref{advancedDeltapadding}).
Since $\delta$ may be small, we use the hardness construction from \secref{hardnesslargeLCS} (for large LCS). 

Otherwise, we have $\alpha_\Delta > \alpha_m$ and thus $\Delta = \Theta(n) \gg m$. Note that any string $x$ of length $n$ contains at least $n/2$ ``0''s or ``1''s, say $x$ contains many ``1''s. Then to obtain $\Theta(M)$ matching pairs, $y$ must contain at most $\Oh(M/n)$ ``1''s. Thus, we need to pay close attention to the number of ``1''s in the constructed string $y$. 
We split the case $\alpha_\Delta > \alpha_m$ into two subcases. \emph{Case 2: $\alpha_\Delta > \alpha_m = \alpha_L$ and $\alpha_\delta \ge \alpha_M - 1$}. Here, the constraint on $\occ_1(y)$ is stronger than the constraint on $\delta$, and we use the hardness construction from \secref{hardnesssmallLCS} (for small LCS), since it introduces few ``1''s. \emph{Case 3: $\alpha_\Delta > \alpha_m = \alpha_L$ and $\alpha_\delta < \alpha_M - 1$}. Here, the constraint on $\delta$ is stronger than the constraint on $\occ_1(y)$, and we use the hardness construction from \secref{hardnesslargeLCS} (for large LCS), since it keeps $\delta$ small.

\subsubsection{Case $\alpha_\Delta \le \alpha_m = \alpha_L$}
\label{sec:DeltaSmallerm}

Since $n = \Delta + L$, the assumptions $\alpha_L = \alpha_m$ and $\alpha_\Delta \le \alpha_m$ imply $n = \Theta(m)$. Together with the parameter relations $L^2 / |\Sigma| \le M \le 2Ln$ and $\Sigma = \{0,1\}$ we obtain $M = \Theta(m^2)$. In particular, in this regime the $\tOh(\delta \Delta)$ time bound beats $\tOh(\delta M/n)$, and \lemref{largelcsbinaryalphabets} simplifies to the following result.

\begin{lem}\label{lem:DeltaSmallerm}
Let $(\Valpha,\Sigma)$ be a parameter setting satisfying \tabref{paramChoiceRestr} with $\alpha_L=\alpha_m$ and $\alpha_\Delta \le \alpha_m$. There is a constant $\gamma \ge 1$ such that any algorithm for $\LCS^\gamma(\Valpha,\Sigma)$ takes time $\min\{d, \delta \Delta\}^{1-o(1)}$ unless \OVH\ fails.
\end{lem}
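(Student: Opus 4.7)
Since $\alpha_\Delta\le\alpha_m=\alpha_L$ forces $n=\Theta(m)$ and, via $L^2/|\Sigma|\le M\le 2Ln$, also $M=\Theta(m^2)$, the target bound reduces to $\min\{d,\delta\Delta\}^{1-o(1)}$ and the $M$-constraint will be automatic for any binary strings of length $\Theta(m)$. The plan is to call \lemref{largeLCShardnessConstantAlphabet} (applicable since $\alpha_L=\alpha_m$) to obtain hard binary strings $(n,x_0,y_0)\in\LCS_\le^\gamma(\Valpha,\{0,1\})$ with $|y_0|\le|x_0|\le\gamma\cdot n^{\alpha_\Delta}$ and a threshold $\rho$ such that $L(x_0,y_0)\ge\rho$ decides the underlying \OV\ instance in time $\min\{d,\delta\Delta\}^{1-o(1)}$ under \OVH. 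The remainder of the task is to embed $(x_0,y_0)$ in linear time into strings $(x''',y''')$ belonging to $\LCS^{\gamma'}(\Valpha,\{0,1\})$ such that $L(x''',y''')=\tau+L(x_0,y_0)$ for an explicit shift $\tau$.

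\textbf{Construction.} The embedding will mirror \lemref{largelcsternaryalphabets-step}, but must avoid any fresh third symbol. First, I would apply \lemref{bbbII} with $R:=\lfloor d/m\rfloor$, $S:=\max\{m,|x_0|\}$, $\ell:=R+S+|x_0|+|y_0|$, and $\beta:=0$ to form
\[
x' := a\,0^\ell\,x_0,\qquad y' := b\,0^\ell\,y_0,
\]
where $a=(01)^{R+S}$ and $b=0^R(01)^S$; this gives $L(x',y')=R+2S+\ell+L(x_0,y_0)$ and, by \lemref{bbbII-d}, $d(x',y')=\Theta(R\cdot S)=\Theta(d)$. Next, I would pad $\delta$ by setting $\tilde m:=\max\{m,\delta+|x'|+|y'|\}$ and
\[
x'':=1^{\tilde m}\,0^\delta\,x',\qquad y'':=0^\delta\,1^{\tilde m}\,0^\delta\,y',
\]
invoking \lemref{deltapadding} with the roles of $x$ and $y$ swapped (valid since $\tilde m\ge\delta+|y'|$) to obtain $L(x'',y'')=L(x',y')+\tilde m+\delta$, $\delta(x'',y'')=\delta+\delta(x',y')$, $\Delta(x'',y'')=\Delta(x',y')$, and $d(x'',y'')=d(x',y')+\tilde m+2\delta+\occ_1(x')$. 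Finally, I would apply \lemref{deltapadding} directly with $\mu:=\Delta$ and $\nu:=\Delta+|y''|$ to produce $(x''',y''')$, leaving $\delta$ unchanged and boosting $\Delta$ by $\Delta$, at the cost of $3\Delta+|y''|+\occ_1(y'')$ additional dominant pairs.

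\textbf{Verification and main obstacle.} Parameter verification is then routine once one notes that $R,S,\tilde m,\ell,|x_0|,|y_0|,\delta,\Delta$ are all $O(m)$: the final lengths are $|x'''|,|y'''|=\Theta(m)=\Theta(n)$, whence $M(x''',y''')=\Theta(m^2)=\Theta(M)$ via $M\ge Lm/4$; the two paddings directly enforce $\delta(x''',y''')=\Theta(\delta)$ and $\Delta(x''',y''')=\Theta(\Delta)$; and every additive contribution to the dominant-pair count is $O(m+\Delta)=O(d)$ using $d\ge L=\Theta(m)$, so $d(x''',y''')=\Theta(d)$. The main obstacle, and the technical reason this strategy will not extend to $\alpha_\Delta>\alpha_m$, is the absence of a fresh symbol to realise a pure $\Delta$-padding: \lemref{deltapadding} inevitably injects $\Theta(\occ_1(y''))$ extra dominant pairs. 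In the present regime this is harmless because $\occ_1(y'')=O(m)=O(d)$, but when $\Delta\gg m$ the number of $1$'s in $y$ becomes a first-order constraint (through $M\ge nd/(5L)$ for binary alphabets), so the subsequent cases will require a genuinely different construction that controls $\occ_1(y)$ as aggressively as $\delta$ and $\Delta$.
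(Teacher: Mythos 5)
Your proof is correct and follows essentially the same strategy as the paper: invoke \lemref{largeLCShardnessConstantAlphabet} to get a hard core $(n,x_0,y_0)$, use \lemref{bbbII} with $R = \Theta(d/m)$, $S = \Theta(m)$, $\beta = 0$ to install $\Theta(d)$ dominant pairs (matching \lemref{DeltaSmallerm-step1}), and then pad $\delta$ and $\Delta$ by two applications of \lemref{deltapadding} (matching \lemref{DeltaSmallerm-step2}). The only substantive difference is cosmetic: you pad $\delta$ first and $\Delta$ second, while the paper does the opposite order; both orderings work because in this regime every additive dominant-pair contribution is $\Oh(m+\Delta) = \Oh(m) = \Oh(d)$.

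One small slip worth noting: in your second-step application of \lemref{deltapadding} (with roles of $x,y$ swapped so that $y''$ gets the extra $0^\delta$-prefix), the precondition $\nu \ge \mu + |y|$ from the lemma becomes $\tilde m \ge \delta + |x'|$, not $\tilde m \ge \delta + |y'|$ as you wrote — the bound involves the length of the string that does \emph{not} receive the $0^\mu$-prefix, which after the swap is $x'$. Your choice $\tilde m := \max\{m, \delta + |x'| + |y'|\}$ still satisfies the correct inequality, so the construction goes through unchanged; only the stated justification is off.
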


We can now instantiate the parameters of \lemref{bbbII} to create a string with the desired number of dominant pairs. The remaining parameters will be padded in an additional construction. Note that the preconditions, specifically the additional guarantee, are satisfied by \lemref{largeLCShardnessConstantAlphabet}.

\begin{lem}\label{lem:DeltaSmallerm-step1}
Let $(\Valpha, \Sigma)$ be a parameter setting satisfying \tabref{paramChoiceRestr} with $\alpha_L=\alpha_m$ and $\alpha_\Delta \le \alpha_m$. Given any instance $(n,x,y)$ of $\LCS_\le^\gamma(\Valpha, \{0,1\})$ with the additional guarantee $|y|\le |x| \le \gamma \cdot m$, we can construct an instance $(n,x',y')$ of $\LCS^{\gamma'}_\le(\Valpha, \{0,1\})$ (for some constant $\gamma'\ge \gamma$) and $\tau$ in time $\Oh(n)$ such that 
\begin{enumerate}[label=(\roman{*})]
\item $L(x',y') = \tau+L(x,y)$, \label{itm:DeltaSmallerm-L}
\item $d(x',y') = \Theta(d)$. \label{itm:DeltaSmallerm-d}
\item $|x'| \ge |y'|$. \label{itm:DeltaSmallerm-size}
\end{enumerate} 
\end{lem}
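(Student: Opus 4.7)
The plan is to apply the basic building block \lemref{bbbII} with $\beta := 0$, setting
\[ R := \lfloor d/m \rfloor, \qquad S := \max\{m, |x|\}, \qquad \ell := R + S + |x| + |y|, \]
so that $x' := a \, 0^\ell \, x = (01)^{R+S} \, 0^\ell \, x$ and $y' := b \, 0^\ell \, y = 0^R (01)^S \, 0^\ell \, y$. The motivation is twofold: the gadget pair $(a,b)$ will generate $\Theta(R \cdot S) = \Theta(d)$ dominant pairs once we confirm that $R$ and $S$ sit at the right scales, and the separator $0^\ell$ is wide enough (together with $S \ge |x|$) to decouple the suffixes $x, y$ from the gadget, so that $L$, $\delta$, and $\Delta$ all behave predictably.

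Claims~(i) and~(iii) would then be nearly immediate. Since $S \ge |x|$ by construction, \lemref{bbbII} directly yields $L(x', y') = R + 2S + \ell + L(x, y)$, establishing~(i) with $\tau := R + 2S + \ell$. For~(iii), $|x'| - |y'| = R + (|x| - |y|) \ge 0$ by the given $|x| \ge |y|$. For~(ii), \lemref{bbbII-d} sandwiches $R \cdot S \le d(x', y') \le (R+1)(4R + 6S + \ell) + d(x, y)$. The assumption $\alpha_L = \alpha_m$ together with the parameter relation $L \le d$ forces $d/m = \Omega(1)$, so $R = \Theta(d/m)$ and $R \cdot S = \Theta(d)$. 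Conversely, $d \le Lm = \Oh(m^2)$ gives $R = \Oh(m)$, whence $(R+1)(4R+6S+\ell) = \Oh(R \cdot m) = \Oh(d)$, and $d(x,y) \le \gamma \cdot d$ by assumption, yielding $d(x',y') = \Theta(d)$.

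Finally, I would verify that $(n, x', y')$ belongs to $\LCS_\le^{\gamma'}(\Valpha, \{0,1\})$ for a suitable $\gamma'$. The lengths satisfy $|x'|, |y'| = \Oh(m) = \Oh(n)$ and $L(x',y') = \Oh(m) = \Oh(L)$ using $\alpha_L = \alpha_m$. A direct telescoping gives $\delta(x', y') = \delta(x, y) \le \gamma \delta$ and $\Delta(x', y') = R + (|x| - L(x, y))$. The main obstacle is the $\Delta$-bound: $\Delta$ may be much smaller than $m$, so both summands require attention. The first is handled by the parameter relation $d \le 2L(\Delta+1)$, which forces $R = \Oh(d/m) = \Oh(\Delta)$; the second is exactly $\Delta(x,y) \le \gamma \Delta$. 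Finally, $M(x',y') \le |x'|\cdot|y'| = \Oh(m^2) = \Oh(M)$ via $M \ge L^2/|\Sigma|$, and the alphabet is $\{0,1\}$ by construction. All parameters are thus $\Oh$ of their targets, completing the verification.
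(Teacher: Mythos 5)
Your proposal is correct and follows essentially the same route as the paper: instantiate Lemma~\ref{lem:bbbII} with $\beta = 0$, $S = \max\{|x|, m\}$, $R$ of order $d/m$, apply Lemma~\ref{lem:bbbII-d} for the dominant-pair sandwich, and verify the remaining parameters via the relations $d \le 2L(\Delta+1)$, $d \le Lm$, and $M \ge L^2/|\Sigma|$. The only cosmetic difference is your choice $R = \lfloor d/m\rfloor$ where the paper uses $\lceil d/S\rceil$; since $\alpha_L = \alpha_m$ implies the target values satisfy $d \ge L = m$, both give $R = \Theta(d/m) \ge 1$, so this is immaterial.
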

\begin{proof}
We construct $x',y'$ as in \lemref{bbbII} with $S= \max\{|x|, m\}$, $R = \lceil d/S\rceil$, $\beta = 0$ and  $\ell = (R+S+|x|+|y|)$. Note that indeed $|x'| \ge |y'|$ by $|x| \ge |y|$, $|a| \ge |b|$, and $\beta=0$. The assumption $|x|,|y| = \Oh(m)$ yields $S= \Theta(m)$. By the parameter relation $d\le 2(\Delta +1)\cdot L = \Oh(\Delta \cdot m)$, we also have $R= \Oh(d/S + 1) = \Oh(d/m + 1) = \Oh(\Delta)$. We conclude that $|x'|, |y'| = \Oh(R+S+|x|+|y|) = \Oh(m + \Delta) = \Oh(m)$, since by assumption $\alpha_\Delta \le \alpha_m$. This yields $L(x',y') = \Oh(m) = \Oh(L)$ (by the assumption $\alpha_L = \alpha_m$) and $M(x',y') = \Oh(m^2) = \Oh(M)$ (note that $M \ge L^2/|\Sigma| = \Omega(m^2)$ by the assumption $\alpha_L = \alpha_m$).

By \lemref{bbbII}, we have $L(x',y') = R+2S +\ell + L(x,y)$, satisfying~\itemref{DeltaSmallerm-L}. 
This yields $\delta(x',y') = \delta(x,y) = \Oh(\delta)$ and $\Delta(x',y') = R + \Delta(x,y) = \Oh(d/m + \Delta) = \Oh(\Delta)$, by the parameter relation $d \le 2L(\Delta+1) = \Oh(m \Delta)$. 
For $d$, we first observe that $\lceil d/S \rceil = \Theta(d/m)$ (by the parameter relation $d\ge L = \Theta(m)$) and $\ell = \Oh(R+S+|x|+|y|) = \Oh(m)$. \lemref{bbbII-d} yields the lower bound $d(x',y') \ge R\cdot S = \Omega(d/m \cdot m) = \Omega(d)$ as well as the corresponding upper bound $d(x',y') = \Oh( R \cdot \ell + d(x,y) ) = \Oh(d/m \cdot m + d) = \Oh(d)$.

These bounds prove that $(n,x',y')$ is an instance of $\LCS^{\gamma'}_\le(\Valpha, \{0,1\})$ for some $\gamma'\ge 1$.
\end{proof}

We use \lemref{deltapadding} to finally pad $\delta, \Delta$, and $m$.

\begin{lem}\label{lem:DeltaSmallerm-step2}
Let $x,y,x',y',\tau$ be as given in \lemref{DeltaSmallerm-step1}. Then, in time $\Oh(n)$ we can construct an instance $x''',y'''$ of $\LCS^{\gamma''}(\Valpha,\{0,1\})$ (for some constant $\gamma'' \ge 1$) and an integer $\tau'$ such that $L(x''',y''') = \tau' + L(x',y') = (\tau+\tau') + L(x,y)$.
\end{lem}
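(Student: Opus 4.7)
The plan is to apply \lemref{deltapadding} twice: once to boost $\Delta$ from $\Oh(\Delta)$ to $\Theta(\Delta)$, and once with the roles of $x$ and $y$ interchanged to boost $\delta$ from $\Oh(\delta)$ to $\Theta(\delta)$. Concretely, I first form
\[ \tilde x := 0^{\mu_1} 1^{\nu_1} 0^{\mu_1} x', \qquad \tilde y := 1^{\nu_1} 0^{\mu_1} y' \]
with $\mu_1 := \Delta$ and $\nu_1 := \mu_1 + |y'|$, and then set
\[ x''' := 1^{\nu_2} 0^{\mu_2} \tilde x, \qquad y''' := 0^{\mu_2} 1^{\nu_2} 0^{\mu_2} \tilde y, \]
with $\mu_2 := \delta$ and $\nu_2 := \mu_2 + |\tilde x|$. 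The second pair is exactly the output of \lemref{deltapadding} applied to the swapped input $(\tilde y, \tilde x)$, and the hypothesis $\nu_i \ge \mu_i + |\text{shorter arg}|$ holds in both invocations by construction. Since $\delta \le \Delta$ and $|x'| \ge |y'|$ by \lemref{DeltaSmallerm-step1}\itemref{DeltaSmallerm-size}, we also get $|x'''| \ge |y'''|$.

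Stacking the two invocations of \lemref{deltapadding} yields $L(x''',y''') = \tau' + L(x',y')$ with $\tau' := \mu_1+\nu_1+\mu_2+\nu_2$; combining with \lemref{DeltaSmallerm-step1}\itemref{DeltaSmallerm-L} then gives the required $L(x''',y''') = (\tau + \tau') + L(x,y)$. The same two invocations hand me $\Delta(x''',y''') = \mu_1 + \Delta(x',y') = \Theta(\Delta)$ and $\delta(x''',y''') = \mu_2 + \delta(x',y') = \Theta(\delta)$ once I plug in the $\Oh(\Delta)$ and $\Oh(\delta)$ bounds of \lemref{DeltaSmallerm-step1}. In the present regime $\alpha_\Delta \le \alpha_m = \alpha_L$ we have $n = \Theta(m)$, so all four padding lengths are $\Oh(m)$; this gives $|y'''| = \Theta(m)$, $|x'''| = \Theta(m + \Delta) = \Theta(n)$, and $L(x''',y''') = \Theta(m) = \Theta(L)$. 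Since each of $\occ_0(x''')$, $\occ_0(y''')$, $\occ_1(x''')$, $\occ_1(y''')$ is $\Theta(m)$, we obtain $M(x''',y''') = \Theta(m^2) = \Theta(M)$ by the relations $L^2/|\Sigma| \le M \le 2Ln$ together with $n = \Theta(L) = \Theta(m)$. Finally, iterating the dominant-pair formula of \lemref{deltapadding} produces $d(x''',y''') = \Oh(m) + d(x',y')$, which is $\Theta(d)$ by \lemref{DeltaSmallerm-step1}\itemref{DeltaSmallerm-d} and the parameter relation $d \ge L = \Theta(m)$. Both the construction and $\tau'$ are clearly computable in $\Oh(n)$ time.

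The main obstacle I anticipate is controlling all seven parameters simultaneously with only the four padding constants $\mu_1,\nu_1,\mu_2,\nu_2$: both padding steps reshape both strings and both inflate $L$ by $\Theta(m)$, so I must rely on the regime assumption $n = \Theta(m)$ and the relation $d \ge L$ from \tabref{paramRelations} to absorb the additive $\Oh(m)$ overhead of the paddings into the targets $\Theta(n)$, $\Theta(\Delta)$, and $\Theta(d)$. Once these absorptions are verified, the remaining estimates reduce to direct substitution.
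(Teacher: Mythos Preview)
Your proposal is correct and follows essentially the same approach as the paper: two nested applications of \lemref{deltapadding}, one to pad $\Delta$ and one (with the roles swapped) to pad $\delta$, relying on the regime assumption $n=\Theta(m)$ to absorb all additive $\Oh(m)$ terms. The only difference is cosmetic: the paper chooses $\nu_1 = \tilde m + \Delta$ and $\nu_2 = 5\tilde m + \delta$ with $\tilde m := \max\{m,|x'|,|y'|,\Delta\}$, whereas you take the minimal admissible values $\nu_1 = \mu_1 + |y'|$ and $\nu_2 = \mu_2 + |\tilde x|$; either choice works and the parameter verifications go through identically.
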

\begin{proof}
As an intermediate step, let $\tilde{m} := \max\{m, |x'|, |y'|, \Delta\}$ and construct
\begin{alignat*}{4}
x'' & := 0^\Delta & & \; 1^{\tilde{m}+\Delta} \; 0^\Delta \; x', \\
y'' & := & & \; 1^{\tilde{m}+\Delta} \; 0^\Delta \; y'.
\end{alignat*}
We obtain the final instance as
\begin{alignat*}{4}
x''' & := & & \; 1^{5\tilde{m}+\delta} \; 0^\delta \; x'', \\
y''' & := 0^\delta & & \; 1^{5\tilde{m}+\delta} \; 0^\delta \; y''.
\end{alignat*}
Note that by definition of $\tilde{m}$, we have $\tilde{m}+\Delta \ge \Delta + |y'|$  and $\delta + 5\tilde{m} \ge \delta + (3\Delta + \tilde{m} + |x'|) = \delta + |x''|$, satisfying the conditions of \lemref{deltapadding}. Hence, this lemma yields
\begin{equation}\label{eq:Deltasmallerm-lcs}
 L(x''',y''') = 5\tilde{m}+2\delta + L(x'',y'') = 6\tilde{m}+2\delta + 2\Delta + L(x',y').
\end{equation}
Clearly, $x',y',\tau$ and hence also $x'',y''$ and $\tau':= 6\tilde{m}+2\delta+2\Delta$ can be computed in time $\Oh(n)$.

We now verify all parameters.\dopara{n,m,L} Clearly, $\Sigma(x''',y''') = \{0,1\}$. Since by assumption $\alpha_\Delta \le \alpha_m = 1$, we have $|x'| = \Oh(n) = \Oh(m)$, $|y'| = \Oh(m)$, and $\Delta = \Oh(m)$. This implies $\tilde{m} = \Theta(m)$ and consequently $|x'''| = 6\tilde{m} + 2 \delta + 3 \Delta + |x'| = 6\tilde{m} + \Oh(m) = \Theta(m)=\Theta(n)$ (using $\delta \le \Delta$ and $\alpha_\Delta \le \alpha_m =1$). Similarly, $|y'''| = 6\tilde{m} + 3\delta + 2\Delta  + |y'| = 6\tilde{m} + \Oh(m) = \Theta(m)$. By \eqref{eq:Deltasmallerm-lcs}, we have $L(x''',y''')=6\tilde{m} + 2\delta + 2\Delta +  L(x',y') = 6\tilde{m} + \Oh(L) = \Theta(L)$ (using the assumption $\alpha_L=\alpha_m$).

Note that $|x'''| \ge |y'''|$ follows from $\Delta \ge \delta$ and $|x'|\ge |y'|$ (by \lemref{DeltaSmallerm-step1}\itemref{DeltaSmallerm-size}). Hence, \eqref{eq:Deltasmallerm-lcs} provides $\Delta(x''',y''') = \Delta + \Delta(x',y') = \Theta(\Delta)$ and $\delta(x''',y''') = \delta + \delta(x',y') = \Theta(\delta)$.\dopara{\delta,\Delta}

The number of matching pairs satisfies $M(x''',y''') = \occ_0(x''')\occ_0(y''') + \occ_1(x''')\occ_1(y''') = \Theta(m^2) = \Theta(M)$, where the last bound follows from $M\ge L^2/|\Sigma| = \Omega(m^2)$ and $M \le 2Ln = \Oh(m^2)$ by $\alpha_L = \alpha_m = 1$.\dopara{M,d} For the number of dominant pairs, we apply \lemref{deltapadding} to bound 
\begin{align*}
 d(x''',y''') &= 3\delta + 5\tilde{m} + \occ_1(x'') + d(x'',y'') \\
& = 3\delta + 5\tilde{m} + (\tilde{m} + \Delta + \occ_1(x')) + (3\Delta + \tilde{m} + \occ_1(y') + d(x',y')) \\
& = 7\tilde{m} + 3(\delta + \Delta) + \occ_1(x') + \occ_1(y') + d(x',y') = \Theta(d) + \Oh(m) = \Theta(d),
\end{align*}
where the last two bounds follow from $\tilde{m}, |x'|,|y'|, \delta, \Delta = \Oh(m)$, $d(x',y') = \Theta(d)$ by \lemref{DeltaSmallerm-step1} and the parameter relation $d\ge L = \Omega(m)$.
\end{proof}

Combining \lemrefs{DeltaSmallerm-step1}{DeltaSmallerm-step2} with \lemref{largeLCShardnessConstantAlphabet} finally proves \lemref{DeltaSmallerm}.

\subsubsection{Case $\alpha_\Delta > \alpha_m = \alpha_L$ and $\alpha_\delta \ge \alpha_M - 1$}

In this section, we consider the case where $\alpha_L=\alpha_m < \alpha_\Delta$ and $\alpha_\delta \ge \alpha_M - 1$. In this case, we have $\Delta = \Theta(n) \gg m$ and $M/n \le  \Oh(\delta)$. Since $M/n \le \Oh(m) \le \Oh(\Delta)$, the fastest known algorithm runs in time $\tOh(\min\{d, \delta M/n\})$. Consequently, in this regime \lemref{largelcsbinaryalphabets} simplifies to the following statement.

\begin{lem}\label{lem:MnSmallerdelta}
Let $(\Valpha,\{0,1\})$ be a parameter setting satisfying \tabref{paramChoiceRestr} with $\alpha_L=\alpha_m < \alpha_\Delta$ and $\alpha_M - 1\le \alpha_\delta$. There is a constant $\gamma \ge 1$ such that any algorithm for $\LCS^\gamma(\Valpha,\Sigma)$ takes time $\min\{d, \delta M/n\}^{1-o(1)}$ unless \OVH\ fails.
\end{lem}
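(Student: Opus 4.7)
Under the assumptions of the lemma, $\Delta = \Theta(n) \gg m = \Theta(L)$, $\delta \le m$, and $M/n = \Oh(\delta)$, which in particular implies $T := \min\{d, \delta M/n\} \le \delta^2$ (using $M \le n\delta$). The plan is to reduce from \UOVH by starting from a hard core drawn from the small-LCS construction of \secref{hardnesssmallLCS} (which produces binary strings with few $1$s), then inflate the parameters with the basic building block \lemref{bbbII} and a $1^\Delta$ prefix.

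Concretely, I would first pick an auxiliary parameter setting $\Valpha'$ with $\alpha'_L = \alpha'_m = \alpha'_\delta := \alpha_\delta$, $\alpha'_\Delta := 1$, $\alpha'_\Sigma := 0$, $\alpha'_d := \min\{\alpha_d, \alpha_\delta + \alpha_M - 1\}$, and $\alpha'_M := \alpha_\delta + 1$. The assumption $\alpha_M \le 1 + \alpha_\delta$ ensures that $\Valpha'$ satisfies \tabref{paramChoiceRestr}. Invoking \lemref{smallLCShardnessConstantAlphabet} then yields instances $(n, x_0, y_0)$ of $\LCS_\le^{\gamma'}(\Valpha', \{0,1\})$ with $|x_0|, |y_0| = \Oh(\delta)$, $\occ_1(y_0) = \Oh(M/n)$ (in the case $T = d$ using that $d \le \delta M/n$ is equivalent to $d/\delta \le M/n$, and directly in the case $T = \delta M/n$), the property $L(x_0, 0^\beta y_0) = L(x_0, y_0)$ for all $\beta \ge 0$, and a threshold $\rho$ such that $L(x_0, y_0) \ge \rho$ iff the underlying \OV\ instance has a solution; deciding this takes time $T^{1-o(1)}$ under \UOVH.

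Next, I would apply \lemref{bbbII} with $R := m$, $S := \lceil d/m \rceil$, $\beta := \delta$, and $\ell := R + S + |x_0| + |y_0|$ to obtain $x_1 := (01)^{R+S}\, 0^\ell\, x_0$ and $y_1 := 0^{\delta+R}(01)^S\, 0^\ell\, y_0$. The crucial design property is $\occ_1(b) = S = \Oh(M/n)$, which follows from the binary parameter relation $M \ge nd/(5L) = \Theta(nd/m)$ of \lemref{Mbinarynd}; hence $\occ_1(y_1) = \Oh(M/n)$. By \lemref{bbbII}, $L(x_1, y_1) = R + 2S + \ell + L(x_0, y_0) = \Theta(m)$ and $\delta(x_1, y_1) = \delta + \delta(x_0, y_0) = \Theta(\delta)$. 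Finally, to inflate $\Delta$, prepend $1^\Delta$ to $x_1$: set $x'' := 1^\Delta x_1$ and $y'' := y_1$. I would prove $L(x'', y'') = L(x_1, y_1)$ via an exchange argument exploiting that $y_1$ starts with $\delta + R$ zeros, so any LCS match into a $1$ of $y_1$ must come after enough zeros of $x''$, all of which lie inside $x_1$; hence the $1^\Delta$ block cannot be used productively. This gives $|x''| = \Theta(n)$ and $\Delta(x'', y'') = \Theta(n)$.

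For parameter verification, lengths, $L$, $\delta$, and $\Delta$ follow from the construction. The matching pair count evaluates to $M(x'', y'') = \occ_0(x'')\occ_0(y'') + \occ_1(x'')\occ_1(y'') = \Oh(m^2) + \Theta(n) \cdot \Oh(M/n) = \Theta(m^2 + M) = \Theta(M)$, using $M \ge L^2/|\Sigma| = \Omega(m^2)$ from \lemref{MbinaryLm}. The lower bound $d(x'', y'') \ge d(a, b) = \Omega(RS) = \Omega(d)$ follows from \obsref{prefix} and \lemref{genDomPairs}. \textbf{The main obstacle} is the matching upper bound $d(x'', y'') = \Oh(d)$: a naive application of \lemref{few1s} with $\lambda = \Oh(M/n)$ and $L = \Theta(m)$ yields only $\Oh(Mm/n)$, which overshoots $d$ precisely when $M \gg nd/m$. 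To close this gap I plan to use a refined dominant-pair analysis that splits 1-dominant pairs into those generated by $b$'s $S = \Oh(d/m)$ ones (contributing $\Oh(d)$ via \lemref{few1s} restricted to the $(a, b)$-gadget) and those involving $y_0$'s ones (which can only pair with positions in $x_0$ past the shared $0^\ell$-block that acts as a separator, contributing at most $d(x_0, y_0) \le \Oh(T) = \Oh(d)$ plus $\Oh(\occ_1(y_0)) = \Oh(M/n) = \Oh(d)$ boundary pairs), while 0-dominant pairs are bounded by the structural alignment of $a$ and $b$ analogously to \lemref{domPairRedbase}. If this refinement still leaves a factor gap, the fallback is to insert the binary Dominant Pair Reduction \lemref{dreduction}(ii) between the basic building block and the $1^\Delta$ padding with carefully tuned parameter, at the cost of additional bookkeeping for $\tau$ and re-verification of the other parameters; navigating this delicate control of $d$ is the most technical part of the proof.
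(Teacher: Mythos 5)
Your overall architecture matches the paper's: reduce from an auxiliary small-LCS parameter setting $\Valpha'$ (with $\alpha'_d = \min\{\alpha_d, \alpha_\delta + \alpha_M - 1\}$), use the basic building block of \lemref{bbbII} to manufacture dominant pairs, and then pad $\Delta$ via a $1$-block. However, the specific parameter choices create two genuine gaps that your write-up does not resolve.

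\paragraph{The dominant-pair bound is not $\Oh(d)$.}
With $R = m$ and $S = d/m$ you get $L(x_1,y_1) = \Theta(m)$, and \lemref{few1s} then gives $d(x_1,y_1) = \Oh(m\cdot \occ_1(y_1))$. Since $\occ_1(y_1) = S + \occ_1(y_0)$, and in your auxiliary setting $\occ_1(y_0) = \Theta(n^{\alpha'_d - \alpha'_L})$ where $\alpha'_d - \alpha'_L = \min\{\alpha_d - \alpha_\delta,\,\alpha_M - 1\} \ge \alpha_d - \alpha_m$ (using $\alpha_\delta \le \alpha_m$ and the binary relation $\alpha_M \ge 1 + \alpha_d - \alpha_L$), you have $\occ_1(y_0) = \Omega(S)$, and it can be \emph{polynomially} larger than $S$. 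Concretely, take $\alpha_m = \alpha_L = 0.7$, $\alpha_d = 1.1$, $\alpha_M = 1.5$, $\alpha_\delta = 0.69$, $\alpha_\Delta = 1$: then $S = n^{0.4}$ but $\occ_1(y_0) = \Theta(n^{0.41})$, and the few1s bound reads $\Oh(n^{1.11})$ while $d = n^{1.1}$. Your proposed ``refined'' split is too vague to substitute: the $0$-dominant pairs (which dominate in few1s) are not cleanly attributed to $b$ versus $y_0$; and the one-to-one correspondence you invoke between $(x_1,y_1)$-dominant pairs in the $x_0$-vs-$y_0$ region and $d(x_0,y_0)$ requires a prefix-version of \lemref{bbbII}, whose hypothesis ($S\ge |x_0|$ or prefix-robustness of $L(x_0,0^\beta y_0)$) you do not satisfy since $S = d/m = o(|x_0|)$ and the guarantee $L(x_0,0^\beta y_0)=L(x_0,y_0)$ from \lemref{smallLCShardnessConstantAlphabet} is not stated for arbitrary prefixes of $y_0$. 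The fallback via \lemref{dreduction}(ii) introduces $k = \Theta(m)$ extra $1$s into $y$, which blows up $M(x'',y'')$ to $\Omega(nm)$, exceeding $M$ whenever $\alpha_M < 1 + \alpha_m$ (allowed in this regime). The paper's fix is structurally different: it takes $S = \lfloor \min\{M/n,\sqrt{d}\}\rfloor$, $R = \lfloor d/S\rfloor$, and crucially chooses $\alpha'_L = \min\{\alpha_\delta, \max\{\alpha_d - \alpha_M + 1, \alpha_d/2\}\}$ so that $|x_0|,|y_0| = \Oh(R)$ and $\occ_1(y_0) = \Oh(S)$, giving $|x'| = \Oh(R)$ and $d(x',y') = \Oh(|x'|\cdot \occ_1(y')) = \Oh(RS) = \Oh(d)$ immediately from few1s; then $L$ and $m$ are padded up to $\Theta(m)$ in a second step with a $0^{\tilde m}$-block (together with a $1^{\tilde\Delta}$-block for $\Delta$), controlled via \lemref{deleteDeltaPadding}, which introduces only $\Oh(m+d)$ additional dominant pairs.

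\paragraph{The lower bound $M(x'',y'') = \Omega(M)$ is not established.}
You compute $M(x'',y'') = \Oh(m^2) + \Theta(n)\cdot\Oh(M/n)$ and conclude $\Theta(M)$, but both summands are only upper bounds. Since $\occ_1(y'') = S + \occ_1(y_0)$ can be $o(M/n)$ (when $\alpha'_d - \alpha'_L < \alpha_M - 1$, i.e.\ $\alpha_d - \alpha_\delta < \alpha_M - 1$), and $m^2$ can be $o(M)$ (since $\alpha_M \in [2\alpha_m, 1+\alpha_m]$ with $\alpha_m < 1$), you can end up with $M(x'',y'') = o(M)$. The paper remedies this by prepending $1^\kappa$ with $\kappa = \lfloor M/n\rfloor$ to \emph{both} $x''$ and $y''$, guaranteeing $\occ_1(y'') = \Theta(M/n)$; this also makes the subsequent analysis cleaner via greedy prefix matching, and is missing from your construction.
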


To obtain this result, we cannot simply pad our hard instances $(n,x,y)$ of $\LCS_\le(\Valpha, \{0,1\})$ to $\LCS(\Valpha, \{0,1\})$, since the desired running time bound $\min\{d, \delta M/n\}$ is not monotone. In other words,
for $\LCS_\le(\Valpha, \{0,1\})$ we have a lower bound of $\min\{\delta \Delta, \delta m, d\}^{1-o(1)}$ (see \lemref{largeLCShardnessConstantAlphabet}) which can be higher than the running time $\Oh(n + \delta M/n)$ of our new algorithm (\thmref{algo}) and thus we would violate SETH. In fact, we even cannot start from an instance as constructed in \lemref{largeLCShardnessConstantAlphabet}, since this would generate too many ``1''s. Instead, we use instances of a different parameter setting $\LCS_\le(\Valpha',\{0,1\})$ with $\alpha_\delta' = \alpha_m'$, i.e., we invoke \lemref{smallLCShardnessConstantAlphabet}.

\begin{obs} \label{obs:MnConstructionalphaprime}
Let $(\Valpha,\{0,1\})$ be a parameter setting satisfying \tabref{paramChoiceRestr} with $\alpha_L = \alpha_m < \alpha_\Delta$ and $\alpha_M - 1\le \alpha_\delta$. Then $\Valpha':=\Valpha'(\Valpha)$ defined by
\begin{align*}
\alpha'_d &= \min\{\alpha_d, \alpha_\delta + \alpha_M - 1\}, & \alpha'_M & = 2\alpha'_L, \\
\alpha'_d - \alpha'_L & = \min\{\alpha_M - 1, \alpha_d/2\}, & \alpha'_\Delta& = 1, \\
\alpha'_m &= \alpha'_\delta = \alpha'_L,  & \alpha'_\Sigma & = 0,
\end{align*}
yields a parameter setting $(\Valpha', \{0,1\})$ satisfying \tabref{paramChoiceRestr}. The definition of $\Valpha'$ implies 
\begin{equation}\label{eq:alphaL}
\alpha'_L = \min\{\alpha_\delta, \max\{\alpha_d - \alpha_M + 1, \alpha_d/2\}\}.
\end{equation}
Moreover, there is some constant $\gamma \ge 1$ such that no algorithm solves $\LCS^\gamma_\le(\Valpha', \{0,1\})$ in time $\min\{d, \delta M/n\}^{1-o(1)}$ unless \OVH\ fails.
This holds even restricted to instances $(n,x,y)$ with $|x|,|y| \le \gamma n^{\alpha'_L} = \Oh(\min\{\delta, \max\{d n / M, \sqrt{d}\}\})$ and $\occ_1(y) \le \gamma \cdot n^{\alpha'_d - \alpha'_L} = \Oh(\min\{M/n, \sqrt{d}\})$ satisfying $L(x,0^\beta y) = L(x,y)$ for any $\beta \ge 0$.
\end{obs}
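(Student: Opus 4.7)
My plan is to reduce the claim to the hardness of LCS for parameter settings with $\alpha_\delta = \alpha_m$ on the binary alphabet, namely \lemref{smallLCShardnessConstantAlphabet}, applied to the auxiliary parameter setting $(\Valpha', \{0,1\})$. I would proceed in four steps.

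First, I verify the explicit formula \eqref{eq:alphaL}. From the defining relations $\alpha'_d = \min\{\alpha_d, \alpha_\delta + \alpha_M - 1\}$ and $\alpha'_d - \alpha'_L = \min\{\alpha_M - 1, \alpha_d/2\}$, a case analysis on which of $\alpha_d, \alpha_\delta + \alpha_M - 1$ is smaller, combined with which of $\alpha_M - 1, \alpha_d/2$ is smaller, yields four combinations. The assumption $\alpha_M - 1 \le \alpha_\delta$ rules out the combination ``$\alpha_d > \alpha_\delta + \alpha_M - 1$ and $\alpha_M - 1 > \alpha_d/2$'' as internally inconsistent (these would together force $\alpha_\delta < \alpha_M - 1$). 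In each of the three remaining cases, the resulting $\alpha'_L$ agrees with $\min\{\alpha_\delta, \max\{\alpha_d - \alpha_M + 1, \alpha_d/2\}\}$.

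Second, I verify that $(\Valpha', \{0,1\})$ satisfies every row of \tabref{paramChoiceRestr}. The rows for $\alpha'_m, \alpha'_L, \alpha'_\delta, \alpha'_\Delta$, and $\alpha'_\Sigma$ are immediate from the definitions. For the $\alpha'_d$ row, the bounds $\alpha'_L \le \alpha'_d \le 2\alpha'_L$ follow from $\alpha'_d - \alpha'_L \in [0,\alpha'_L]$, where the upper bound $\alpha'_d - \alpha'_L \le \alpha'_L$ is checked in each of the three cases from Step~1 using the relations $\alpha_d \le \alpha_L + 1$ and $\alpha_M \le \alpha_L + 1$ for $\Valpha$. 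For the $\alpha'_M$ row, the choice $\alpha'_M = 2\alpha'_L$ meets both the lower bound $\max\{1, \alpha'_d, 2\alpha'_L, 1 + \alpha'_d - \alpha'_L\}$ and the upper bound $\alpha'_L + 1$; the tightest inequality, $2\alpha'_L \ge 1 + \alpha'_d - \alpha'_L$ (equivalently $\alpha'_d - \alpha'_L \le 3\alpha'_L - 1$), is precisely what motivates the capping $\alpha'_d \le \alpha_\delta + \alpha_M - 1$.

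Third, since $\alpha'_\delta = \alpha'_m$ and $\alpha'_\Sigma = 0$, \lemref{smallLCShardnessConstantAlphabet} applies to $(\Valpha', \{0,1\})$ and produces instances $(n,x,y)$ for which any algorithm takes time $n^{\alpha'_d(1-o(1))}$. Using $n^{\alpha_d} = d$ and $n^{\alpha_\delta + \alpha_M - 1} = \delta M / n$, the exponent $\alpha'_d = \min\{\alpha_d, \alpha_\delta + \alpha_M - 1\}$ translates to the bound $\min\{d, \delta M/n\}^{1-o(1)}$. The guarantees $|x|, |y| \le \gamma\, n^{\alpha'_L}$, $\occ_1(y) \le \gamma\, n^{\alpha'_d - \alpha'_L}$, and $L(x, 0^\beta y) = L(x,y)$ are exactly those already provided by \lemref{smallLCShardnessConstantAlphabet}; rewriting via \eqref{eq:alphaL} and $n^{\alpha'_d - \alpha'_L} = \min\{M/n, \sqrt{d}\}$ yields the stated $\Oh$-bounds. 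The main technical obstacle is Step~2's verification of the $\alpha'_M$ row: the hypothesis $\alpha_M - 1 \le \alpha_\delta$ is used precisely to make $2\alpha'_L \ge 1 + \alpha'_d - \alpha'_L$ hold in every case, which is also the reason for capping $\alpha'_d$ at $\alpha_\delta + \alpha_M - 1$ rather than taking $\alpha'_d = \alpha_d$ unconditionally.
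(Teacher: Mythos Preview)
Your approach is essentially the paper's: derive \eqref{eq:alphaL} by case analysis, verify the Table~2 constraints for $\Valpha'$, then invoke \lemref{smallLCShardnessConstantAlphabet}. The differences are only in execution. For \eqref{eq:alphaL}, the paper uses a two-way split on whether $\alpha_d/2 \le \alpha_M - 1$, rather than your four-combination split. For the Table~2 verification, the paper singles out $0 \le \alpha'_L \le 1$ and $\alpha'_d \le 2\alpha'_L$ as the only nontrivial checks and dispatches the latter in one line: $\alpha'_d - \alpha'_L = \min\{\alpha_M - 1, \alpha_d/2\} \le \min\{\alpha_\delta, \max\{\alpha_d - \alpha_M + 1, \alpha_d/2\}\} = \alpha'_L$ follows directly from the hypothesis $\alpha_M - 1 \le \alpha_\delta$ and the trivial bound $\alpha_d/2 \le \max\{\cdot,\alpha_d/2\}$, with no need for your three-case analysis or the auxiliary relations $\alpha_d \le \alpha_L + 1$ and $\alpha_M \le \alpha_L + 1$. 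One slip in your write-up: your ``equivalently'' for the $|\Sigma|=2$ constraint on $\alpha'_M$ is off, since $2\alpha'_L \ge 1 + \alpha'_d - \alpha'_L$ rearranges to $\alpha'_d \le 3\alpha'_L - 1$, not $\alpha'_d - \alpha'_L \le 3\alpha'_L - 1$; the paper does not argue this row at all and simply declares the remaining constraints ``straight-forward''.
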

\begin{proof}
We first prove \eqref{eq:alphaL}. Consider the case that $\alpha_d/2 \le \alpha_M - 1$. Then $\alpha_d \le 2(\alpha_M - 1) \le \alpha_\delta + (\alpha_M - 1)$, where we used the assumption $\alpha_M - 1 \le \alpha_\delta$. Thus, $\alpha'_d = \alpha_d$ and  by definition 
\[\alpha_L' = \alpha'_d - \min\{\alpha_M - 1, \alpha_d/2\} = \alpha_d - \alpha_d/2 = \alpha_d/2.\]
From $\alpha_d/2 \le \alpha_M - 1$, it follows that $\alpha_d - \alpha_M + 1 \le \alpha_d/2$ and $\alpha_d/2 \le \alpha_M - 1 \le \alpha_\delta$, hence $\alpha'_L = \alpha_d/2 = \min\{\alpha_\delta, \max\{\alpha_d - \alpha_M + 1, \alpha_d/2\}\}$, as desired. 

Consider the remaining case that $\alpha_d /2 > \alpha_M - 1$. Then by definition
\[ \alpha_L' = \alpha'_d - (\alpha_M-1) = \min\{\alpha_d - \alpha_M + 1, \alpha_\delta\}. \]
Since $\alpha_d /2 > \alpha_M -1$ implies $\alpha_d - \alpha_M + 1 \ge \alpha_d/2$, this indeed yields
\[ \alpha_L' = \min\{ \max\{\alpha_d - \alpha_M+1, \alpha_d/2\}, \alpha_\delta\},\]
as desired, concluding the proof of \eqref{eq:alphaL}.

Checking all constraints from \tabref{paramChoiceRestr} is straight-forward, except for the inequalities $0 \le \alpha'_L \le 1$ and $\alpha'_d \le 2\alpha'_L$. 
From \eqref{eq:alphaL}, $0 \le \alpha'_L \le 1$ follows immediately by the parameter relations $\alpha_\delta, \alpha_d \ge 0$ and $\alpha_\delta \le \alpha_m \le 1$. For the other inequality, note that $\alpha'_d \le 2\alpha'_L$ is equivalent to $\min\{\alpha_M - 1, \alpha_d/2\} = \alpha'_d - \alpha'_L \le \alpha'_L = \min\{\alpha_\delta, \max\{\alpha_d-\alpha_M + 1, \alpha_d/2\}\}$, which directly follows from the assumption $\alpha_M -1 \le \alpha_\delta$ and the trivial fact that $\alpha_d/2\le \max\{\alpha_d-\alpha_M+1, \alpha_d/2\}$.

The last statement directly follows from \lemref{smallLCShardnessConstantAlphabet}.
\end{proof}

It remains to pad strings $x,y$ of $\LCS_\le(\Valpha', \{0,1\})$ to $\LCS(\Valpha, \{0,1\})$. The first step is the following construction which pads $\delta$ and $d$. 

\begin{lem}\label{lem:MnSmallerdelta-step1}
Let $(\Valpha,\{0,1\})$ be a parameter setting satisfying \tabref{paramChoiceRestr} with $\alpha_L = \alpha_m < \alpha_\Delta$ and $\alpha_M - 1\le \alpha_\delta$, and construct $\Valpha'$ as in \obsref{MnConstructionalphaprime}. Let $(n,x,y)$ be an instance of $\LCS_\le(\Valpha', \{0,1\})$ with $|x|,|y| = \Oh(\min\{\delta, \max\{d n /M, \sqrt{d}\}\})$ and $\occ_1(y) = \Oh(\min\{M/n, \sqrt{d}\})$ satisfying $L(x,0^\beta y) = L(x,y)$ for any $\beta \ge 0$.
We set
\begin{align*}
S &= \lfloor \min\{M/n, \sqrt{d}\}\rfloor, & R &=  \lfloor d/S \rfloor,  & \ell& = |x|+|y|+R+S, & \beta & = \delta,
\end{align*}
and define, as in \lemref{bbbII},
\begin{alignat*}{10}
x' & \; := \quad & & a && \; 0^\ell\; x \quad &=& \quad & (01)^{R+S} \; &0^\ell\; x, \\
y' & \; := \quad & 0^\beta \;& b && \; 0^\ell\; y \quad &=& \quad 0^\beta \; & 0^R (01)^S \; &0^\ell\; y.
\end{alignat*}
Then $x',y'$ is an instance of $\LCS^{\gamma'}_\le(\Valpha,\{0,1\})$ (for some $\gamma'\ge 1$) with the additional guarantees that
\begin{enumerate}[label=(\roman{*})]
\item \label{itm:MnSmallerdelta-step1size} $|x'|,|y'|= \Oh(m)$,
\item \label{itm:MnSmallerdelta-step1occ1} $\occ_1(y') = \Oh(M/n)$ and $\occ_1(y') \cdot |b0^\ell y|  =\Oh(d)$.
\item \label{itm:MnSmallerdelta-step1d} $d(x',y') = \Theta(d)$,
\item \label{itm:MnSmallerdelta-step1delta} $|y'|-L(x',y') = \Theta(\delta)$,
\item \label{itm:MnSmallerdelta-step1tau} In time $\Oh(n)$ we can compute a number $\tau$ such that $L(x',y') = \tau + L(x,y)$.
\end{enumerate} 
\end{lem}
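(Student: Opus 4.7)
The plan is to verify each requirement in turn, leaning on the basic building block \lemref{bbbII} for the overall structure and on the sparsity lemma \lemref{few1s} to control dominant pairs (which we cannot read off directly from \lemref{bbbII-d} since that lemma requires $\beta=0$ while we use $\beta=\delta$). First I would apply \lemref{bbbII} with the stated parameters; the preconditions are satisfied because $\ell = R+S+|x|+|y|$ and the assumed invariant $L(x,0^\beta y)=L(x,y)$ holds. This immediately gives $L(x',y') = R+2S+\ell+L(x,y)$, proving \itemref{MnSmallerdelta-step1tau} with $\tau := R+2S+\ell$, and also yielding \itemref{MnSmallerdelta-step1delta} via $|y'|-L(x',y') = \delta + (|y|-L(x,y)) = \Theta(\delta)$ (using $|y| \le \Oh(\delta)$).

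Next, I would bound all the building-block sizes by $\Oh(m)$. We have $\sqrt{d} \le \Oh(m)$ from $d \le Lm = \Theta(m^2)$, so $S \le \sqrt{d} = \Oh(m)$. The key step — and the main technical obstacle — is bounding $R = \lfloor d/S \rfloor = \lceil\max\{dn/M,\sqrt{d}\}\rceil$ by $\Oh(m)$; this uses the binary-alphabet-specific relation $M \ge nd/(5L)$ from \lemref{Mbinarynd}, which gives $dn/M \le 5L = \Oh(m)$. Combined with $|x|,|y| = \Oh(\delta) = \Oh(m)$, this yields $\ell = \Oh(m)$ and thus $|x'|,|y'| = \Oh(m)$, proving \itemref{MnSmallerdelta-step1size}. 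For \itemref{MnSmallerdelta-step1occ1}, the count $\occ_1(y') = S + \occ_1(y) = \Oh(\min\{M/n,\sqrt{d}\})$ is immediate. The product $\occ_1(y') \cdot |b 0^\ell y| = \Oh(d)$ requires a short case split: noting that $|b 0^\ell y| = \Oh(R+S) = \Oh(R)$ (since $R \ge S$), both cases $M/n \le \sqrt{d}$ and $M/n > \sqrt{d}$ give the product $\Theta(\min\{M/n,\sqrt{d}\} \cdot \max\{dn/M,\sqrt{d}\}) = \Theta(d)$.

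For \itemref{MnSmallerdelta-step1d}, the lower bound follows from $d(x',y') \ge d(a,b) \ge R\cdot S = \Theta(d)$ via \obsref{prefix} and \lemref{genDomPairs}\itemref{genDomPairsd}. For the upper bound, since \lemref{bbbII-d} is not directly applicable (our $\beta=\delta \ne 0$), I would instead invoke \lemref{few1s}: $d(x',y') \le 5 L(x',y') \cdot \occ_1(y') = \Oh(R \cdot \min\{M/n,\sqrt{d}\}) = \Oh(d)$ by the same case analysis as above.

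Finally, I would verify that $(n,x',y')$ belongs to $\LCS_\le^{\gamma'}(\Valpha,\{0,1\})$ for some $\gamma' \ge 1$. The parameters $n,m,L,\delta$ are $\Oh(m) = \Oh(L) = \Oh(m) = \Oh(n)$ (using $\alpha_L=\alpha_m$) and $\Delta(x',y') \le |x'| = \Oh(m) \le \Oh(n) = \Oh(\Delta)$ (using $\alpha_\Delta > \alpha_m$, so $\Delta = \Theta(n)$). The alphabet is binary. The bound $d(x',y') = \Oh(d)$ is from above. For $M(x',y')$, splitting by symbol: $\occ_0(x')\cdot\occ_0(y') = \Oh(m^2) = \Oh(M)$ by $M \ge L^2/|\Sigma| = \Omega(m^2)$, while $\occ_1(x')\cdot\occ_1(y') = \Oh(m \cdot M/n) = \Oh(M)$ by $m \le n$. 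All constructions run in $\Oh(n)$ time by inspection.
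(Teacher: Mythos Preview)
Your proposal is correct and follows essentially the same route as the paper: apply \lemref{bbbII} for \itemref{MnSmallerdelta-step1tau} and \itemref{MnSmallerdelta-step1delta}, bound $R,S,|x|,|y|,\ell = \Oh(m)$ via the binary-specific relation $M \ge nd/(5L)$, and control $d(x',y')$ using \lemref{few1s} for the upper bound and a prefix argument for the lower bound. One small slip: in your lower bound you write $d(x',y') \ge d(a,b)$ via \obsref{prefix}, but $b$ is not a prefix of $y' = 0^\beta b 0^\ell y$; you should instead take $d(x',y') \ge d(a, 0^\beta b)$ and invoke \lemref{genDomPairs}\itemref{genDomPairsdmore} (not \itemref{genDomPairsd}) to get $\ge R\cdot S$, exactly as the paper does. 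Also, your use of $L(x',y') = \Oh(R)$ in the \lemref{few1s} step is correct but relies on $|x|,|y| = \Oh(R)$ (from the hypothesis $|x|,|y| = \Oh(\max\{dn/M,\sqrt{d}\})$), which you should make explicit since earlier you only stated $|x|,|y| = \Oh(m)$.
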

\begin{proof}
Note that $S \le \sqrt{d}$ implies that $S \le R$. Note also that by assumption, $|x|,|y| = \Oh(\min\{\delta,R\})$ and hence $R,S,\ell,|x|,|y| = \Oh(R)$. Furthermore, $R = \Theta(d/S) = \Theta(d/M n + \sqrt{d}) = \Oh(m)$, where the first bound follows from $\alpha_d \ge 0$ and $\alpha_M \ge 1$and the second follows from the parameter relations $d/M n \le 5L = \Oh(m)$ and $d \le Lm \le m^2$. Hence $|x'| = \Oh(R) = \Oh(m)$. Additionally, $\tilde{\delta} = \Theta(\delta)$ and thus $|y'|=\Oh(\delta + m) = \Oh(m)$ by $\delta \le m$; thus we have proven~\itemref{MnSmallerdelta-step1size}. This also implies $L(x',y') \le \Oh(m) = \Oh(L)$ since $\alpha_L=\alpha_m$. \dopara{n,m, L}

Note that $\occ_1(y) = S+\occ_1(y) = \Oh(S)$ by definition of $S$ and assumption on $\occ_1(y)$.\dopara{d,M} We compute $d(x',y')\le 5L(x',y') \cdot \occ_1(y')\le 5|x'| \cdot \occ_1(y') = \Oh(R\cdot S) = \Oh(d)$. Furthermore, we obtain by \obsref{prefix} and \lemref{genDomPairs} that $d(x',y') \ge d(a,0^\beta b) \ge R\cdot S = \Omega(d)$. Note that in particular $\occ_1(y') = \Oh(S) = \Oh(M/n)$ and $\occ_1(y') \cdot |b0^\ell y| = \Oh(S \cdot R) = \Oh(d)$, proving \itemref{MnSmallerdelta-step1occ1}.
The bound $M(x',y') \le \Oh(m^2) = \Oh(M)$ trivially follows from $|x'|,|y'| = \Oh(m)$ and $M\ge L^2/|\Sigma| = \Omega(m^2)$ since $\alpha_L=\alpha_m$. 

By \lemref{bbbII} we have $L(x',y') = R + 2S + \ell + L(x,y)$, which immediately yields \itemref{MnSmallerdelta-step1tau}. Moreover, we obtain $\delta(x',y') = \delta + \delta(x,y) = \Theta(\delta)$, since $\delta(x,y) \le |x| \le \Oh(\delta)$, proving~\itemref{MnSmallerdelta-step1delta}. Finally, $\Delta(x',y') \le |x'| \le \Oh(m) \le \Oh(\Delta)$ by the assumption $\alpha_\Delta > \alpha_m$.
\end{proof}

To finally pad all remaining parameters, we first prepare the following technical tool.

\begin{lem}\label{lem:deleteDeltaPadding}
Let $x = 1^\kappa 0^{\mu} w$ and $y = 0^{\mu} 0^\nu z$  with $\mu > |z|$. Then it holds that $L(x,y) = \mu + L(w,0^\nu z)$, as well as $d(x,y) \ge d(w,0^\nu z)$ and $d(x,y) \le \min\{\kappa,|z|\} + \mu  + d(1^\kappa 0^\mu, z) + d(w,0^\nu z)$.
\end{lem}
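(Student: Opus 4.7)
I would derive the three claims in sequence, relying on a single analysis of the LCS-table of $(x,y)$ throughout.

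For the identity $L(x,y)=\mu+L(w,0^\nu z)$, the lower bound is immediate: match the $0^\mu$-prefix of $y$ against the middle $0^\mu$-block of $x$ (Lemma~\ref{lem:greedy}) and then take an LCS of $w$ and $0^\nu z$. For the upper bound I would fix any common subsequence $s$ and case-split on whether $s$ uses any symbol of the $1^\kappa$-prefix of $x$. If not, $s$ is a common subsequence of $0^\mu w$ and $0^\mu(0^\nu z)$, so $|s|\le\mu+L(w,0^\nu z)$ by greedy prefix matching. If yes, any ``$1$'' of $s$ matched inside $1^\kappa$ must be matched to a ``$1$'' of $z$ (the block $0^{\mu+\nu}$ has no ``$1$''); fixing such a match $(r,q)$ and invoking order-preservation of subsequences, every earlier symbol of $s$ lies in $1^{<r}$ and is matched to a ``$1$'' inside $z[1..q-\mu-\nu-1]$, while every later symbol of $s$ has its $y$-coordinate strictly larger than $q$, hence again inside $z$. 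All matched $y$-positions therefore lie in $z$, so $|s|\le|z|<\mu\le\mu+L(w,0^\nu z)$ and this branch cannot be optimal.

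The inequality $d(x,y)\ge d(w,0^\nu z)$ then follows by observing that the identity extends verbatim to every prefix pair $x[1..\kappa+\mu+i_1],\ y[1..\mu+j']$: the ``$z$-part'' of $y[1..\mu+j']$ is $z[1..\max(0,j'-\nu)]$ of length at most $|z|<\mu$, so the same proof yields $L[\kappa+\mu+i_1,\mu+j']=\mu+L(w[1..i_1],(0^\nu z)[1..j'])$. Hence every dominant pair $(i_1,j')$ of $(w,0^\nu z)$ gives a dominant pair $(\kappa+\mu+i_1,\mu+j')$ of $(x,y)$; the boundary strict inequalities at $i_1=1$ or $j'=1$ reduce to $L[\kappa+\mu,\cdot]=L[\cdot,\mu]=\mu$, which preserves dominance.

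For the upper bound I would partition the dominant pairs of $(x,y)$ into three regions. \emph{Region A} is $j\le\mu$: here $L[i,j]=\min\{j,\max(0,i-\kappa)\}$ and the only dominant pairs are $(\kappa+k,k)$ for $k=1,\dots,\mu$, contributing exactly $\mu$ pairs. \emph{Region B} is $i>\kappa+\mu$ and $j>\mu$: by the extended identity these are in bijection with the dominant pairs of $(w,0^\nu z)$, contributing at most $d(w,0^\nu z)$. \emph{Region C} is $i\le\kappa+\mu$ and $j>\mu$: no dominance can occur while $j\le\mu+\nu$ (then $y[1..j]=0^j$ and $L[i,j]$ is constant in $j$), so set $j_2:=j-\mu-\nu\ge 1$ and $e:=\max(0,i-\kappa)$; a short computation gives $L[i,j]=\max\{e,\ L(1^\kappa 0^e,z[1..j_2])\}$. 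The strict increases required for dominance then force the second term to strictly beat $e$ and to strictly increase in both $e$ and $j_2$, which is exactly the condition that $(i,j_2)$ is a dominant pair of $(1^\kappa 0^\mu,z)$; the map $(i,j)\mapsto(i,j_2)$ is injective, yielding at most $d(1^\kappa 0^\mu,z)$ pairs. Summing the three regions gives $d(x,y)\le \mu+d(w,0^\nu z)+d(1^\kappa 0^\mu,z)$, which implies the (slightly weaker) bound stated in the lemma.

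The main obstacle will be Region C. The LCS value there has a non-trivial ``max'' structure reflecting two competing matching strategies---matching the $0^e$-suffix of $x[1..i]$ against the long $0^{\mu+\nu}$-block, versus matching some ``$1$''s of $1^\kappa$ against ``$1$''s in $z$ and then chaining with ``$0$''s in $z$---and the technical heart is showing that every strict increase in $L[i,j]$ across either coordinate forces the ``mixed'' strategy to strictly win over the pure $0$-block strategy, after which the dominance conditions cleanly transfer to the smaller instance $(1^\kappa 0^\mu,z)$.
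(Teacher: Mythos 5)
Your proof follows the same overall plan as the paper's: derive $L[i,j]$ explicitly on prefix regions, deduce the LCS identity, and bound the dominant pairs region by region. Two small differences are worth noting. For the identity $L(x,y)=\mu+L(w,0^\nu z)$, the paper cites \lemref{zeroblocklcs}, but taking that lemma's ``$z$'' to be $0^\nu z$ requires $\mu\ge\nu+|z|$, which is stronger than the stated $\mu>|z|$ when $\nu>0$; your direct case split (on whether any symbol of $1^\kappa$ is used by the common subsequence) is the argument that actually works under the stated hypothesis. For the upper bound on $d(x,y)$, the paper uses a slightly finer partition and charges the sub-region $i\le\kappa$ separately to $\min\{\kappa,|z|\}$, whereas your Region~C absorbs those pairs into the $d(1^\kappa 0^\mu,z)$ count, yielding the slightly tighter bound $\mu+d(w,0^\nu z)+d(1^\kappa 0^\mu,z)$; this of course implies the stated bound.

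One small error in Region~C: with $e=\max(0,i-\kappa)$, the formula $L[i,j]=\max\{e,\,L(1^\kappa 0^e,z[1..j_2])\}$ is wrong for $i<\kappa$, where $1^\kappa 0^e=1^\kappa$ but the true value is $L[i,j]=L(1^i,z[1..j_2])$. The correct unified formula is $L[i,j]=\max\{e,\,L\bigl((1^\kappa 0^\mu)[1..i],z[1..j_2]\bigr)\}$, which agrees with yours for $i\ge\kappa$. With this fix the transfer of dominance conditions to $(1^\kappa 0^\mu,z)$ goes through exactly as you describe, since $(1^\kappa 0^\mu)[1..i]$ is precisely the length-$i$ prefix appearing in the dominance test for $(i,j_2)$, and the map $(i,j)\mapsto(i,j_2)$ remains a well-defined injection.
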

\begin{proof}
By \lemref{zeroblocklcs}, we have that $L(x,y) = \mu + L(w,0^\nu z)$. 

This immediately shows that $d(x,y) \ge d(w,0^\nu z)$, since the above statement implies, for any prefixes $\tilde{w}, \tilde{z}$ of $w,0^\nu z$,  that $L(1^\kappa 0^\mu \tilde{w}, 0^\mu \tilde{z}) = \mu + L(\tilde{w},\tilde{z})$ and hence any $k$-dominant pair $(i,j)$ of $w$ and $0^\nu z$ gives rise to a $(\mu+k)$-dominant pair $(\kappa+\mu+i,\mu+j)$ of $x$ and $y$.

For the upper bound, we count the number of prefixes $\px, \py$ of $x,y$ corresponding to dominant pairs. Note that $\px,\py$ have to end in the same symbol to be a dominant pair. Consider first the case that $\px = 1^k$. Hence we must have $\py = 0^\mu 0^\nu \pz$ for some prefix $\pz=z[1..\ell]$ of $\tilde{z}$. Clearly, $L(\px,\py) = \min\{k, \occ_1(\pz)\}$. Hence, $\px,\py$ corresponds to a dominant pair if and only if $\occ_1(\pz) =\occ_1(z[1..\ell])= k$ and $\occ_1(z[1..\ell-1]) < k$, i.e., $\pz$ is determined by the $k$-th occurrence of ``1'' in $z$. Thus, there can be at most $\min\{\kappa,\occ_1(z)\}\le \min\{\kappa,|z|\}$ such dominant pairs. 

Consider the case that $\px = 1^\kappa 0^k$ with $k\in [\mu]$. We separately regard the following types of prefixes of $y$. 
\begin{itemize}
\item $\py = 0^\ell$ with $\ell \in [\mu+\nu]$: By greedy suffix matching, $L(\px,\py)=L(1^\kappa 0^{k},0^\ell)=\min\{k,\ell\}$, hence as above there can be at most $\mu$ dominant pairs, since there are only $\mu$ choices for $k$.  
\item $\py = 0^\mu 0^\nu \pz$: We have $L(\px,\py) = \max\{k, L(1^\kappa 0^{k},\pz)\}$. To see that this holds, note that the longest common subsequence either includes none of the ones of $1^\kappa$ of $\py$, in which case $0^{k}$ is the LCS of $\py$ and $\px$, or otherwise it matches at least one 1 in $\py$, which means that the LCS deletes all ``0''s preceding the first ``1'' in $\py$, i.e., the whole $0^{\mu+\nu}$ block in $y'$. 

If $L(\px, \py)=k$, then $\px,\py$ cannot correspond to a dominant pair since already the prefix $0^k$ of $\py$ satisfies $L(\px,0^k) = k = L(\px,\py)$. Hence $\px,\py$ can only correspond to a dominant pair if $L(\px,\py) = L(1^\kappa 0^k, \pz)$ and hence $1^\kappa 0^k, \pz$ correspond to a dominant pair of $1^\kappa 0^\mu$, $z$. This yields at most $d(1^\kappa 0^\mu, z)$ dominant pairs.
\end{itemize}

Finally, consider the case that $\px = 1^\kappa 0^\mu \pw$ with $\pw$ a prefix of $w$. There are no dominant pairs for $\py = 0^\ell$ with $\ell \in [\mu]$: Already for the prefix $1^\kappa 0^\ell$ of $\px$, we have $L(1^\kappa 0^\ell, 0^\ell) = \ell = L(\px,\py)$, hence these prefixes cannot correspond to dominant pairs. It remains to consider $\py = 0^\mu \pz$ for a prefix $\pz$ of $0^\nu z$. Again by \lemref{zeroblocklcs}, we have $L(\px,\py) = \mu + L(\pw,\pz)$ and hence such dominant pairs are in one-to-one correspondence with the dominant pairs of $w$ and $0^\nu z$. This yields at most $d(w,0^\nu z)$ further dominant pairs. 

By summing up over the three cases, we conclude that there are at most $\min\{\kappa,|z|\} + \mu + d(1^\kappa 0^\mu, z) + d(w, 0^\nu z)$ dominant pairs.
\end{proof}

We can finally pad to $\LCS(\Valpha,\{0,1\})$.

\begin{lem} \label{lem:MnSmallerdelta-step2}
Let $x,y,x',y',\tau$ be as in \lemref{MnSmallerdelta-step1}. We set $\kappa:= \lfloor M/n \rfloor$, $\tilde{\Delta} := \max\{\Delta,|y'|\}$, and $\tilde{m} := \max\{m, |y'|\}$ and define
\begin{alignat*}{4}
x'' & \; =  \quad 1^{\kappa+\tilde{\Delta}} & & \; 0^{\tilde{m}} \; x', \\
y'' & \; =  \quad 1^{\kappa} & & \; 0^{\tilde{m}} \; y'. 
\end{alignat*}
Then $x'',y''$ is an instance of $\LCS(\Valpha, \{0,1\})$. Moreover, we can compute a number $\tau'$ in time $\Oh(n)$ such that $L(x'',y'') = \tau' + L(x,y)$.
\end{lem}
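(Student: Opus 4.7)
The plan is to compute $L(x'',y'')$ explicitly and then verify each of the seven parameters of $(x'',y'')$ against $\Valpha$. First, we peel off the common $1^\kappa$ prefix via \lemref{greedy}, reducing $L(x'',y'')$ to $\kappa + L(1^{\tilde{\Delta}} 0^{\tilde{m}} x', 0^{\tilde{m}} y')$. Since $\tilde{m} \ge |y'|$ and the $1^{\tilde{\Delta}}$-prefix contains no $0$'s, \lemref{zeroblocklcs} collapses the $0^{\tilde{m}}$-block and yields $L(x'',y'') = \kappa + \tilde{m} + L(x',y') = (\kappa + \tilde{m} + \tau) + L(x,y)$. We therefore set $\tau' := \kappa + \tilde{m} + \tau$.

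The parameters $n, m, L, \delta, \Delta, M$ are verified by routine computation. Using $\tilde{\Delta} = \Theta(\Delta)$, $\tilde{m} = \Theta(m)$, and $\kappa = \Theta(M/n)$, together with the bounds from \lemref{MnSmallerdelta-step1} (in particular $|x'|, |y'| = \Oh(m)$, $\occ_1(y') = \Oh(M/n)$, and $|y'| - L(x',y') = \Theta(\delta)$), we obtain $|x''| = \Theta(n)$ (using $\alpha_\Delta > \alpha_m$), $|y''| = \Theta(m)$, $L(x'',y'') = \Theta(L)$, $\delta(x'',y'') = \Theta(\delta)$, and $\Delta(x'',y'') = \Theta(\Delta)$. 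For matching pairs, splitting by symbol gives $\occ_0(x'')\occ_0(y'') = \Theta(m^2)$ and $\occ_1(x'')\occ_1(y'') = \Theta(n)\cdot\Theta(M/n) = \Theta(M)$, and the $m^2$ term is absorbed since $M \ge L^2/|\Sigma| = \Omega(m^2)$ in the regime $\alpha_L = \alpha_m$.

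The main obstacle is showing $d(x'',y'') = \Theta(d)$. The lower bound follows immediately from \obsref{prefix}. For the upper bound, \lemref{greedy} gives $d(x'',y'') = \kappa + d(X,Y)$ with $X := 1^{\tilde{\Delta}} 0^{\tilde{m}} x'$ and $Y := 0^{\tilde{m}} y'$, and we partition dominant pairs of $(X,Y)$ by the shape of the $X$-prefix. Three of the four resulting cases are routine: prefixes $\px = 1^a$ contribute at most $\occ_1(y') = \Oh(M/n) = \Oh(d)$ pairs via \obsref{atMostOned}; prefixes $\px = 1^{\tilde{\Delta}} 0^b$ with $\py$ still inside the $0^{\tilde{m}}$-block contribute $\tilde{m} = \Oh(d)$ diagonal pairs; and prefixes $\px$ extending into $x'$ correspond bijectively (via \lemref{zeroblocklcs} applied to each prefix) to dominant pairs of $(x',y')$, contributing exactly $d(x',y') = \Theta(d)$.

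The hard part will be the remaining case: $\px = 1^{\tilde{\Delta}} 0^b$ with $\py = 0^{\tilde{m}} y'[1..j]$ ending in $0$. The key observation is that any such dominant pair must match at least one $1$ from $y'$ in the LCS, since a pure $0$-matching yields $L(\px,\py) = b$ independently of the last symbol $y'[j]$ and hence cannot be dominant. Each such dominant pair can therefore be charged injectively to a pair $(p,j)$, where $p$ is the position of the last-matched $1$ in $y'$ and $j > p$ satisfies $y'[j] = 0$, giving the bound $\sum_{p : y'[p]=1} \occ_0(y'[p{+}1..|y'|])$. Exploiting the explicit structure $y' = 0^{\delta+R}(01)^S 0^\ell y$, this sum splits as $\sum_{k=1}^S ((S-k) + \ell + \occ_0(y)) + \Oh(\occ_1(y)\cdot|y|) = \Oh(S^2 + S\ell + S|y| + \occ_1(y)|y|)$. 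The step-$1$ guarantees $|x|, |y|, \ell = \Oh(R)$, $\occ_1(y) = \Oh(S)$, $S \le \sqrt{d}$, and $SR = \Theta(d)$ then make every term $\Oh(d)$. Combined with $\kappa = \Oh(L) = \Oh(d)$, this yields $d(x'',y'') = \Oh(d)$ and completes the proof.
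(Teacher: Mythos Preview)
Your proof is correct and follows the same overall skeleton as the paper: peel off $1^\kappa$ by \lemref{greedy}, collapse the $0^{\tilde m}$-block, set $\tau' = \kappa + \tilde m + \tau$, and verify each parameter. One small slip: your appeal to \obsref{prefix} for the lower bound $d(x'',y'') = \Omega(d)$ is misplaced, since that observation concerns prefixes while $x',y'$ sit as suffixes of $x'',y''$. The lower bound is nevertheless established by the bijection you describe two sentences later (prefixes of $X,Y$ reaching into $x'$ correspond exactly to dominant pairs of $x',y'$), so the argument stands.

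The genuine difference from the paper is in the dominant-pair upper bound. The paper invokes the pre-packaged \lemref{deleteDeltaPadding}, which directly yields
\[
d(1^{\tilde\Delta}0^{\tilde m}x',\,0^{\tilde m}y') \;\le\; |y'| + \tilde m + d(1^{\tilde\Delta}0^{\tilde m},\, b0^\ell y) + d(x',y'),
\]
and then bounds the third term via \lemref{few1s} together with the step-1 guarantee $\occ_1(y')\cdot|b0^\ell y| = \Oh(d)$. You instead do the case analysis by hand and, for the hardest case ($\px = 1^{\tilde\Delta}0^b$ with $\py$ reaching into $y'$), use a bespoke injective charging to pairs $(p,j)$ and then evaluate the resulting sum explicitly from the structure $y' = 0^{\delta+R}(01)^S 0^\ell y$. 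The charging is valid once one fixes $p$ canonically (e.g.\ as the largest maximizer of $\occ_1(y'[1..p]) + \min\{b,\occ_0(y'[p{+}1..j])\}$): dominance in the $j$-direction forces $b \ge \occ_0(y'[p{+}1..j])$, after which any larger $b$ with the same $p$ would violate dominance in the $i$-direction. Your route avoids citing \lemref{deleteDeltaPadding} but is more delicate; the paper's route is shorter because the case analysis has already been packaged into that lemma and the tail is handled in one line by \lemref{few1s}.
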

\begin{proof}
Note that $\kappa = \Oh(M/n) = \Oh(m)$ by the parameter relation $M\le mn$. \lemref{MnSmallerdelta-step1}\itemref{MnSmallerdelta-step1size} yields $|x'|,|y'| = \Oh(m)$ and hence $\tilde{m} = \Theta(m)$ and $\tilde{\Delta} = \Theta(\Delta)$ (since $\alpha_m\le \alpha_\Delta = 1$). Thus, $|x''| = \kappa + \tilde{\Delta} + \tilde{m} + |x'| = \Theta(\Delta) + \Oh(m) = \Theta(n)$ since $\alpha_\Delta=1$. Furthermore, $|y''| = \kappa + \tilde{m} + |y'| = \tilde{m} + \Oh(m) = \Theta(m)$. \dopara{n,m}

Observe that $\tilde{\Delta}$ has been defined such that $|x''| \ge |y''|$. By greedy prefix matching and \lemref{deleteDeltaPadding}, we obtain
\begin{equation}\label{eq:MnSmallerdelta-step2lcs}
L(x'',y'') = \kappa + L(1^{\tilde{\Delta}} 0^{\tilde{m}} x', 0^{\tilde{m}} y') = \kappa + \tilde{m} + L(x',y').
\end{equation}
Since $L(x',y') = \tau + L(x,y)$, we satisfy the last claim by setting $\tau' := \kappa + \tilde{m} + \tau$.
Moreover, $L(x'',y'') = \tilde{m} + \Oh(m) = \Theta(m) = \Theta(L)$ since $|x'|,|y'| \le \Oh(m)$ and $\alpha_L = \alpha_m$. Furthermore, \eqref{eq:MnSmallerdelta-step2lcs} yields $\Delta(x'',y'') = \tilde{\Delta} + (|x'| - L(x',y')) = \Theta(\Delta) + \Oh(m) = \Theta(\Delta)$ and $\delta(x'',y'') = |y'|-L(x',y') = \Theta(\delta)$ by \lemref{MnSmallerdelta-step1}\itemref{MnSmallerdelta-step1delta}.\dopara{L,\delta,\Delta}

For the dominant pairs\dopara{d}, we apply \lemref{greedy} to bound $d(x'',y'') = \kappa + d(1^{\tilde{\Delta}} 0^{\tilde{m}} x', 0^{\tilde{m}} y')$. To bound the latter term, note that \lemref{deleteDeltaPadding} yields $d(1^{\tilde{\Delta}} 0^{\tilde{m}} x', 0^{\tilde{m}} y') \ge d(x',y') = \Omega(d)$ by \lemref{MnSmallerdelta-step1}\itemref{MnSmallerdelta-step1d}. For the upper bound, we first recall that $y' = 0^{\tilde{\delta}} b0^\ell y$ and that $\occ_1(y')\cdot |b0^\ell y| = \Oh(d)$ by \lemref{MnSmallerdelta-step1}\itemref{MnSmallerdelta-step1occ1}. Hence we have $d(1^{\tilde{\Delta}} 0^{\tilde{m}}, b 0^{\ell} y) \le 5 \cdot L(1^{\tilde{\Delta}} 0^{\tilde{m}}, b 0^{\ell} y) \cdot \occ_1(b0^\ell y) = \Oh(|b0^\ell y| \cdot \occ_1(y')) = \Oh(d)$. We can finally compute, using \lemref{deleteDeltaPadding},
\begin{align*}
 d(1^{\tilde{\Delta}} 0^{\tilde{m}} x', 0^{\tilde{m}} y') & \le \min\{\tilde{\Delta},|y'|\} + \tilde{m} + d(1^{\tilde{\Delta}} 0^{\tilde{m}}, b0^\ell y) + d(x',y') \\
& \le |y'| + \tilde{m} + \Oh(d) + d(x',y') = \Oh(d),
\end{align*}
where the last bound follows from $|y'|,\tilde{m} = \Oh(m) = \Oh(d)$ by the relation $d \ge L = \Omega(m)$ (since $\alpha_L=\alpha_m$) and $d(x',y') = \Oh(d)$.

It remains to count the number of matching pairs. We have $\occ_0(x''),\occ_0(y'') \le |x'|+|y'| + \tilde{m} =\Oh(m)$, as well as $\occ_1(x'') = \kappa + \tilde{\Delta} + \occ_1(x') = \Theta(\Delta) + \Oh(m) = \Theta(n)$ (since $\alpha_\Delta = 1$) and $\occ_1(y'') = \kappa + \occ_1(y') = \kappa + \Oh(M/n) = \Theta(M/n)$ by \lemref{MnSmallerdelta-step1}\itemref{MnSmallerdelta-step1occ1}. Thus $M(x'',y'') = \occ_1(x'')\occ_1(y'') + \occ_0(x'')\occ_0(y'') = \Theta(M) + \Oh(m^2) = \Theta(M)$, where the last bound follows from $M\ge L^2/|\Sigma| = \Omega(m^2)$ since $\alpha_L=\alpha_m$.
\end{proof}

Combining \lemrefs{MnSmallerdelta-step1}{MnSmallerdelta-step2} with \obsref{MnConstructionalphaprime} finally proves \lemref{MnSmallerdelta}.

\subsubsection{Case $\alpha_\Delta > \alpha_m = \alpha_L$ and $\alpha_\delta \le \alpha_M - 1$}

In this section, we consider the case where $\alpha_L=\alpha_m < \alpha_\Delta$ and $\alpha_\delta \le \alpha_M - 1$. In this case, we have $\Delta = \Theta(n) \gg m$ and $\delta \le  \Oh(M/n)$. Since $M/n \le \Oh(m) \le \Oh(\Delta)$, the fastest known algorithm runs in time $\tOh(\min\{d, \delta M/n\})$. Consequently, in this regime \lemref{largelcsbinaryalphabets} simplifies to the following statement.

\begin{lem}\label{lem:deltaSmallerMn}
Let $(\Valpha,\{0,1\})$ be a parameter setting satisfying \tabref{paramChoiceRestr} with $\alpha_L=\alpha_m < \alpha_\Delta$ and $\alpha_\delta \le \alpha_M -1$. There is a constant $\gamma \ge 1$ such that any algorithm for $\LCS^\gamma(\Valpha,\Sigma)$ takes time $\min\{d, \delta M/n\}^{1-o(1)}$ unless \OVH\ fails.
\end{lem}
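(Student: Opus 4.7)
The approach parallels Lemma \ref{lem:MnSmallerdelta}: first isolate a hardness core in an auxiliary parameter setting $\Valpha''$, then embed it via the basic building block of Lemma \ref{lem:bbbII} and outer-pad. The essential shift from the previous case is that, when $\alpha_\delta\le\alpha_M-1$, the binding restriction is on $\delta$ rather than on $\occ_1(y)$; consequently, I source hard cores from the large-LCS reduction of Lemma \ref{lem:largeLCShardnessConstantAlphabet}, which produces instances with $\delta(x,y)$ already small. Concretely, define $\Valpha''$ by $\alpha''_\Sigma:=0$, $\alpha''_\delta:=\alpha_\delta$, $\alpha''_L=\alpha''_m=\alpha''_\Delta:=\alpha_M-1$, $\alpha''_d:=\min\{\alpha_d,\alpha_\delta+\alpha_M-1\}$ and $\alpha''_M:=2\alpha''_L$. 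As in Observation \ref{obs:MnConstructionalphaprime}, one verifies that $(\Valpha'',\{0,1\})$ satisfies Table \ref{tab:paramChoiceRestr}, the key ingredient being the binary-alphabet relation $\alpha_d\le\alpha_M+\alpha_m-1$. Lemma \ref{lem:largeLCShardnessConstantAlphabet} then supplies, under OVH, hard instances $(n,x,y)$ with $|y|\le|x|=O(M/n)$ and $\delta(x,y)=O(\delta)$ for which deciding $L(x,y)\ge\rho$ takes time $\min\{d'',\delta'' m'',\delta''\Delta''\}^{1-o(1)}=\min\{d,\delta M/n\}^{1-o(1)}$.

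Given such an instance, I apply Lemma \ref{lem:bbbII} with $\beta:=0$ to obtain $x'=a\,0^\ell x$ and $y'=b\,0^\ell y$, choosing $S$ and $R$ so that $R\cdot S=\Theta(d)$ and $\occ_1(y')=S+\occ_1(y)=O(M/n)$. The binary relation $\alpha_d\le\alpha_M+\alpha_m-1$ is what guarantees that such a choice exists with $R=O(m)$; a natural calibration is $S:=\max\{|x|,\min\{\lfloor M/n\rfloor,\lfloor\sqrt d\rfloor\}\}$ and $R:=\lceil d/S\rceil$, which also ensures $S\ge|x|$. Since $\beta=0$, Lemma \ref{lem:bbbII} yields $L(x',y')=R+2S+\ell+L(x,y)$ and $\delta(x',y')=\delta(x,y)=O(\delta)$, while Lemma \ref{lem:genDomPairs} and Observation \ref{obs:prefix} give $d(x',y')\ge R\cdot S=\Omega(d)$. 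The crucial auxiliary estimate, to be used during the outer padding, is $|b\,0^\ell y|\cdot\occ_1(y')=O(d)$---the direct analogue of Lemma \ref{lem:MnSmallerdelta-step1}\itemref{MnSmallerdelta-step1occ1}---which is verified via $|b\,0^\ell y|=O(R+S+M/n)$ and $\occ_1(y')=O(S+M/n)$ combined with $R\cdot S=\Theta(d)$.

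In the second step, I outer-pad exactly as in Lemma \ref{lem:MnSmallerdelta-step2}: with $\kappa:=\lfloor M/n\rfloor$, $\tilde m:=\max\{m,|y'|+1\}$ and $\tilde\Delta:=\max\{\Delta,|x'|\}$, define
\begin{alignat*}{2}
x''' & := 1^{\kappa+\tilde\Delta}\,0^{\tilde m}\,x', &
y''' & := 1^{\kappa}\,0^{\tilde m+\delta}\,y'.
\end{alignat*}
Greedy prefix matching peels the common $1^\kappa$; Lemma \ref{lem:deleteDeltaPadding} (with $\nu:=\delta$) then strips the remaining $1^{\tilde\Delta}$- and $0^{\tilde m}$-blocks, yielding $L(x''',y''')=\kappa+\tilde m+L(x',0^\delta y')$, where a second invocation of Lemma \ref{lem:bbbII} (now with $\beta=\delta$, whose hypothesis $S\ge|x|$ is granted by construction) gives $L(x',0^\delta y')=R+2S+\ell+L(x,y)$. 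Hence $\delta(x''',y''')=\delta+\delta(x,y)=\Theta(\delta)$ and $\Delta(x''',y''')=\tilde\Delta+(|x|-L(x,y))=\Theta(n)$. The matching pairs satisfy $\occ_1(x''')\cdot\occ_1(y''')=\Theta(n)\cdot\Theta(M/n)=\Theta(M)$, with $\occ_0$-contributions of $O(m^2)=O(M)$ absorbed via $M\ge L^2/|\Sigma|$. Finally, Lemma \ref{lem:deleteDeltaPadding} decomposes $d(x''',y''')$ as $\kappa+|y'|+\tilde m+d(1^{\tilde\Delta}0^{\tilde m},0^\delta y')+d(x',0^\delta y')$; applying Lemma \ref{lem:few1s} to bound $d(1^{\tilde\Delta}0^{\tilde m},0^\delta y')\le O(L\cdot\occ_1(y'))$ together with the key identity $|0^\delta y'|\cdot\occ_1(y')=O(d)$ bounds every summand by $O(d)$.

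The main technical obstacle is the simultaneous calibration of $S$ and $R$ in the building block: maintaining $\occ_1(y')=O(M/n)$ forces $S$ small, maintaining the hypothesis $S\ge|x|$ for the second Lemma \ref{lem:bbbII} invocation forces $S$ large, and achieving a tight bound on the cross-term dominant pairs in $(x',y')$ requires further constraints depending on whether $\sqrt d$ exceeds $M/n$. The binary-alphabet relation $\alpha_d\le\alpha_M+\alpha_m-1$ is precisely the condition rendering all three requirements jointly satisfiable; verifying it suffices, via a short sub-case analysis, will be the main work.
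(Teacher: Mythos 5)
Your high-level strategy (source a hard core from the large-LCS reduction of \lemref{largeLCShardnessConstantAlphabet} via an auxiliary parameter setting, then embed with \lemref{bbbII} and outer-pad as in \lemref{MnSmallerdelta-step2}) matches the paper's \lemref{deltaSmallerMn-RgreaterS}, but this route only covers one of the two cases the paper must distinguish, and the second case breaks your construction irreparably.

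The precise gap is in your claim that $|b\,0^\ell y|\cdot\occ_1(y')=\Oh(d)$. The constraints you must satisfy simultaneously are (a) $R\cdot S=\Theta(d)$, (b) $S\ge |x|$ (hypothesis of \lemref{bbbII}), and (c) $S=\Oh(R)$, the latter being exactly what is needed for the cross-term $d(1^{\tilde\Delta}0^{\tilde m},b\,0^\ell y)\le 5\,L\cdot\occ_1 = \Oh(|b\,0^\ell y|\cdot S)=\Oh((R+S)\cdot S)$ to be $\Oh(d)$. Constraints (a) and (c) together force $S=\Oh(\sqrt d)$. But the hard core from \lemref{largeLCShardnessConstantAlphabet} has $|x|=\Theta(n^{\alpha''_m})$, and the lower-bound calibration forces $\alpha''_m\ge \alpha_d/2$ (otherwise $\delta''m''<d''$ and the core's lower bound falls short of $\min\{d,\delta M/n\}$). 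Whenever $\alpha''_m>\alpha_d/2$ — which is exactly the paper's regime $\alpha_S>\alpha_R$, reached when $\alpha_\delta<\alpha_d/2$ and $\alpha_M-1>\alpha_d/2$ — you get $|x|=\omega(\sqrt d)$, so (b) forces $S=\omega(\sqrt d)$, contradicting (c). The product $|b\,0^\ell y|\cdot\occ_1(y')$ is then $\Theta(S^2)=\omega(RS)=\omega(d)$, and $d(x''',y''')$ exceeds the target by a polynomial factor. You note a ``sub-case analysis depending on whether $\sqrt d$ exceeds $M/n$'', but the binary-alphabet relation $\alpha_d\le\alpha_M+\alpha_m-1$ does not resolve this; there is simply no feasible $(R,S)$. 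The paper escapes via a genuinely different gadget: \lemref{advancedDeltapadding} prepends $1^{\ell_3}0^{\ell_3}y^{(2)}$ to $x$, giving $y$ an ``escape route'' that short-circuits the bad $1^\Delta$-vs-$(01)^S$ interaction and caps the dominant-pair count at $\Oh((2t+|y^{(2)}|)(R+1)+R^2)=\Oh(RS)$ when $R\le S$. Your outline has no analogue of this mechanism, and without it the case $\alpha_S>\alpha_R$ is unreachable.

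A secondary error: your auxiliary setting declares $\alpha''_\Delta:=\alpha_M-1$. Since the lemma's hypothesis $\alpha_m<\alpha_\Delta$ forces $\alpha_m<1$, hence $\alpha_M-1\le\alpha_m<1$, this violates the restriction of \tabref{paramChoiceRestr} that $\alpha_\Delta=1$ whenever $\alpha_L=\alpha_m<1$; you need $\alpha''_\Delta:=1$ as in \obsref{deltaSmallerMn-alphaprime}. Also, the paper's choice $\alpha'_m=\min\{\alpha_M-1,\alpha_d-\alpha'_\delta\}$ is deliberately smaller than your $\alpha''_m=\alpha_M-1$ to keep $|x|$ as small as possible given the required lower bound; inflating $\alpha''_m$ only worsens the $S\ge|x|$ tension described above.
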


As in the previous section, to prove this result we cannot simply pad instances of $\LCS_\le(\Valpha, \{0,1\})$ to $\LCS(\Valpha, \{0,1\})$, since the desired running time bound $\min\{d, \delta M/n\}$ is not monotone. Instead, we start from instances of a suitably chosen different parameter setting $\LCS_\le(\Valpha',\{0,1\})$ with $\alpha'_L = \alpha'_m$, i.e., we invoke \lemref{largeLCShardnessConstantAlphabet}.

\begin{obs}\label{obs:deltaSmallerMn-alphaprime}
Let $(\Valpha,\{0,1\})$ be a non-trivial parameter setting satisfying \tabref{paramChoiceRestr} with $\alpha_L = \alpha_m < \alpha_\Delta$ and $\alpha_\delta\le \alpha_M-1$. Define $\Valpha':=\Valpha'(\Valpha)$ by
\begin{align*}
\alpha'_\delta & = \min\{\alpha_\delta, \alpha_d /2 \}, & \alpha'_M & = 1 + \alpha'_m, \\
\alpha'_L=\alpha'_m & = \min\{\alpha_M - 1, \alpha_d - \alpha'_\delta\}, & \alpha'_\Delta & = 1, \\
\alpha'_d &= \min\{\alpha_\delta +\alpha_M - 1, \alpha_d\}, & \alpha'_\Sigma & = 0,
\end{align*}
Then the parameter setting $(\Valpha',\{0,1\})$ satisfies \tabref{paramChoiceRestr}. Furthermore, there is some constant $\gamma \ge 1$ such that no algorithm solves $\LCS^\gamma_\le(\Valpha', \{0,1\})$ in time $n^{\alpha'_d(1-o(1))} = \min\{d, \delta M/n\}^{1-o(1)}$ unless \OVH\ fails. This holds even restricted to instances $(n,x,y)$ with $|x|,|y| \le \gamma \cdot n^{\alpha'_m} = \Oh(\min\{M/n, \max\{d/\delta,\sqrt{d}\}\})$.
\end{obs}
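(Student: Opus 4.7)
My plan is to prove the observation in three stages, mirroring the structure of \obsref{MnConstructionalphaprime}. First, I would establish the target value identities $n^{\alpha'_d} = \min\{d, \delta M/n\}$ and $n^{\alpha'_m} = \min\{M/n, \max\{d/\delta, \sqrt{d}\}\}$. The first identity follows immediately from $d = n^{\alpha_d}$ and $\delta M/n = n^{\alpha_\delta + \alpha_M - 1}$. For the second, I would use the algebraic fact that $\alpha_d - \alpha'_\delta = \alpha_d - \min\{\alpha_\delta, \alpha_d/2\} = \max\{\alpha_d - \alpha_\delta, \alpha_d/2\}$, so that $\alpha'_m = \min\{\alpha_M - 1, \max\{\alpha_d - \alpha_\delta, \alpha_d/2\}\}$ matches the exponent of $\min\{M/n, \max\{d/\delta, \sqrt{d}\}\}$ as desired.

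The second stage is to verify that $(\Valpha',\{0,1\})$ satisfies \tabref{paramChoiceRestr}. Most constraints are immediate from the definitions (e.g., $\alpha'_L = \alpha'_m$, $\alpha'_\Delta = 1$, $\alpha'_\Sigma = 0 \le \alpha'_m$) or from simple algebra ($\alpha'_M = 1 + \alpha'_m$ dominates $\max\{1, 2\alpha'_m\}$ since $\alpha'_m \le \alpha_M - 1 \le \alpha_L \le 1$). The two non-trivial inequalities are $\alpha'_\delta \le \alpha'_m$ and $\alpha'_d \le 2\alpha'_m$, both of which I would verify by a case analysis based on which term achieves the minimum in $\alpha'_\delta$. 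In \textbf{Case I} ($\alpha_\delta \le \alpha_d/2$, so $\alpha'_\delta = \alpha_\delta$) we compute $\alpha'_\delta + \alpha'_m = \alpha_\delta + \min\{\alpha_M - 1, \alpha_d - \alpha_\delta\} = \min\{\alpha_\delta + \alpha_M - 1, \alpha_d\} = \alpha'_d$, which together with $\alpha'_\delta \le \alpha_d/2 \le \alpha'_d - \alpha'_\delta$ (i.e., $\alpha'_\delta \le \alpha'_m$) gives $\alpha'_d \le 2\alpha'_m$. In \textbf{Case II} ($\alpha_\delta > \alpha_d/2$) the binary-alphabet assumption $\alpha_\delta \le \alpha_M - 1$ forces $\alpha_M - 1 > \alpha_d/2$, so $\alpha'_m = \alpha_d/2 = \alpha'_\delta$, and hence $\alpha'_\delta + \alpha'_m = \alpha_d \ge \alpha'_d$, yielding both desired inequalities. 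The remaining binary-specific constraints $\alpha'_M \ge \max\{\alpha'_L + \alpha'_m, 1 + \alpha'_d - \alpha'_L\}$ reduce exactly to $\alpha'_d \le 2\alpha'_m$ and thus follow.

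In the third stage I would invoke \lemref{largeLCShardnessConstantAlphabet} on $(\Valpha',\{0,1\})$, which applies since $\alpha'_L = \alpha'_m$ and $\alpha'_\Sigma = 0$ and yields a lower bound of $\min\{d', \delta' m', \delta' \Delta'\}^{1-o(1)} = n^{\min\{\alpha'_d,\, \alpha'_\delta + \alpha'_m,\, \alpha'_\delta + 1\}(1-o(1))}$. It remains to argue that this minimum equals $\alpha'_d$. The bound $\alpha'_\delta + \alpha'_m \ge \alpha'_d$ is precisely the computation from stage two (Case I gives equality, Case II gives $\alpha_d \ge \alpha'_d$). For $\alpha'_\delta + 1 \ge \alpha'_d$, I would observe that $\alpha'_d - \alpha'_\delta \le \alpha'_m \le 1$ in Case I, and $\alpha'_d - \alpha'_\delta \le \alpha_d - \alpha_d/2 = \alpha_d/2 \le 1$ in Case II. The additional restriction $|x|, |y| \le \gamma \cdot n^{\alpha'_m}$ on hard instances follows directly from \lemref{largeLCShardnessConstantAlphabet}'s guarantee $|x|, |y| \le \gamma \cdot \min\{n^{\alpha'_m}, n^{\alpha'_\Delta}\}$, using $\alpha'_m \le 1 = \alpha'_\Delta$ and the target value identity established in stage one.

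The main obstacle is the case analysis in stage two, where the subtle interplay between the definition of $\alpha'_\delta$ (which caps $\alpha_\delta$ at $\alpha_d/2$), the binary-alphabet constraint $\alpha_M \ge 1 + \alpha_d - \alpha_L$ from \tabref{paramRelations}, and the hypothesis $\alpha_\delta \le \alpha_M - 1$ together rule out the \emph{a priori} problematic sub-case $\alpha_M - 1 < \alpha_d/2$ from occurring simultaneously with Case II. Without this interaction, $\Valpha'$ would fail the constraint $\alpha'_d \le 2\alpha'_L$ and one would obtain only the weaker lower bound $n^{\alpha'_\delta + \alpha'_m}$ from \lemref{largeLCShardnessConstantAlphabet} rather than the desired $n^{\alpha'_d}$.
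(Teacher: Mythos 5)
Your plan mirrors the paper's structure — define $\Valpha'$, verify \tabref{paramChoiceRestr}, invoke \lemref{largeLCShardnessConstantAlphabet}, and simplify the resulting minimum to $\alpha'_d$ — so the overall approach is sound. However, there is a genuine error in your Case~I argument. You claim the chain $\alpha'_\delta \le \alpha_d/2 \le \alpha'_d - \alpha'_\delta$, but the middle inequality fails in general: since $\alpha'_d - \alpha'_\delta = \alpha'_m = \min\{\alpha_M-1, \alpha_d - \alpha_\delta\}$, it would require $\alpha_M - 1 \ge \alpha_d/2$, which is \emph{not} guaranteed in Case~I. Concretely, take $\alpha_L = \alpha_m = 0.7$, $\alpha_\Delta = 1$, $\alpha_\Sigma = 0$, $\alpha_d = 1.2$, $\alpha_M = 1.5$, $\alpha_\delta = 0.1$. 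This satisfies every row of \tabref{paramChoiceRestr} (the binary constraint $\alpha_M \ge 1 + \alpha_d - \alpha_L$ holds with equality) and the hypotheses of the observation, falls into your Case~I, yet gives $\alpha_d/2 = 0.6 > 0.5 = \alpha'_d - \alpha'_\delta$. Your own discussion of the ``problematic sub-case $\alpha_M - 1 < \alpha_d/2$'' correctly rules it out in Case~II but overlooks that it can occur in Case~I, which is precisely where your chain needs it excluded.

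The desired conclusion $\alpha'_\delta \le \alpha'_m$ is nevertheless true (in the example $0.1 \le 0.5$), and the fix is short: in Case~I we have $\alpha'_\delta = \alpha_\delta$, and one bounds $\alpha_\delta \le \alpha_M - 1$ by hypothesis and $\alpha_\delta \le \alpha_d - \alpha_\delta$ from the Case~I assumption $\alpha_\delta \le \alpha_d/2$, so $\alpha'_\delta \le \min\{\alpha_M-1, \alpha_d-\alpha_\delta\} = \alpha'_m$ directly, without routing through $\alpha_d/2$. The paper sidesteps this pitfall entirely by bounding each argument of the $\min$ in $\alpha'_\delta$ and $\alpha'_d$ separately without any case split during the constraint-verification step, reserving the case analysis solely for the simplification $\min\{\alpha'_d, \alpha'_\delta + \alpha'_m, \alpha'_\delta + 1\} = \alpha'_d$; you may wish to adopt that structure, as it keeps the two uses of case analysis from contaminating one another.
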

\begin{proof}
We only discuss the inequalities from \tabref{paramChoiceRestr} that are not straight-forward to verify. To see $\alpha'_\delta \le \alpha'_m$, note that $\alpha_\delta \le \alpha_M - 1$ by assumption and $\alpha_d/2 = \alpha_d - \alpha_d/2 \le \alpha_d - \alpha'_\delta$. The inequality $\alpha'_L \le \alpha'_d$ follows from $\alpha_M - 1 \le \alpha_M -1 + \alpha_\delta$ and $\alpha_d - \alpha'_\delta \le \alpha_d$. Furthermore, $\alpha'_d \le 2\alpha'_L$ follows from $\alpha_\delta + \alpha_M - 1 \le 2(\alpha_M-1)$ (by assumption) and $\alpha_d = 2(\alpha_d - \alpha_d/2) \le 2(\alpha_d - \alpha'_\delta)$. From $\alpha'_d \le 2\alpha'_L$ and $\alpha'_L = \alpha'_m$ we also obtain $\alpha'_M = 1 + \alpha'_m = 1 + \alpha'_L \ge 1 +\alpha'_d - \alpha'_L$, which corresponds to the parameter relation $M \ge nd / (5L)$ that only holds for $\Sigma = \{0,1\}$. Finally, $\alpha'_M \ge \alpha'_d$ follows from $\alpha'_M = 1 + \alpha'_m = \min\{\alpha_M, 1 + \alpha_d - \alpha'_\delta\} \ge \min\{\alpha_M, \alpha_d\}$ by $\alpha'_\delta \le \alpha_\delta \le 1$ and similarly $\alpha'_d = \min\{\alpha_\delta + \alpha_M - 1, \alpha_d\} \le \min\{\alpha_M, \alpha_d\}$.

\lemref{largeLCShardnessConstantAlphabet} shows that some $\gamma \ge 1$ exists such that $\LCS^\gamma_\le(\Valpha', \{0,1\})$ cannot be solved in time $\min\{n^{\alpha'_d}, n^{\alpha'_\delta + \alpha'_m}, n^{\alpha'_\delta + \alpha'_\Delta}\}^{(1-o(1))}$, even restricted to instances $(n,x,y)$ with $|x|,|y| \le \gamma \cdot n^{\alpha'_m} = \Oh(\min\{M/n, \max\{d/\delta,\sqrt{d}\}\})$.
We simplify the running time bound by noting that $\alpha'_\Delta = 1 \ge \alpha'_m$, so that $n^{\alpha'_\delta + \alpha'_m} \le n^{\alpha'_\delta + \alpha'_\Delta}$. Moreover, we have $\alpha'_\delta + \alpha'_m = \alpha'_d =  \min\{\alpha_\delta + \alpha_M - 1, \alpha_d\}$. Indeed, if $\alpha_\delta \le \alpha_d/2$, we have $\alpha'_\delta = \alpha_\delta$ and hence $\alpha'_\delta + \alpha'_m = \min\{\alpha_\delta + \alpha_M - 1, \alpha_d\}=\alpha'_d$. Otherwise, $\alpha_d/2 < \alpha_\delta \le \alpha_M - 1$, forcing $\alpha'_\delta = \alpha_d/2$ and $\alpha'_m = \alpha_d/2$, which yields $\alpha'_\delta + \alpha'_m = \alpha_d = \min\{\alpha_\delta + \alpha_M - 1,\alpha_d\} = \alpha'_d$. Thus, $\min\{\alpha'_d, \alpha'_\delta + \alpha'_m, \alpha'_\delta + \alpha'_\Delta\} = \alpha'_d$ and the lower bound simplifies to $n^{\alpha'_d(1-o(1))} = \min\{ \delta M / n, d \}^{1-o(1)}$.
\end{proof}

In this section, to pad the number of dominant pairs, we will construct instances $x' = a 0^\ell x = (01)^{R+S} 0^\ell x$, $y'= b 0^\ell y = 0^R (01)^S 0^\ell y$, where we choose $R,S$ proportional to $n^{\alpha_R}, n^{\alpha_S}$ with 
\begin{align*}
\alpha_S &:= \min\{\alpha_M -1 , \max\{\alpha_d -\alpha_\delta, \alpha_d/2\}\}, & \alpha_R & := \alpha_d- \alpha_S.
\end{align*}
Note that the use of $\alpha_S, \alpha_R$ is a slight abuse of notation, since $R,S$ are not actual input parameters, but $\alpha_S, \alpha_R$ depend only on $\Valpha$. We will later set $R=c\cdot n^{\alpha_R}$, $S=c'\cdot n^{\alpha_S}$ with suitably large constants $c,c'$. However, depending on whether $\alpha_S \le \alpha_R$ or $\alpha_S > \alpha_R$, we will extend and analyze the basic construction differently. We start with the simpler case of $\alpha_S \le \alpha_R$.

\begin{lem}[Construction for $\alpha_S\le \alpha_R$] \label{lem:deltaSmallerMn-RgreaterS}
Let $(\Valpha,\{0,1\})$ be a parameter setting satisfying \tabref{paramChoiceRestr} with $\alpha_L = \alpha_m < \alpha_\Delta$, $\alpha_\delta \le \alpha_M -1$, and $\alpha_S \le \alpha_R$. Given an instance $(n,x,y)$ of $\LCS^\gamma_\le(\Valpha', \{0,1\})$ with $|y|\le|x| = \Oh(\min\{M/n, \max\{d/\delta, \sqrt{d}\}\})$, we use the parameters
\begin{align*}
S & = \max\{|x|, \lfloor n^{\alpha_S} \rfloor \}, & R& = \lfloor d/S \rfloor, \\
\ell & = R+S+|x|+|y|, & \beta & = \delta,
\end{align*}
to instantiate $x' = a 0^\ell x = (01)^{R+S} 0^\ell x$ and $y'= 0^\beta b 0^\ell y = 0^\beta 0^R (01)^S 0^\ell y$ as in \lemref{bbbII}. We furthermore set $\tilde{m} := \max\{m, \delta + R+2S + \ell + |y|\}$ and  $\kappa := \lfloor M/n \rfloor$ and define 
\begin{alignat*}{10}
x'' & \; := \quad 1^{\Delta + \kappa} & & \; 0^{\tilde{m}} \; x' \quad &=& \quad  1^{\Delta + \kappa} & & \; 0^{\tilde{m}} \; & a \; 0^\ell \; x \quad &=\quad  1^{\Delta + \kappa} & & \; 0^{\tilde{m}} \; & (01)^{R+S} \; 0^\ell \; x, \\
y'' & \; := \quad 1^{\kappa} & & \; 0^{\tilde{m}} \; y' \quad &=& \quad  1^{\kappa} & & \; 0^{\tilde{m}} \; 0^\delta \; &  b \; 0^\ell \; y \quad &=\quad  1^{\kappa} & & \; 0^{\tilde{m}} \; 0^\delta \; &  0^R (01)^S \; 0^\ell \; y.
\end{alignat*}
Then $x'',y''$ is an instance of $\LCS^{\gamma'}(\Valpha, \{0,1\})$ for some constant $\gamma'\ge1$ and can be constructed in time $\Oh(n)$ together with some integer $\tau$ such that $L(x'',y'') = \tau+ L(x,y)$.
\end{lem}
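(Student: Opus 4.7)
My plan is to follow the general template used for the analogous lemmas in this section (e.g.\ \lemref{DeltaSmallerm-step2}, \lemref{MnSmallerdelta-step2}), with the key new ingredient being that $S\le R$ in the present case, which allows Lemma~\ref{lem:few1s} to give tight bounds on $d$.

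First I would establish the basic size bounds. Using the parameter relations $\alpha_M -1 \le \alpha_L = \alpha_m$ and the binary-specific relation $\alpha_M + \alpha_L \ge 1 + \alpha_d$, together with $\alpha_d \le 2\alpha_L$, I would argue that $\alpha_S \le \alpha_m$ and $\alpha_R = \alpha_d - \alpha_S \le \alpha_m$, so that $R, S, \ell, |x|, |y|, \delta, \kappa = \Oh(m)$. The choice of $\tilde m$ then yields $\tilde m = \Theta(m)$, so $|y''| = \Theta(m)$ and $|x''| = \Theta(\Delta + m) = \Theta(n)$ using $\alpha_\Delta = 1$; in particular $|x''|\ge |y''|$. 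To compute $L(x'',y'')$, apply greedy prefix matching (\lemref{greedy}) to peel off the common $1^\kappa$, giving $L(x'',y'') = \kappa + L(1^\Delta 0^{\tilde m} x', 0^{\tilde m} y')$. Since $\tilde m > |b 0^\ell y|$ by choice, \lemref{deleteDeltaPadding} applies with $\kappa\mapsto\Delta,\,\mu\mapsto\tilde m,\,\nu\mapsto\delta,\,w\mapsto x',\,z\mapsto b 0^\ell y$, yielding $L(x'',y'') = \kappa + \tilde m + L(x',y')$. Since $S\ge |x|$, \lemref{bbbII} gives $L(x',y') = R + 2S + \ell + L(x,y)$, so I set $\tau := \kappa + \tilde m + R + 2S + \ell$.

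Parameters $n,m,L,\delta,\Delta$ follow routinely: $L(x'',y'') = \tilde m + \Oh(m) = \Theta(L)$ (using $\alpha_L=\alpha_m$), $\Delta(x'',y'') = |x''|-L(x'',y'') = \Theta(\Delta)$ (using $\alpha_\Delta > \alpha_m$), and $\delta(x'',y'') = \delta + (|y|-L(x,y)) = \delta + \Oh(n^{\alpha'_\delta}) = \Theta(\delta)$ since $\alpha'_\delta \le \alpha_\delta$ by the definition of $\Valpha'$ in \obsref{deltaSmallerMn-alphaprime}. For $M$, I would count $\occ_1(y'') = \kappa + S + \occ_1(y) = \Oh(M/n)$ (using $\alpha_S \le \alpha_M-1$ and $\occ_1(y)\le |y| = \Oh(M/n)$), $\occ_1(x'') = \Delta + \kappa + \occ_1(x') = \Theta(n)$, and $\occ_0(x''), \occ_0(y'') = \Oh(m)$, giving $M(x'',y'') = \Theta(n)\cdot \Oh(M/n) + \Oh(m^2) = \Theta(M)$ by the parameter relation $M \ge L^2/|\Sigma| = \Omega(m^2)$.

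The hard part will be bounding $d(x'',y'')$. By \lemref{greedy}, $d(x'',y'') = \kappa + d(1^\Delta 0^{\tilde m} x', 0^{\tilde m} y')$. For the lower bound, the lower bound in \lemref{deleteDeltaPadding} yields $d(1^\Delta 0^{\tilde m} x', 0^{\tilde m} y') \ge d(x',y')$, and \obsref{prefix} combined with \lemref{genDomPairs}\itemref{genDomPairsdmore} gives $d(x',y') \ge d(a, 0^\delta b) \ge RS = \Omega(d)$. For the matching upper bound, the upper bound in \lemref{deleteDeltaPadding} yields
\[
  d(1^\Delta 0^{\tilde m} x', 0^{\tilde m} y') \le \min\{\Delta, |b 0^\ell y|\} + \tilde m + d(1^\Delta 0^{\tilde m}, b 0^\ell y) + d(x',y').
\]
The first two summands are $\Oh(m) = \Oh(d)$ since $d\ge L = \Omega(m)$. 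The crucial point, where the case assumption $\alpha_S\le \alpha_R$ enters, is that then $S = \Oh(R)$ and consequently $\ell, |x|, |y| = \Oh(R)$, so $L(x',y') = \Oh(R)$ and $L(1^\Delta 0^{\tilde m}, b 0^\ell y) \le |b 0^\ell y| = \Oh(R)$. Combined with $\occ_1(y') = S + \occ_1(y) = \Oh(S)$ and $\occ_1(b 0^\ell y) = \Oh(S)$, \lemref{few1s} gives $d(x',y'), d(1^\Delta 0^{\tilde m}, b 0^\ell y) = \Oh(RS) = \Oh(d)$, and finally $\kappa = \Oh(M/n) = \Oh(d)$ using $\alpha_M - 1 \le \alpha_L \le \alpha_d$. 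Putting the pieces together gives $d(x'',y'') = \Theta(d)$, completing the verification that $(x'',y'')$ is an instance of $\LCS^{\gamma'}(\Valpha,\{0,1\})$.
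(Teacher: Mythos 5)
Your proposal is correct and follows essentially the same argument as the paper's proof: greedy prefix matching to peel off $1^\kappa$, then \lemref{zeroblocklcs} (you invoke it via its packaged form \lemref{deleteDeltaPadding}, which is equivalent here) and \lemref{bbbII} for the LCS, then \lemref{deleteDeltaPadding} together with \lemref{few1s} for the two-sided bound on $d$, and routine counting for $n,m,L,\delta,\Delta,M$. The only tiny imprecision is that you write $\occ_1(y'') = \kappa + S + \occ_1(y) = \Oh(M/n)$ and then conclude $M(x'',y'') = \Theta(M)$; for the lower bound you should also observe $\occ_1(y'') \ge \kappa = \Theta(M/n)$, so $\occ_1(y'') = \Theta(M/n)$, exactly as the paper does.
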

\begin{proof}
By greedy prefix matching and \lemref{zeroblocklcs}, we obtain
\begin{equation}\label{eq:SsmallerRlcs}
L(x'',y'') = \kappa + L(1^{\Delta} 0^{\tilde{m}} x', 0^{\tilde{m}} y') = \kappa + \tilde{m} + L(x',y') = \kappa + \tilde{m} + R+2S + \ell + L(x,y),
\end{equation}
where the last equality follows from \lemref{bbbII}, which applies since $S \ge |x|$. Clearly, $x'',y''$ and $\tau = \kappa + \tilde{m} + R+2S + \ell$ can be computed in time $\Oh(n)$.

It remains to verify that $x,y$ is an instance of $\LCS^{\gamma'}(\Valpha, \{0,1\})$ for some $\gamma'\ge 1$. We observe that $|x|,|y| = \Oh(n^{\alpha_S})$ by assumption, and hence $S = \Theta(n^{\alpha_S})=\Theta(\min\{M/n, \max\{d/\delta, \sqrt{d}\}\})$. Thus, $R = \Theta(d/S) = \Theta(n^{\alpha_R}) = \Oh(d n /M + \min\{\delta, \sqrt{d}\}) = \Oh(m)$, where the last bound follows from the parameter relations $M\ge nd/(5L) = \Omega(nd/m)$ (since $\alpha_L = \alpha_m$) and $\delta \le m$. By assumption $\alpha_S \le \alpha_R$, we have $S = \Oh(R)=\Oh(m)$. Furthermore, we have $\kappa = \Oh(M/n) = \Oh(m)$ by the relation $M\le mn$ and $\tilde{m} = \Theta(m)$ by $R, S, |x|,|y|, \ell = \Oh(m)$ and $\delta \le m$. Thus, $|x''| = \kappa + \Delta + \tilde{m} + 2(R+S) + \ell + |x| = \Theta(\Delta + m) = \Theta(n)$ and $|y''| = \kappa + \tilde{m} + \delta + R+2S + \ell + |y| = \Theta(m)$. By~\eqref{eq:SsmallerRlcs}, we also conclude that $L(x'',y'') = \tilde{m} + \Oh(m) = \Theta(m) = \Theta(L)$ by the assumption $\alpha_L = \alpha_m$.\dopara{n,m,L}

Note that by $\Delta \ge \delta$, $|a| \ge |b|$ and $|x| \ge |y|$, we have $|x''| \ge |y''|$. Hence by~\eqref{eq:SsmallerRlcs},  $\Delta(x'',y'') =  \Delta + R + (|x|-L(x,y)) = \Theta(\Delta)$, since $R, |x| = \Oh(m) = \Oh(\Delta)$ by $\alpha_\Delta > \alpha_m$. Likewise, \eqref{eq:SsmallerRlcs} yields $\delta(x'',y'') = \delta + (|y|-L(x,y)) = \delta + \delta(x,y) = \Theta(\delta)$ since $\delta(x,y) =\Oh(\delta)$ by $\delta(x,y) = \Oh(n^{\alpha_\delta'})$ and $\alpha'_\delta \le \alpha_\delta$. \dopara{\delta,\Delta}

For the dominant pairs, we first compute
\[ d(1^\Delta 0^{\tilde{m}} x', 0^{\tilde{m}} y')  \ge d(x',y') \ge d(a,0^\delta b) \ge R\cdot S = \Omega(d),\] 
using \lemref{deleteDeltaPadding}, \obsref{prefix}, and \lemref{genDomPairs}. For a corresponding upper bound, we use \lemref{deleteDeltaPadding} to obtain
\[ d(1^\Delta 0^{\tilde{m}} x', 0^{\tilde{m}} y') = d(1^\Delta 0^{\tilde{m}} x', 0^{\tilde{m}} 0^\delta b 0^\ell y) \le |y'| + \tilde{m} + d(1^\Delta 0^{\tilde{m}}, b 0^\ell y) + d(x', y'). \]
By \lemref{few1s} we have $d(1^\Delta 0^{\tilde{m}}, b 0^\ell y) \le 5 \cdot  L(1^\Delta 0^{\tilde{m}}, b0^\ell y)\cdot  \occ_1(b0^\ell y) = \Oh(|b0^\ell y| \cdot (S+|y|)) = \Oh(R\cdot S) = \Oh(d)$. Since $|y'| + \tilde{m} = \Oh(m) = \Oh(d)$ (using $d \ge L = \Omega(m)$ since $\alpha_L = \alpha_m$) and $d(x',y') \le 5 \cdot L(x',y') \cdot \occ_1(y') = \Oh(|x'|\cdot \occ_1(y')) = \Oh(R\cdot S) = \Oh(d)$, we conclude that $d(1^\Delta 0^{\tilde{m}} x', 0^{\tilde{m}} y') = \Theta(d)$. Finally, \lemref{greedy} yields $d(x'',y'') = \kappa + d(1^\Delta 0^{\tilde{m}} x', 0^{\tilde{m}} y') = \Theta(d) + \Oh(m) = \Theta(d)$, as desired. \dopara{d}

It remains to count the matching pairs. Note that $\occ_0(x''),\occ_0(y'') = \Oh(\tilde{m} + |x'|+|y'|) = \Oh(m)$. Furthermore $\occ_1(x'') = \Delta + \kappa + \occ_1(x') = \Theta(\Delta) + \Oh(m) = \Theta(n)$ (since $\alpha_\Delta > \alpha_m$ implies $\alpha_\Delta = 1$) and $\occ_1(y'') = \kappa + S + \occ_1(y) = \Theta(\kappa)= \Theta(M/n)$, where we used $S, |y| = \Oh(M/n)$ and $\kappa = \Theta(M/n)$ (since $M\ge n$). Thus, $M(x'',y'') = \occ_1(x'')\occ_1(y'') + \occ_0(x'')\occ_1(y'') = \Theta(n\cdot M/n) + \Oh(m^2) = \Theta(M)$ using that $M\ge L^2/|\Sigma| = \Omega(m^2)$ by $\alpha_L=\alpha_m$. Note that indeed $\Sigma(x'',y'') = \{0,1\}$. \dopara{M,\Sigma}
\end{proof}

Before giving the construction for the case $\alpha_S > \alpha_R$, we present a technical lemma that is similar to the dominant pair reduction technique of \lemref{dreduction}.

\begin{lem}\label{lem:advancedDeltapadding}
Let $x' = a0^\ell x = (01)^{R+S} 0^\ell x$, $y'=b 0^\ell y = 0^R (01)^S 0^\ell y$ be an instance of \lemref{bbbI} and $R \ge |y| - L(x,y)$. We set $t:= R+|y'|+1$ and define
\begin{alignat*}{4}
\bar{x} \;& := \quad 1^t \; 0^t \; y' \; & & 0^R \; 1^{t + \Delta} & \; 0^t \; x',\\
\bar{y} \;& := \quad                          & & 0^R \; 1^t & \; 0^t \; y'.
\end{alignat*}
Then
\begin{enumerate}[label=(\roman{*})]
\item \label{itm:advancedDeltapadding-L} $L(\bar{x}, \bar{y}) = R+2t + L(x',y')$,
\item \label{itm:advancedDeltapadding-dUB} $d(\bar{x}, \bar{y}) \le (2t+|y'|) (R+1) + R^2$,
\item \label{itm:advancedDeltapadding-dLB} $d(\bar{x}, \bar{y}) \ge R\cdot S$.
\end{enumerate}
\end{lem}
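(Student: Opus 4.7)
My plan is to handle the three parts separately, with part~(i) being the main challenge.

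For part~(i), I start with the lower bound via the canonical alignment: match $0^R$ of $\bar y$ to $\bar x$'s leading $0^t$ block, $1^t$ of $\bar y$ to $\bar x$'s $1^{t+\Delta}$ block, and the final $0^t y'$ of $\bar y$ to the trailing $0^t x'$ of $\bar x$, using \lemref{zeroblocklcs} (applicable since $t = R + |y'| + 1 > |y'|$) to obtain $L(0^t y', 0^t x') = t + L(x', y')$. Summing yields $R + 2t + L(x', y')$.

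For the upper bound, I write $\bar x = X_1 X_2$ with $X_2 = 0^t x'$ and analyze $L(\bar x, \bar y) = \max_{\bar y = wz} L(X_1, w) + L(X_2, z)$ by splitting on $w$. When $w = 0^r 1^s$ is a prefix of $0^R 1^t$, the trivial bound $L(X_1, w) \le r + s$ together with $L(X_2, z) \le (R - r) + (t - s) + t + L(x', y')$ (using $L(ab, c) \le |a| + L(b, c)$ and \lemref{zeroblocklcs}) gives the target. When $w = 0^R 1^t 0^p w_3$ extends into $0^t y'$, writing $w_4 = y'[|w_3|+1..|y'|]$, I bound $L(X_2, z) \le t + L(x', w_4)$; the main difficulty is proving $L(X_1, w) \le R + t + \occ_1(w_3)$ together with the structural inequality $\occ_1(w_3) + L(x', w_4) \le L(x', y')$. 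The first bound follows from $t > |y'| \ge \occ_1(y')$: the $1^t$-block of $w$ cannot be matched entirely within $y'$'s substring in $X_1$, so any matching must eventually reach the $1^{t+\Delta}$-block, after which only ones remain, limiting $0^p w_3$ matches to $\occ_1(w_3)$. For the structural inequality, I exploit the lemma hypothesis $R \ge |y| - L(x, y)$, which implies $\occ_1(y) \le R + \occ_1(x)$, so $x' = (01)^{R+S} 0^\ell x$ has enough leading $(01)$-pairs that a careful split $x' = x'_1 x'_2$ yields $L(x'_1, w_3) \ge \occ_1(w_3)$ and $L(x'_2, w_4) \ge L(x', w_4)$ via a \lemref{zeroblocklcs}-style argument on $x'_2$.

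Part~(ii) is immediate from \lemref{domPairRedbase} with $y = 1^t 0^t y'$ and $z = 0^R$: $\bar x$ starts with $y$ and $\bar y = zy$, yielding $d(\bar x, \bar y) \le |y|(|z|+1) + |z|^2 = (2t + |y'|)(R + 1) + R^2$.

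For part~(iii), I extend the argument of part~(i) to prefixes: for $i = 4t + \Delta + |y'| + R + i'$ with $i' \le 2(R+S)$ (ending inside the $a = (01)^{R+S}$-prefix of $x'$, embedded at the tail of $\bar x$) and $j = 2t + j'$ with $j' \le R + 2S$ (ending inside the $b = 0^R (01)^S$-prefix of $y'$, embedded at the tail of $\bar y$), one shows $L(\bar x[1..i], \bar y[1..j]) = R + 2t + L(a[1..i'], b[1..j'])$. This yields a bijection between dominant pairs of $(a, b)$ in these prefix regions and dominant pairs of $(\bar x, \bar y)$; since \lemref{genDomPairs}\itemref{genDomPairsd} gives $d(a, b) \ge R \cdot S$, we conclude $d(\bar x, \bar y) \ge R \cdot S$. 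The hardest step throughout is the case analysis for part~(i)'s upper bound, where one must carefully track how a matching can utilize the ones in the embedded copy of $y'$ versus those in $1^{t+\Delta}$.
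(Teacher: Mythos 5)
Your part~(ii) is exactly the paper's proof, and your part~(iii) follows the same strategy as the paper (reducing to dominant pairs of $(a,b)$ via a prefix version of the LCS formula), though you should note that the prefix identity actually reads as a \emph{maximum} $\max\{2t+|\py|,\, R+2t+L(\px,\py)\}$, and one must check that for the chosen prefixes $\px=(01)^r$, $\py=0^R(01)^s$ with $s\le r$ the second term dominates.

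The real problem is the upper bound in part~(i). Your claimed bound $L(X_1, w) \le R + t + \occ_1(w_3)$ is false. Take $w_3 = 0^k$ for any $1 \le k \le R$ (so $w_3$ is a genuine prefix of $y' = 0^R(01)^S\cdots$ and $\occ_1(w_3)=0$). Then, matching the suffix $1^t 0^t 0^k$ of $w$ against the prefix $1^t 0^t y'$ of $X_1 = 1^t 0^t y' 0^R 1^{t+\Delta}$ and simply discarding the $0^R$ prefix of $w$, one obtains $L(X_1, 0^R 1^t 0^t 0^k) \ge 2t + k$, which strictly exceeds $R + t + \occ_1(w_3) = R+t$ since $t > R$. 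Your justification for the bound (``the $1^t$-block of $w$ cannot be matched entirely within $y'$'s substring in $X_1$, so any matching must eventually reach the $1^{t+\Delta}$-block'') overlooks precisely this alternative: $X_1$ \emph{begins} with a $1^t$ block, so the ones of $w$ can be matched there without ever touching $1^{t+\Delta}$. When that happens, $0^R$ of $w$ is sacrificed, and a large number of zeros from $0^t w_3$ can subsequently be matched against $X_1$'s $0^t y' 0^R$. The paper avoids this trap by not splitting $\bar x$ at all: it observes that the leading $1^t$-block of $\bar x$ and the leading $0^R$-block of $\bar y$ cannot \emph{both} contribute matched symbols, giving a clean max over two simpler LCS problems, each of which is then collapsed via greedy prefix matching and \lemref{zeroblocklcs}. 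Your split $\bar x = X_1 X_2$ with $X_2 = 0^t x'$ is valid as an identity, but it forces you to control $L(X_1,w)$ where $X_1$ contains both $1^t$-type blocks, and your intermediate estimates do not account for this.
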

\begin{proof}
We first prove the following property.
\begin{enumerate}[label=($\ast$)]
\item \label{itm:advancedDeltapadding-helper} For any prefixes $\px,\py$ of $x',y'$, we have 
\[ L(1^t 0^t y' \, 0^R \, 1^{t + \Delta} 0^t \px, 0^R \, 1^t 0^t \py) = \max\{2t+|\py|, R+2t + L(\px,\py)\}. \] 
\end{enumerate}
Note that the lower bound immediately follows from either matching $1^t 0^t \py$ with $1^t 0^t y'$, or by matching $0^R 1^t 0^t \py$ with $0^R 1^{t+\Delta} 0^t \px$. For the upper bound, fix an LCS and observe that it cannot match symbols in the $1^t$-prefix of $\bar x$ and symbols in the $0^R$-prefix of $\bar y$, so at least one of the two prefixes stays unmatched. Thus,
\begin{align*}
L(1^t 0^t y' \, 0^R \, 1^{t + \Delta} 0^t \px, 0^R \, 1^{t} 0^t \py) & \le \max\{ L(0^t y' \, 0^R \, 1^{t + \Delta} 0^t \px, 0^R \, 1^{t} 0^t \py), \, L(1^t 0^t y' \, 0^R \, 1^{t + \Delta} 0^t \px, 1^{t} 0^t \py)\} \\
& = \max\{ R + L(0^{t-R} y' \, 0^R \, 1^{t + \Delta} 0^t \px, \, 1^{t} 0^t \py), \, 2t +  |\py|\}, 
\end{align*}
where the second line follows from greedy prefix matching. Setting $\hat{x}:=0^{t-R} y' \, 0^R \, 1^{t + \Delta} 0^t \px$ and $\hat{y}:= 1^{t} 0^t \py$, it remains to provide the upper bound $R+L(\hat{x},\hat{y}) \le\max\{2t+|\py|, R+2t+L(\px,\py)\}$ to prove \itemref{advancedDeltapadding-helper}. Assume that an LCS $z$ of $\hat{x}, \hat{y}$ matches less than $t - R$ symbols of the $1^t$-prefix of $\hat{y}$. Then $|z| \le t - R + L(\hat{x}, 0^t \py) \le 2t - R + |\py|$, yielding $R+L(\hat{x},\hat{y})\le 2t +|\py|$. Hence, assume instead that at least $t -R$ symbols of the $1^t$-prefix of $\hat{y}$ are matched. Since the number of ``1''s in the prefix $0^{t-R} y' 0^R$ of $\hat{x}$ is only $\occ_1(y')\le|y'| < t - R$, all zeroes of this prefix have to be deleted, resulting in
\begin{align*}
L(\hat x, \hat y) = |z| &  \le L(1^{\occ_1(y') + t +\Delta} 0^t \px, 1^{t} 0^t \py) \\
& = t + L(1^{\occ_1(y') + R + \Delta} 0^t \px, 0^t \py) \\
& = 2t + L(\px,\py),
\end{align*}
where the second line follows from greedy prefix matching and the third follows from \lemref{zeroblocklcs}. Thus, we have verified $R+L(\hat{x},\hat{y}) \le \max\{2t+|\py|, R+2t+L(\px,\py)\}$, which implies \itemref{advancedDeltapadding-helper}.

As an immediate consequence, \itemref{advancedDeltapadding-helper} yields $L(\bar{x},\bar{y}) = \max\{2t+|y'|, R + 2t+L(x',y')\} = R+2t + L(x',y')$ since $R\ge |y|-L(x,y) = |y'|-L(x',y')$ (where the equality follows from \lemref{bbbI}). This proves~\itemref{advancedDeltapadding-L}.

For~\itemref{advancedDeltapadding-dUB}, an application of \lemref{domPairRedbase} yields $d(\bar{x},\bar{y}) \le (2t + |y'|)(R+1) + R^2$.

Finally, for \itemref{advancedDeltapadding-dLB} we consider, analogous to \lemref{genDomPairs}, for any $0\le s\le S$ and $s \le r \le R+s$, the prefixes $\px = (01)^r$ of $x'$ and $\py = 0^R (01)^s$ of $y'$. Then by~\itemref{advancedDeltapadding-helper},
\begin{align*}
  L(1^t 0^t y' \, 0^R \, 1^{t + \Delta} 0^t \px, 0^R \, 1^t 0^t \py) &= \max\{2t + |\py|, R + 2t + L(\px,\py)\} \\
  &= \max\{2t + R + 2s, R+ 2t + r + s \} = (R+2t) + r+s, 
\end{align*}
where we used \lemref{genDomPairs}\itemref{genDomPairsHelper} for the second equality and $r\ge s$ for the last equality. Analogously to the proof of \lemref{genDomPairs}\itemref{genDomPairsd}, this yields $d(\bar{x},\bar{y})\ge R\cdot S$, since any feasible choice of $r,s$ gives rise to a unique dominant pair.
\end{proof}

We can now give the construction for $\alpha_S > \alpha_R$. Recall that
\begin{align*}
\alpha_S &:= \min\{\alpha_M- 1, \max\{\alpha_d-\alpha_\delta, \alpha_d/2\}\}, & \alpha_R & := \alpha_d- \alpha_S.
\end{align*}

\begin{lem}[Construction for $\alpha_S > \alpha_R$] \label{lem:deltaSmallerMn-SgreaterR}
Let $(\Valpha,\{0,1\})$ be a parameter setting satisfying \tabref{paramChoiceRestr} with $\alpha_L = \alpha_m < \alpha_\Delta$, $\alpha_\delta \le \alpha_M -1$, and $\alpha_S > \alpha_R$. Given an instance $(n,x,y)$ of $\LCS^\gamma_\le(\Valpha', \{0,1\})$ with $|y|\le|x| = \Oh(\min\{M/n, \max\{d/\delta, \sqrt{d}\}\})$, we can construct an instance $x^{(4)}, y^{(4)}$ of $\LCS^{\gamma'}(\Valpha, \{0,1\})$ (for some constant $\gamma'\ge 1$) in time $\Oh(n)$ together with some integer $\tau$ such that $L(x^{(4)},y^{(4)}) = \tau+ L(x,y)$.
\end{lem}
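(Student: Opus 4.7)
The regime $\alpha_S > \alpha_R$ (equivalently $S \gg R$ in the basic building block) is exactly the case where the naive $\Delta$-padding of \lemref{deltaSmallerMn-RgreaterS} fails: the crude estimate $d(1^{\Delta} 0^{\tilde m}, b 0^\ell y) = \Oh(|b 0^\ell y|\cdot \occ_1(b 0^\ell y)) = \Oh(S^2)$ would exceed the target $d = R S$. I therefore replace the naive $\Delta$-padding by \lemref{advancedDeltapadding}, which was designed for precisely this regime. The final construction $x^{(4)}, y^{(4)}$ is built in four stages.

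\emph{Stage 1.} Instantiate \lemref{bbbII} with $S := \max\{|x|, \lfloor n^{\alpha_S} \rfloor\}$, $R := \lfloor d/S \rfloor$, $\beta := 0$, and $\ell := R+S+|x|+|y|$, obtaining $x^{(1)} = a\, 0^\ell\, x$ and $y^{(1)} = b\, 0^\ell\, y$ with $L(x^{(1)},y^{(1)}) = R + 2S + \ell + L(x,y)$ (the precondition $S \ge |x|$ is satisfied). The input guarantee gives $|x|,|y| = \Oh(\min\{M/n,\max\{d/\delta,\sqrt{d}\}\}) = \Oh(n^{\alpha_S})$, so combined with the regime assumption $R \le S$ we get $\ell, |y^{(1)}| = \Theta(S)$. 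This tightness is essential in Stage 2.

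\emph{Stage 2.} Apply \lemref{advancedDeltapadding} to $x^{(1)}, y^{(1)}$. The precondition $R \ge |y|-L(x,y) = \delta(x,y)$ is satisfied because $\alpha_R = \alpha_d - \alpha_S \ge \alpha_d - \max\{\alpha_d-\alpha_\delta,\alpha_d/2\} = \min\{\alpha_\delta, \alpha_d/2\} = \alpha'_\delta$. With $t = R + |y^{(1)}| + 1 = \Theta(S)$, the lemma yields $L(x^{(2)},y^{(2)}) = R + 2t + L(x^{(1)},y^{(1)})$, $\Delta(x^{(2)},y^{(2)}) = \Theta(\Delta)$ (from the embedded $1^{t+\Delta}$ block), and the crucial dominant-pair estimates
\[ RS \;\le\; d(x^{(2)},y^{(2)}) \;\le\; (2t+|y^{(1)}|)(R+1) + R^2 \;=\; \Oh(RS) \;=\; \Theta(d). \]
Moreover $\occ_1(y^{(2)}) = t + \occ_1(y^{(1)}) = \Oh(S) = \Oh(M/n)$, respecting the $1$-budget that $M(x^{(4)},y^{(4)}) = \Theta(M)$ will demand.

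\emph{Stages 3--4.} Pad $\delta$ from $\delta(x^{(2)},y^{(2)}) = \delta(x,y)$ up to $\Theta(\delta)$ using a shielding construction in the spirit of \lemref{advancedDeltapadding} (not \lemref{deltapadding}, which would require a $1^\nu$-block with $\nu \ge |y^{(2)}|+\delta = \Omega(S)$, blowing $\occ_1(y)$ past the $\Oh(M/n)$ budget). Concretely, surround $y^{(2)}$ by an extra $0^\delta$-block together with short auxiliary $1$-blocks that force $\delta$ fresh unmatched zeros while preserving $d(x^{(3)},y^{(3)}) = \Theta(d)$ and $\occ_1(y^{(3)}) = \Oh(M/n)$; correctness is verified by a direct LCS recomputation analogous to property ($\ast$) in the proof of \lemref{advancedDeltapadding}. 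Finally, prepend $1^\kappa\, 0^{\tilde m}$ to both strings with $\kappa := \lfloor M/n \rfloor$ and $\tilde m := \max\{m, |y^{(3)}|\}$ to obtain $x^{(4)}, y^{(4)}$. By Greedy Prefix Matching (\lemref{greedy}) this raises $L$ and $d$ by exactly $\kappa + \tilde m = \Oh(m) = \Oh(d)$ (using $d \ge L = \Omega(m)$). The resulting parameters satisfy $|x^{(4)}| = \Theta(n)$, $|y^{(4)}| = \Theta(m)$, $\occ_1(x^{(4)}) = \Theta(n)$, $\occ_1(y^{(4)}) = \Theta(M/n)$, and hence $M(x^{(4)},y^{(4)}) = \Theta(\kappa n + m^2) = \Theta(M)$ via the relation $M \ge L^2/|\Sigma| = \Omega(m^2)$.

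\emph{Main obstacle.} The delicate step is Stage 3: padding $\delta$ without exceeding the tight $\occ_1(y) = \Oh(M/n)$ budget. The off-the-shelf \lemref{deltapadding} is unusable here because it would inject $\Omega(|y^{(2)}|) = \Omega(S)$ ones into $y$, which may be much larger than $M/n$ in this regime (e.g.\ when $\delta \ll S$). A tailored shielding construction, accompanied by a fresh LCS calculation in the style of \lemref{advancedDeltapadding}, is required, and its correctness critically relies on the $|y^{(1)}| = \Theta(S)$ estimate established in Stage~1. A secondary subtlety is ensuring that the dominant-pair upper bound in Stage~2 is tight, which is exactly why Stage~1 is instantiated with $\ell = \Oh(S)$ rather than a larger value.
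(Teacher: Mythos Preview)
Your identification of \lemref{advancedDeltapadding} as the key tool is correct, and your Stages 1, 2 and 4 are essentially what the paper does. However, your Stage~3 is a genuine gap: you acknowledge that a ``tailored shielding construction, accompanied by a fresh LCS calculation'' is required, but you do not supply one. As you correctly diagnose, applying \lemref{deltapadding} (with the roles of $x,y$ swapped so as to pad $\delta$) after \lemref{advancedDeltapadding} fails, because at that point $|x^{(2)}| = \Theta(\Delta)$ and the lemma would force a $1^\nu$-block with $\nu \ge |x^{(2)}| = \Theta(n)$ into $y$, destroying the $\occ_1(y) = \Oh(M/n)$ budget.

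The paper sidesteps this obstacle entirely by a simple \emph{reordering}: it pads $\delta$ \emph{first}, before the basic building block and before \lemref{advancedDeltapadding}. Concretely, Stage~1 in the paper sets $\ell_1 := |x|$ and
\[
x^{(1)} := 0^{\ell_1}\, x, \qquad y^{(1)} := 1^{\delta}\, 0^{\ell_1}\, y.
\]
At this early point the strings are still short, and a single application of \lemref{zeroblocklcs} gives $L(x^{(1)},y^{(1)}) = \ell_1 + L(x,y)$, hence $\delta_1 := |y^{(1)}| - L(x^{(1)},y^{(1)}) = \delta + \delta(x,y) = \Theta(\delta)$. Crucially, this adds only $\delta$ ones to $y$, and since the regime assumption $\alpha_\delta \le \alpha_M - 1$ gives $\delta = \Oh(M/n)$, the one-budget is respected. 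The paper then applies \lemref{bbbI} with $R := \max\{\lfloor n^{\alpha_R}\rfloor, \delta_1\}$ (the $\max$ ensures the precondition $R \ge \delta_1$ of \lemref{advancedDeltapadding} holds exactly, not merely asymptotically; the paper separately checks $\alpha_R \ge \alpha_\delta$ so that $R = \Theta(n^{\alpha_R})$), then \lemref{advancedDeltapadding}, then the final $1^\kappa\, 0^m$ prefix. No new technical lemma is needed.

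In short: your plan is on the right track but leaves the hardest step unproven; the paper's trick is that $\delta$-padding is trivial \emph{if done before} the strings grow, and the regime hypothesis $\alpha_\delta \le \alpha_M - 1$ is precisely what makes the naive $1^\delta$-prefix affordable.
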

\begin{proof}
We first set $\ell_1 := |x|$ to define
\begin{alignat*}{3}
x^{(1)} & := & \; 0^{\ell_1} & \; x, \\
y^{(1)} & := 1^\delta & \; 0^{\ell_1} &\; y,
\end{alignat*}
which pads the parameter $\delta$. For convenience, define $\delta_1:= |y^{(1)}|-L(x^{(1)},y^{(1)})$. We use the parameters
\begin{align*}
S & := \lfloor n^{\alpha_S} \rfloor, & R& := \max\{\lfloor n^{\alpha_R} \rfloor, \delta_1\}, & \ell_2 & := |x^{(1)}| + |y^{(1)}|,  
\end{align*}
to define, as in \lemref{bbbI},
\begin{alignat*}{4}
x^{(2)} & := a & \; 0^{\ell_2} & \; x^{(1)}, \\
y^{(2)} & := b & \; 0^{\ell_2} & \; y^{(1)}. 
\end{alignat*}
We then use the dominant pair reduction trick of \lemref{advancedDeltapadding}, that additionally pads $\Delta$, and define 
\begin{alignat*}{4}
x^{(3)} & := 1^{\ell_3} \; 0^{\ell_3} \; y^{(2)} \; & & 0^R \; 1^{\ell_3 + \Delta} & \; 0^{\ell_3} \; x^{(2)},\\
y^{(3)} & :=                          & & 0^R \; 1^{\ell_3 } & \; 0^{\ell_3} \; y^{(2)},
\end{alignat*}
where $\ell_3 := R+|y^{(2)}| + 1$. The final instance is then constructed as
\begin{align*}
x^{(4)} &= 1^\kappa \; 0^m \; x^{(3)},\\
y^{(4)} &= 1^\kappa \; 0^m \; y^{(3)},
\end{align*}
where $\kappa:= \lfloor M/n\rfloor$.

We first compute
\begin{align}
L(x^{(4)}, y^{(4)}) & = \kappa + m + L(x^{(3)},y^{(3)}) \nonumber\\
& = \kappa + m + R + 2\ell_3 + L(x^{(2)}, y^{(2)}) \nonumber\\
& = \kappa + m + 2R + 2S + \ell_2 + 2\ell_3 + L(x^{(1)},x^{(1)}) \nonumber\\
& = \kappa + m + 2R + 2S + \ell_1 + \ell_2 + 2\ell_3 + L(x,y),\label{eq:lcslastcase}
\end{align}
where we used greedy prefix matching in the first line, \lemref{advancedDeltapadding}\itemref{advancedDeltapadding-L} in the second, \lemref{bbbI} in the third, and \lemref{zeroblocklcs} in the last line. Note that $x^{(4)}, y^{(4)}$, and $\tau := \kappa+m+2R+2S+\ell_1+\ell_2+2\ell_3$ can be computed in time $\Oh(n)$.

It remains to verify that $x^{(4)}, y^{(4)}$ is an instance of $\LCS^{\gamma'}(\Valpha,\{0,1\})$ for some $\gamma'\ge 1$.
We first observe that $|x|,|y| = \Oh(n^{\alpha_S})$ and hence $|x^{(1)}|,|y^{(1)}| = \Oh(n^{\alpha_S}+\delta)$. Note that by definition $S= \Theta(n^{\alpha_S})$. 

Assume for contradiction that $\alpha_R < \alpha_\delta$. Note that by definition $\alpha_R = \alpha_d- \alpha_S = \max\{\alpha_d - \alpha_M +1, \min\{\alpha_\delta,\alpha_d/2\}\}$ and hence $\alpha_R < \alpha_\delta$ only holds if $\alpha_d-\alpha_M+1 \le \alpha_R = \alpha_d/2 < \alpha_\delta$. But then $\alpha_d-\alpha_\delta \le \alpha_d/2 < \alpha_\delta \le \alpha_M - 1$. This forces $\alpha_S = \alpha_d/2$ by definition, which contradicts the assumption $\alpha_S > \alpha_R$. We therefore obtain $\alpha_R \ge \alpha_\delta$. 

Note that $\delta_1 = |y^{(1)}| - L(x^{(1)},y^{(1)}) = \delta + \delta(x,y)$ by \lemref{zeroblocklcs}. Since $\delta(x,y) \le \Oh(n^{\alpha'_\delta})$ and $\alpha'_\delta \le \alpha_\delta$, this yields $\delta_1 = \Theta(\delta)$, and hence $R = \Theta(n^{\alpha_R})$. Thus, $R = \Oh(S)$, since $\alpha_R < \alpha_S$. It is immediate that $|x^{(2)}|,|y^{(2)}| = \Oh(R+S+|x^{(1)}|+|y^{(1)}|) = \Oh(S)$. Furthermore, it follows that $|x^{(3)}| = \Delta + \Oh(S)$ and $|y^{(3)}| = \Oh(S)$. Finally $|y^{(4)}| = m + \Oh(M/n + S) = \Theta(m)$, where the last bound follows from $S = \Oh(M/n) = \Oh(m)$ by the parameter relation $M\le mn$. Likewise, $|x^{(4)}| = m + \Delta + \Oh(M/n+S) = \Theta(m+\Delta) = \Theta(n)$. Finally, $L(x^{(4)},y^{(4)}) = m + \Oh(M/n + S) = \Theta(m) = \Theta(L)$ by \eqref{eq:lcslastcase} and $\alpha_L = \alpha_m$. \dopara{n,m,L}

Since $|x|\ge |y|$, $|a|\ge |b|$ and $\Delta \ge \delta$ it is easy to see that $|x^{(4)}| \ge |y^{(4)}|$. Hence~\eqref{eq:lcslastcase} yields  \dopara{\delta,\Delta}
\[\Delta(x^{(4)}, y^{(4)}) = 2\ell_3 + |y^{(2)}| + \Delta + R + \Delta(x,y) = \Delta + \Oh(m) = \Theta(\Delta), \]
since in particular $\Delta(x,y) \le |x| = \Oh(m)$. Similarly, $\delta(x^{(4)},y^{(4)}) = \delta + \delta(x,y) = \Theta(\delta)$ as above.

For the number of dominant pairs, we observe that \lemref{advancedDeltapadding}\itemref{advancedDeltapadding-dLB} yields $d(x^{(3)},y^{(3)}) \ge R\cdot S = \Omega(d)$. From \lemref{advancedDeltapadding}\itemref{advancedDeltapadding-dUB}, the corresponding upper bound $d(x^{(3)},y^{(3)}) \le (2\ell_3 + |y^{(2)}|) \cdot (R+1) + R^2 = \Oh( S\cdot R + R^2) = \Oh(d)$ follows, since $R= \Oh(S)$ by $\alpha_R < \alpha_S$. Thus, by \lemref{greedy} we obtain $d(x^{(4)},y^{(4)}) = \kappa + m + d(x^{(3)},y^{(3)}) = \Theta(d) + \Oh(m) = \Theta(d)$ by $d\ge L = \Omega(m)$ since $\alpha_L= \alpha_m$.

It remains to count the number of matching pairs. We have $\occ_1(y^{(4)}) = \kappa + \occ_1(y^{(3)}) = \Theta(M/n)$, since $\kappa = \Theta(M/n)$ by the parameter relation $M \ge n$, and $\occ_1(y^{(3)}) \le |y^{(3)}| = \Oh(S) = \Oh(M/n)$. Since $|y^{(4)}| = \Oh(m)$, we have $\occ_0(y^{(4)}) = \Oh(m)$. Note that  $\occ_1(x^{(4)}) = \kappa + 2\ell_3 + \Delta + |x^{(2)}| + |y^{(2)}| = \Delta + \Oh(S+ \kappa)  = \Theta(n)$, since $\alpha_\Delta > \alpha_m$ implies $\alpha_\Delta = 1$. Finally, $\occ_0(x^{(4)}) = m + 2\ell_3 + R + \occ_0(y^{(2)}) + \occ_0(x^{(2)}) = \Oh(m)$. Thus, we obtain $M(x^{(4)}, y^{(4)}) = \occ_1(x^{(4)})\cdot \occ_1(y^{(4)}) + \occ_0(x^{(4)})\cdot \occ_0(y^{(4)})  = \Theta(n \cdot M/n) + \Oh(m^2) = \Theta(M)$ by the relation $M\ge L^2/|\Sigma| = \Omega(m^2)$, since $\alpha_L=\alpha_m$.
\end{proof}

Note that combining \lemrefs{deltaSmallerMn-RgreaterS}{deltaSmallerMn-SgreaterR} with \obsref{deltaSmallerMn-alphaprime} yields \lemref{deltaSmallerMn}.

\section{New Algorithm for Binary Alphabet}
\label{sec:algo}

In this section we prove \thmref{algo}, i.e., we assume that $\Sigma = \{0,1\}$ and provide an algorithm running in time $\Oh(n + \delta M / n)$. More precisely, for any input $x,y$, by $\occ_0(x) + \occ_1(x) = n$ we have $\max\{\occ_0(x), \occ_1(x)\} \ge n/2$, so without loss of generality assume $\occ_1(x) \ge n/2$ (otherwise exchange 0 and 1). Since $M = \occ_0(x) \cdot \occ_0(y) + \occ_1(x) \cdot \occ_1(y)$, it follows that $\occ_1(y) \le 2M/n$. Hence, it suffices to design an algorithm running in time $\Oh(n + \delta \cdot \occ_1(y))$. 

\begin{thm}
  For $\Sigma = \{0,1\}$, LCS has an $\Oh(n + \delta \cdot \occ_1(y))$ algorithm.
\end{thm}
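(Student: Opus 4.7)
The plan is a threshold-style dynamic program parameterized by the number of unmatched positions in $y$. Setting $k := \occ_1(y)$ and writing $y = 0^{a_0} 1\, 0^{a_1} 1 \cdots 1\, 0^{a_k}$ in run-length form, I would first precompute, in $\Oh(n)$ time, the arrays $f(p) := \occ_0(x[1..p])$, $f_1(p) := \occ_1(x[1..p])$, and $P_c(s) :=$ position of the $s$-th occurrence of $c\in\{0,1\}$ in $x$. Define for each $j \in [0,m]$ and $u \ge 0$
\[ \alpha^j[u] := \min\{p \in [0,n] : L(x[1..p],\, y[1..j]) \ge j - u\}, \]
with $\alpha^j[u] = \infty$ if no such $p$ exists. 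Since $L(x[1..p], y)$ is monotone in $p$, one immediately gets $u^* := \min\{u : \alpha^m[u] \le n\} = \delta$, so the desired answer is $L(x,y) = m - u^*$; in particular only values $u \le \delta$ are relevant.

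I would then maintain $\alpha^j[\,\cdot\,]$ over the range $u\in[0,U]$ as $j$ advances, using two update routines. The $1$-column update is the classical
\[\alpha^{j+1}[u] = \min\bigl(\alpha^j[u-1],\ P_1(f_1(\alpha^j[u])+1)\bigr),\]
costing $\Oh(U)$ for all $u$ combined. The main technical step is a bulk update for a whole $0$-run of length $a$: parameterizing any extension by $v_0$, the number of characters of $y[1..j]$ left unmatched, and $v_1 = u - v_0$, the number of the new zeros left unmatched (and using $\mathrm{next}_0(q,r) = P_0(f(q)+r)$), I would derive
\[\alpha^{j+a}[u] = \min\!\Bigl(\alpha^j[u-a]\ \text{if } u \ge a,\ \ P_0\!\bigl((a-u) + \min_{w \in W(u)} G[w]\bigr)\Bigr),\]
where $G[w] := f(\alpha^j[w]) + w$ and $W(u) := [\max(0, u-a+1),\, \min(u, U)]$. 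The two-variable optimum collapses to a one-variable one because $P_0$ is non-decreasing, so $P_0(\text{const} + \min_w G[w]) = \min_w P_0(G[w] + \text{const})$.

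As $u$ sweeps $0, 1, \ldots, U$, the window $W(u)$ shifts right by exactly one index per step, so a standard monotone deque maintains $\min_{w \in W(u)} G[w]$ in $\Oh(1)$ amortized per $u$. Consequently, each $0$-run and each $1$-column costs $\Oh(U)$, and a full DP pass over $y$'s $k$ $1$-columns and $k{+}1$ $0$-runs runs in $\Oh(kU)$. Because $\delta$ is not known in advance, I would run the DP with $U = 1, 2, 4, \ldots$ (doubling) and stop at the first $U$ for which $\alpha^m[u] \le n$ for some $u \le U$. Since the $\Oh(n)$ preprocessing is performed once and reused across passes, the DP costs form a geometric series summing to $\Oh(k\delta)$, yielding the total bound $\Oh(n + \delta \cdot \occ_1(y))$.

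The main obstacle I anticipate is verifying the bulk $0$-run formula — in particular, proving that it is without loss of generality to restrict the two-variable optimum to the diagonal $v_0 + v_1 = u$ (this uses monotonicity of $\alpha^j[\cdot]$ in its argument and of $\mathrm{next}_0(\cdot,\cdot)$ in both arguments) and that the remaining minimum really can be rewritten as a sliding-window minimum of $G[w]$ with a shift-by-one window. Edge cases with $\alpha^j[w] = \infty$ and with $a = 0$ must be handled carefully in the deque, and one must check that the doubling procedure does not accidentally incur an extra logarithmic factor on $n$; the crucial point is that only the $\Oh(kU)$ DP portion is repeated across the $\Oh(\log\delta)$ doublings, while the $\Oh(n)$ preprocessing is amortized over all of them.
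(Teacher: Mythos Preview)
Your proposal is correct and follows essentially the same approach as the paper: both define a threshold-style table $T[\ell,k]=\min\{i:L(x[1..i],y[1..j_\ell])=j_\ell-k\}$ (your $\alpha^j[u]$), process $y$ in run-length form, reduce the bulk $0$-run transition to a minimum over the auxiliary array $G[w]=w+\occ_0(x[1..T[\ell-1,w]])$, and use exponential search for $\delta$. The only cosmetic differences are that the paper merges each $0^{z_\ell}1$ block into a single three-case recurrence and uses a static RMQ structure, whereas you separate the $0$-run and $1$-column updates and use a monotone deque for the sliding-window minimum; both yield $\Oh(1)$ amortized per cell and the same overall bound.
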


We preprocess $x$ in time $\Oh(n)$ to support the following queries. For $\sigma \in \{0,1\}$, $0 \le i \le n$, and $t \ge 1$, $\Next_\sigma^t(i)$ returns the position of the $t$-th occurrence of symbol $\sigma$ after position $i$ in~$x$, i.e., $\Next_\sigma^t(i) = i'$ if and only if $x[i'] = \sigma$ and $\occ_\sigma(x[i+1..i']) = t$ (if such a number $i'$ does not exist then $\Next_\sigma^t(i) := \infty$). For convenience, we let $\Next^0_\sigma(i) := i$.
For $t = 1$ we also write $\Next_\sigma^1(i) = \Next_\sigma(i)$. It is easy to implement $\Next_\sigma^t$ in time $\Oh(1)$ using rank/select data structures on the $0$'s and $1$'s in~$x$, which can be built in time~$\Oh(n)$~\cite{Jacobson89,Patrascu08}. The symbol succeeding $i$ is $\Next_{\Sigma}(i) := i+1$ if $i+1 \le n$, or $\infty$ otherwise, which can be computed in time $\Oh(1)$. 

Let $\lambda = \occ_1(y)$ and $1 \le j_1 < \ldots < j_{\lambda} \le m$ be the positions of all $1$'s in $y$, and for convenience set $j_0 := 0$.  We can assume that the last symbol in each of $x$ and $y$ is a $1$, in particular $j_\lambda = m$, because appending symbol $1$ to both $x$ and $y$ increases the LCS by exactly 1 (by \lemref{greedy}). We write $z_\ell$ for the number of $0$'s between $j_{\ell-1}$ and $j_\ell$ in $y$, i.e., $y = 0^{z_1} 1 0^{z_2} 1 \ldots 1 0^{z_\lambda} 1$ with $z_\ell \ge 0$.

Consider the dynamic programming table $T$ that contains for all $0 \le \ell \le \lambda$ and $k \ge 0$ (it remains to fix an upper bound on $k$) the value
\begin{align} \label{eq:defTable}
  T[\ell,k] = \min\{ 0 \le i \le n \mid L(x[1..i],y[1..j_\ell]) = j_\ell - k \}, 
\end{align}
where we set $\min \emptyset = \infty$.
Observe that from $T$ we can read off the LCS length as $L(x,y) = m - \min\{ k \mid T[\lambda,k] < \infty \}$. 
In particular, we may initialize $\tilde \delta := 1$, and compute the table $T$ for $0 \le \ell \le \lambda$ and $0 \le k \le \tilde \delta$. If there is no $0 \le k \le \tilde \delta$ with $T[\lambda,k] < \infty$ then we double $\tilde \delta$ and repeat. This exponential search ends once we find a value $\tilde \delta \in [\delta, 2 \delta)$.

Next we show how to recursively compute $T[\ell,k]$.
For $\ell = 0$, we clearly have $T[0,0] = 0$ and $T[0,k] = \infty$ for any $k > 0$. For $\ell > 0$, the following dynamic programming recurrence computes $T[\ell,k]$, as shown in Lemma~\ref{lem:correctnessSigmaTwoNumberOne} below. 
\begin{align}
  T[\ell,k] = \min\big\{ &\min\{ \Next_\Sigma( \Next_0^{z_\ell-k+k'}(T[\ell-1,k']) ) \mid \max\{0,k-z_\ell\} \le k' < k \},  \notag \\ 
  &\Next_1( \Next_0^{z_\ell}(T[\ell-1,k]) ),   \label{eq:algTable} \\
  & T[\ell-1,k-z_\ell-1] \big\}.  \notag
\end{align}
Note that the third line only applies if $k-z_\ell-1 \ge 0$, as $T[\ell',k'] = \infty$ for $k' < 0$. 

Let us discuss how to efficiently implement the above algorithm, assuming we already have computed the values $T[\ell-1,k]$, $0 \le k \le \tilde \delta$. Clearly, we can evaluate the second and third line in constant time, using the $\Next$ data structures that we built in the preprocessing.
For the first line, observe that $\Next_0^{t}(i)$ is the position of the $(t + \occ_0(x[1..i]))$-th $0$ in $x$. Hence, $\Next_0^{z_\ell-k+k'}(T[\ell-1,k'])$ is the position of the $(z_\ell-k+k' + \occ_0(x[1..T[\ell-1,k']]))$-th $0$ in $x$, so it is minimized if $k' + \occ_0(x[1..T[\ell-1,k']])$ is minimized\footnote{Here we interpret $\occ_0(x[1..\infty])$ as $\infty$.}. Thus, it suffices to compute a range minimum query over the interval $[\max\{0,k-z_\ell\}, k)$ on the array $A_\ell[0..\tilde \delta]$ with $A_\ell[k'] := k' + \occ_0(x[1..T[\ell-1,k']])$. From the answer $A_\ell[r]$ to this range minimum query we can infer $T[\ell,k]$ in time $\Oh(1)$. Specifically, the first line evaluates to the next symbol after the position of the $(z_\ell - k + A[r])$-th $0$ in $x$, i.e., $\Next_\Sigma( \Next_0^{z_\ell-k+A_\ell[r]}(0) )$.  

Note that range minimum queries can be performed in time $\Oh(1)$, after a preprocessing of $\Oh(|A_\ell|) = \Oh(\tilde \delta)$~\cite{Sadakane07,BrodalDR12}, where $|A_\ell|$ is the size of array $A_\ell$. Since we can reuse the array $A_\ell$ for all $0 \le k \le \tilde \delta$, we spend (amortized) preprocessing time $\Oh(1)$ per entry of $T[\ell,\cdot]$. In total, this yields time $\Oh(\tilde \delta \cdot \lambda) = \Oh(\tilde \delta \cdot \occ_1(y))$ to build the table $T$. The exponential search for $\delta$ yields time $\Oh(\delta \cdot \occ_1(y))$. Adding the preprocessing time $\Oh(n)$, we obtain an $\Oh(n + \delta \cdot \occ_1(y))$ algorithm. It remains to prove correctness of the recursive formula (\ref{eq:algTable}).

\begin{lem} \label{lem:correctnessSigmaTwoNumberOne}
  Table (\ref{eq:defTable}) follows the recursive formula (\ref{eq:algTable}).
\end{lem}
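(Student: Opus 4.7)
The first step is to normalize the definition. Since $L(x[1..i],y[1..j_\ell])$ is nondecreasing in $i$ with unit increments, $T[\ell,k]$ is also the smallest $i$ with $L(x[1..i],y[1..j_\ell])\ge j_\ell-k$, and $x[T[\ell,k]]$ must be matched in any optimal LCS (otherwise $T[\ell,k]-1$ would already satisfy the equality). Writing $y[1..j_\ell]=y[1..j_{\ell-1}]\cdot 0^{z_\ell}\cdot 1$, I will classify LCSs by the index $k':=j_{\ell-1}-p$, where $p$ is the number of matches occurring inside the prefix $y[1..j_{\ell-1}]$; equivalently $q:=j_\ell-k-p=z_\ell+1-(k-k')$ is the number of matches inside the suffix $0^{z_\ell}1$, so $k'\in[\max\{0,k-z_\ell-1\},\,k]$.

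For the ``$\le$'' direction I will exhibit, for each of the three expressions, an index $i$ realizing $L(x[1..i],y[1..j_\ell])\ge j_\ell-k$. The third line is immediate: any witness for $T[\ell-1,k-z_\ell-1]$ already realizes length $j_\ell-k$ against the prefix $y[1..j_{\ell-1}]$, hence against $y[1..j_\ell]$. The second line extends a witness for $T[\ell-1,k]$ by greedily matching all $z_\ell$ zeros and then the trailing $1$, giving exactly $z_\ell+1$ additional matches. For the first line with $\max\{0,k-z_\ell\}\le k'<k$, I start from a witness for $T[\ell-1,k']$ and greedily match $z_\ell-(k-k')$ zeros; the key observation is that because $k'<k$ we have $z_\ell-(k-k')<z_\ell$, so at least one zero of $0^{z_\ell}$ and the final $1$ of $y[1..j_\ell]$ remain unmatched. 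Thus the very next character in $x$---whatever $\Next_\Sigma$ returns---extends the subsequence by one more match (to another zero if it is a $0$, or to the final $1$ if it is a $1$), yielding a total of $(j_{\ell-1}-k')+(z_\ell-(k-k'))+1=j_\ell-k$ matches.

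For the ``$\ge$'' direction I fix $i^*=T[\ell,k]$ and an optimal LCS, and show one of the three expressions is $\le i^*$. Let $s$ be the $y$-position matched by $x[i^*]$.
\begin{itemize}
\item If $s\le j_{\ell-1}$, then the LCS lies inside $y[1..j_{\ell-1}]$, forcing $j_\ell-k\le j_{\ell-1}$, i.e.\ $k\ge z_\ell+1$, and $L(x[1..i^*],y[1..j_{\ell-1}])\ge j_{\ell-1}-(k-z_\ell-1)$, so $T[\ell-1,k-z_\ell-1]\le i^*$, giving the third line.
\item If $s=j_\ell$ and all $z_\ell$ zeros of $0^{z_\ell}$ are matched (i.e.\ $k'=k$), then the minimum position realizing this structure is $\Next_1(\Next_0^{z_\ell}(T[\ell-1,k]))$, which is exactly the second line and is $\le i^*$.
\item In the remaining subcases, $k'<k$ and exactly $q'=z_\ell-(k-k')$ zeros of $0^{z_\ell}$ are matched (with or without the final $1$). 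Whether $x[i^*]$ matches the final $1$ or the $(q'+1)$-th zero, the realizing position is at least $\Next_\Sigma(\Next_0^{q'}(T[\ell-1,k']))$, since $\Next_\Sigma(\Next_0^{q'}(\cdot))$ is the minimum of $\Next_1(\Next_0^{q'}(\cdot))$ and $\Next_0^{q'+1}(\cdot)$. Hence the first-line expression for this $k'$ is $\le i^*$.
\end{itemize}
Combined with monotonicity of $\Next_0,\Next_1,\Next_\Sigma$ and of $T[\ell-1,\cdot]$, this establishes the recursion.

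The main obstacle is the use of $\Next_\Sigma$ rather than $\Next_1$ or $\Next_0$ in the first line: one needs to recognize that \emph{both} choices of how to extend the subsequence are simultaneously encoded by advancing a single position, and that this coincidence crucially relies on the constraint $k'<k$ (which guarantees that neither the remaining zero supply nor the final $1$ has been exhausted). The second line, which uses $\Next_1$, is precisely the boundary case $k'=k$ where the zero supply is exhausted, so $\Next_\Sigma$ would no longer be correct; separating it out is therefore necessary.
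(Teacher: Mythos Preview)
Your proof is correct and takes essentially the same approach as the paper's: both verify that each right-hand-side term lies in the defining set of $T[\ell,k]$ (your $\le$ direction, the paper's final paragraph) and then case-analyze how the suffix $0^{z_\ell}1$ participates in an optimal alignment to show that $T[\ell,k]$ equals one of those terms (your $\ge$ direction corresponds to the paper's Cases~1--3 on $r=L(x[i'+1..i],0^{z_\ell}1)$). One minor slip to fix: in your third bullet, when $x[i^*]$ matches a zero rather than the final~$1$ there are $q'+1$ (not ``exactly $q'$'') zeros matched, but your conclusion still holds since $\Next_0^{q'+1}(\cdot)\ge \Next_\Sigma(\Next_0^{q'}(\cdot))$.
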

\begin{proof}
  For any $1 \le \ell \le \lambda$ and $0 \le k \le \tilde \delta$ we show that the value $T[\ell,k]$ of (\ref{eq:defTable}) follows the recursive formula (\ref{eq:algTable}).
  Let $i = T[\ell,k]$ and let $i'$ be minimal with 
  \begin{align} \label{eq:split} 
    L(x[1..i],y[1..j_\ell]) = L(x[1..i'],y[1..j_{\ell-1}]) + L(x[i'+1..i],y[j_{\ell-1}+1..j_\ell]).
  \end{align} 
  Let $k' = j_{\ell-1} - L(x[1..i'],y[1..j_{\ell-1}])$. Then we claim $i' = T[\ell-1,k']$. Indeed, since $i'$ satisfies the condition $L(x[1..i'],y[1..j_{\ell-1}]) = j_{\ell-1} - k'$ of (\ref{eq:defTable}) we have $i' \ge T[\ell-1,k']$. Moreover, if we had $i' > T[\ell-1,k']$ then we could replace $i'$ by $T[\ell-1,k']$, as both values satisfy the condition $L(x[1..i'],y[1..j_{\ell-1}]) = j_{\ell-1} - k'$, contradicting minimality of $i'$.
  
  Let $r = L(x[i'+1..i],y[j_{\ell-1}+1..j_\ell])$. By (\ref{eq:split}) we have $j_\ell - k = j_{\ell-1} - k' + r$, and we obtain $r = 1 + z_\ell - k + k'$ using $z_\ell = j_{\ell} - j_{\ell-1} - 1$. Note that $i \ge i'$ is the smallest value attaining $L(x[i'+1..i],y[j_{\ell-1}+1..j_\ell]) = r$. Indeed, if there was a smaller value $i' \le i^* < i$ with $L(x[i'+1..i^*],y[j_{\ell-1}+1..j_\ell]) = r$, then $L(x[1..i^*],y[1..j_\ell]) \ge L(x[1..i'],y[1..j_{\ell-1}]) + L(x[i'+1..i^*],y[j_{\ell-1}+1..j_\ell]) = L(x[1..i'],y[1..j_{\ell-1}]) + L(x[i'+1..i],y[j_{\ell-1}+1..j_\ell]) = L(x[1..i],y[1..j_\ell]) = j_\ell - k$.
  Then there also exists $0 \le i'' \le i^* < i$ with equality, i.e., $L(x[1..i''],y[1..j_\ell]) = j_\ell - k$. Indeed, if $L(x[1..i^*],y[1..j_\ell]) >  j_\ell - k$ then we can repeatedly reduce $i^*$ by 1, this reduces $L(x[1..i^*],y[1..j_\ell])$ by at most 1, and we eventually reach $j_\ell - k$ since $L(x[1..t],y[1..j_\ell]) = 0$ for $t=0$. However, existence of $i'' < i$ contradicts minimality of $i = T[\ell,k]$. 
  
  Now we show that $i$ is one of the terms on the right hand side of (\ref{eq:algTable}), considering three cases.
  
  Case 1: If $1 \le r < z_\ell + 1$, then the LCS of $x[i'+1..i]$ and $y[j_{\ell-1}+1..j_\ell] = 0^{z_\ell} 1$ consists of $r-1$ $0$'s and one additional symbol which is $0$ or $1$. Thus, the smallest $i$ attaining $r$ is $\Next_\Sigma(\Next_0^{r-1}(i'))$, accounting for $r-1$ $0$'s and one additional symbol. Since $r-1 = z_\ell - k + k'$ and $i' = T[\ell-1,k']$, we have shown that $i = T[\ell,k]$ is of the form $\Next_\Sigma( \Next_0^{z_\ell - k + k'}(T[\ell-1,k']) )$ for some $k'$. Observe that $1 \le r < z_\ell + 1$ implies $k - z_\ell \le k' < k$. We clearly also have $k' \ge 0$. This corresponds to the first line of (\ref{eq:algTable}).
  
  Case 2: If $r = z_\ell+1$ then $x[i'+1..i]$ contains $y[j_{\ell-1}+1..j_\ell] = 0^{z_\ell} 1$. Thus, $i = \Next_1( \Next_0^{z_\ell}( i' ) )$, accounting for $z_\ell$ $0$'s followed by a $1$. In this case, we have $k' = k + r - z_\ell - 1 = k$ so that $i = T[\ell,k]$ is of the form $\Next_1( \Next_0^{z_\ell}( T[\ell-1,k] ) )$. This corresponds to the second line of (\ref{eq:algTable}).
  
  Case 3: If $r = 0$ then $i = i'$, since the smallest value $i \ge i'$ attaining $L(x[i'+1..i],y[j_{\ell-1}+1..j_\ell]) = 0$ is $i'$. In this case, we have $k' = k-z_\ell-1$, and we obtain $T[\ell,k] = i = i' = T[\ell-1,k'] = T[\ell-1,k-z_\ell-1]$. This only applies if $k-z_\ell-1 \ge 0$. This corresponds to the third line of (\ref{eq:algTable}).
  
  This case distinction shows that $i$ is one of the terms on the right hand side of (\ref{eq:algTable}). 
  Also observe that we have $i \le \Next_1( \Next_0^{z_\ell}( T[\ell-1,k] ) )$, since the number $\Next_1( \Next_0^{z_\ell}( T[\ell-1,k] ) )$ is part of the set of which $i = T[\ell,k]$ is the minimum. Similarly, we have $i \le \Next_\Sigma( \Next_0^{z_\ell - k + k'}(T[\ell-1,k']) )$ for any $\max\{0, k-z_\ell\} \le k' < k$, and $i \le T[\ell-1,k-z_\ell-1]$ if $k-z_\ell-1 \ge 0$. This proves that $i$ is the minimum over all expressions on the right hand side of (\ref{eq:algTable}), proving that $i = T[\ell,k]$ follows the recursive formula (\ref{eq:algTable}).
\end{proof}

\section{Strengthening Hardness via BP-SETH}
\label{sec:bpseth}

A recent and surprising result by Abboud, Hansen, Virginia Williams and Williams~\cite{AbboudHVWW16} proves conditional lower bounds for LCS and related problems under a natural weaker variant of SETH, called BP-SETH. In this variant, the role of CNF formulas in SETH is replaced by branching programs, providing a much more expressive class of Boolean functions -- intuitively, the corresponding satisfiability problem becomes much harder. As a consequence, refuting a conditional lower bound based on BP-SETH would yield stronger algorithmic consequences, strengthening the conditional lower bound significantly. Furthermore, Abboud et al.\ show that even a sufficiently strong \emph{polylogarithmic} improvement for LCS would imply faster (formula) satisfiability algorithms than currently known.

In this section, we show how to adapt the proofs of our conditional lower bounds to also hold under BP-SETH. To this end, we first introduce this weaker assumption, briefly state the main construction of Abboud et al.\ (specialized to LCS) and then describe the necessary modifications to our proofs.

\paragraph{Branching Programs and BP-SETH.} 
Branching programs provide a popular model for non-uniform computation. Formally, a \emph{branching program $P$ on $N$ variables, length $T$ and width $W$} consists of a directed graph $G$, whose vertex set $V$ is divided into $T$ disjoint layers $V_1, \dots, V_T$. Each layer contains at most $W$ vertices. Every edge in $G$ starts in some layer $V_j$ and ends at the following layer $V_{j+1}$. Each edge is annotated with a constraint $X_i = b$, where $X_1, \dots, X_N$ are the Boolean input variables and $b\in \{0,1\}$. The constraints of all edges starting in layer $V_j$ must use the same variable $X_i$, but may have potentially different values of $b$. There are two distinguished nodes, namely some \emph{start node} $v_0\in V_1$, and an \emph{accept node} $v^* \in V_T$.
Given an assignment $X\in \{0,1\}^N$ to the variables, we say that $P$ accepts $x$ if and only if there is a path from $v_0$ to $v^*$ such that the constraints on all edges on the path are satisfied by $X$.  

The corresponding satisfiability problem BP-SAT asks, given a branching program $P$ on $N$ variables, length $T$ and width $W$, to determine whether there exists an assignment $X\in \{0,1\}^N$ that is accepted by $P$. The results in~\cite{AbboudHVWW16} rely on the following hypothesis.

\begin{hypo}
For any $\varepsilon>0$, BP-SAT with $(\log W)(\log T) = o(N)$ cannot be solved in time $O((2-\varepsilon)^N)$.
\end{hypo}

Note that this hypothesis is much weaker than SETH: By Barrington's theorem, already branching programs of constant width and $T=\mathrm{polylog}(N)$ can simulate any \textsf{NC} circuit, and thus any CNF formula.

\paragraph{Central Construction of Abboud et al.}

We present an intermediate step of the construction of Abboud et al.\ in a convenient form for our purposes. 

\begin{lem}[{Normalized BP-Vector Gadgets~\cite[Implicit lemma in the proof of Theorem 6, specialized to LCS]{AbboudHVWW16}}]
\label{lem:BP-NVG}
Let $P$ be a branching program on $N_1+N_2$ variables, width $W$ and length~$T$. Let $\sA = \{a_1,\dots,a_A\} \subseteq \{0,1\}^{N_1}, \sB = \{b_1,\dots,b_B\} \subseteq \{0, 1\}^{N_2}$ be sets of partial assignments. In linear time in the output size, we can construct binary strings  $x_1, x_2,\dots, x_A$ of length $\ell_\x$ and $y_1,y_2,\dots, y_B$ of length $\ell_\y$ (called \emph{normalized vector gadgets}) with the following properties.
\begin{enumerate}
\item There is some function $f(W,T) = T^{\Oh(\log W)}$ with $\ell_x, \ell_\y \le f(W,T)$,
\item We can compute $\rho_0, \rho_1$ with $\rho_0 > \rho_1$ such that $L(x_i, y_j) \ge \rho_0$ if the assignment given by assigning $a_i$ to the variables $X_1,\dots,X_{N_1}$ and assigning $b_j$ to $X_{N_1+1},\dots,X_{N_1+N_2}$ is accepted by $P$. Otherwise, we have $L(x_i,y_j) = \rho_1$.
\end{enumerate}
\end{lem}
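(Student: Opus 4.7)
The plan is to adapt the recursive gadget construction of Abboud et al.\ by building, for each pair of BP-states $(s,t)$ and each sub-interval $I \subseteq [T]$ of layers, a pair of strings $x_i^{I,s,t}$, $y_j^{I,s,t}$ whose LCS length equals a fixed constant $C_I$ if under the combined assignment $(a_i, b_j)$ the sub-BP restricted to $I$ admits an $s$-to-$t$ path, and equals a strictly smaller value $C_I - 1$ otherwise. The final gadgets are then $x_i := x_i^{[1,T], v_0, v^*}$ and $y_j := y_j^{[1,T], v_0, v^*}$, which satisfy the two properties of the lemma with $\rho_0 := C_{[1,T]}$ and $\rho_1 := C_{[1,T]} - 1$.

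For the base case $|I| = 1$, the single layer reads some variable $X_k$. If $k \le N_1$, we encode the truth value of the edge $s \to t$ under the bit $(a_i)_k$ into a constant-size string in $x_i^{I,s,t}$, taking $y_j^{I,s,t}$ to be a fixed ``neutral'' counterpart (the case $k > N_1$ is symmetric). For the inductive step, split $I = I_1 \concat I_2$ at the midpoint. A valid $s$-to-$t$ path through $I$ decomposes as an $s$-to-$u$ path through $I_1$ followed by a $u$-to-$t$ path through $I_2$ for some intermediate state $u$. We implement this decomposition via two LCS primitives: a \emph{concatenation (AND) gadget}, which joins two sub-gadgets over disjoint alphabets so that their LCS values add exactly (as in \lemref{disjointalphabet}), and a \emph{selector (OR) gadget} over the $W$ possible intermediate states $u$, realized through the 1vs1/2vs1 alignment technique of \lemref{2vs1base}. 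Calibrating the guard lengths at each recursion level ensures that any LCS must commit to exactly one intermediate state $u$ per split, and that an invalid $u$ strictly loses at least one unit in LCS length compared to a valid choice.

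The recursion has depth $\lceil \log_2 T \rceil$, and each level multiplies the gadget length by $O(W)$ (a factor $W$ for the $W$-fold OR plus constant-size padding). The final length is therefore bounded by $W^{O(\log T)} \cdot \poly(W,T) = T^{O(\log W)}$, yielding $\ell_\x, \ell_\y \le f(W,T)$ of the claimed form. The thresholds $\rho_0, \rho_1$ and the constants $C_I$ are tracked alongside the recursion: each level adds a fixed offset, and the unit gap between accepting and non-accepting is preserved at each step. The principal technical obstacle is exactly this sharpness at each composition: the padding blocks of the selector OR must dominate any ``false-positive'' LCS contribution coming from cross-aligning symbols between different candidate states $u \ne u'$ in $I_1$ and $I_2$. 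This is enforced by choosing the $0^{\gamma_1} 1^{\gamma_2} (01)^{\gamma_3}$-style guards of each level to exceed the total length of all sub-gadgets at the current recursion depth, which blows up only the hidden constant in $O(\log W)$ and preserves the overall $T^{O(\log W)}$ bound.
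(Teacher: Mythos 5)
The paper does not prove this lemma itself: it is stated as a restatement of a result from Abboud et al.~\cite{AbboudHVWW16} (``implicit lemma in the proof of Theorem~6, specialized to LCS'') and then used as a black box in \secref{bpseth}. So there is no ``paper's proof'' to compare against; what you have written is an attempted reconstruction of the Abboud et al.\ argument. Your high-level structure --- recursion over halves of the BP layers, AND by concatenation, OR over the $W$ intermediate states, and a final length bound of $W^{\Oh(\log T)} = T^{\Oh(\log W)}$ --- is indeed the right skeleton for that reduction.

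That said, your execution has concrete gaps. First, you implement the AND via \lemref{disjointalphabet}, i.e., literally disjoint alphabets; but the lemma demands \emph{binary} strings, and with $\Theta(\log T)$ recursion levels and a $W$-way OR at each level, a fresh alphabet per sub-gadget would blow the alphabet size up far beyond~$2$. Abboud et al.\ implement sequential composition over a \emph{fixed} small alphabet using padding blocks, which is a non-trivial point you would need to reproduce. Second, \lemref{2vs1base} is not an OR/selector. It is designed to control how many $y_j$'s are uniquely versus doubly aligned (so that $\delta$ stays small) and its value is a \emph{sum} over an alignment, not a \emph{max} over $W$ disjoint candidate intermediate states; nothing in it forces the LCS to pick a single consistent $u$ across the left and right halves. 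The OR in the BP reduction is a different, selector-type alignment gadget where a guard structure forces $y$ to commit to exactly one of $W$ arms. Third, your invariant that the LCS is exactly $C_I$ or $C_I-1$ cannot be maintained through the AND step alone (the AND of two sub-gadgets yields a gap of up to~$2$); the usual fix is to include an extra ``default'' arm in the OR that always contributes $C_I-1$ so as to re-normalize the rejecting value, and your sketch does not account for this. Without these three fixes the construction as described does not yield the lemma's guarantees.
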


\paragraph{Proof modifications.}

Equipped with the above lemma, we sketch how to modify our lower bounds to also hold under BP-SETH. 
Consider the following problem \emph{LCS-Pair with Guarantees}: Given binary strings $x_1, x_2,\dots, x_A$ of length $\ell_\x$ and $y_1,y_2,\dots, y_B$ of length $\ell_\y$, where $\ell_\x, \ell_\y = (A+B)^{o(1)}$, and given integers $\rho_0 > \rho_1$, decide whether either (1) $L(x_i,y_j) \ge \rho_0$ for some $i,j$, or (2) $L(x_i,y_j) = \rho_1$ for all $i,j$. The naive solution for this problem runs in time $(AB)^{1+o(1)}$. 

\lemref{BP-NVG} shows that any $\Oh((AB)^{1-\eps})$-time algorithm for LCS-Pair with Guarantees would break BP-SETH. Indeed, given a branching program $P$ on $N$ variables, length $T$ and width $W$, we first split the variables into two sets of size $N/2$, and let $\sA, \sB$ be the sets of all assignments to these two sets of variables. Then testing whether there are partial assignments $a_i \in \sA$ and $b_j \in \sB$ that together are accepted by $P$ is equivalent to deciding satisfiability of $P$. Since $A = |\sA| = B = |\sB| = 2^{N/2}$, any $\Oh((AB)^{1-\eps})$-time algorithm would break BP-SETH. Moreover, in BP-SETH we can assume $\log T \cdot \log W = o(N)$ and thus $\ell_\x,\ell_\y = T^{\Oh(\log W)} = 2^{o(N)} = (A+B)^{o(1)}$, as required in the definition of LCS-Pair with Guarantees. Furthermore, the same $(AB)^{1-o(1)}$ lower bound under BP-SETH also holds restricted to instances with $B = A^{\beta \pm o(1)}$ for any $0 < \beta \le 1$. To see this, in the above reduction instead of splitting the $N$ variables into   equal halves $N_1=N_2=N/2$, we choose $N_1,N_2$ such that $N_2 \approx \beta N_1$, so that the sets of partial assignments $\sA = \{0,1\}^{N_1}$ and  $\sB = \{0,1\}^{N_2}$ satisfy $B = |\sB| = |\sA|^{\beta\pm o(1)} = A^{\beta \pm o(1)}$.

Observe that the normalized vector gadgets constructed in \lemref{coreSingleStrings} show the similar claim that any $\Oh((AB)^{1-\eps})$-time algorithm for LCS-Pair with Guarantees would break the OV hypothesis, and thus SETH. Since all reductions presented in this paper use normalized vector gadgets either directly via \lemref{coreSingleStrings} or indirectly via \lemref{core}, they all implicitly go via LCS-Pair with Guarantees. Hence, we can easily replace the first part of our reductions, i.e., the reduction from SAT/OV to LCS-Pair with Guarantees, by the reduction from BP-SAT to LCS-Pair with Guarantees given in \lemref{BP-NVG}. This yields a conditional lower bound based on BP-SETH. 

There are two steps where we have to be careful: First, LCS-Pair with Guarantees does not immediately give properties (i) and (iv) of \lemref{coreSingleStrings}, however, they can be ensured easily as in the proof of \lemref{coreSingleStrings}. Second, \lemref{core} is the construction from~\cite{BringmannK15}, and thus to check that it goes via LCS-Pair with Guarantees one needs to check that the proof in~\cite{BringmannK15} indeed only uses \lemref{coreSingleStrings}. 
Along these lines we obtain the following strengthening of our main result.

\begin{thm}
\thmrefs{main1}{main2} also hold after replacing SETH by BP-SETH.
\end{thm}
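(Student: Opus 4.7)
The plan is to factor every reduction in this paper through a single intermediate problem, namely LCS-Pair with Guarantees as defined above, and then swap its underlying hardness assumption from \OVH\ (and thus \SETH) to BP-SETH via \lemref{BP-NVG}. Concretely, the target is to show that (a) BP-SETH implies $(AB)^{1-o(1)}$ hardness of LCS-Pair with Guarantees in the unbalanced regime $B=A^{\beta\pm o(1)}$ for any $\beta\in(0,1]$, and (b) every reduction leading to \thmrefs{main1}{main2} in fact only uses the orthogonal-vectors hypothesis through this intermediate problem.

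For~(a), I would mimic the standard split-and-list argument: given a branching program $P$ on $N$ variables of length $T$ and width $W$ with $(\log W)(\log T)=o(N)$, I partition the variables into blocks of size $N_1,N_2$ with $N_2\approx \beta N_1$, enumerate all $A=2^{N_1}$ and $B=2^{N_2}$ partial assignments, and feed them to \lemref{BP-NVG}. The BP-SETH assumption yields $\ell_\x,\ell_\y\le T^{\Oh(\log W)}=2^{o(N)}=(A+B)^{o(1)}$, matching the length requirement of LCS-Pair with Guarantees. Any $\Oh((AB)^{1-\varepsilon})$-time algorithm then decides $P$-satisfiability in time $\Oh(2^{(1-\varepsilon)N})$, contradicting BP-SETH. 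This replaces the role played by \lemref{UOVH} in the rest of the paper.

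For~(b), I would audit every reduction in \secref{hardness} and \secref{hardnessSmallSigma}. The large-LCS cases invoke \lemref{coreSingleStrings} directly to build normalized vector gadgets, while the small-LCS cases invoke \lemref{core}, which is the construction from~\cite{BringmannK15} and—on close inspection—only uses these same gadgets as a black box. Replacing \lemref{coreSingleStrings} by \lemref{BP-NVG} composed with a short $\Oh(D)$ padding reproduces properties~(i) and~(iv) of \lemref{coreSingleStrings} (the two conveniences that \lemref{BP-NVG} does not deliver on its own), and propagates the BP-SETH-hardness of LCS-Pair with Guarantees through every downstream construction, including the 1vs1/2vs1 gadget of \secref{hardnesslargeLCS}, the multi-alphabet crossing and dominant-pair-reduction tricks, and the alphabet-size-sensitive padding lemmas. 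Since none of these constructions alter the asymptotic size of the gadgets beyond factors that were already $(A+B)^{o(1)}$, all quantitative lower bounds of \thmrefs{main1}{main2} are preserved verbatim.

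The main obstacle is not conceptual but bookkeeping: I have to verify that no step anywhere in \secref{hardness} or \secref{hardnessSmallSigma}—and in particular no step hidden inside the proof of \lemref{core} from~\cite{BringmannK15}—inspects the internal structure of the normalized vector gadgets beyond the promised interface (the lengths $\ell_\x,\ell_\y$ and the two LCS values $\rho_0,\rho_1$). Provided this black-box usage is confirmed, the substitution is automatic and \thmrefs{main1}{main2} transfer with SETH replaced by BP-SETH without changing a single quantitative bound.
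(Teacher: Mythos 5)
Your proposal matches the paper's own proof essentially step for step: both factor every reduction through the intermediate problem LCS-Pair with Guarantees, replace the OV-based construction of the vector gadgets by \lemref{BP-NVG}, note the unbalanced-regime variant $B = A^{\beta\pm o(1)}$, and flag the same two caveats (re-establishing properties (i) and (iv) of \lemref{coreSingleStrings} by padding, and confirming that the proof of \lemref{core} in~\cite{BringmannK15} uses the normalized vector gadgets only as a black box). The only cosmetic difference is that you describe the extra padding as ``$\Oh(D)$'' -- an OV-specific quantity -- whereas in the BP setting its length is $\Theta(\ell_\y')=T^{\Oh(\log W)}=(A+B)^{o(1)}$; since the argument only needs this to be sub-polynomial in $A+B$, nothing is affected.
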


\bibliographystyle{plain}
\bibliography{lcsAndEdit}

\end{document}